\documentclass[aps,prx,onecolumn,superscriptaddress,nofootinbib,10pt,longbibliography]{revtex4-2}

\usepackage[T1]{fontenc}
\usepackage[utf8]{inputenc}
\usepackage{lmodern}
\usepackage{csquotes}
\usepackage{amsmath,amssymb,amsfonts,amsthm}
\usepackage{bbm}
\usepackage{mathrsfs}
\usepackage{physics}
\usepackage{centernot}
\usepackage{comment}
\usepackage{graphicx}
\usepackage{subcaption}
 \usepackage{booktabs}
\usepackage{multirow}
\usepackage{enumitem}
\usepackage{listings}
\usepackage{tikz}
\usetikzlibrary{positioning,calc}
\usepackage{tikz-cd}
\usepackage{pst-node}
\usepackage{mathtools}

\usepackage[pdfusetitle,
  bookmarks=true,bookmarksnumbered=false,bookmarksopen=false,
  breaklinks=false,pdfborder={0 0 0},pdfborderstyle={},
  colorlinks=true, linkcolor=magenta, urlcolor=blue, citecolor=blue]{hyperref}
\usepackage{cleveref}

\usepackage{newunicodechar}
\newunicodechar{ρ}{\rho}
\newunicodechar{σ}{\sigma}
\newunicodechar{λ}{\lambda}
\newunicodechar{Λ}{\Lambda}
\newunicodechar{μ}{\mu}
\newunicodechar{ν}{\nu}
\newunicodechar{ψ}{\psi}
\newunicodechar{ϕ}{\phi}
\newunicodechar{φ}{\varphi}
\newunicodechar{π}{\pi}
\newunicodechar{α}{\alpha}
\newunicodechar{ϵ}{\epsilon}
\newunicodechar{ε}{\varepsilon}
\newunicodechar{δ}{\delta}
\newunicodechar{ω}{\omega}
\newunicodechar{β}{\beta}
\newunicodechar{ξ}{\xi}
\newunicodechar{κ}{\kappa}
\newunicodechar{∗}{\textasteriskcentered}

\newtheorem{theorem}{Theorem}
\numberwithin{theorem}{section}
\newtheorem{proposition}[theorem]{Proposition} 
\newtheorem{lemma}[theorem]{Lemma}    
\newtheorem{corollary}[theorem]{Corollary} 

\theoremstyle{definition}
\newtheorem{definition}[theorem]{Definition}   
\newtheorem{example}[theorem]{Example}         

\newtheoremstyle{remarkbold} 
  {6pt} 
  {6pt} 
  {\itshape} 
  {} 
  {\bfseries} 
  {.} 
  { } 
  {} 

\theoremstyle{remarkbold}
\newtheorem*{remark}{Remark}

\let\oldremark\remark
\let\endoldremark\endremark
\renewenvironment{remark}
  {\oldremark\itshape}
  {\endoldremark}
  
\newcommand{\EndS}[1]{\operatorname{End}^{\mathfrak{S}_{#1}}}
\newcommand{\esn}{\EndS{n}}
\newcommand{\HS}{\mathcal{H}}
\newcommand{\n}{^{\otimes n}}
\newcommand{\A}{\mathcal{A}}
\newcommand{\CPTP}{\operatorname{CPTP}}

\begin{document}
\title{Permutation Invariant Optimization Problems in Quantum Information Theory: A Framework for Channel Fidelity and Beyond}
\author{Bjarne Bergh}
\email{bb536@cam.ac.uk}
\affiliation{Department of Applied Mathematics and Theoretical Physics,\\
University of Cambridge, Cambridge CB3 0WA, United Kingdom}

\author{Marco Parentin}
\email{marco.parentin@gmail.com}
\affiliation{Politecnico di Torino, 10129 Torino, Italy}

\begin{abstract}
Exploiting permutation invariance to reduce the exponential scaling of semidefinite programs in quantum information has emerged as a powerful computational technique. In this work, we develop a systematic framework for using this reduction via Schur--Weyl duality for optimization problems, and establish methods that allow one to work fully inside the permutation invariant subspace while performing operations such as (partially) applying channels and taking (partial) traces, or computing expressions like the quantum relative entropy. We then apply our techniques to the problem of computing efficient lower bounds on the channel fidelity over $n$ parallel uses of a quantum channel. The algorithm, which we call symmetric seesaw method, exploits permutation-invariant codes to yield improved lower bounds on the channel fidelity over $n$ uses of the depolarizing and amplitude-damping channel in the regime of tens of channel uses, and was used in~\cite{Parentin2026} to demonstrate non-asymptotic superactivation of quantum capacity for $n = 17$. An implementation of our methods, aimed at being suitable for various quantum information theoretic optimization problems, is also available as an open-source Python package.
\end{abstract}

\maketitle

\tableofcontents

\section{Introduction}
\label{sec:introduction}

Semidefinite programming (SDP) has become a central computational tool in quantum information theory~\cite{Vandenberghe1996, Wang2018SDP, Skrzypczyk2023}. The main objects of quantum mechanics---states, measurements, and channels---are described by positive semidefinite operators satisfying linear constraints, making many fundamental problems naturally amenable to SDP formulations. The resulting programs can be solved efficiently by interior-point algorithms, while SDP duality often reveals deep operational connections between seemingly disparate quantities. Recent years have seen a much progress in quantum information theory in which SDP techniques were used to resolve long-standing open problems: the irreversibility of entanglement theory~\cite{lami2023nosecondlaw}, the computability of zero-error entanglement cost under positive-partial-transpose operations~\cite{lami2024computablecost}, and efficient converse bounds for quantum channel capacities~\cite{Wang2016, Wang2019}, to name a few. Despite this progress, the tractability of SDPs is inherently limited by the dimension of the underlying Hilbert space, which grows exponentially when one considers $n$ independent and identically distributed (i.i.d.) copies of a channel $\mathcal{N}$ or state $\rho$, denoted $\mathcal{N}^{\otimes n}$ or $\rho^{\otimes n}$. 

This exponential growth is the central computational obstruction which we try to address in this paper. It arises, for instance, when an SDP-computable quantity $f$ is non-additive under tensor products, i.e. $f(\rho^{\otimes n}) \neq n f(\rho)$ or $f(\mathcal{N}^{\otimes n}) \neq nf(\mathcal{N})$, so one is interested in investigating its behavior for increasing values of $n$. Such non-additivity is a pervasive phenomenon in quantum information \cite{Wilde_book}, occurring both in entanglement theory and quantum channel coding, and can be a beneficial feature: for instance, one may take multiple copies of a channel to achieve strictly higher transmission rates than are possible with a single use~\cite{Smith2008, Hastings2009, Smith2011}, or take multiple copies of a state to obtain a strictly tighter upper bound on distillable entanglement than the single-copy value of the relevant monotone~\cite{Audenaert2000, Audenaert2002, Wang2017_Rains, Rubboli2022}.
This motivates to find techniques to tackle this exponential scaling. The key structural observation in our work enabling an escape from this curse of dimensionality is \textit{permutation-invariance}: in many cases, the SDPs are invariant under permutations of the $n$ copies, i.e., for any solution there exists an equally feasible and optimal solution that is symmetric with respect to the $n$ systems. This observation, combined with the representation theory of the symmetric group $\mathfrak{S}_n$ via Schur--Weyl duality~\cite{Vallentin2009, Barman2017, Fawzi2022}, allows one to block-diagonalize the SDP variables and reduce the problem size from $\mathcal{O}(\exp(n))$ to $\mathcal{O}(\mathrm{poly}(n))$.

\

\paragraph{Prior work.}
The exploitation of symmetry to simplify optimization problems has a long history~\cite{ Vallentin2009,Klerk2010, Handbook_semidefinite2012}, with applications ranging from classical coding theory~\cite{Schrijver2005, Gijswijt2006, Laurent2007, Litjens2016, Polak2020} to graph theory~\cite{DeKlerk2007, Brosch2024} and polynomial optimization~\cite{Gatermann2004, Riener2013, Raymond2018}. In quantum information, the symmetric group acts naturally on $n$-fold tensor products $\mathcal{H}^{\otimes n}$ by permuting the tensor factors, and the algebra of permutation-invariant operators $\mathrm{End}^{\mathfrak{S}_n}(\mathcal{H}^{\otimes n})$ plays a distinguished role. Its structure is governed by Schur--Weyl duality, which yields a block-diagonalization of $\mathrm{End}^{\mathfrak{S}_n}(\mathcal{H}^{\otimes n})$ into irreducible blocks indexed by Young diagrams. The number of such blocks, and their sizes, grow only polynomially in $n$, which is the source of the complexity reduction (see Section \ref{sec:perm-inv} for formal definitions).

This observation has recently found applications in several areas of quantum information. Specifically, it was used in~\cite{Fawzi2022} to obtain efficiently computable bounds on quantum channel capacities, and in~\cite{Bergh2025}, to efficiently compute $n$-shot error exponents in quantum channel discrimination. Very recently, \cite{Fang2024, Fang2025} used it to efficiently approximate regularized relative entropies.
Alongside these developments, the permutation-invariant structure underlies several foundational tools in quantum information theory, notably the quantum de Finetti theorem~\cite{Caves2002, Christandl2007} and the theory of $k$-extendible states~\cite{Doherty2002, Doherty2004, Berta2019}, which provide a hierarchy of SDP-computable relaxations of the separability problem converging to the set of separable states as $k \to \infty$. This was used by \cite{Berta2022, Chee2023, Kossmann2025} to derive efficient SDP outer approximations to the channel fidelity.

\

\paragraph{This work.}
In this paper, we develop the theory of permutation-invariance in quantum information further, and establish techniques to compute operations such as partial traces and channel concatenations wholly within the permutation invariant subspaces. We also provide a complete open-source Python package that implements the Schur–Weyl symmetry reduction explicitly without ever constructing exponentially large matrices, for use in semidefinite programming and other optimization problems in quantum information theory (Section \ref{sec:Python_package}).

As main application, we study efficiently computable \textit{lower bounds} to the channel fidelity over $n$ copies of a quantum channel, denoted $F_c(\mathcal{N}^{\otimes n}, d)$ and defined as the maximum fidelity with which a maximally entangled state of Schmidt rank $d$ can be transmitted through $n$ parallel uses of a channel $\mathcal{N}_{A \to B}$, optimized over all encoding and decoding operations $(\mathcal{E}_n, \mathcal{D}_n)$ (see Section \ref{sec:symmetric_seesaw} for a formal definition). Using a \textit{seesaw method}~\cite{Reimpell2005, Reimpell2008, Fletcher2007} and phrasing it in terms of the maximal singlet fraction, we restrict the optimization to permutation-invariant (PI) codes \cite{Leung1997}, and use Schur--Weyl block-diagonalization to find SDP-computable poly-($n$) lower bounds to $F_c(\mathcal{N}^{\otimes n}, d)$ in polynomial time, yielding a \textit{symmetric seesaw method} (Theorem \ref{thm:seesaw_polytime}), which also provides an explicit coding protocol. These results are complementary to those of \cite{Kossmann2025}, which focuses on upper bounds, and allow to easily reach number of channel uses beyond $n =20$ for a generic qubit channel. 

\

The method was used in~\cite{Parentin2026} to demonstrate, for the first time, superactivation of quantum capacity~\cite{Smith2008} in the non-asymptotic regime, showing that $n \leq 17$ joint channel uses suffice for the original Smith--Yard channel pair to transmit at a fidelity which is impossible to achieve with either channel alone (even if unlimited copies are allowed). In this paper, we derive the theoretical foundations of this method in detail, highlighting how it fits into a more general framework, also applying it to different problems, for example showing how it gives rise to improved lower bounds on channel fidelity for $n \leq 20$ uses of the amplitude-damping channel and the depolarizing channel (Section \ref{sec:numerical_examples}). Specifically, for the amplitude damping channel, the symmetric seesaw method achieves error probabilities below $1\%$ for damping probability $\gamma < 0.2$.

Additionally, as a separate application, we demonstrate the effectiveness of the method to efficiently solve relative entropy programs (see also \cite{Fang2025}), yielding an efficiently computable explicit algorithm to obtain the Rains relative entropy~\cite{Rains2001SemidefiniteDistillableEntanglement} on $n$ copies of a bipartite state $\rho_{AB}^{\otimes n}$ (Theorem \ref{thm:rains_polytime}).

\

On the theoretical level, in Section \ref{sec:perm-inv}, we provide general results on how to efficiently compute partial traces and concatenations of permutation covariant channels in the permutation-invariant subspace (Propositions \ref{prop:serial_concatenation} and \ref{prop:general_serial_contenation}) and generalize the results to matrix $*$-algebras in Section \ref{sec:cq-algebras}. These results are extensively used in the coding setting of the symmetric seesaw method (Section \ref{sec:symmetric_seesaw}), but are expected to be of independent interest and useful more broadly in other problems of quantum information. Besides the two explicit examples provided here, our framework applies to several other SDPs in quantum information, such as all SDPs based on $k$-extendibility studied in \cite{Kaur2021,Singh2025} or those related to PPT-entanglement manipulation (e.g., those introduced in \cite{lami2024computablecost, Wang2016_distillable}). 

We leave as an open problem to combine the permutation-invariance investigated in this work with additional group symmetries (e.g., those investigated in \cite{Singh2021}), to further simplify certain optimization problems problem (see also \cite{Polak2020} or \cite[Section 4.2]{Fawzi2022}).

\subsection{Organization}
\label{subsec:organization}

This paper is organized as follows. Section~\ref{sec:preliminaries} introduces notation, quantum channels, the Choi representation, semidefinite programming, and the theory of permutation-invariant operators. Section~\ref{sec:perm-inv} develops the algebraic machinery for efficiently working with permutation-invariant operators: orbit bases, count matrices, the Schur--Weyl $*$-isomorphism, and polynomial-time algorithms for serial concatenation and partial traces. Section~\ref{sec:seesaw} reviews the seesaw method and its connection with the maximal singlet fraction, then presents the symmetric seesaw method and the symmetric power iteration algorithm, and giving examples for two important examples of channels. Section~\ref{sec:other_application_to_REE_programs} applies the toolbox to efficiently approximate the Rains relative entropy on $n$ copies of a state. Finally, Section~\ref{sec:Python_package} describes briefly the implementation all of these techniques in our open-source python package \href{https://github.com/bbbergh/permqit}{\texttt{permqit}}.

\section{Preliminaries}
\label{sec:preliminaries}

\subsection{Basic Notation}
\label{subsec:notation}

Let $\mathcal{H}$ be a finite-dimensional complex Hilbert space of dimension $d_{\mathcal{H}}$, with standard basis $\{\ket{i}_{\mathcal{H}}\}_{i=1}^{d_{\mathcal{H}}}$. We write $[d] = \{1, \ldots, d\}$. We denote by $\mathcal{L}(\mathcal{H})$ the set of linear operators on $\mathcal{H}$, by $\mathcal{P}(\mathcal{H})$ the set of positive semidefinite operators, and by $\mathcal{D}(\mathcal{H}) = \{\rho \in \mathcal{P}(\mathcal{H}) : \Tr(\rho) = 1\}$ the set of quantum states (density operators). Pure states are density operators of unit rank: $\rho = \psi \equiv \ket{\psi}\!\bra{\psi}$ for some unit vector $\ket{\psi} \in \mathcal{H}$. Given two density operators $\rho, \sigma$, their (squared) fidelity~\cite{Uhlmann1976} is a measure of their closeness and is defined as $F(\rho, \sigma) \coloneqq \bigl(\Tr\!\bigl[\sqrt{\sqrt{\sigma}\, \rho\, \sqrt{\sigma}}\,\bigr]\bigr)^{\!2}$. When one state is pure, it simplifies to the overlap $F(\rho, \psi) = \bra{\psi}\rho\ket{\psi}$.

We denote by $\mathbbm{1}_{\mathcal{H}}$ the identity operator and by $\pi_{\mathcal{H}} = \mathbbm{1}_{\mathcal{H}}/d_{\mathcal{H}}$ the maximally mixed state. For Hermitian operators $\rho, \sigma \in \mathcal{L}(\mathcal{H})$, we write $\sigma \geq \rho$ if $\sigma - \rho \in \mathcal{P}(\mathcal{H})$. We label quantum systems by capital Roman letters: in a communication setting, $A$ denotes Hilbert spaces belonging to the sender (Alice) and $B$ those belonging to the receiver (Bob). A bipartite operator is written as $X_{RA} \in \mathcal{P}(R \otimes A)$, where the subscripts indicate the local subsystems. The marginal on $A$ is $X_A = \Tr_R X_{RA} = \sum_{i} (\mathbbm{1}_A \otimes \bra{i}_R) X_{RA} (\mathbbm{1}_A \otimes \ket{i}_R)$. We write $R \simeq A$ if $d_R = d_A$.

The maximally entangled state (MES) of dimension $d$ associated with the standard basis is denoted as $\Phi^d_{RA} = \ket{\Phi}\!\bra{\Phi}_{RA}$, where:\begin{equation}
\label{eq:MES}
\ket{\Phi}_{RA} \coloneqq \frac{1}{\sqrt{d}} \sum_{i=1}^{d} \ket{i}_R \ket{i}_A.
\end{equation} We write the corresponding unnormalized MES as $\Gamma^d_{RA} \coloneqq d\,\Phi^d_{RA}$. Note that $\Tr_A \Phi^d_{RA} = \pi_R$ and $\Tr_R \Phi^d_{RA} = \pi_A$. Given a bipartite state $\rho_{RA}$, its fidelity with respect to the MES is called its \textit{singlet fraction} \cite{Horodecki1999}\begin{equation}
    F(\rho, \Phi^d) = \bra{\Phi^d}\rho\ket{\Phi^d},
\end{equation} and can be thought of as the probability of passing an \textit{entanglement test}, i.e. of getting a positive answer after a measurement of the form $\{\Phi_{RA}^d, \mathbbm{1}_{RA} - \Phi_{RA}^d\}$.

A quantum state $\rho_{AB}$ is \textit{separable}, denoted $\rho_{AB}\in \mathrm{SEP}(A;B)$, if it can be written as a convex combination
\begin{equation}
\label{eq:separable_state}
    \rho_{AB} = \sum_{x} p_X(x)\, \rho^x_A \otimes \sigma^x_B,
\end{equation}
for some probability distribution $p_X(x)$ and sets of states $\{\rho^x_A\}$, $\{\sigma^x_B\}$. A state that is not separable is called \textit{entangled}.

Given $n$ copies of $\mathcal{H}$, we write $\mathcal{H}^{\otimes n} = \mathcal{H} \otimes \cdots \otimes \mathcal{H}$ with standard basis $\{\ket{\underline{i}}\}_{\underline{i} \in [d_{\mathcal{H}}]^n}$, where $\underline{i} = (i_1, \ldots, i_n)$ with $i_k \in [d_{\mathcal{H}}]$ for all $k = 1, \ldots, n$.

We denote by $\mathrm{CP}(A \to B)$ the set of completely positive maps from $\mathcal{L}(A)$ to $\mathcal{L}(B)$. A quantum channel $\mathcal{N}_{A \to B} \in \CPTP(A \to B)$ is a completely positive and trace-preserving (CPTP) map. Quantum channels describe the most general physical operation on quantum systems, including local processing (encoding $\mathcal{E}$, decoding $\mathcal{D}$) and transmission operations (denoted as $\mathcal{N}$). The ideal communication channel is the identity $\mathrm{id}_{A \to B}$, which transmits all inputs unaltered.

Given a channel $\mathcal{N}_{A \to B}$, its adjoint (with respect to the Hilbert--Schmidt inner product) is the map $\mathcal{N}^*_{B \to A}$ satisfying $\Tr[Y\, \mathcal{N}(X)] = \Tr[\mathcal{N}^*(Y)\, X]$ for all $X \in \mathcal{L}(A)$, $Y \in \mathcal{L}(B)$. We denote by $\mathrm{CPU}(A \to B)$ the set of completely positive and unital maps. One has the correspondence \begin{equation}
    \label{eq:unitality-trace-preservation-are-dual}
    \mathcal{N} \in \CPTP(A \to B) \iff \mathcal{N}^* \in \mathrm{CPU}(B \to A),
\end{equation} i.e., unitality and trace-preservation are dual notions. Working with CPU maps instead of CPTP maps corresponds to working in the Heisenberg picture of quantum mechanics (linear transformation of observables) rather than the Schr\"odinger picture (linear transformation of states)~\cite{Wolf2012}.

Among the several known characterizations of CPTP maps, in this work we will extensively use the Choi representation~\cite{Choi1975}. Every channel $\mathcal{N} \in \CPTP(A \to B)$ is uniquely described by its Choi state, defined for $R \simeq A$ as $\Phi^{\mathcal{N}}_{RB} = (\mathrm{id}_R \otimes \mathcal{N}_{A \to B})(\Phi^{d_A}_{RA})$. The unnormalized Choi matrix is $\Gamma^{\mathcal{N}}_{RB} = d_A\, \Phi^{\mathcal{N}}_{RB} = (\mathrm{id}_R \otimes \mathcal{N}_{A \to B})(\Gamma^{d_A}_{RA})$. Notice that with $d_A = d_B = d$, one has $\Phi^{\mathrm{id}}_{RB} = \Phi^d_{RB}$ and $\Gamma^{\mathrm{id}}_{RB} = \Gamma^d_{RB}$. By the Choi--Jamio\l kowski isomorphism, the Choi state contains all the information about the map $\mathcal{N}$. In particular, the CPTP and CPU conditions become linear constraints:
\begin{equation}
\label{eq:choi_cptp}
\Gamma^{\mathcal{N}}_{RB} \geq 0, \quad \Tr_B \Gamma^{\mathcal{N}}_{RB} = \mathbbm{1}_R  \quad \iff \quad \mathcal{N} \in \CPTP(A \to B),
\end{equation}
\begin{equation}
\label{eq:choi_cpu}
\Gamma^{\mathcal{N}}_{RB} \geq 0, \quad \Tr_R \Gamma^{\mathcal{N}}_{RB} = \mathbbm{1}_B \quad \iff \quad \mathcal{N} \in \mathrm{CPU}(A \to B).
\end{equation}
We denote by $\mathcal{C}(A;B)$ the set of bipartite operators that are valid Choi matrices of $\CPTP(A \to B)$ maps and by $\mathcal{C}^*(A;B)$ the set of valid Choi matrices of $\mathrm{CPU}(A \to B)$ maps, i.e.:\begin{align}
\label{eq:choi_adj_set}
    \mathcal{C}^*(A:B) &\coloneqq \left\{X_{AB}: X_{AB} \geq 0, \ \Tr_{A}X_{AB} = \mathbbm{1}_B\right\},\\
\label{eq:choi_set}
    \mathcal{C}(A:B) &\coloneqq \left\{X_{AB}: X_{AB} \geq 0, \ \Tr_{B}X_{AB} = \mathbbm{1}_A\right\}.
\end{align} 
Obviously, we have $\mathcal{C}^*(A:B) = \mathcal{C}(B:A)$. The linearity of these constraints makes the Choi representation ideally suited for convex optimization.

\

In communication settings, we will often consider serial concatenations, parallel concatenations and adjoint of quantum channels in the Choi representation. 
\

Given $\mathcal{N}_{A \to B}$ and $\mathcal{D}_{B \to C}$, their serial concatenation is denoted by $\mathcal{M}_{A \to C} = \mathcal{D}_{B \to C}\circ \mathcal{N}_{A \to B}$ and its Choi representation is:\begin{equation}
\label{eq:concatenation_choi}
   \Gamma^{\mathcal{M}}_{RC}  = \Tr_B[(\Gamma^{\mathcal{N}}_{RB})^{T_B} \cdot \Gamma^{\mathcal{D}}_{BC}]
\end{equation}
where we denoted as $T_B$ the partial transpose map over $B$, acting on $X_{AB}\in \mathcal{L}(AB)$ as $X_{AB}^{T_B} = \sum_{j,j'=1}^{d_B} (\mathbbm{1}_A \otimes \ket{j}_{B}\bra{j'}_B) X_{AB} (\mathbbm{1}_A \otimes \ket{j}_{B}\bra{j'}_B)$, and there are implicit identities on the $R$ and $C$ systems.
\

Given $\mathcal{M}_{A_1 \to B_1}$ and $\mathcal{N}_{A_2 \to B_2}$, the tensor product channel $\mathcal{O}_{A_1A_2 \to B_1B_2} =\mathcal{M}_{A_1 \to B_1}\otimes \mathcal{N}_{A_2 \to B_2} $ has Choi representation 
\begin{equation}
  \Gamma^{\mathcal{O}}_{A_1A_2B_1B_2} =\Gamma^{\mathcal{M}}_{A_1B_1}\otimes \Gamma^{\mathcal{N}}_{A_2B_2}
\end{equation}
where strictly speaking there is also a reordering of subsystems (from $A_1B_1A_2B_2$ to $A_1A_2B_1B_2$) taking place, which we don't make explicit. Of particular interest is the case $\mathcal{N}^{\otimes n}_{A^n \to B^n}$, i.e. the $n$-fold tensor product of $\mathcal{N}_{A \to B}$, where one has:
\begin{equation}
\label{eq:choi_tensor_product}
    \Gamma^{\mathcal{N}^{\otimes n}}_{A^n B^n} =(\Gamma^{\mathcal{N}}_{AB})^{\otimes n}.
\end{equation}
Finally, if $\mathcal{N}^*_{B \to A}$ denotes the adjoint map of $\mathcal{N}_{A \to B}$, its Choi representation has the form:\begin{equation}
\label{eq:adjoint_choi}
    \Gamma^{\mathcal{N}^*}_{BA} = (\Gamma^{\mathcal{N}}_{AB})^{T}
\end{equation}
where the transpose is taken in the basis chosen for the maximally entangled state $\Phi$ (or $\Gamma$) and again we don't make changes in subsystem ordering explicit.

\subsection{Semidefinite Programming and Symmetry Reduction}
\label{subsec:sdp}

Semidefinite programs (SDPs)~\cite{Vandenberghe1996} are a subclass of convex optimization problems that have proven very useful in quantum information theory (see e.g.~\cite{Wang2018SDP, Skrzypczyk2023} for reviews). 
An SDP is a convex optimization program, where one optimizes a linear objective function over an affine subspace of the convex cone of positive semi-definite matrices. More formally, we use the notation of \cite{Watrous2018} and define a complex SDP as a triple $(C, D, \Psi),$ where $C, D$ are Hermitian and and $\Psi$ is an Hermiticity-preserving map:\begin{equation}\label{eq:SDP_definition}
\begin{aligned}
& \text{Primal} \ \ \ \ \ \ \ \ \ \ \ \ \ \ \ \ \ \  &\text{Dual} \ \ \ \ \ \  \\
     \max \quad & \Tr [CX] \ \ \ \ \ \ \ \ \ \ \ \ \ \ \ \ \ \  &\min  \Tr[DY]  \\
    \text{subject to:} \quad  & X \geq 0  &\text{subject to:}\ \ \ \ Y\geq 0\\
    & \Psi(X) \leq D &\ \ \ \ \Psi^*(X) \geq C 
\end{aligned}
\end{equation}
where $\Psi^*$ is the dual map to $\Psi$. Either problem is called feasible if there exists a variable satisfying the corresponding constraint. For these two problems, we define their optimal attained values:
\begin{align}
    \alpha \coloneqq \sup\{\Tr[CX]: \Psi(X) \leq D, X \geq 0\}\\
    \beta \coloneqq \inf \{\Tr[DY]: \Psi^*(Y) \geq C, Y \geq 0\}
\end{align}
where $\alpha = - \infty$ if the primal problem is not feasible and $\beta = + \infty $ if the dual problem is not feasible.
Every SDP satisfies weak duality, which states that $\alpha \leq \beta$ and implies that any valid variable of the dual problem upper bounds the optimal value of the primal problem. 

If there exists a $X \geq 0$ (resp. $Y \geq 0$) such that $D - \Psi(X)$ (resp. $\Psi^*(Y)-C$) is positive definite, then the primal (resp. dual) problem is said to be strictly feasible. The equality $\alpha = \beta$ is called strong duality and holds if the SDP satisfies the Slater condition, which is that both the primal and dual problems are strictly feasible (there are other weaker conditions, see e.g. \cite{Watrous2018}). SDPs are efficiently solvable via interior-point methods, using for instance CVX~\cite{Grant2014} with solvers such as Mosek~\cite{Mosek2024} and SeDuMi~\cite{Sturm1999}, or the Python packages CVXPY~\cite{Diamond2016} and PICOS~\cite{Sagnol2022}.
\

In many cases of practical interest, an SDP is invariant under some group action and can be simplified by exploiting this symmetry~\cite{Vallentin2009}. 
Let $G$ be a finite group, and let $g \mapsto U(g)$ be a unitary representation of $G$ on a Hilbert space $\mathcal{H}$. A quantum state $\rho \in \mathcal{D}(\mathcal{H})$ is called \textit{$G$-invariant} if $U(g)\, \rho\, U(g)^\dagger = \rho$ for all $g \in G$, and we write $\rho \in \mathrm{End}^{G}(\mathcal{H})$. The \textit{twirl channel} with respect to $G$ is the CPTP map: \begin{equation}
\label{eq:twirl_channel}
    \mathcal{T}_G(\rho) \coloneqq \frac{1}{|G|}
    \sum_{g \in G} U(g)\, \rho\, U(g)^\dagger
\end{equation}
which projects any state onto the $G$-invariant subspace. By construction, $\mathcal{T}_G(\rho) \in \mathrm{End}^{G}(\mathcal{H})$ for every $\rho$.
We say that an SDP in primal form as in \eqref{eq:SDP_definition} is \textit{$G$-invariant} if $C \in \operatorname{End}^G(\mathcal{H})$ and for every feasible solution $X\in \mathcal{P}(\mathcal{H})$ and for every $g \in G$, $U(g) X U^\dag(g)$ remains a feasible solution. By convexity, one can then find an optimal solution in $\text{End}^{G}(\mathcal{H})$. In fact, if $X$ is an optimal solution of \eqref{eq:SDP_definition}, so is its group average \eqref{eq:twirl_channel}. We write $G_1 \leq G_2$ to denote that $G_1$ is a subgroup of $G_2$. Note that\begin{equation}
    G_1 \leq G_2 \Longrightarrow \text{End}^{G_2}(\mathcal{H}) \subseteq \text{End}^{G_1}(\mathcal{H}).
\end{equation}
A quantum channel $\mathcal{N}_{A \to B}$ is called \textit{$G$-covariant} if there exist unitary representations $g \mapsto U_A(g)$ on $\mathcal{H}_A$ and
$g \mapsto V_B(g)$ on $\mathcal{H}_B$ such that:
\begin{equation}
\label{eq:G_covariant_channel}
    \mathcal{N}_{A \to B}(U_A(g)\, \rho_A\, {U_A(g)}^\dagger)
    = V_B(g)\, \mathcal{N}_{A \to B}(\rho_A)\, {V_B(g)}^\dagger
    \qquad \forall\, g \in G,\; \forall\, \rho_A \in \mathcal{D}(\mathcal{H}_A).
\end{equation}
Given an arbitrary channel $\mathcal{N}_{A \to B}$, its \textit{$G$-twirl} is defined as:
\begin{equation}
\label{eq:channel_twirl}
    \overline{\mathcal{N}}(\rho) \coloneqq \frac{1}{|G|}
    \sum_{g \in G} {V_B(g)}\, \mathcal{N}(
    {U_A(g)}^\dagger\, \rho\, U_A(g))\, {V_B(g)}^\dagger,
\end{equation} and it is $G$-covariant by construction. The Choi matrix of a $G-$covariant channel \eqref{eq:G_covariant_channel} is invariant under the action of $U_A^T(g)\otimes V_B^\dag (g)$, i.e.:\begin{equation}
\label{eq:choi_group_covariant}
    \Gamma_{AB}^\mathcal{N} =(U_A^T(g)\otimes V_B^\dag (g))  \Gamma_{AB}^\mathcal{N}(U_A^T(g)\otimes V_B^\dag (g))^\dag  \qquad \forall\, g \in G.
\end{equation}
A channel $\mathcal{N}_{A\to B}$ is called $G-$invariant on its input (resp. output) if it satisfies the condition $ \mathcal{N}_{A \to B}(\rho_A) = V_B(g)\, \mathcal{N}_{A \to B}(\rho_A)\, {V_B(g)}^\dagger$ (resp. $  \mathcal{N}_{A \to B}(U_A(g)\, \rho_A\, {U_A(g)}^\dagger)
    = \mathcal{N}_{A \to B}(\rho_A)$) for all $g \in G$ and $\rho_A \in \mathcal{D}(\mathcal{H}_A)$. Its Choi state satisfies an invariance of the form \eqref{eq:choi_group_covariant} where the identity replaces $U_A$ (resp. $V_B$).  By \eqref{eq:concatenation_choi}, one immediately has:
    \begin{itemize}
        \item If $\mathcal{N}_1$ is $G$-invariant on its input and $\mathcal{N}_2$ is $G$-covariant, then  $\mathcal{N}_2 \circ \mathcal{N}_1$ is $G$-invariant on its  input;
        \item If $\mathcal{N}_1$ is $G$-covariant and $\mathcal{N}_2$ is $G$-invariant on its output, then $\mathcal{N}_2 \circ \mathcal{N}_1$ is $G$-invariant on its output.
    \end{itemize}

The main group of interest in this work is the \textit{symmetric group} on $n$ symbols, denoted by $\mathfrak{S}_n$, which acts on $\mathcal{H}^{\otimes n}$ by permuting tensor factors:
\begin{equation}
\label{eq:symmetric_group_action}
\pi \cdot (\ket{v_1} \otimes \cdots \otimes \ket{v_n}) \coloneqq \ket{v_{\pi^{-1}(1)}} \otimes \cdots \otimes \ket{v_{\pi^{-1}(n)}} \qquad \forall\, \ket{v_i} \in \mathcal{H},\ \forall\, \pi \in \mathfrak{S}_n.
\end{equation}
We denote by $P_{\mathcal{H}^{\otimes n}}(\pi)$ the permutation matrix, i.e., the unitary representation of $\pi \in \mathfrak{S}_n$ on $\mathcal{H}^{\otimes n}$. A state $\rho \in \mathcal{D}(\mathcal{H}^{\otimes n})$ is called \textit{permutation-invariant} if:
\begin{equation}
    P(\pi)\, \rho\, P(\pi)^\dagger = \rho \qquad \forall\, \pi \in \mathfrak{S}_n.
\end{equation}
Note that permutation matrices are real and $P(\pi)^\dagger = P(\pi)^T$ for all $\pi \in \mathfrak{S}_n$. We denote the symmetrizer channel as $\mathcal{S}_n \equiv \mathcal{T}_{\mathfrak{S}_n}$, i.e. the twirling channel \eqref{eq:channel_twirl} for $G = \mathfrak{S}_n$.

For any $\rho$, we denote its symmetrization as\begin{equation}
    \label{eq:symmetrization}
    \bar{\rho} \coloneqq \mathcal{S}_n(\rho) = \frac{1}{n!} \sum_{\pi \in \mathfrak{S}_n} P(\pi)\, \rho\, P(\pi)^\dagger,
\end{equation}which is permutation-invariant by construction. The subspace of permutation-invariant operators is denoted as:
\begin{equation}
\label{eq:def_End_Sn}
    \mathrm{End}^{\mathfrak{S}_n}(\mathcal{H}^{\otimes n}) \equiv \esn(\mathcal{L}(\HS\n)) \coloneqq \bigl\{\rho \in \mathcal{L}(\mathcal{H}^{\otimes n}) : P(\pi)\, \rho\, P(\pi)^\dagger = \rho \quad \forall\, \pi \in \mathfrak{S}_n\bigr\}.
\end{equation}
The simplest example of a permutation-invariant state is a tensor product state $\rho^{\otimes n}$. We will often consider multipartite operators $\rho_{RS^n} \in \mathcal{L}(R \otimes S^{\otimes n})$ for some system $S$ and a fixed reference $R$. We say that $\rho_{RS^n}$ is symmetric with respect to the reference $R$ if it is invariant under permutation of the $S$ systems while keeping $R$ fixed:
\begin{equation}
\label{eq:perm_inv_reference_system}
    \rho_{RS^n} = (\mathbbm{1}_R \otimes P_{S^n}(\pi))\, \rho_{RS^n}\, (\mathbbm{1}_R \otimes P_{S^n}(\pi))^\dagger \qquad \forall\, \pi \in \mathfrak{S}_n.
\end{equation}
The corresponding invariant subspace is denoted as $\mathrm{End}^{\mathfrak{S}_n}(R \otimes S^{\otimes n})$. We also consider operators $\rho_{S_1^n S_2^n} \in \mathcal{L}(S_1^{\otimes n} \otimes S_2^{\otimes n})$ that are invariant under simultaneous local permutations:
\begin{equation}
\label{eq:perm_inv_two_systems}
    \rho_{A^n B^n} = (P_{A^n}(\pi) \otimes P_{B^n}(\pi))\, \rho_{A^n B^n}\, (P_{A^n}(\pi) \otimes P_{B^n}(\pi))^\dagger \qquad \forall\, \pi \in \mathfrak{S}_n,
\end{equation}
with invariant subspace $\mathrm{End}^{\mathfrak{S}_n}(A^{\otimes n} \otimes B^{\otimes n})$. A quantum channel $\mathcal{N}_n \in \CPTP(A^n \to B^n)$ is \textit{permutation-covariant} if:
\begin{equation}
    \mathcal{N}_n(P_{A^n}(\pi)\, \rho\, P_{A^n}(\pi)^\dagger) = P_{B^n}(\pi)\, \mathcal{N}_n(\rho)\, P_{B^n}(\pi)^\dagger \qquad \forall\, \pi \in \mathfrak{S}_n
\end{equation}
or more compactly, $\mathcal{N}_n = \mathcal{U}_{B^n}^\dagger(\pi) \circ \mathcal{N}_n \circ \mathcal{U}_{A^n}(\pi)$ for all $\pi$, where $\mathcal{U}(\pi)(\cdot) = P(\pi)(\cdot)P(\pi)^\dagger$. The simplest example is the $n$-fold tensor product $\mathcal{N}^{\otimes n}$. It follows from~\eqref{eq:choi_tensor_product} that the Choi matrix of a permutation-covariant channel satisfies \eqref{eq:perm_inv_two_systems}, i.e. $\Gamma^{\mathcal{N}_n}_{A^n B^n} \in \mathrm{End}^{\mathfrak{S}_n}(A^{\otimes n} \otimes B^{\otimes n})$. While later on in this work we will primarily focus on $n$-fold tensor product channels, we stress that the construction applies to any permutation-covariant (possibly non-i.i.d.) channel.
\

A channel $\mathcal{E}_n : R \to S^n$ (resp. $\mathcal{D}_n : S^n \to R$) is symmetric on its output (resp. input) if it is invariant under permutations of the $S$ systems:
\begin{align}
\label{eq:condition_symmetry_encoder}
\mathcal{E}_n(\rho_R) = P_{S^n}(\pi)\mathcal{E}_n(\rho_R)P_{S^n}(\pi)^\dagger, \\
\label{eq:condition_symmetry_decoder}
\mathcal{D}_n(P_{S^n}(\pi)\rho_{S^n}P_{S^n}(\pi)^\dagger) = \mathcal{D}_n(\rho_{S^n}),
\end{align}
for all $\pi\in\mathfrak{S}_n$. Equivalently, $\mathcal{E}_{R \to S^n}= \mathcal{U}_{S^n}(\pi) \circ \mathcal{E}_{R \to S^n} \ \ \forall \pi \in \mathfrak{S}_n$ and $\mathcal{D}_{S^n \to R}=  \mathcal{D}_{S^n \to R} \circ \mathcal{U}_{S^n}(\pi) \ \ \forall \pi \in \mathfrak{S}_n$, or in terms of the Choi representation $\Gamma^{\mathcal{E}_n}_{RS^n}\in\mathrm{End}^{\mathfrak{S}_n}(R\otimes S^{\otimes n}), \Gamma^{\mathcal{D}_n}_{S^nR}\in\mathrm{End}^{\mathfrak{S}_n}(S^{\otimes n}\otimes R)$ in the sense of \eqref{eq:perm_inv_reference_system}.

A bipartite state $\rho_{RS} \in \mathcal{D}(R \otimes S)$ is called \textit{$k$-extendible} with respect to $S$~\cite{Werner1989, Doherty2002, Doherty2004} if there exists a permutation-invariant extension $\sigma_{RS^k} \in \mathrm{End}^{\mathfrak{S}_k}(R \otimes S^{\otimes k})$ with $\sigma_{RS^k} \geq 0$ and $\mathrm{Tr}_{S^{k-1}}[\sigma_{RS^k}] = \rho_{RS}$. The set of such states is denoted $\mathrm{EXT}_k(R; S)$. These sets form a nested hierarchy $\mathrm{SEP}(R : S) \subset \cdots \subset \mathrm{EXT}_{k+1} \subset \mathrm{EXT}_k \subset \cdots \subset \mathrm{EXT}_2$, converging to the set of separable states as $k \to \infty$ by the quantum de Finetti theorem~\cite{Caves2002, Christandl2007}. A quantum channel $\mathcal{N}_{A \to B}$ is called \textit{$k$-extendible}~\cite{Pankowski2013, Kaur2019} if its Choi state $\Phi^{\mathcal{N}}_{AB}$ is $k$-extendible with respect to $B$; equivalently, for every input $\rho_{RA}$ the output $\omega_{RB} \equiv (\mathrm{id}_R \otimes \mathcal{N}_{A\to B})(\rho_{RA})$ is $k$-extendible.

If an SDP is $G$-invariant with $G = \mathfrak{S}_n$, we say that it is \textit{permutation-invariant}. This observation, combined with a suitable basis for $\mathrm{End}^{\mathfrak{S}_n}$, is the starting point for the symmetry reduction developed in Section~\ref{sec:perm-inv}. 
\subsection{Maximal Singlet Fraction}
\label{subsec:maximal_singlet_fraction}

One crucial example of an SDP that we study in this work is the \textit{maximal singlet fraction}\footnote{This quantity was named \textit{quantum correlation} in \cite{Koenig2009} and  \textit{one-sided optimized singlet fidelity} in \cite{Skrzypczyk2023}.}, originally studied in \cite{Horodecki1999} and reviewed in detail e.g. in \cite[Chapter 7.4]{Skrzypczyk2023}, which measures the closeness of a bipartite state to a MES.
\begin{definition}
   Let $\Phi_{A\hat{A}}$ be a MES of dimension $d_A$, and let $\rho_{AB}\in \mathcal{D}(\mathcal{H}_{AB})$. The maximal singlet fraction of $\rho_{AB}$ is defined as:
\begin{equation}
\label{eq:maximal_singlet_fraction}
F_{\Phi}^{(B)}(\rho_{AB}) \coloneqq \max_{\mathcal{N}_{B \to \hat{A}}\in \CPTP(B\to \hat{A})} F\bigl(\Phi_{A\hat{A}}, (\mathrm{id}_{A} \otimes \mathcal{N}_{B \to \hat{A}})(\rho_{AB})\bigr).
\end{equation}
Note that in general we allow $d_{B} \neq d_A$, and the superscript $B$ in \eqref{eq:maximal_singlet_fraction} indicates that we maximize over all local channels on system $B$.
The maximal singlet fraction admits an SDP formulation in terms of the Choi 'state' $\Phi^{\mathcal{\mathcal{N}^*}}$ of the adjoint of the optimizing channel (which need not be normalized, as $\mathcal{N}^*$ need not be TP):\begin{equation}
\label{eq:SDP_Maximal_Singlet_Fraction}
\boxed{
\begin{aligned}
\textbf{Primal:}\quad & \\
\max_{\Phi^{\mathcal{\mathcal{N}^*}}_{AB}} \quad & \Tr\!\big(\rho_{AB}\, \Phi^{\mathcal{\mathcal{N}^*}}_{AB}\big) \\
\textrm{s.t.} \quad & \Tr_{A}[\Phi^{\mathcal{\mathcal{N}^*}}_{AB}] = \frac{\mathbbm{1}_{B}}{d_A} \\
& \Phi^{\mathcal{\mathcal{N}^*}}_{AB} \geq 0
\end{aligned}
\qquad\qquad
\begin{aligned}
\textbf{Dual:}\quad & \\
\min_{Y_{B}} \quad & \frac{1}{d_A}\,\Tr(Y_{B}) \\
\textrm{s.t.} \quad & \rho_{AB} \leq \mathbbm{1}_{A} \otimes Y_{B} \\
&
\end{aligned}
}
\end{equation}
\end{definition}
The situation is depicted in Figure \ref{fig:maximal_singlet_fraction_fixed}.
\begin{figure}
    \centering
\begin{tikzpicture}[
    every node/.style={font=\Large},
    block/.style={draw, line width=0.7pt, minimum width=2.5cm, minimum height=1.2cm, align=center},
    arrow/.style={line width=0.7pt, -{Stealth[scale=1.0]}},
    line/.style={line width=0.7pt}
]

\filldraw[black] (-1.5,1.2) circle (1.3pt) node[left] {$\rho_{AB}$};

\draw[line] (-1.5,1.2) -- (0,2.2);
\draw[line] (-1.5,1.2) -- (0,0.2);

\draw[line] (0,2.2) -- (5.5,2.2) node[midway, above] {$A$};

\draw[line] (0,0.2) -- (1.5,0.2) node[midway, above] {$B$};

\node[block] (decoder) at (3, 0.2) {$\mathcal{N}_{B \to \hat{A}}$};

\draw[arrow] (1.5,0.2) -- (decoder.west);

\draw[line] (decoder.east) -- (5.5,0.2) node[midway, above] {$\hat{A}$};

\coordinate (merge) at (7,1.2);
\draw[line] (5.5,2.2) -- (merge);
\draw[line] (5.5,0.2) -- (merge);

\node[right] at (merge) {$\approx \Phi^{d_A}_{A \hat{A}}$};

\end{tikzpicture}
  \caption{Maximal singlet fraction: the bipartite state $\rho_{AB}$ is processed locally on $B$ via $\mathcal{N}_{B \to \hat{A}}$ and compared to the maximally entangled state, using the fidelity in \eqref{eq:maximal_singlet_fraction}.}
\label{fig:maximal_singlet_fraction_fixed}
\end{figure}
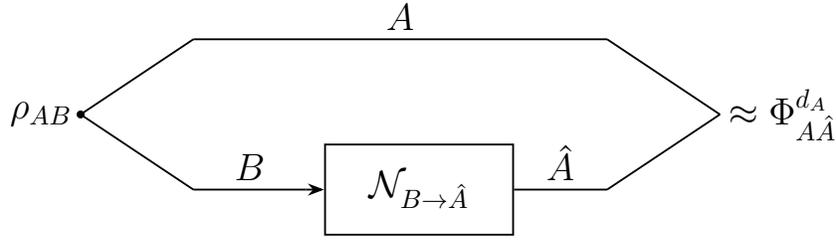 

The maximal singlet fraction finds important applications in entanglement theory and satisfies many useful properties (see \cite[Section 4.1.2]{Parentin2025Thesis} or \cite[Chapter 7.4]{Skrzypczyk2023} for a review). In particular, it satisfies strong duality, and the dual problem of \eqref{eq:SDP_Maximal_Singlet_Fraction} can be interpreted as a \textit{generalized robustness} \cite{VidalTarrach1999, Steiner2003}
 \begin{equation}
 \label{eq:maximal_singlet_fraction_as_generalized_roboustness}
 \begin{aligned}
F_{\Phi}^{(B)}(\rho) = \quad &  \min_{r} \frac{1+r}{d^2}  \\
& \text{subject to:}   \ \ \frac{\rho_{AB} + r\omega_{AB}}{1+r} = \pi_{A} \otimes \sigma_{B}\\
 & \ \ \ \ \ \ \ \ \ \ \ \ \ \ \ \omega_{AB} \in \mathcal{D}(\mathcal{H}_{AB}), \sigma_{B}\in \mathcal{D}(\mathcal{H}_{B}),
 \end{aligned}
\end{equation}
where $\omega_{AB}$ can be seen as the noise which must be added to the state $\rho_{AB}$ in order to bring it to a form where it is a product state such that the $A$ system is maximally mixed.
Moreover, \cite{Koenig2009} showed that this operational quantity is related to the \textit{conditional min-entropy} \cite{Renner2005, Datta2009}:\begin{align}
\label{eq:conditionalMinEntropy}
    H_{\min}(A|B)_{\rho_{AB}} &\coloneqq -\inf_{\sigma_{B} \in \mathcal{D}(B)}\log_2 \inf_{\lambda \geq 0}\{\lambda: \rho_{A B} \leq \lambda \mathbbm{1}_{A}\otimes \sigma_{B} \}) \\
    \label{eq:conditional_min_entropy_from_singlet_fraction}
    &= -\log\left(d_A \cdot F^{(B)}_{\Phi}(\rho_{AB})\right).
\end{align}

The following lemma shows that the maximal singlet fraction of a $G$-invariant state is always achieved by a local recovery channel that is $G-$invariant on its input.

\begin{lemma}
\label{lem:covariant_singlet_fraction}
Let $G$ be a finite group with unitary representation $g \mapsto V(g)$ on $B$, and let $\rho_{AB} \in \mathcal{D}(AB)$ be
invariant under the action of $G$ on $B$, i.e.:
\begin{equation}
    (\mathbbm{1}_{A} \otimes V_B(g))\, \rho_{AB}\,
    (\mathbbm{1}_{A} \otimes V_B(g))^\dagger = \rho_{AB}
    \qquad \forall\, g \in G.
\end{equation}
Then the maximal singlet fraction~\eqref{eq:maximal_singlet_fraction} is always achieved by a channel $\mathcal{N}_{B \to \hat{A}}$ that is $G$-invariant on its input:
\begin{equation}
    \mathcal{N}_{B \to \hat{A}}(V_B(g)\, \sigma_B\, V_B(g)^\dagger)
    = \mathcal{N}_{B \to \hat{A}}(\sigma_B)
    \qquad \forall\, g \in G,\; \forall\, \sigma \in \mathcal{D}(B).
\end{equation}
\end{lemma}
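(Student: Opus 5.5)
The plan is a standard twirling argument: take any optimal recovery channel and average it over the group. The maximal singlet fraction is a maximum of a continuous function over the compact set $\CPTP(B\to\hat{A})$, so an optimal channel $\mathcal{N}$ exists. From it I will build the twirled map
\begin{equation}
  \overline{\mathcal{N}}(\sigma_B) \coloneqq \frac{1}{|G|}\sum_{g\in G} \mathcal{N}\bigl(V_B(g)\,\sigma_B\,V_B(g)^\dagger\bigr),
\end{equation}
which is the twirl of \eqref{eq:channel_twirl} with the output representation taken trivial. The proof then has three steps.

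\textbf{Step 1: $\overline{\mathcal{N}}$ is a valid channel.} Each map $\mathcal{N}\circ\mathcal{U}_g$, where $\mathcal{U}_g(\cdot)=V_B(g)(\cdot)V_B(g)^\dagger$, is a composition of CPTP maps and hence CPTP. A convex combination of CPTP maps is CPTP, so $\overline{\mathcal{N}}$ is feasible.

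\textbf{Step 2: $\overline{\mathcal{N}}$ is $G$-invariant on its input.} Since $g\mapsto V_B(g)$ is a representation, $V_B(g)V_B(h)=V_B(gh)$. Replacing $\sigma_B$ by $V_B(h)\sigma_B V_B(h)^\dagger$ in the sum therefore just reindexes $g\mapsto gh$, which is a bijection of $G$. The sum is unchanged, so $\overline{\mathcal{N}}\circ\mathcal{U}_h=\overline{\mathcal{N}}$ for all $h\in G$.

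\textbf{Step 3: $\overline{\mathcal{N}}$ attains the same value.} By hypothesis, $(\mathrm{id}_A\otimes\mathcal{U}_g)(\rho_{AB})=\rho_{AB}$. Linearity then gives
\begin{equation}
  (\mathrm{id}_A\otimes\overline{\mathcal{N}})(\rho_{AB}) = \frac{1}{|G|}\sum_{g\in G} (\mathrm{id}_A\otimes\mathcal{N})\bigl((\mathrm{id}_A\otimes\mathcal{U}_g)(\rho_{AB})\bigr) = (\mathrm{id}_A\otimes\mathcal{N})(\rho_{AB}).
\end{equation}
The output state is identical, so the fidelity with $\Phi_{A\hat{A}}$ is unchanged. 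Hence $\overline{\mathcal{N}}$ is optimal.

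Step 3 already gives exact equality, so no concavity of the fidelity is needed. Alternatively one could work in the SDP \eqref{eq:SDP_Maximal_Singlet_Fraction} by twirling $\Phi^{\mathcal{N}^*}_{AB}$ with $\mathbbm{1}_A\otimes V_B(g)^{T}$ (or its conjugate). That route would require checking the transpose convention in the Choi representation, which is the only step where bookkeeping could go wrong. The channel-level argument above avoids it, so I expect no real obstacle beyond stating existence of an optimizer and the reindexing in Step 2 carefully.
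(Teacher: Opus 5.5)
Your proof is correct and is the same twirling argument as the paper's; the only difference is that the paper carries it out inside the SDP \eqref{eq:SDP_Maximal_Singlet_Fraction}. There it checks that $\Phi^{\overline{\mathcal{N}}^*}_{AB}=\frac{1}{|G|}\sum_{g}(\mathbbm{1}_A\otimes V_B(g))\,\Phi^{\mathcal{N}^*}_{AB}\,(\mathbbm{1}_A\otimes V_B(g))^\dagger$ is feasible with the same objective value, whereas you show at the channel level that the output state itself is unchanged (slightly cleaner and a stronger conclusion), and the transpose bookkeeping you anticipated does not arise, since $B$ is the output of $\mathcal{N}^*$ and the twirl acts on it by $V_B(g)$ itself.
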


\begin{proof}
Let $\mathcal{N}_{B \to \hat{A}}$ be any feasible channel for the
SDP~\eqref{eq:SDP_Maximal_Singlet_Fraction}, and define its $G$-input twirl as per \eqref{eq:twirl_channel}:
\begin{equation}
    \overline{\mathcal{N}}_{B \to \hat{A}}(\sigma_B) \coloneqq \frac{1}{|G|}
    \sum_{g \in G} \mathcal{N}(V_B(g)^\dagger\, \sigma_B\, V_B(g)).
\end{equation}
Combining \eqref{eq:choi_group_covariant} and \eqref{eq:adjoint_choi}, the Choi state of $ \overline{\mathcal{N}}^*_{\hat{A} \to B}$ satisfies for all $g \in G$:\begin{equation}
\label{eq:G-twirl-choi-adj}
    \Phi^{\overline{\mathcal{N}}^*}_{AB} = (\mathbbm{1}_{A} \otimes V_B(g))      \Phi^{\overline{\mathcal{N}}^*}_{AB}   (\mathbbm{1}_{A} \otimes V^\dag_B(g))  
\end{equation}
The operator in \eqref{eq:G-twirl-choi-adj} satisfies both constraints in \eqref{eq:SDP_Maximal_Singlet_Fraction}. Indeed, $\Phi^{\overline{\mathcal{N}}^*}_{AB}  \geq 0$ (convex combination of PSD operators) and:\begin{equation}
    \mathrm{Tr}_{A}[\overline{\Phi^{\mathcal{N}^*}_{AB}}] =
\frac{\mathbbm{1}_B}{d_A} \frac{1}{|G|} \sum_g V^\dag(g) V(g) =
\frac{\mathbbm{1}_B}{d_A}.
\end{equation}
Moreover:
\begin{align}
    \mathrm{Tr}\!\left[\rho_{AB}\,
    \overline{\Phi^{\mathcal{N}^*}_{AB}}\right]
    &= \frac{1}{|G|} \sum_{g \in G}
    \mathrm{Tr}\!\left[\rho_{AB}\,
    (\mathbbm{1}_{A} \otimes V_B(g))\,
    \Phi^{\mathcal{N}^*}_{AB}\,
    (\mathbbm{1}_{A} \otimes V_B(g))^\dagger\right] \\
    &= \Tr[\overline{\rho_{AB}}   \Phi^{\mathcal{N}^*}_{AB}] \\
    &= \mathrm{Tr}\!\left[\rho_{AB}\,
    \Phi^{\mathcal{N}^*}_{AB} \right]
\end{align}
where the last step follows from the $G-$invariance of $\rho_{AB}$. Hence, $\overline{\mathcal{N}}$ achieves the same objective as $\mathcal{N}$ and is $G$-invariant on its input by construction.
\end{proof}
This lemma in particular applies to the case where the second system is of the form $B^n$ for some $n \in \mathbb{N}$ and $\rho_{AB^n} \in \mathrm{End}^{\mathfrak{S}_n}(A \otimes B^{\otimes n})$: in this case, the map $\mathcal{N}_{B^n \to \hat{A}}$ can be restricted to be symmetric on its input as in \eqref{eq:condition_symmetry_decoder}.
\medskip

In the following, we will also make use of the maximal singlet fraction with optimization over the first system $A$, defined symmetrically to~\eqref{eq:maximal_singlet_fraction}:
\begin{equation}
\label{eq:maximal_singlet_fraction_A}
    F^{(A)}_{\Phi}(\rho_{AB}) \coloneqq \max_{\mathcal{N}_{A \to \hat{B}}\in \mathrm{CPTP}(A\to \hat{B})} F\bigl(\Phi_{B\hat{B}},\, (\mathcal{N}_{A \to \hat{B}} \otimes \mathrm{id}_{B})(\rho_{AB})\bigr).
\end{equation}

This quantity admits an analogous SDP formulation to \eqref{eq:SDP_Maximal_Singlet_Fraction}, with the roles of systems $A$ and $B$ swapped, and shares analogous properties to it. For instance, Lemma~\ref{lem:covariant_singlet_fraction} holds verbatim with $A$ and $B$ exchanged, and one has:\begin{align}
\label{eq:conditionalMinEntropy_BA}
    H_{\min}(B|A)_{\rho_{AB}} &= -\log\left(d_B \cdot F^{(A)}_{\Phi}(\rho_{AB})\right).
\end{align}

\section{Permutation Invariance: Efficient Operations and Isomorphism}
\label{sec:perm-inv}

\subsection{Permutation-Invariant Operators}
\label{sec:orbit_basis_operations}
Often in quantum information theory one considers independent and identically-distributed (i.i.d.) copies of a quantum state $\rho_A$ or a quantum channel $\mathcal{N}_{A \to B}^{\otimes n}$, which by definition satisfy:\begin{equation}
    \rho^{\otimes n}_{A^n} \in \esn(A\n) \qquad \Gamma^{\mathcal{N}^{\otimes n}}_{A^n B^n} \in \esn(A\n \otimes B\n).
\end{equation}
This permutation invariance can be exploited to avoid the exponential scaling of SDPs, because $\esn(\HS\n)\subset \mathcal{L}(\HS\n)$ is a subspace of \textit{polynomial dimension} in $n$, scaling as $O((n+1)^{d_\HS^2})$ instead of $O(d_\HS^{2n})$ (see \eqref{eq:dim_space_perm_inv}). In order to exploit this, we require a framework for dealing with permutation-invariant operators and efficiently phrase optimization problems in $\esn(\HS\n)$. To do so, the idea is to construct a suitable basis of $\esn(\HS\n)$ and rephrase SDPs as ones in which we minimize over now $O(\text{poly}(n))$ many basis coefficients, avoiding to ever construct matrices of exponential size in $n$.
\subsubsection{Orbit Basis and Count Matrices}
The space $\mathrm{End}^{\mathfrak{S}_n}(\mathcal{H}^{\otimes n})$ defined in \eqref{eq:def_End_Sn} admits a natural basis indexed by orbits of $\mathfrak{S}_n$ acting diagonally on pairs of multi-indices. Formally, we define the \textit{group orbit} of a pair of multi-indices $(\underline{i} \in [d_\HS]^n,\underline{j} \in [d_\HS]^n)$ under the action of $\mathfrak{S}_n$ as\begin{equation}
    O(\underline{i},\underline{j}) \coloneqq \{(\pi(\underline{i}),\pi(\underline{j})): \pi \in \mathfrak{S}_n\}.
\end{equation} 
We label these orbits as $O_r^{\mathcal{H}}$, for $r = 1, \ldots, m_{\mathcal{H}}$, where we denoted $m_{\mathcal{H}} \coloneqq \dim \mathrm{End}^{\mathfrak{S}_n}(\mathcal{H}^{\otimes n})$. Given $A \in \mathrm{End}^{\mathfrak{S}_n}(\mathcal{H}^{\otimes n})$, one clearly has by permutation invariance $ (A)_{(\underline{i},\underline{j})} =  (A)_{(\pi(\underline{i}),\pi(\underline{j}))}$ for all $\pi \in \mathfrak{S}_n.$
We define the orbit matrix as the following incidence matrix:
\begin{equation}
    (C_r^{\mathcal{H}})_{(\underline{i},\underline{j})} \coloneqq \begin{cases}
        1 & \text{if} \ \ {(\underline{i},\underline{j})} \in O_r^{\mathcal{H}}\\
        0 & \text{otherwise}\, .
    \end{cases}
\end{equation}
The \textit{orbit basis} is the set of orbit matrices $\{C_r^{\mathcal{H}}\}_{r=1}^{m_{\mathcal{H}}}$, and they form an orthogonal basis of $\mathrm{End}^{\mathfrak{S}_n}(\mathcal{H}^{\otimes n})$ with respect to the Hilbert--Schmidt inner product (this orthogonality is very easy to see, and one also easily sees that the number of orbits matches the dimension of $\esn(\HS\n)$). Every orbit $r$ can be identified by a representative multi-index pair $(\underline{i_r}, \underline{j_r})$, which can be efficiently computed using the one-to-one correspondence between orbits and count matrices: A \textit{count matrix} is a matrix of natural numbers $E \in \mathbb{N}^{d_{\mathcal{H}} \times d_{\mathcal{H}}}$ satisfying\begin{equation}
\label{eq:one_to_one_correspondence}
    \sum_{a,b =1}^{d_{\mathcal{H}}} E_{(a,b)} = n.
\end{equation}
For a given multi-index pair $(\underline{i},\underline{j})$, we can associate a corresponding count matrix, by counting the number of times the symbols $(a,b)$ appear in the multi-index pair $(\underline{i}, \underline{j})$, i.e.:
\begin{equation}
\label{eq:count_matrix}
    (E^{(\underline{i},\underline{j})})_{(a,b)} = |\{k \in \{1,\ldots, n\}: i_k = a,\, j_k = b\}| \qquad a,b \in \{1,\ldots, d_{\mathcal{H}}\}.
\end{equation}
By this construction, for every pair $(\underline{i},\underline{j})$, $(\underline{i}',\underline{j}') \in \{1,..., d_{\mathcal{H}}\}^n\times\{1,..., d_{\mathcal{H}}\}^n$, we have:\begin{equation}
    (\underline{i}',\underline{j}') = (\pi(\underline{i}),\pi(\underline{j})) \ \ \ \ \text{for some} \ \pi \in \mathfrak{S}_n  \iff E^{(\underline{i},\underline{j})} = E^{ (\underline{i}',\underline{j}')}\,.
\end{equation}
Additionally, given a count matrix $E$ we can always construct a representative pair of multi-indices $(\underline{i}, \underline{j})$ such that $E = E^{(\underline{i}, \underline{j})}$ by just going through all pairs $(a,b)$ and adding $E_{(a,b)}$ copies $(a,b)$ to the string. 
Therefore, each orbit $O_r^{\mathcal{H}}$ is uniquely associated to a non-negative integer solution of~\eqref{eq:one_to_one_correspondence}. Thanks to this correspondence, we will sometimes use the notation $C_E$ to indicate the matrix with element:
\begin{equation}
\label{eq:orbit_count_relation}
    (C_E)_{(\underline{i}, \underline{j})} \coloneqq \begin{cases}
        1 &\text{if} \ \ E^{(\underline{i}, \underline{j})} = E\\
        0 &\text{otherwise}\, .
    \end{cases}
\end{equation} 
Since two pairs of symbols are in the same orbit if and only if they give the same count matrix $E_{(a,b)}$, orbits correspond one-to-one with \textit{weak compositions of $n$ into $d^2$}, one part for each pair $(a,b)$, whose number is:
\begin{equation}
\label{eq:dim_space_perm_inv}
    m_{\mathcal{H}} = \binom{n + d_{\mathcal{H}}^2 -1}{d_{\mathcal{H}}^2 -1}\leq (n+1)^{d_\mathcal{H}^2}.
\end{equation} Given an operator $X\in \mathcal{L}(\mathcal{H})$, the coefficients of its tensor product $X^{\otimes n} \in \mathrm{End}^{\mathfrak{S}_n}(\mathcal{H}^{\otimes n})$ with respect to the orbit basis can be computed straightforwardly from $X$ by picking a representative of orbit $r$, call it $(\underline{i_r}, \underline{j_r})$, and computing the product of the different $n$ elements of the original matrix in the given position:
\begin{equation}
\label{eq:coeff_tensor_product}
    c_r = \prod_{k=1}^n X_{(i_r)_k, (j_r)_k}.
\end{equation}
such that then
\begin{equation}
\label{eq:tensor_product_orbit}
    X^{\otimes n} = \sum_{r = 1}^{m_{\mathcal{H}}}c_r\, C_r^{\mathcal{H}}.
\end{equation}
Since only \textit{one} representative of each orbit is required, these coefficients are computable in $O(\mathrm{poly}(n))$ time. 

\begin{lemma}
\label{lem:support_and_counts}
Let $X \in \mathcal{L}(\mathcal{H})$ and define $\mathrm{nz}(X) \coloneqq \{(a,b) \in [d_{\mathcal{H}}]^2 : X_{(a,b)} \neq 0\}$. Then, if $X^{\otimes n}$ is expressed as in \eqref{eq:tensor_product_orbit}, one has\begin{equation}
\label{eq:condition_sparsity}
    c_r \neq 0 \iff (E_r)_{(a,b)} = 0 \quad \forall (a,b) \notin \mathrm{nz}(X).
\end{equation}
    The number of such orbits is:
\begin{equation}
\label{eq:sparse_orbits}
 \binom{n + |\mathrm{nz}(X)| - 1}{n}
    = O(n^{|\mathrm{nz}(X)| - 1}).
\end{equation}
\end{lemma}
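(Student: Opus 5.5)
The proof rests on rewriting the coefficient formula \eqref{eq:coeff_tensor_product} in terms of the count matrix of the orbit. For an orbit $r$ with representative $(\underline{i_r},\underline{j_r})$ and count matrix $E_r = E^{(\underline{i_r},\underline{j_r})}$, the factors $X_{(i_r)_k,(j_r)_k}$ in the product depend only on the pair $((i_r)_k,(j_r)_k)$. Grouping them by pairs $(a,b)$ gives
\begin{equation*}
    c_r = \prod_{a,b=1}^{d_{\mathcal{H}}} X_{(a,b)}^{(E_r)_{(a,b)}},
\end{equation*}
with the convention $0^0 = 1$. This also confirms that $c_r$ does not depend on the choice of representative.

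The equivalence \eqref{eq:condition_sparsity} then follows from the fact that a finite product of complex numbers vanishes if and only if one of its factors vanishes. The factor $X_{(a,b)}^{(E_r)_{(a,b)}}$ is zero precisely when $X_{(a,b)} = 0$ and $(E_r)_{(a,b)} \geq 1$. Hence $c_r \neq 0$ if and only if $(E_r)_{(a,b)} = 0$ for every $(a,b) \notin \mathrm{nz}(X)$.

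For the count \eqref{eq:sparse_orbits}, I would use the bijection between orbits and count matrices established before \eqref{eq:dim_space_perm_inv}. Orbits with $c_r \neq 0$ correspond exactly to count matrices supported on $\mathrm{nz}(X)$, that is, to non-negative integer vectors indexed by $\mathrm{nz}(X)$ whose entries sum to $n$. These are weak compositions of $n$ into $k \coloneqq |\mathrm{nz}(X)|$ parts. By stars and bars their number is $\binom{n+k-1}{k-1} = \binom{n+k-1}{n}$.

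For the asymptotics with $k$ fixed, I would bound
\begin{equation*}
    \binom{n+k-1}{k-1} = \prod_{j=1}^{k-1}\frac{n+j}{j} \leq (n+1)^{k-1} = O(n^{k-1}).
\end{equation*}

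No step is a real obstacle. The only points needing care are the convention $0^0=1$ for entries that do not occur in the orbit, and the degenerate case $X = 0$. In that case $k = 0$, no orbit has nonzero coefficient, and the binomial formula should be read as $0$ for $n \geq 1$.
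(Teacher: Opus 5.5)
Your proof is correct and follows the same route as the paper: $c_r$ is a product of entries of $X$, it vanishes exactly when the orbit uses some pair outside $\mathrm{nz}(X)$, and stars and bars counts the surviving count matrices. The rewriting $c_r=\prod_{a,b}X_{(a,b)}^{(E_r)_{(a,b)}}$, the explicit bound $\binom{n+k-1}{k-1}\le (n+1)^{k-1}$, and the remark on the degenerate case $X=0$ only spell out points the paper leaves implicit.
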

\begin{proof}
By~\eqref{eq:coeff_tensor_product}, the coefficient of orbit $r$ with representative $(\underline{i_r}, \underline{j_r})$ is $c_r = \prod_{k=1}^n X_{(i_r)_k, (j_r)_k}$. This product vanishes if any factor $X_{(i_r)_k, (j_r)_k} = 0$, i.e.\ if any position $k$ carries a pair type $(a,b) \notin \mathrm{nz}(X)$, which is equivalent to \eqref{eq:condition_sparsity}. The number of valid count matrices is then equal to the number of ways to distribute $n$ among $|\mathrm{nz}(X)|$ allowed pair types, giving~\eqref{eq:sparse_orbits}.
\end{proof} 

Two fundamental operations on orbit matrices --- the trace and the Hilbert--Schmidt inner product --- can be computed directly from the count matrices, without constructing the (exponentially large) corresponding orbit matrices.

\begin{lemma}
\label{lemma:trace_HS_orbit}
Let $\{C_r^{\mathcal{H}}\}_{r=1}^{m_{\mathcal{H}}}$ be the orbit basis of $\mathrm{End}^{\mathfrak{S}_n}(\mathcal{H}^{\otimes n})$, with corresponding count matrices $\{E_r \in \mathbb{N}^{d_{\mathcal{H}} \times d_{\mathcal{H}}}\}_r$. Then:
\begin{enumerate}
    \item \textbf{Orthogonality.} The orbit matrices are pairwise orthogonal:
    \begin{equation}
        \mathrm{Tr}[(C_r^{\mathcal{H}})^\dagger C_{r'}^{\mathcal{H}}] = \delta_{r,r'}\, |O_r|,
    \end{equation}
    where the squared Hilbert--Schmidt norm equals the orbit size:
    \begin{equation}
    \label{eq:inner_product_HS}
        \|C_r^{\mathcal{H}}\|_{\mathrm{HS}}^2 = |O_r| = \binom{n}{(E_r)_{(1,1)},\, (E_r)_{(1,2)},\, \ldots,\, (E_r)_{(d_{\mathcal{H}}, d_{\mathcal{H}})}}.
    \end{equation}
    \item \textbf{Trace.} The trace of an orbit matrix is nonzero only for diagonal count matrices:
    \begin{equation}
    \label{eq:formula_trace_orbit_matrices}
        \mathrm{Tr}[C_r^{\mathcal{H}}] = \begin{cases} \displaystyle\frac{n!}{\prod_{a=1}^{d_{\mathcal{H}}} m_a!} & \text{if } E_r \text{ is diagonal, with } m_a \equiv (E_r)_{aa} \\[6pt] 0 & \text{otherwise} \, . \end{cases}
    \end{equation}

\end{enumerate}
All quantities are computable in $O(m_{\mathcal{H}} \cdot d_{\mathcal{H}}^2)$ time from the coefficients and count matrices alone.
\end{lemma}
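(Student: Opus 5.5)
The plan is to work entirely at the level of index pairs, using that each orbit matrix $C_r^{\mathcal{H}}$ is the $0/1$ indicator matrix of the set $O_r \subseteq [d_{\mathcal{H}}]^n \times [d_{\mathcal{H}}]^n$. Both parts then reduce to elementary counting, combined with the orbit/count-matrix correspondence established just above: two pairs lie in the same orbit exactly when their count matrices agree.

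For orthogonality, I would expand $\mathrm{Tr}[(C_r^{\mathcal{H}})^\dagger C_{r'}^{\mathcal{H}}] = \sum_{(\underline{i},\underline{j})} (C_r)_{(\underline{i},\underline{j})}(C_{r'})_{(\underline{i},\underline{j})}$, using that the entries are real. Since distinct orbits are disjoint, the product of indicators is $\delta_{r,r'}$ times the indicator of $O_r$. The sum therefore equals $\delta_{r,r'}|O_r|$.

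To evaluate $|O_r|$, I would identify a pair $(\underline{i},\underline{j})$ with the word $((i_1,j_1),\dots,(i_n,j_n))$ over the alphabet $[d_{\mathcal{H}}]^2$. By the correspondence, $O_r$ is exactly the set of such words in which each letter $(a,b)$ occurs $(E_r)_{(a,b)}$ times. Counting arrangements of a multiset gives the multinomial coefficient in \eqref{eq:inner_product_HS}. The one point needing care is that $O_r$, defined as $\{(\pi(\underline{i}),\pi(\underline{j}))\}$, is the full set of words with count matrix $E_r$, and not a set with multiplicities. This follows because $\mathfrak{S}_n$ acts transitively on arrangements of a fixed multiset.

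For the trace, $\mathrm{Tr}[C_r^{\mathcal{H}}]$ counts the diagonal pairs $(\underline{i},\underline{i})$ in $O_r$. Any such pair has $i_k=j_k$ for all $k$, so its count matrix is diagonal. Hence the trace vanishes when $E_r$ is not diagonal. If $E_r$ is diagonal with entries $m_a$, then every word in $O_r$ uses only letters $(a,a)$, so every pair in $O_r$ is diagonal and the trace equals $|O_r|$. The multinomial from part 1 then reduces to $n!/\prod_a m_a!$, which is \eqref{eq:formula_trace_orbit_matrices}.

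For the complexity claim, each quantity for a single orbit is a multinomial over $d_{\mathcal{H}}^2$ entries, or a diagonality check of $E_r$, costing $O(d_{\mathcal{H}}^2)$ arithmetic operations given precomputed factorials up to $n$. Summing over the $m_{\mathcal{H}}$ orbits gives $O(m_{\mathcal{H}} d_{\mathcal{H}}^2)$. By linearity, traces and inner products of general elements $\sum_r c_r C_r^{\mathcal{H}}$ follow from these per-orbit values. There is no real obstacle here; the only step requiring care is the identification of orbits with the full multiset-permutation classes.
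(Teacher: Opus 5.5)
Your proposal is correct and follows essentially the same route as the paper. Both arguments use index-level counting with the indicator structure of $C_r^{\mathcal{H}}$ and disjointness of orbits for orthogonality, the multinomial count of arrangements of pair-letters for $|O_r|$, and the observation that diagonal index pairs have diagonal count matrices for the trace. The differences are cosmetic: you get the diagonal-case trace as $|O_r|$ specialized from part 1 rather than counting multi-indices directly, and you justify the $O(m_{\mathcal{H}} d_{\mathcal{H}}^2)$ cost, which the paper leaves implicit.
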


\begin{proof}
\begin{enumerate}
    \item 
Since $(C_r^{\mathcal{H}})_{\underline{i},\underline{j}} \in \{0,1\}$ with value $1$ iff $(\underline{i},\underline{j}) \in O_r^{\mathcal{H}}$, we have:
\begin{equation*}
    \|C_r^{\mathcal{H}}\|_{\mathrm{HS}}^2 = \sum_{\underline{i},\underline{j}} |(C_r^{\mathcal{H}})_{(\underline{i},\underline{j})}|^2 = |O_r|.
\end{equation*}
The orbit size counts the distinct permutations of $n$ positions among $d_{\mathcal{H}}^2$ pair types with multiplicities $(E_r)_{ab}$, and is hence given by the multinomial coefficient stated above. For $r \neq r'$, the two orbits $O_r$ and $O_{r'}$ are disjoint (since all orbits partition the set of all multi-index pairs), and so $(C_r^{\mathcal{H}})^\dagger C_{r'}^{\mathcal{H}}$ has all zero diagonal entries, hence zero trace.

\item A diagonal entry $(\underline{i},\underline{i})$ belongs to orbit $r$ iff $E^{(\underline{i},\underline{i})} = E_r$. Since $(E^{(\underline{i},\underline{i})})_{ab} = \delta_{ab}\, |\{k : i_k = a\}|$, the count matrix of any diagonal entry is always diagonal. Thus $\mathrm{Tr}[C_r^{\mathcal{H}}] = 0$ unless $E_r$ is diagonal. When it is, $\mathrm{Tr}[C_r^{\mathcal{H}}]$ counts the distinct multi-indices $\underline{i}$ with $m_a = (E_r)_{(a,a)}$ occurrences of symbol $a$, giving the multinomial $n!/\prod_a m_a!$.
\end{enumerate}
\end{proof}
As a direct consequence of the above lemma and of linearity of the trace, we also have that for all $X_{S^n},Y_{S^n} \in \mathrm{End}^{\mathfrak{S}_n}(S^{\otimes n})$, written as $X_{S^n} = \sum_{r\in [m_S]} x_r\, C_r^{S}$ and $Y_{S^n} = \sum_{r\in [m_S]} y_r\, C_r^{S}$:
    \begin{align}
               \mathrm{Tr}[X_{S^n}] &= \sum_{r=1}^{m_{S}} x_r\, \mathrm{Tr}[C_r^{S}],\\
        \mathrm{Tr}[X_{S^n}^\dagger Y_{S^n}] &= \sum_{r=1}^{m_{S}} \overline{x_r}\, y_r\, |O_r|.
    \end{align} Given $X_{S^n}\in  \mathrm{End}^{\mathfrak{S}_n}(S^{\otimes n})$, we'll often be interested in computing its full transpose $X^T_{S^n}$ (when seen as an element of $\mathcal{L}(\HS_S\n)$), which is also in $\mathrm{End}^{\mathfrak{S}_n}(S^{\otimes n})$. To do so, we rely on the following simple result.
\begin{lemma}
\label{lem:transpose_orbit_basis}
Let $C_r$ be an orbit matrix corresponding to count matrix $E_r$. Then, its transpose $C_r^T$ is the orbit matrix corresponding to transposed count matrix of $(E_r)^T$, i.e., using~\eqref{eq:orbit_count_relation}:
\begin{equation}
\label{eq:count-orbit-transpose-relation}
C_r^T =C_{r^T}  \coloneqq C_{E_r^T}.
\end{equation}
\end{lemma}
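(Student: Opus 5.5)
The plan is to compare the two matrices entrywise, using the characterization \eqref{eq:orbit_count_relation} of orbit matrices through count matrices. By definition of the transpose, $(C_r^T)_{(\underline{i},\underline{j})} = (C_r)_{(\underline{j},\underline{i})}$. This entry equals $1$ if and only if $E^{(\underline{j},\underline{i})} = E_r$, and $0$ otherwise. So it suffices to relate the count matrix of the swapped pair $(\underline{j},\underline{i})$ to that of $(\underline{i},\underline{j})$.

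The key step is the identity $E^{(\underline{j},\underline{i})} = \big(E^{(\underline{i},\underline{j})}\big)^T$, which follows directly from \eqref{eq:count_matrix}:
\begin{equation*}
(E^{(\underline{j},\underline{i})})_{(a,b)} = |\{k : j_k = a,\, i_k = b\}| = (E^{(\underline{i},\underline{j})})_{(b,a)}.
\end{equation*}
Since transposition is an involution on $\mathbb{N}^{d_{\mathcal{H}}\times d_{\mathcal{H}}}$, the condition $E^{(\underline{j},\underline{i})} = E_r$ is equivalent to $E^{(\underline{i},\underline{j})} = E_r^T$. Hence $(C_r^T)_{(\underline{i},\underline{j})} = 1$ exactly when $E^{(\underline{i},\underline{j})} = E_r^T$, which is the defining property of $C_{E_r^T}$.

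It remains to check that $C_{E_r^T}$ is a genuine orbit basis element. This holds because $E_r^T$ is again a nonnegative integer matrix with entries summing to $n$, so it is a valid count matrix and corresponds to some orbit $r^T$ under the one-to-one correspondence. No step is a real obstacle. The only care needed is to keep the index convention of \eqref{eq:count_matrix} consistent, so that the row index of $E$ tracks $\underline{i}$ and the column index tracks $\underline{j}$.
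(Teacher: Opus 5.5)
Your proof is correct and takes the same approach as the paper's: both compare entries via $(C_r^T)_{(\underline{i},\underline{j})} = (C_r)_{(\underline{j},\underline{i})}$ and observe that swapping the two multi-indices transposes the count matrix. Your version is slightly more explicit, since you verify $E^{(\underline{j},\underline{i})} = (E^{(\underline{i},\underline{j})})^T$ for every pair rather than only for a representative, and you also check that $E_r^T$ is again a valid count matrix.
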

\begin{proof}
Let $C_r \coloneqq C_E$ where $E \coloneqq E^{(\underline{i_r}, \underline{j_r})}$ for some representative multi-index pair $(\underline{i_r}, \underline{j_r})$. By definition of the transpose, $(C_r^T)_{(\underline{i},\underline{j})} = (C_r)_{(\underline{j},\underline{i})}, \quad \forall \underline{i}, \underline{j} \in [d_S]^{\times n}$.
The new representative multi-index pair is obtained from $(\underline{i_r}, \underline{j_r})$ by swapping the roles of $a$ and $b$ in~\eqref{eq:count_matrix}. Hence, it generates an orbit whose count matrix is precisely the transpose of $E$.
\end{proof}
By linearity of the transpose, we conclude that for $X_{S^n} = \sum_{r\in [m_S]} x_r\, C_r^{S}$:
\begin{align}
(X_{S^n})^T &=  \sum_{r=1}^{m_S} x_{r}\, C_{r^T}^S = \sum_{r=1}^{m_S} x'_{r}\,  C_r^S,
\end{align}
where $x'_{r} \coloneqq x_{r^T}$ for all $r \in [m_S]$. In other words, transposition in the orbit basis corresponds to a simple reshuffle of the orbit basis coefficients $\{x_r\}_{r \in [m_S]}$. If $X_{S^n}^\dag = X_{S^n},$ (as it is often the case), then $x'_{r} = \overline{x}_r$ for all $r \in [m_s]$.

In many settings, it is also interesting to consider partial transposes and partial traces of bipartite operators of the form $X_{A^{n} B^{n}}\in \mathrm{End}^{\mathfrak{S}_n}(A^{\otimes n}\otimes B^{\otimes n})$, i.e. permutation-invariant with respect to joint permutations of systems $A$ and $B$.
In this case, every canonical basis vector $\ket{i} \in \HS_A \otimes \HS_B$ splits as $\ket{i} = \ket{i_A i_B}$, and hence also every multi-index pair $(\underline{i_s}, \underline{j_s})$ splits into two pairs of pairs $(\underline{i^{(A)}_s}\, \underline{i^{(B)}_s}, \underline{j^{(A)}_s}\,\underline{j^{(B)}_s})$. The pairing $(\underline{i^{(A)}_s}, \underline{j^{(A)}_s})$ then corresponds to an orbit $O(\underline{i^{(A)}_s}, \underline{j^{(A)}_s}) = O^A_r$ in $\esn(\HS_A\n)$ for some $r \in [m_A]$, and we will also write $r \equiv r(s)$ to emphasize that this is a \enquote{part} of the original joint orbit $s$. Analogously also the orbit $O(\underline{i^{(B)}_s}, \underline{j^{(B)}_s}) = O^B_t$ corresponds to an orbit $t(s) = t \in [m_B]$ in $\esn(\HS_B\n)$. In this way, the count matrix $E_{AB}^{(\underline{i}_{s}, \underline{j}_{s})} \in \mathbb{N}^{(d_A d_B) \times (d_A d_B)}$ can be thought of as a bipartite matrix with generic element $\left(E_{AB}^{(\underline{i}_{s}, \underline{j}_{s})}\right)_{(a_A, a_B), (b_A, b_B)}$, for $a_A, b_A \in [d_A]$ and $a_B, b_B\in [d_B]$, and the corresponding marginal count matrices on $A$ and $B$ are\begin{align}
 \label{eq:A_marginal}
    (E_A)_{(a_A, b_A)}  &\coloneqq (E_{r(s)})_{(a_A, b_A)} = \sum_{a_B, b_B = 1}^{d_B} (E_{AB})_{(a_A,a_B),(b_A,b_B)},\\
    (E_B)_{(a_B, b_B)}  &\coloneqq (E_{t(s)})_{(a_B, b_B)} = \sum_{a_A, b_A = 1}^{d_A} (E_{AB})_{(a_A,a_B),(b_A,b_B)}.
     \label{eq:B_marginal}
\end{align}
Then, following the same proof as in Lemma \ref{lem:transpose_orbit_basis}, the partial transpose of the corresponding orbit matrix with respect to system $B$ can be computed as\begin{equation}
\label{eq:count-orbit-partial-transpose-relation}
    C_s^{T_B} = C_{s^{T_B}}  \coloneqq C_{E^{T_B}},
\end{equation}
where $E^{T_B}_{AB}$ denotes the partial-transposed count matrix. An analogous statement holds for $T_{A}$. This allows one to efficiently compute partial transposes of operators $X_{A^nB^n} = \sum_{r\in [m_{AB}]} x_r\, C_r$ via \begin{equation}\label{eq:coefficient_partial_transpose_relation}
    X^{T_{B}}_{A^n B^n} = \sum_{s=1}^{m_{AB}} x'_s\, C_s,
\end{equation}
where $x'_s \coloneqq x_{s^{T_{B}}}$ for all $s \in [m_{AB}]$.

Besides partial transposes we can also compute the partial traces $\Tr_{B^n}[C_s]$ and $\Tr_{A^n}[C_s]$ efficiently, as shown in the following Lemma.

\begin{lemma}
\label{lemma:partial_trace_orbit}
Let $s \in [m_{AB}]$ be a joint orbit with count matrix $E_s$, and let $r(s) \in [m_A]$ denote its marginal orbit on $A$, with count matrix 
$E_{r(s)}$ defined in~\eqref{eq:A_marginal}. Similarly, let $t(s) \in [m_B]$ denote its marginal orbit on $B$, with count matrix $E_{t(s)}$ defined in \eqref{eq:B_marginal}. Then
\begin{equation}
\label{eq:partial_trace_B_orbit}
    \Tr_{B^n}[C_s^{AB}] = \tau_{s}   \kappa_s^A \, C_{r(s)}^A,
\end{equation}
where
\begin{align}
\label{eq:kappa_A}
    \kappa_s^A &\coloneqq 
    \prod_{(a_A,b_A) \in [d_A]^2}
    \binom{(E_{r(s)})_{(a_A,b_A)}}{
        (E_s)_{(a_A,1),(b_A,1)},\,
        (E_s)_{(a_A,1),(b_A,2)},\,
        \ldots,\,
        (E_s)_{(a_A,d_B),(b_A,d_B)}
    },\\
    \label{eq:tau_r_s}
    \tau_{s} &\coloneqq \begin{cases}1  &\text{if} \   E_{t(s)} \text{diagonal}\\
    0   &\text{otherwise},
    \end{cases}
\end{align} 
\end{lemma}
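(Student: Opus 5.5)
I would prove the statement by a direct entrywise computation of $\Tr_{B^n}[C_s^{AB}]$. First I show the result is permutation-invariant on $A^n$ and supported on the single orbit $r(s)$. Then I count how many joint index pairs contribute to one fixed representative entry.

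\textbf{Step 1 (invariance and support).} Since $C_s^{AB}\in\esn(A\n\otimes B\n)$ and the partial trace over $B^n$ commutes with joint permutations, $\Tr_{B^n}[C_s^{AB}]\in\esn(A\n)$. Hence it can be expanded in the orbit basis $\{C_r^A\}$. By definition,
\[
\bigl(\Tr_{B^n}[C_s^{AB}]\bigr)_{(\underline{a},\underline{b})}=\bigl|\{\underline{c}\in[d_B]^n : E^{(\underline{a}\,\underline{c},\,\underline{b}\,\underline{c})}=E_s\}\bigr|,
\]
which is a non-negative integer. If this entry is nonzero, then the $A$-marginal of $E_s$ equals $E^{(\underline{a},\underline{b})}$ by \eqref{eq:A_marginal}, so $(\underline{a},\underline{b})\in O^A_{r(s)}$. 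Likewise the $B$-marginal of $E_s$ equals $E^{(\underline{c},\underline{c})}$, which is diagonal, exactly as in the trace argument of Lemma~\ref{lemma:trace_HS_orbit}. So the result is zero unless $E_{t(s)}$ is diagonal, which accounts for the factor $\tau_s$. Otherwise it is a scalar multiple of $C^A_{r(s)}$, and by invariance the scalar can be read off at any single representative entry.

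\textbf{Step 2 (counting).} Fix a representative $(\underline{a},\underline{b})$ of $O^A_{r(s)}$, sorted so that positions carrying the same $A$-pair type $(a_A,b_A)$ are grouped into blocks of size $(E_{r(s)})_{(a_A,b_A)}$. A string $\underline{c}$ contributes exactly when, for every $(a_A,b_A)$ and every $(c,c')$, the number of positions $k$ with $(a_k,b_k)=(a_A,b_A)$ and $(c_k,c_k)=(c,c')$ equals $(E_s)_{(a_A,c),(b_A,c')}$.

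Because the $B$ entries are diagonal pairs $(c_k,c_k)$, these counts must vanish whenever $c\neq c'$. This is consistent with $\tau_s$, provided one checks that diagonality of the $B$-marginal forces $(E_s)_{(a_A,c),(b_A,c')}=0$ for $c\neq c'$; that holds since the entries are non-negative and sum to the off-diagonal marginal entry, which is zero. Within each block the choices are independent, and the number of ways to assign the labels is the multinomial coefficient
\[
\binom{(E_{r(s)})_{(a_A,b_A)}}{(E_s)_{(a_A,1),(b_A,1)},\ldots,(E_s)_{(a_A,d_B),(b_A,d_B)}}.
\]
Taking the product over blocks gives $\kappa_s^A$, which proves \eqref{eq:partial_trace_B_orbit}.

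\textbf{Main obstacle.} The key step is making sure the $\tau_s$ condition is exactly what makes the block-wise multinomials well defined. When $E_{t(s)}$ is diagonal, the diagonal entries of each block must sum to the block size, as required by \eqref{eq:A_marginal}, and the off-diagonal $B$ entries must be zero. Once this is checked, the rest is straightforward bookkeeping.
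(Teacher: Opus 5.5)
Your proposal is correct and follows essentially the same route as the paper. Both compute $\Tr_{B^n}[C_s^{AB}]$ entrywise as the number of strings $\underline{k_B}$ for which $(\underline{i_A}\,\underline{k_B},\underline{j_A}\,\underline{k_B})$ lies in orbit $s$; this count vanishes when $E_{t(s)}$ is off-diagonal and otherwise factorizes over the $A$-pair types $(a_A,b_A)$ into the multinomials defining $\kappa_s^A$. Your explicit check that a diagonal $E_{t(s)}$ forces $(E_s)_{(a_A,c),(b_A,c')}=0$ for $c\neq c'$, so that the $d_B^2$-entry multinomial in \eqref{eq:kappa_A} reduces to the diagonal one, fills in a step the paper leaves implicit, but it does not change the argument.
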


\begin{proof}
Since $(C_s^{AB})_{\underline{i},\underline{j}} \in \{0,1\}$ with value $1$ iff $(\underline{i},\underline{j}) \in O_s^{AB}$, we have
\begin{equation}
\label{eq:partial_trace_B_derivation}
    \bigl(\Tr_{B^n}[C_s^{AB}]\bigr)_{(\underline{i_A},\underline{j_A})}
    =
    \sum_{\underline{k_B} \in [d_B]^n}
    (C_s^{AB})_{(\underline{i_A}\,\underline{k_B},\; \underline{j_A}\,\underline{k_B})}
    =
    \bigl|
    \{ \underline{k_B} :
    E^{(\underline{i_A}\,\underline{k_B},\; \underline{j_A}\,\underline{k_B})}_{AB} = E_s
    \}
    \bigr|.
\end{equation}

We distinguish two cases.

\smallskip
\noindent
\textit{Non-diagonal case.}
If there exist $a_B \neq b_B$ such that $(E_s)_{(a_A,a_B),(b_A,b_B)} \neq 0$, or equivalently $E_{t(s)}$ defined in \eqref{eq:B_marginal} is non-diagonal, then the constraint in~\eqref{eq:partial_trace_B_derivation} cannot be satisfied, since in $(\underline{i_A}\,\underline{k_B},\underline{j_A}\,\underline{k_B})$ the $B$-indices coincide. Hence the set is empty and
\[
\bigl(\Tr_{B^n}[C_s^{AB}]\bigr)_{(\underline{i_A},\underline{j_A})} = 0
\]
for all $(\underline{i_A},\underline{j_A})$, so $\Tr_{B^n}[C_s^{AB}] = 0$.

\smallskip
\noindent
\textit{Diagonal case.} Assume now that $E_{t(s)}$ is diagonal. Fix $(\underline{i_A},\underline{j_A})$. A term in~\eqref{eq:partial_trace_B_derivation} is nonzero only if $(\underline{i_A},\underline{j_A}) \in O_{r(s)}^A$, i.e.\ $E^{(\underline{i_A},\underline{j_A})} = E_{r(s)}$. The multiplicity is then given by the number of strings $\underline{k_B}$ such that when paired up with the strings $\underline{i_A}$ and $\underline{j_A}$, the joint orbit of $(\underline{i_A}\,\underline{k_B}, \underline{j_A}\,\underline{k_B})$ is equal to the orbit $s$ (i.e.\ has count matrix $E_s$). All such $\underline{k_B}$ are thus permutations of each other, but not all permutations are allowed, since we can only take such permutations of $\underline{k_B}$ that -- while keeping $\underline{i_A}$ and $\underline{j_B}$ constant -- change the joint string $(\underline{i_A}\,\underline{k_B}, \underline{j_A}\,\underline{k_B})$ only up to a permutation. These are exactly the permutations within sets where $(\underline{i_A}, \underline{j_A})$ is constant.
Precisely, for each pair of values $(a_A, b_A)$ appearing $(E_{r(s)})_{(a_A,b_A)}$ times, for \eqref{eq:partial_trace_B_derivation} to hold the entries of $\underline{k_B}$ must be distributed among values $b_B \in [d_B]$ with multiplicities $(E_s)_{(a_A,b_B),(b_A,b_B)}$. Looking only at positions in the string where $(\underline{i_A}, \underline{j_A})$ takes values $(a_A, b_A)\in [d_A]^2$, the number of choices for $\underline{k_B}$ at these positions is thus given by the multinomial coefficient \begin{equation}
    \binom{(E_{r(s)})_{(a_A,b_A)}}{
        (E_s)_{(a_A,1),(b_A,1)},\,
        (E_s)_{(a_A,1),(b_A,2)},\,
        \ldots,\,
        (E_s)_{(a_A,d_B),(b_A,d_B)}
    }.
\end{equation} 
The number of all possible choices is then given by taking the product over all $(a_A,b_A)$, which yields $\kappa_s^A$, and also is independent of the representative $(\underline{i_A},\underline{j_A})$ we chose. This proves~\eqref{eq:partial_trace_B_orbit}.
\end{proof}A completely analogous argument holds for $\Tr_{A^n}[C_s]$, simply swapping the roles of $A$ and $B$.

\begin{remark}
Note that in the special case where the system $A$ is trivial ($d_A = 1$), then $\Tr_{B^n}[C_s^{AB}]$ becomes the full trace over $B^n$. In that case, $E_{r(s)} = n$ and the multinomial coefficient in \eqref{eq:kappa_A} reduces to \eqref{eq:formula_trace_orbit_matrices}, recovering item 2 of Lemma \ref{lemma:trace_HS_orbit}. 
\end{remark}
By linearity of the partial trace and by Lemma \ref{lemma:partial_trace_orbit}, we thus have for $X_{A^nB^n} = \sum_{r\in [m_{AB}]} x_r\, C_r^{AB}$:
\begin{equation}
    \Tr_{B^n}[X_{A^n B^n}] = \sum_{r =1}^{m_A}\left(\sum_{s=1}^{m_{AB}} x_s \kappa_s^A \tau_{r, s}\right) C_r^A.
\end{equation}
\subsubsection{Efficient Channel Concatenations}
Channel concatenations are closely related to partial traces by the link-product formula for their Choi matrices. Specifically, recall from \eqref{eq:concatenation_choi} that if channel $\mathcal{N} \in \CPTP(A \to B)$ has Choi matrix $\Gamma^{\mathcal{N}}_{AB}$ and channel $\mathcal{D} \in \CPTP(B \to C)$ has Choi matrix $\Gamma^{\mathcal{D}}_{BC}$, then the concatenated channel $\mathcal{M} = \mathcal{D} \circ \mathcal{N}$ has Choi matrix\begin{equation*}
    \Gamma_{AC}^{\mathcal{M}} = \Tr_B[(\Gamma^{\mathcal{N}}_{AB})^{T_B} \Gamma^{\mathcal{D}}_{BC}],
\end{equation*} 
where there are implicit identities on the $A$ and $C$ system, and $T_B$ is the partial transpose on $B$.

We first show how to efficiently compute the serial concatenation between channels sharing permutation-invariance with respect to a common system $B$. Specifically, let $\mathcal{N}_n \in \CPTP(A^n \to B^n)$ and $\mathcal{D}_n \in \CPTP(B^n \to R)$ such that $\Gamma^{\mathcal{N}_n}_{A^n B^n} \in \mathrm{End}^{\mathfrak{S}_n}(A^n \otimes B^n)$, i.e. permutation covariant, and $\Gamma^{\mathcal{D}_n}_{B^n R} \in \mathrm{End}^{\mathfrak{S}_n}(B^n \otimes R)$, i.e.\ permutation invariant on its input. We now want to efficiently compute the Choi matrix of $\mathcal{M}_{n} \coloneqq \mathcal{D}_n \circ \mathcal{N}_n$. This is possible with the following proposition, which establishes the functional relation between the orbit-basis coefficients of $\Gamma^{\mathcal{M}_n}_{A^n R}$ and those of $\Gamma^{\mathcal{N}_n}_{A^n B^n}$ and $\Gamma^{\mathcal{D}_n}_{B^n R} $.

\begin{proposition}
\label{prop:serial_concatenation}
Let $t \in [m_B]$ be any orbit on $\mathrm{End}^{\mathfrak{S}_n}(B^n)$ and let $s \in [m_{AB}]$ be a joint orbit in $\esn((AB)^n)$ with count matrix $E_s$ and corresponding $A$-marginal $E_{r(s)}$ and $B$-marginal $E_{t(s)}$. Then, the action of the partial-trace map
\begin{equation}
\label{eq:partial_trace_map}
\begin{aligned}
  \mathcal{T}_B\colon \mathrm{End}^{\mathfrak{S}_n}(A^{\otimes n} \otimes B^{\otimes n}) \times \mathrm{End}^{\mathfrak{S}_n}(B^{\otimes n}) &\to \mathrm{End}^{\mathfrak{S}_n}(A^{\otimes n})\\
   (C_s^{AB}, C_t^B) &\mapsto \mathrm{Tr}_{B^n}\!\left[C_s^{AB} \cdot (\mathbbm{1}_{A^n} \otimes C_{t^T}^B)\right]
\end{aligned}
\end{equation}
can be efficiently computed as
\begin{equation}
\label{eq:partial_trace_identity}
 \mathcal{T}_B(C_s^{AB}, C_t^B) = \kappa_s^A \delta_{t, t(s)} \cdot C_{r(s)}^A,
\end{equation}
where $\kappa_s^A$ was defined in \eqref{eq:kappa_A} and we denoted\begin{equation}
\label{eq:delta_r_r(s)}
    \delta_{t, t(s)} \coloneqq \begin{cases}1\quad \text{if} \   t(s) = t\\
    0  \quad\text{otherwise}.
    \end{cases}
\end{equation}
\end{proposition}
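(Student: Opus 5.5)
The plan is to compute the matrix entries of $\mathcal{T}_B(C_s^{AB}, C_t^B)$ directly, as in the proof of Lemma~\ref{lemma:partial_trace_orbit}. Because $C_s^{AB}$ and $C_t^B$ are $0/1$ incidence matrices, each entry is the number of $B$-strings satisfying some conditions, and that number can be counted with multinomial coefficients. The difference from Lemma~\ref{lemma:partial_trace_orbit} is that the $B$-indices are no longer forced to coincide. Instead they are paired through $C_{t^T}^B$, and this is what turns the diagonality factor $\tau_s$ into the indicator $\delta_{t,t(s)}$.

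First I will write out the entry at $(\underline{i_A},\underline{j_A})$:
\begin{equation*}
\bigl(\mathcal{T}_B(C_s^{AB},C_t^B)\bigr)_{(\underline{i_A},\underline{j_A})}
= \sum_{\underline{k_B},\underline{l_B}} (C_s^{AB})_{(\underline{i_A}\,\underline{k_B},\;\underline{j_A}\,\underline{l_B})}\,(C_{t^T}^B)_{(\underline{l_B},\underline{k_B})}.
\end{equation*}
By Lemma~\ref{lem:transpose_orbit_basis}, $C_{t^T}^B = (C_t^B)^T$, so $(C_{t^T}^B)_{(\underline{l_B},\underline{k_B})} = (C_t^B)_{(\underline{k_B},\underline{l_B})}$. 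The entry therefore counts the pairs $(\underline{k_B},\underline{l_B})$ for which the joint count matrix of $(\underline{i_A}\,\underline{k_B},\underline{j_A}\,\underline{l_B})$ equals $E_s$ and the count matrix of $(\underline{k_B},\underline{l_B})$ equals $E_t$.

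Second, I will use the marginal relations \eqref{eq:A_marginal} and \eqref{eq:B_marginal}. The $B$-marginal of the joint count matrix is exactly the count matrix of $(\underline{k_B},\underline{l_B})$. So whenever the first condition holds, $(\underline{k_B},\underline{l_B})$ lies in $O_{t(s)}^B$, and the second condition reduces to $t=t(s)$. This gives the factor $\delta_{t,t(s)}$, and the count becomes the number of pairs satisfying the joint condition alone. Similarly, the $A$-marginal shows that this number vanishes unless $E^{(\underline{i_A},\underline{j_A})} = E_{r(s)}$, which gives the support $C_{r(s)}^A$.

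The remaining step, and the main point of care, is to show that for any $(\underline{i_A},\underline{j_A}) \in O^A_{r(s)}$ the count equals $\kappa_s^A$ and does not depend on the chosen representative. I will partition the positions $k\in[n]$ according to the value $(a_A,b_A)$ taken by $((\underline{i_A})_k,(\underline{j_A})_k)$. The block for $(a_A,b_A)$ has size $(E_{r(s)})_{(a_A,b_A)}$. In that block, the pair $((\underline{k_B})_k,(\underline{l_B})_k)$ must take each value $(c,d)\in[d_B]^2$ exactly $(E_s)_{(a_A,c),(b_A,d)}$ times. These multiplicities sum to the block size by \eqref{eq:A_marginal}, so the number of admissible fillings of the block is a multinomial coefficient. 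Different blocks are filled independently, so the total is the product of these multinomials over $(a_A,b_A)$, which is $\kappa_s^A$ as in \eqref{eq:kappa_A}. Here the lower indices must range over all pairs $(c,d)\in[d_B]^2$, not only the diagonal ones $c=d$ that appeared in Lemma~\ref{lemma:partial_trace_orbit}. The count depends only on the block sizes, hence only on $E_{r(s)}$ and $E_s$, which proves representative-independence and establishes \eqref{eq:partial_trace_identity}.
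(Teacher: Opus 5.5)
Your proposal is correct and follows the same route as the paper's proof. Like the paper, you expand each entry of $\mathrm{Tr}_{B^n}[C_s^{AB}(\mathbbm{1}_{A^n}\otimes C^B_{t^T})]$ as a count of $0$--$1$ incidences, use the $B$-marginal to obtain $\delta_{t,t(s)}$ and the $A$-marginal to obtain the support $C^A_{r(s)}$, and count block-wise fillings to obtain $\kappa_s^A$. The paper justifies that last step only by pointing to Lemma~\ref{lemma:partial_trace_orbit}, and your block-wise multinomial count over all pairs $(c,d)\in[d_B]^2$ is a useful explicit version of it.
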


\begin{proof}
It is easy to see that the partial trace of a jointly permutation-invariant matrix is a permutation invariant matrix, and hence the action of \eqref{eq:partial_trace_map} on a generic basis element can be expressed as $\mathcal{T}_B(C_s^{AB}, C_t^B) = \sum_{r=1}^{m_A} c_r^{s,t} C_r^A$, for some coefficients $\{ c_r^{s,t}\}_{r \in [m_A]}$ to be determined. For fixed $s \in [m_{AB}]$ and $t \in [m_B]$, the action can be expressed using multi-indices by explicitly writing out the matrix multiplication and the partial trace:
\begin{align}
\left(\mathrm{Tr}_{B^n}\!\left[C_s^{AB} \cdot (\mathbbm{1}_{A^n} \otimes C_{t^T}^B)\right]\right)_{(\underline{i_A},\,\underline{j_A})}
&= \Tr_{B^n}\left(\sum_{\underline{i_B} \in [d_B]^{\times n}} (C_s^{AB})_{(\underline{i_A},\,\underline{i_B}),\,(\underline{j_A},\,\underline{j_B})} \cdot (C_{t^T}^B)_{(\underline{j_B},\,\underline{i_B})} \right) \\
\label{eq:partial_trace_elementwise}
&= \sum_{\underline{j_B} \in [d_B]^{\times n}} \sum_{\underline{i_B} \in [d_B]^{\times n}} (C_s^{AB})_{(\underline{i_A},\,\underline{i_B}),\,(\underline{j_A},\,\underline{j_B})} \cdot (C_{t}^B)_{(\underline{i_B}, \underline{j_B})},
\end{align}
for all $\underline{i_A},\,\underline{j_A} \in [d_A]^{\times n}$.
Both factors in \eqref{eq:partial_trace_elementwise} are $0$--$1$ valued, so a nonzero contribution requires simultaneously:
\begin{equation}
\label{eq:orbit_matrices_condition}
    (C_s^{AB})_{(\underline{i_A},\,\underline{i_B}),\,(\underline{j_A},\,\underline{j_B})} = 1, \qquad (C_{t}^B)_{(\underline{i_B}, \underline{j_B})} = 1.
\end{equation}
These two conditions impose $(\underline{i_A}\,\underline{i_B}, \underline{j_A}\,\underline{j_B})$ to lie in orbit $s$ and $(\underline{i_B}, \underline{j_B})$ to lie in orbit $t$, that is:
\begin{align}
\label{eq:condition_marginal}
 E^{(\underline{i_A}\,\underline{i_B},\, \underline{j_A}\,\underline{j_B})} = E_s, \qquad 
 E^{(\underline{i_B},\,\underline{j_B})} = E_t.
\end{align}
Combining the two conditions, $E_t$ must equal the $B$-marginal of $E_s$ as defined in \eqref{eq:B_marginal}; in other words, the map \eqref{eq:partial_trace_map} acts non-trivially on $\mathrm{End}^{\mathfrak{S}_n}(B^{\otimes n})$ only for the $B$-marginal orbit $t(s)$, i.e., its action is completely determined by the joint orbit $s \in [m_{AB}]$. In the same way, by \eqref{eq:condition_marginal}, for fixed $s \in [m_{AB}]$ the only multi-indices $(\underline{i_A}, \underline{j_A})$ that give a nonzero contribution in \eqref{eq:partial_trace_elementwise} are those belonging to its unique $A$-marginal orbit $r(s) \in [m_A]$. The proportionality constant $\kappa_s^A$ counts, for a fixed $(\underline{i_A}, \underline{j_A})$ in orbit $r(s)$, the number of pairs $(\underline{i_B}, \underline{j_B})$ satisfying \eqref{eq:orbit_matrices_condition}. By the same argument of Lemma \ref{lemma:partial_trace_orbit}, this is the multinomial coefficient $\kappa_s^A$ in \eqref{eq:kappa_A}. We thus obtained
\begin{equation}
\label{eq:partial_trace_identity_final}
 \mathcal{T}_B(C_s^{AB}, C_t^B) = \mathrm{Tr}_{B^n}\!\left[C_s^{AB} \cdot (\mathbbm{1}_{A^n} \otimes C_{t^T}^B)\right] = \begin{cases} \kappa_s^A \cdot C_{r(s)}^A & \text{if } t = t(s), \\ 0 & \text{otherwise.} \end{cases}
\end{equation}
The computation is efficient because $m_{AB} = O(n^{d_A^2 d_B^2-1})$ and computing $r(s), t(s), \kappa_s^A$ from $E_s$ requires a constant (in $n$) number of operations per orbit.
\end{proof}
\begin{remark}\mbox{}
\begin{itemize}
    \item Item 1 of Lemma \ref{lemma:trace_HS_orbit} is a special case of Proposition \ref{prop:serial_concatenation} when the $A$ system is trivial ($d_A = 1$). In that case, the map \eqref{eq:partial_trace_map} becomes\begin{equation*}
          \mathcal{T}_B
   (C_s^{AB}, C_t^B)= \mathrm{Tr}_{B^n}\!\left[C_{t'}^{B} \cdot C_{t^T}^B\right] = \Tr[( C_{t}^B)^\dag C_{t'}^{B}],
    \end{equation*}
    where $m_{AB} = m_B$ and we identified the $B$-marginal as $t'\equiv t(s) = s$. The $A$-marginal is the unique trivial orbit, so $C_{r(s)}^A = 1$, and the multinomial prefactor \eqref{eq:kappa_A} reduces to the single multinomial coefficient of the full count matrix, which is precisely the orbit size:
\begin{equation}
    \kappa_{t'}^B = \binom{n}{(E_{t'})_{(1,1)}, (E_{t'})_{(1,2)}, \dots, (E_{t'})_{(d_A, d_A)}} = |O_{t'}|.
\end{equation}
The identity~\eqref{eq:partial_trace_identity} then reads \begin{equation}
    \Tr[( C_{t}^B)^\dag C_{t'}^{B}] = \delta_{t',t}\, |O_{t'}|,
\end{equation}which is precisely \eqref{eq:inner_product_HS}.

    \item Lemma \ref{lemma:partial_trace_orbit} is a special case of Proposition \ref{prop:serial_concatenation} when $t = t(s)$ corresponds to a diagonal orbit. Specifically, since\begin{align}
        \Tr_{B^n}[C_s^{AB}] &= \Tr_{B^n}[C_s^{AB} \cdot (\mathbbm{1}_{A^n} \otimes \mathbbm{1}_{B^n})] \\
        &= \sum_{t \in [m_B], E_t \ \mathrm{diagonal}}  \Tr_{B^n}[C_s^{AB} \cdot (\mathbbm{1}_{A^n} \otimes C_{t^T}^B)] \\
        &= \sum_{t \in [m_B], E_t \ \mathrm{diagonal}}  \kappa_s^A  \delta_{t,t(s)} C_{r(s)}^A\\
        &= \tau_{s} \kappa_s^A C_{r(s)}^A,
    \end{align} where we used that \begin{equation}
    \label{eq:identity_orbit}
        \mathbbm{1}_{B^n} = \sum_{t \in [m_B], E_t \ \text{diag}} C_t^B,
    \end{equation} and in the last line we used the definition of $\tau_s$ in \eqref{eq:tau_r_s}.
\end{itemize}
\end{remark}
By using this Proposition, we can efficiently compute the orbit basis coefficients of a concatenation of channels $\mathcal{N}_n: \CPTP(A^n \to B^n)$ and $\mathcal{D}_n \in \CPTP(B^n \to R)$ whose Choi matrices have the form:\begin{equation}
   \Gamma^{\mathcal{N}_n}_{A^n B^n} = \sum_{s= 1}^{m_{AB}}  c_{s}^{\mathcal{N}} C^{AB}_{s} \qquad \Gamma^{\mathcal{D}_{n}}_{B^nR} = \sum_{t= 1}^{m_{B}} \sum_{k,l= 1}^{d_{R}}   c_{t,k,l}^{\mathcal{D}} C^{B}_{t}\otimes \ket{k}\bra{l},
\end{equation}
by using \eqref{eq:concatenation_choi}, linearity and Proposition \ref{prop:serial_concatenation}, we obtain for the channel $\mathcal{M}'_n = \mathcal{D}_n \circ \mathcal{N}_n$:\begin{equation}
    \Gamma^{\mathcal{M}'_{n}}_{A^nR} = \sum_{r= 1}^{m_{A}} \sum_{k,l= 1}^{d_{R}}   c_{r,k,l}^{\mathcal{M}'} C^{A}_{r}\otimes \ket{k}\bra{l},
\end{equation}
where
\begin{equation}
\label{eq:concatenation_coefficients_AlicePOV}
    c_{r,k,l}^{\mathcal{M}'} = \sum_{s \in [m_{AB}]:\, r(s) = r} \kappa_s^A c_{t(s),k,l}^{\mathcal{D}} c_s^{\mathcal{N}}.
\end{equation}
The same argument can be used to efficiently compute the map\begin{equation}
\begin{aligned}
  \mathcal{T}_A\colon \mathrm{End}^{\mathfrak{S}_n}(A^{\otimes n}) \times \mathrm{End}^{\mathfrak{S}_n}(A^{\otimes n} \otimes B^{\otimes n}) &\to \mathrm{End}^{\mathfrak{S}_n}(B^{\otimes n})\\
   (C_r^{A}, C_s^{AB}) &\mapsto  \mathrm{Tr}_{A^n}\!\left[(C_{r^T}^A \otimes \mathbbm{1}_{B^n}) \cdot C_s^{AB}\right],
\end{aligned}
\end{equation}
as
\begin{equation}
 \mathcal{T}_A(C_r^A, C_s^{AB})  = \kappa_s^B \delta_{r, r(s)} \cdot C_{t(s)}^B,
\end{equation}
where\begin{equation}
\label{eq:kappa_B}
    \kappa_s^B \coloneqq \prod_{(a_B, b_B) \in [d_B]^2} \binom{(E_{t(s)})_{(a_B,b_B)}}{(E_{s})_{(1,a_B),(1, b_B)}, (E_{s})_{(1,a_B),(2, b_B)} , ..., (E_{s})_{(d_A,a_B),(d_A, b_B)}}.
\end{equation}
Using this, we can also efficiently compute the orbit basis coefficients of a concatenation of channels $\mathcal{N}_n: \CPTP(A^n \to B^n)$ and $\mathcal{E}_n \in \CPTP(R \to A^n)$ whose Choi matrices have the form:\begin{equation}
   \Gamma^{\mathcal{N}_n}_{A^n B^n} = \sum_{s= 1}^{m_{AB}}  c_{s}^{\mathcal{N}} C^{AB}_{s} \qquad \Gamma^{\mathcal{E}_{n}}_{RA^n} = \sum_{l,k= 1}^{d_{R}}   c_{k,l,r}^{\mathcal{E}} \ket{k}\bra{l}\otimes C_r^A,
\end{equation}
by using \eqref{eq:concatenation_choi}, linearity and Proposition \ref{prop:serial_concatenation}, we obtain for the channel $\mathcal{M}_n = \mathcal{N}_n \circ \mathcal{E}_n$:\begin{equation}
        \Gamma^{\mathcal{M}_n}_{R B^n} = \sum_{k,l= 1}^{d_{R}} \sum_{t= 1}^{m_{B}}  c_{k,l,t}^{\mathcal{M}} \ket{k}\bra{l} \otimes C^{B}_{t} 
\end{equation}
where
\begin{equation}
\label{eq:concatenation_coefficients_BobPOV}
     c_{k,l,t}^{\mathcal{M}}  = \sum_{s \in [m_{AB}]:\,t(s) = t}  \kappa_s^B c_{k,l, r(s)}^{\mathcal{E}} c_s^{\mathcal{N}}.
\end{equation}
These are used in the symmetric seesaw method of Section \ref{sec:symmetric_seesaw}, and we used the same notation here for consistency. \begin{remark}
    Note that to actually implement the map $\mathcal{T}_B$ in Proposition \ref{prop:serial_concatenation}, one needs to (pre-)compute $κ_s^A$, $t(s)$ and $r(s)$ for every orbit $C_s^{AB} \in \esn(A\n\otimes B\n)$. This can be done independently for every $s$, and in particular, if one works with objects only supported on a subset of the basis matrices $\{C_{s}^{AB}\}_s$ it is sufficient to only compute the partial-trace map on this subset. This is for example the case for the Choi matrix of a tensor product channel $\mathcal{N}\n$, $\Gamma^{\mathcal{N}\n}_{A^n B^n} \in \esn(A\n\otimes B\n)$, if the underlying Choi matrix of the single channel $\mathcal{N}$, $\Gamma^{\mathcal{N}}_{AB} \in \mathcal{L}(AB)$, is sparse. the contributing orbit basis matrices are then exactly the ones whose count matrices have the same sparsity structure as $\Gamma^{\mathcal{N}}_{AB}$ (by Lemma \ref{lem:support_and_counts}). 
\end{remark}
In the case where two channels are permutation covariant, i.e. $\Gamma^{\mathcal{N}_n}_{A^n B^n} \in \mathrm{End}^{\mathfrak{S}_n}(A^n \otimes B^n)$ and $\Gamma^{\mathcal{O}_n}_{B^n C^n} \in \mathrm{End}^{\mathfrak{S}_n}(B^n \otimes C^n)$, then a simple modification of the above result allows to efficiently compute orbit basis coefficients of the Choi matrix of their concatenation $\mathcal{P}_n \coloneqq \mathcal{O}_n \circ \mathcal{N}_n$, denoted $\{c^{\mathcal{O}}_w\}_{w \in [m_{AC}]}$. In this setting, since orbits $s \in [m_{AB}]$ and $u \in [m_{BC}]$ are associated with count matrices on systems $AB$ and $BC$, it is convenient to define tripartite count matrices $E_{ABC} \coloneqq E_z \in \mathbb{N}^{(d_A d_B d_C) \times (d_A d_B d_C)}$, with associated orbits $z \in [m_{ABC}]$. We will denote $s(z)$, $u(z)$ and $w(z)$ to denote the marginal orbits on $AB$, $BC$ and $AC$ obtained from $E_z$ by marginalizing with respect to the third system, analogously to \eqref{eq:A_marginal}:\begin{align}
\label{eq:V_marginal_s}
    (E_{AB})_{(a_A, a_B), (b_A, b_B)}  &\coloneqq (E_{s(z)})_{(a_A, a_B), (b_A, b_B)}  = \sum_{a_C, b_C = 1}^{d_C} (E_{z})_{(a_A, a_B, a_C), (b_A, b_B, b_C)},\\
     \label{eq:V_marginal_u}
       (E_{BC})_{(a_B, a_C), (b_B, b_C)} &\coloneqq (E_{u(z)})_{(a_B, a_C), (b_B, b_C)}  = \sum_{a_A, b_A = 1}^{d_A} (E_{z})_{(a_A, a_B, a_C), (b_A, b_B, b_C)},\\
        \label{eq:V_marginal_w}
       (E_{AC})_{(a_A, a_C), (b_A, b_C)}  &\coloneqq  (E_{w(z)})_{(a_A, a_C), (b_A, b_C)} =\sum_{a_B, b_B = 1}^{d_B} (E_{z})_{(a_A, a_B, a_C), (b_A, b_B, b_C)}.
\end{align}
\begin{proposition}
\label{prop:serial_concatenation_covariant}
Let $s \in [m_{AB}]$ be an orbit on $\mathrm{End}^{\mathfrak{S}_n}(A^{\otimes n} \otimes B^{\otimes n})$ with count matrix $E_s$, and let $u \in [m_{BC}]$ be an orbit on $\mathrm{End}^{\mathfrak{S}_n}(B^{\otimes n} \otimes C^{\otimes n})$ with count matrix $E_u$. Then, the action of the partial-trace map
\begin{equation}
\label{eq:partial_trace_map_covariant}
\begin{aligned}
  \mathcal{T}_B\colon \mathrm{End}^{\mathfrak{S}_n}(A^{\otimes n} \otimes B^{\otimes n}) \times \mathrm{End}^{\mathfrak{S}_n}(B^{\otimes n} \otimes C^{\otimes n}) &\to \mathrm{End}^{\mathfrak{S}_n}(A^{\otimes n}\otimes C^{\otimes n})\\
   (C_s^{AB}, C_u^{BC}) &\mapsto \mathrm{Tr}_{B^n}\!\left[\left(C_{s^{T_B}}^{AB}\otimes \mathbbm{1}_{C^n}\right) \cdot \left( \mathbbm{1}_{A^n} \otimes C_{u}^{BC}\right) \right]
\end{aligned}
\end{equation}
can be efficiently computed as
\begin{equation}
\label{eq:covariant_concatenation_identity}
    \mathcal{T}_B(C_s^{AB}, C_u^{BC}) = \sum_{w=1}^{m_{AC}} \mathcal{K}_{s,u}^w \, C_w^{AC},
\end{equation}
where $\mathcal{K}_{s,u}^w$ counts the number of compatible intermediate $B$-system configurations:
\begin{equation}
\label{eq:kappa_covariant}
    \mathcal{K}_{s,u}^w \;\coloneqq\; \sum_{E_{z}}
    \delta_{s,\, s(z)^{T_B}}\,
    \delta_{u,\, u(z)^{T_B}}\,
    \delta_{w,\, w(z)}
    \prod_{\substack{a_A, b_A \in [d_A] \\ a_C, b_C \in [d_C]}}
    \binom{(E_{w(z)})_{(a_A, a_C),(b_A, b_C)}}{\{(E_z)_{(a_A, a_B, a_C),(b_A, b_B, b_C)}\}_{a_B, b_B \in [d_B]}}.
\end{equation}
Here the sum extends over all tripartite count matrices $E_z \in \mathbb{N}^{(d_A d_B d_C) \times (d_A d_B d_C)}$ satisfying \eqref{eq:count_matrix}, the marginals $s(z), u(z), w(z)$ are defined in \eqref{eq:V_marginal_s}--\eqref{eq:V_marginal_w}, and the partial transpose on $B$ is understood as in \eqref{eq:count-orbit-partial-transpose-relation}.
\end{proposition}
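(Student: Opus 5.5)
The approach mirrors the proof of Proposition~\ref{prop:serial_concatenation}: write the operator entrywise, reduce to counting $0$--$1$ coincidences, and organize the count through tripartite count matrices. The first step is to note that $\mathcal{T}_B(C_s^{AB}, C_u^{BC})$ lies in $\mathrm{End}^{\mathfrak{S}_n}(A^{\otimes n}\otimes C^{\otimes n})$. This holds because conjugating by $P_{A^n}(\pi)\otimes P_{C^n}(\pi)$ can be absorbed by inserting $P_{B^n}(\pi)P_{B^n}(\pi)^\dagger$ inside the partial trace over $B^n$, using the joint invariance of both factors. Hence the result expands as $\sum_w c_w C_w^{AC}$, and $c_w$ equals the $(\underline{i_A}\,\underline{i_C},\underline{j_A}\,\underline{j_C})$ entry for any fixed representative of orbit $w$.

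Next I write out the entry explicitly. Since $(C_{s^{T_B}})_{(\underline{i_A}\underline{i_B}),(\underline{j_A}\underline{j_B})} = (C_s)_{(\underline{i_A}\underline{j_B}),(\underline{j_A}\underline{i_B})}$, the entry becomes
\[
\sum_{\underline{i_B},\underline{j_B}} (C_{s^{T_B}}^{AB})_{(\underline{i_A}\underline{i_B}),(\underline{j_A}\underline{j_B})}\,(C_u^{BC})_{(\underline{j_B}\underline{i_C}),(\underline{i_B}\underline{j_C})}.
\]
Every summand is $0$ or $1$, so the entry counts pairs $(\underline{i_B},\underline{j_B})$ satisfying two conditions. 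The first is that the $AB$ count matrix of the appropriate arrangement equals $E_s$ after partial transposition. The second is that the $BC$ count matrix equals $E_u$ after partial transposition.

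The key step is to assign to each candidate pair $(\underline{i_B},\underline{j_B})$ the tripartite multi-index pair $(\underline{i_A}\,\underline{i_B}\,\underline{i_C},\ \underline{j_A}\,\underline{j_B}\,\underline{j_C})$ and its count matrix $E_z$. The $AB$ and $BC$ conditions are then exactly $s = s(z)^{T_B}$ and $u = u(z)^{T_B}$, with the partial transposes accounting for the swap of row and column $B$ indices. The fixed $AC$ part forces $w(z) = w$. This partitions the admissible pairs $(\underline{i_B},\underline{j_B})$ according to $E_z$. For fixed $E_z$, counting is the same as in Lemma~\ref{lemma:partial_trace_orbit}. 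One groups positions $k$ by their value $(a_A,a_C;b_A,b_C)$ of the fixed $AC$ strings; there are $(E_{w})_{(a_A,a_C),(b_A,b_C)}$ such positions. The $B$-labels $(a_B,b_B)$ must then be distributed over these positions with multiplicities $(E_z)_{(a_A,a_B,a_C),(b_A,b_B,b_C)}$, which yields the multinomial coefficient. Taking the product over all groups and summing over $E_z$ gives $\mathcal{K}_{s,u}^w$. Independence of the chosen representative follows because permuting positions simply permutes the admissible $B$ strings.

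The main obstacle is the index bookkeeping in the middle step. The transposes must land in the right places: which of $s$ or $u$ carries the $T_B$, and whether $u(z)$ itself or $u(z)^{T_B}$ appears. This depends on the convention in \eqref{eq:concatenation_choi} and on the ordering of $\underline{i_B}$ and $\underline{j_B}$ in the $C_u^{BC}$ factor. I would resolve it by checking against the $d_A = d_C = 1$ case, which must reduce to Lemma~\ref{lemma:trace_HS_orbit}, and against the earlier Proposition~\ref{prop:serial_concatenation}. The efficiency claim then follows, since there are only $O(n^{(d_Ad_Bd_C)^2-1})$ tripartite count matrices.
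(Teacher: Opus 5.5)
Your proposal is correct and follows essentially the paper's proof. Both arguments expand the partial trace entrywise, reduce it to counting $0$--$1$ coincidences, partition the admissible $(\underline{i_B},\underline{j_B})$ by their tripartite count matrix $E_z$, and count multinomially within each outer $AC$ type. Your bookkeeping (keeping $T_B$ on $s$ and using the natural string $(\underline{i_A}\,\underline{i_B}\,\underline{i_C},\underline{j_A}\,\underline{j_B}\,\underline{j_C})$) already produces exactly $\delta_{s,s(z)^{T_B}}\delta_{u,u(z)^{T_B}}$, so the checks you defer to are confirmatory rather than necessary. The paper instead moves the transpose onto $u$ and swaps the $B$ labels in its definition of $E_z$, which yields the same coefficients.
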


\begin{proof}
The action of \eqref{eq:partial_trace_map_covariant} on a generic basis element can be expressed as $\mathcal{T}_B(C_s^{AB}, C_u^{BC}) = \sum_{w=1}^{m_{AC}} c_w^{s,u}\, C_w^{AC}$, for some coefficients $\{c_w^{s,u}\}_{w \in [m_{AC}]}$ to be determined. For fixed $s \in [m_{AB}]$ and $u \in [m_{BC}]$, we expand the partial trace and matrix multiplication elementwise:
\begin{align}
    &\left(\mathrm{Tr}_{B^n}\!\left[(C_{s}^{AB}\otimes \mathbbm{1}_{C^n})\cdot(\mathbbm{1}_{A^n}\otimes C_{u^{T_B}}^{BC})\right]\right)_{(\underline{i_A}\,\underline{i_C},\,\underline{j_A}\,\underline{j_C})} \notag\\
    &\qquad= \sum_{\underline{i_B}, \underline{j_B} \in [d_B]^n}
    (C_{s}^{AB})_{(\underline{i_A},\,\underline{i_B}),\,(\underline{j_A},\,\underline{j_B})}
    \cdot (C_{u^{T_B}}^{BC})_{(\underline{j_B},\,\underline{i_C}),\,(\underline{i_B},\,\underline{j_C})} \notag\\
    \label{eq:covariant_elementwise}
    &\qquad= \sum_{\underline{i_B}, \underline{j_B} \in [d_B]^n}
    (C_s^{AB})_{(\underline{i_A},\,\underline{i_B}),\,(\underline{j_A},\,\underline{j_B})}
    \cdot (C_u^{BC})_{(\underline{i_B},\,\underline{i_C}),\,(\underline{j_B},\,\underline{j_C})},
\end{align}
where in the last step we used \eqref{eq:count-orbit-partial-transpose-relation}. Since orbit matrices are $0$--$1$ valued, a nonzero contribution requires both factors in~\eqref{eq:covariant_elementwise} to equal $1$, i.e.
\begin{equation}
    \label{eq:condition_marginal_covariant_final}
    E^{(\underline{i_A}\,\underline{i_B},\,\underline{j_A}\,\underline{j_B})} = E_s,
    \qquad
    E^{(\underline{i_B}\,\underline{i_C},\,\underline{j_B}\,\underline{j_C})} = E_u.
\end{equation}
We are thus looking for the number of multi-index pairs $(\underline{i_B}, \underline{j_B}$) that satisfy these two constraints simultaneously. Note that the triple $(\underline{i_A}\, \underline{j_A}, \underline{i_B}\,\underline{j_B}, \underline{i_C}\, \underline{j_C})$ uniquely determines a tripartite count matrix $E_z \in \mathbb{N}^{(d_Ad_Bd_C)\times(d_Ad_Bd_C)}$, whose generic entry $(E_z)_{(a_A, a_B, a_C),(b_A, b_B, b_C)}$ counts the number of positions $k \in [n]$ where $(i_A)_k = a_A$, $(j_A)_k = b_A$, $(j_B)_k = a_B$, $(i_B)_k = b_B$, $(i_C)_k = a_C$, $(j_C)_k = b_C$. Substituting this into~\eqref{eq:condition_marginal_covariant_final} shows that the the conditions read precisely\begin{equation}
    s(z) = s\qquad u(z) = u\qquad w(z) = w.
\end{equation}
where w is the orbit of $(\underline{i_A}\,\underline{i_C},\underline{j_A}\,\underline{j_C})$.
It remains to count, for fixed outer indices $(\underline{i_A}, \underline{j_A}, \underline{i_C}, \underline{j_C})$ lying in orbit $w$ and fixed tripartite count matrix $E_z$ satisfying the three marginal constraints, the number of pairs $(\underline{i_B}, \underline{j_B})$ realising $E_z$. For each outer type $((a_A, b_A), (a_C, b_C))$, there are $(E_{w(z)})_{(a_A, a_C),(b_A, b_C)}$ positions whose outer labels are fixed, and the intermediate pairs $((j_B)_m, (i_B)_m)$ must be distributed among the $d_B \times d_B$ possible values $(a_B, b_B)$ with multiplicities $(E_z)_{(a_A, a_B, a_C),(b_A, b_B, b_C)}$. The number of such assignments is the multinomial coefficient
\begin{equation}
    \binom{(E_{w(z)})_{(a_A, a_C),(b_A, b_C)}}{\{(E_z)_{(a_A, a_B, a_C),(b_A, b_B, b_C)}\}_{a_B, b_B \in [d_B]}}.
\end{equation}
Taking the product over all $(a_A, b_A, a_C, b_C) \in [d_A]^2 \times [d_C]^2$ and summing over all tripartite $E_z$ satisfying the marginal constraints yields $\mathcal{K}_{s,u}^w$ as in~\eqref{eq:kappa_covariant}, independently of the chosen representative of orbit $w$. This proves~\eqref{eq:covariant_concatenation_identity}.
\end{proof}

The number of tripartite count matrices $E_z$ is bounded by $m_{ABC} = \binom{n + d_A^2 d_B^2 d_C^2 - 1}{d_A^2 d_B^2 d_C^2 - 1} = O(n^{d_A^2 d_B^2 d_C^2 - 1})$, and computing the three marginals and the multinomial product for each $E_z$ requires $\mathrm{poly}(n)$ operations per orbit. Hence $\mathcal{K}_{s,u}^w$ is computable in $\mathrm{poly}(n)$ time for fixed $d_A, d_B, d_C$, although with a scaling exponent larger than in the non-covariant case of Proposition~\ref{prop:serial_concatenation}.
\begin{remark}
When system $C$ is trivial ($d_C = 1$), Proposition~\ref{prop:serial_concatenation_covariant} recovers Proposition~\ref{prop:serial_concatenation} as a special case. Indeed, the tripartite count matrix $E_z$ collapses to a bipartite count matrix on $AB$, so that the marginal $E_{s(z)} \equiv E_z$ is itself a count matrix on $AB$, $E_{u(z)}$ reduces to a count matrix on $B$ alone, and $E_{w(z)}$ reduces to a count matrix on $A$ alone. In this regime, the three marginals of $z$ coincide with the joint orbit $s(z)$ and its $A$- and $B$-marginals as defined in~\eqref{eq:A_marginal}--\eqref{eq:B_marginal}, i.e.\ $w(z) = r(s(z))$ and $u(z) = t(s(z))$. The constraint $s(z) = s$ then fixes $E_z = E_{s}$, reducing the sum in~\eqref{eq:kappa_covariant} to a single term; the remaining two constraints become $u = t(s)$ and $w = r(s)$. The multinomial factor simplifies to
\begin{equation}
    \prod_{a_A, b_A \in [d_A]}
    \binom{(E_{r(s)})_{(a_A, b_A)}}{\{(E_s)_{(a_A, a_B),(b_A, b_B)}\}_{a_B, b_B \in [d_B]}} = \kappa_s^A,
\end{equation}
where we used~\eqref{eq:kappa_A}. We thus obtain
\begin{equation}
    \mathcal{K}_{s,u}^w = \kappa_s^A\, \delta_{u, t(s)}\, \delta_{w, r(s)},
\end{equation}
and~\eqref{eq:covariant_concatenation_identity} reads $\mathcal{T}_B(C_s^{AB}, C_u^{B}) = \kappa_s^A\, \delta_{u, t(s)}\, C_{r(s)}^A$, which is precisely~\eqref{eq:partial_trace_identity}.
\end{remark}By using this Proposition, we can efficiently compute the orbit basis coefficients of the concatenation of two permutation-covariant channels $\mathcal{N}_n \in \CPTP(A^n \to B^n)$ and $\mathcal{O}_n \in \CPTP(B^n \to C^n)$, whose Choi matrices have the form
\begin{equation}
    \Gamma^{\mathcal{N}_n}_{A^n B^n} = \sum_{s=1}^{m_{AB}} c_{s}^{\mathcal{N}}\, C^{AB}_{s}, \qquad
    \Gamma^{\mathcal{O}_n}_{B^n C^n} = \sum_{u=1}^{m_{BC}} c_{u}^{\mathcal{O}}\, C^{BC}_{u}.
\end{equation}
By using \eqref{eq:concatenation_choi}, linearity and Proposition~\ref{prop:serial_concatenation_covariant}, we obtain for $\mathcal{P}_n = \mathcal{O}_n \circ \mathcal{N}_n$
\begin{equation}
    \Gamma^{\mathcal{P}_n}_{A^n C^n} = \sum_{w=1}^{m_{AC}} c_w^{\mathcal{P}}\, C^{AC}_w,
\end{equation}
where
\begin{equation}
\label{eq:concatenation_coefficients_covariant}
    c_w^{\mathcal{P}} = \sum_{\substack{s \in [m_{AB}]\\ u \in [m_{BC}]}} \mathcal{K}_{s,u}^w \cdot c_{s}^{\mathcal{N}}\, c_{u}^{\mathcal{O}},
\end{equation}
with $\mathcal{K}_{s,u}^w$ given by~\eqref{eq:kappa_covariant}.

\begin{remark}
These results on concatenations of channels can be extended to compute the action of a generic permutation-covariant channel $\mathcal{N}_n \in \CPTP(A^n \to B^n)$ on permutation-invariant states, using the link-product formula
\begin{equation}
\label{eq:choi_jamiolkovski_isomorphism}
    \omega_{RB^n} \coloneqq \mathcal{N}_{n}(\rho_{RA^n}) = \Tr_{A^n}\bigl[(\rho_{RA^n}^{T_{A^n}} \otimes \mathbbm{1}_{B^n})\,\Gamma^{\mathcal{N}}_{A^nB^n}\bigr],
\end{equation}
or 
\begin{equation}
\label{eq:choi_jamiolkovski_isomorphism_covariant}
    \omega_{R^nB^n} \coloneqq \mathcal{N}_{n}(\rho_{R^nA^n}) = \Tr_{A^n}\bigl[(\rho_{R^nA^n}^{T_{A^n}} \otimes \mathbbm{1}_{B^n})\,\Gamma^{\mathcal{N}}_{A^nB^n}\bigr],
\end{equation}
which makes the action of $\mathcal{N}_n$ formally analogous to a channel concatenation where the first channel is a \textit{preparation channel}:
\begin{equation}
    \begin{aligned}
        \mathcal{P}_\rho: \mathbb{C} &\to \mathcal{D}(\mathcal{H})\\
        &1 \mapsto \rho,
    \end{aligned}
\end{equation}
such that $\Phi^{\mathcal{P}_\rho} = \rho$. Thus, the output state $\omega$ is then the Choi matrix of the concatenated channel $\mathcal{N}_n \circ \mathcal{P}_\rho$, in line with \eqref{eq:concatenation_choi}. Then, if $\rho_{RA^n} \in \mathrm{End}^{\mathfrak{S}_n}(R \otimes A^{\otimes n})$, expanded as
\begin{equation}
    \rho_{RA^n} = \sum_{r=1}^{m_A}\sum_{k,l=1}^{d_R} c_{k,l,r}^{\rho}\, \ket{k}\!\bra{l}_R \otimes C_r^A,
\end{equation}
applying Proposition~\ref{prop:serial_concatenation} on \eqref{eq:choi_jamiolkovski_isomorphism} yields the orbit-basis expansion
\begin{equation}
\label{eq:channel_action_R_A_n}
    \omega_{RB^n} = \sum_{t=1}^{m_B}\sum_{k,l=1}^{d_R} c_{k,l,t}^{\omega}\, \ket{k}\!\bra{l}_R \otimes C_t^B, \qquad c_{k,l,t}^{\omega} = \sum_{s \in [m_{AB}]:\, t(s) = t} \kappa_s^B\, c_{k,l,r(s)}^{\rho}\, c_s^{\mathcal{N}}.
\end{equation}
Similarly, if $\rho_{R^nA^n} \in \mathrm{End}^{\mathfrak{S}_n}(R^{\otimes n} \otimes A^{\otimes n})$, expanded as
\begin{equation}
    \rho_{R^nA^n} = \sum_{v=1}^{m_{RA}} c_v^{\rho}\, C_v^{RA},
\end{equation}
the preparation channel $\mathcal{P}_\rho: \mathbb{C} \to R^n A^n$ is itself permutation-covariant (the input is trivially one-dimensional), and Proposition~\ref{prop:serial_concatenation_covariant} applies on \eqref{eq:choi_jamiolkovski_isomorphism_covariant}, yielding the orbit-basis expansion
\begin{equation}
\label{eq:channel_action_R_n_A_n}
    \omega_{R^nB^n} = \sum_{w=1}^{m_{RB}} c_w^{\omega}\, C_w^{RB}, \qquad c_w^{\omega} = \sum_{\substack{v \in [m_{RA}]\\ s \in [m_{AB}]}} \mathcal{K}_{v,s}^w\, c_v^{\rho}\, c_s^{\mathcal{N}},
\end{equation}
where $\mathcal{K}_{v,s}^w$ is defined as in~\eqref{eq:kappa_covariant} with the role of system $B$ in the proposition played here by the contracted system $A$.
\end{remark}
These results allow to compute (partial) traces and (partial) transposes of permutation invariant operators. In SDPs, we also need to impose the positive semidefinite (PSD) constraints without constructing the orbit matrices, which have exponential size in $n$.
To do this, we make use of the representation theory of the symmetric group, and to decompose the PSD constraint into polynomially many blocks of polynomial size. Next section summarizes the construction (for more details, refer to \cite{JamesKerber1981, Sagan2001}). 

\subsection{Representation Theory of the Symmetric Group}\label{sec:endsn_block_diagonalization}
Since $\mathrm{End}^{\mathfrak{S}_n}(\mathcal{H}^{\otimes n})$ is a $*$-matrix algebra, there exists a $*$-algebra isomorphism which maps its element to block-diagonal form. By Schur--Weyl duality, the blocks (irreducible representations of $\mathfrak{S}_n$) are indexed by partitions of $n$:
\begin{equation}
\label{eq:schur_weyl_iso}
    \psi_{\mathcal{H}} : \mathrm{End}^{\mathfrak{S}_n}(\mathcal{H}^{\otimes n}) \rightarrow\bigoplus_{\lambda \in \mathrm{Par}(d,n)} \mathbb{C}^{m_\lambda \times m_\lambda}
\end{equation}
where $\mathrm{Par}(d,n)$ is the set of partitions of $n$ with at most $d = d_{\mathcal{H}}$ parts, and $m_\lambda \coloneqq |\mathcal{T}_{\lambda,d}|$ is the number of \textit{semistandard Young tableaux} of shape $\lambda$ with entries in $[d]$ \cite{Sagan2001}. Note that\begin{equation}
    \sum_{\lambda \in \mathrm{Par}(d,n)} m_\lambda^2 = m_\mathcal{H},
\end{equation}
as defined in \eqref{eq:dim_space_perm_inv}. Crucially, by \cite[Proposition 2.4.4]{Polak2020}, this is a positive map, i.e. it preserves positive semidefiniteness:\begin{equation}
    X \geq 0 \iff \psi(X) \geq 0 \iff [\psi(X)]_{\lambda} \geq 0 \quad \forall \lambda \in \mathrm{Par}(d,n),
\end{equation}
where $X \in \mathrm{End}^{\mathfrak{S}_n}(\mathcal{H}^{\otimes n})$.
 
Both the number of blocks $t \coloneqq |\mathrm{Par}(d,n)|$ and the size of each block $m_\lambda \coloneqq |\mathcal{T}_{\lambda,d}|$ are polynomial in $n$:
\begin{equation}
\label{eq:bound_SSYTs}
    t \leq (n+1)^d, \qquad m_\lambda \leq (n+1)^{d(d-1)/2}
\end{equation}

In general, computing the block-diagonal decomposition above and the map $\psi$ for a group $G$ is a non-trivial procedure, and we need some tools from representation theory of finite groups. We now summarize the construction of $\psi$ for the case $G = \mathfrak{S}_n$, following \cite{Fawzi2022}.

A \textit{partition} $\lambda$ of $n$ is a non-increasing sequence of natural numbers $(\lambda_1 \geq \cdots \geq \lambda_h > 0)$ with $\sum_i \lambda_i = n$; we write $\lambda \vdash_d n$ if the number of parts (\textit{height}) satisfies $h \leq d$. The \textit{Young diagram} of shape $\lambda$ is an array of $n$ boxes arranged in rows of lengths $\lambda_1, \ldots, \lambda_h$, and contains the same information as the partition $\lambda$.
The boxes are labeled $1, \ldots, n$ in lexicographic order of their positions (the \textit{canonical tableau}).

Given $\lambda \vdash_d n$, the \textit{row stabilizer} $R_\lambda \leq \mathfrak{S}_n$ consists of permutations that (by acting on the canonical tableau) permute boxes within each row, while the \textit{column stabilizer} $C_\lambda \leq \mathfrak{S}_n$ permutes boxes within each column. A $\lambda$-\textit{tableau} is a filling $\tau: \{1,\ldots,n\} \to [d]$ of the boxes with symbols from $[d]$. For example, the \textit{constant tableau} $\tau_\lambda$ is the tableau whose $i$-th row is filled with the number $i$.

For a permutation $\pi \in \mathfrak{S}_n$, we define $(\pi \tau)(k) = \tau(\pi^{-1}(k))$ for all $k \in \{1,..,n\}$, i.e. we permute the tableaux by permuting the positions of the boxes. Two tableaux $\tau, \tau'$ are called \textit{row-equivalent}, written $\tau \sim \tau'$, if one is obtained from the other by permuting entries within each row, i.e. $\pi \in R_\lambda$ s.t. $\tau' = \pi \tau$. A tableau is called \textit{standard} if the entries in each row and each column are increasing. A tableau is \textit{semistandard} (SSYT) if its entries are non-decreasing along each row and strictly increasing up each column. We denote by $\mathcal{T}_{\lambda,d}$ the set of SSYT of shape $\lambda$ with entries in $[d]$. Their size $|\mathcal{T}_{\lambda,d}|$ can be computed with a combinatorial expression called the \textit{hook-content formula} \cite{Stanley1971PlanePartitions2}, and satisfies the polynomial bound $  |\mathcal{T}_{\lambda, d}|  \leq (n+1)^{\frac{d(d-1)}{2}}$ for all $\lambda \in \text{Par}(d,n)$. Similarly, the number of standard Young tableaux is obtained via the \textit{hook-length formula} \cite{Frame1954} and is denoted by:\begin{equation}
    f_\lambda \equiv |\mathrm{SYT}(\lambda)|.
\end{equation}
Tableaux are in natural correspondence with multi-indices:\begin{equation}
\label{eq:multi-indices-tableaux}
    \underline{i} = (i_1, \cdots, i_n) \leftrightarrow \tau (k) = i_k.
\end{equation}
Given a count matrix $E \in \mathbb{N}^{d\times d}$, we can associate to it a partition $\lambda_E \in \mathrm{Par}(d,n)$ by summing the columns of $E$ and sorting the result in non-increasing order. We can also associate a semistandard tableau $\tau_E \in \mathcal{T}_{\lambda_E,d}$, which for every column $b$ of $E$, places symbol $a\in [d] :$ exactly $E_{a,b}$ times in that column, arranged in increasing order from bottom to top. Using the identification in \eqref{eq:multi-indices-tableaux},  we have $E^{(\tau_E, \tau_{\lambda_E})} = E$, where $\tau_{\lambda_E}$ is the constant tableau of shape $\lambda_E$.

We now define the representative matrix set for the action of $\mathfrak{S}_n$ on $\mathcal{H}^{\otimes n}$.
For each $\tau \in \mathcal{T}_{\lambda,d}$, the \textit{Young-symmetrized vector} $|u_\tau\rangle \in \mathcal{H}^{\otimes n}$ is defined as:
\begin{equation}
    |u_\tau\rangle = \sum_{\tau' \sim \tau} \sum_{c \in C_\lambda} \mathrm{sgn}(c) \bigotimes_{k=1}^{n} \ket{\tau'(c(k))}
\end{equation}
where $ \ket{\tau'(c(k))} \coloneqq |i_{\tau'(c(k))}\rangle$. For the constant tableau $\tau_λ$ of shape $λ$ we denote:\begin{equation}
\label{eq:young_vector_trivial}
    \ket{u_{\lambda}} \coloneqq \ket{u_{\tau_\lambda}} = \sum_{c \in C_\lambda} \text{sgn}(c) \bigotimes_{k=1}^{n} \ket{\tau_\lambda(c(k))}
\end{equation} Given a count matrix $E$, for all $\lambda \in \mathrm{Par}(d,n)$, one has \cite[Lemma 1]{Gijswijt2009}:\begin{equation}
        C_E\ket{u_{\lambda}} = \begin{cases}
            \ket{u_{\tau_E}} &\text{if}\ \ \lambda = \lambda_E,\\
            0 &\text{otherwise}.
        \end{cases}
    \end{equation}
The collection $\{|u_\tau\rangle\}_{\tau \in \mathcal{T}_{\lambda,d}}$ forms a basis, called the \textit{Young basis}, for a $\mathfrak{S}_n$-submodule isomorphic to the $\lambda$-isotypic component of $\mathcal{H}^{\otimes n}$ (the isotypic decomposition is a particular decomposition of $\HS\n$ into different isomorphism classes of representations), with multiplicity given by the number of standard Young tableaux of shape $λ$ with entries in $[d]$.  Collecting these vectors as columns of a matrix $U_\lambda = [|u_\tau\rangle : \tau \in \mathcal{T}_{\lambda,d}]$, the set $\{U_\lambda\}_{\lambda \in \mathrm{Par}(d,n)}$ is a \textit{representative set} for the action of $\mathfrak{S}_n$ on $\mathcal{H}^{\otimes n}$, and the blocks of the isomorphism are:
\begin{equation}
\begin{aligned}
    \psi:
    \mathrm{End}^{\mathfrak{S}_n}(S^{\otimes n})
    &\to \bigoplus_{\lambda \in \mathrm{Par}(d_S, n)}
    \mathbb{C}^{m_\lambda \times m_\lambda} \\
\label{eq:incomplete-isomorphism}
    X &\mapsto \bigoplus_\lambda
    U_\lambda^T X U_\lambda
    = \bigoplus_\lambda
    \left(\langle u_\tau | X | u_\gamma \rangle\right)_{\tau, \gamma \in \mathcal{T}_{\lambda,d}}.
\end{aligned}
\end{equation}
Given $\lambda \in \mathrm{Par}(d, n)$, the Young-symmetrized vectors $\{|u_\tau\rangle\}_{\tau \in \mathcal{T}_{\lambda, d}}$ are generally \textit{not} orthogonal. The map in \eqref{eq:incomplete-isomorphism} is a bijective linear map that preserves positive semi-definiteness, which suffices for reformulating PSD constraints. However, it is \textit{not} (yet) a $*$-algebra isomorphism; to obtain the full $*$-isomorphism~(\ref{eq:schur_weyl_iso}) one defines first the \textit{Gram matrix} $G_\lambda = U_\lambda^T U_\lambda$ with entries $(G_\lambda)_{(\tau,\gamma)} = \langle u_\tau | u_\gamma \rangle$, and then takes the Cholesky decomposition of its inverse $R_\lambda R_\lambda^T = G_\lambda^{-1}$. Then, one can construct the map $\widetilde{\psi}$ as:
\begin{equation}
\begin{aligned}
    \widetilde{\psi}:
    \mathrm{End}^{\mathfrak{S}_n}(S^{\otimes n})
    &\to \bigoplus_{\lambda \in \mathrm{Par}(d_S, n)}
    \mathbb{C}^{m_\lambda \times m_\lambda} \\
\label{eq:full_isomorphism}
    X &\mapsto \bigoplus_\lambda
    R_\lambda^T\,
    [\widetilde{\psi}(X)]_\lambda\,
    R_\lambda.
\end{aligned}
\end{equation}The map $\widetilde{\psi}$ is a $*$-isomorphism, i.e. it preserves both positive semi-definiteness and additionally it preserves products ($[\widetilde{\psi}(XY)]_\lambda = [\widetilde{\psi}(X)]_\lambda \, [\widetilde{\psi}(Y)]_\lambda$) and maps the identity to the identity
($[\widetilde{\psi}(\mathbbm{1}_{\mathcal{H}^{\otimes n}})]_\lambda = \mathbbm{1}_{m_\lambda}$). While the Gram correction is unnecessary for PSD-preservation, it becomes essential when we want to phrase a whole SDP in the block basis; this is typically advantageous in practice, and it becomes essential in the application of Section \ref{sec:symmetric_seesaw}, where we reformulate the power iteration method in the symmetric subspace. Throughout this work, a tilde is used to distinguish the full isomorphism~\eqref{eq:full_isomorphism} from the positive map~\eqref{eq:incomplete-isomorphism}. 
\

The construction of both $\psi$ and $\widetilde{\psi}$ trivially extends to $\mathrm{End}^{\mathfrak{S}_n}(R \otimes S^n)$, where there is a fixed reference system $R$ on which the group $\mathfrak{S}_n$ does not act. In this case, the representative set for $R \otimes S^{\otimes n}$ is $\{\mathbbm{1}_R \otimes U^S_\lambda\}_{\lambda \in \mathrm{Par}(d_S,n)}$,  and the following map is a bijective linear map that preserves positive semi-definiteness:
    \begin{equation}
\begin{aligned}
    \psi_{RS^n}:
    \mathrm{End}^{\mathfrak{S}_n}(R \otimes S^{\otimes n})
    &\to \bigoplus_{\lambda \in \mathrm{Par}(d_S, n)}
    \mathbb{C}^{m_\lambda^{RS^n} \times m_\lambda^{RS^n}} \\
\label{eq:isomorphism_with_reference}
    X_{RS^n} &\mapsto \bigoplus_{\lambda \in \mathrm{Par}(d_S, n)}
    (\mathbbm{1}_R \otimes (U^S_\lambda)^T)\,
    X_{RS^n}\,
    (\mathbbm{1}_R \otimes U^S_\lambda).
\end{aligned}
\end{equation}
with the polynomial bounds:
\begin{align}
     |\mathrm{Par}(d_S, n)| &\leq (n+1)^{d_S}, \\
    m_\lambda^{RS^n} = d_{R}\cdot |\mathcal{T}_{\lambda, d_{S}}| &\leq d_{R}\cdot (n+1)^{d_S(d_S-1)/2},\\
    \dim\, \mathrm{End}^{\mathfrak{S}_n}(R \otimes S^{\otimes n}) &\leq d_R^2 \cdot (n+1)^{d^2_S}.
\end{align}
Similarly, the full $*$-algebra isomorphism extends to:
\begin{equation}
\begin{aligned}
    \widetilde{\psi}_{RS^n}:
    \mathrm{End}^{\mathfrak{S}_n}(R \otimes S^{\otimes n})
    &\to \bigoplus_{\lambda \in \mathrm{Par}(d_S, n)}
    \mathbb{C}^{(d_R m_\lambda) \times (d_R m_\lambda)} \\
\label{eq:full_isomorphism_with_reference}
    X_{RS^n} &\mapsto \bigoplus_\lambda
    (\mathbbm{1}_R \otimes R_\lambda^T)\,
    [\psi_{RS^n}(X)]_\lambda\,
    (\mathbbm{1}_R \otimes R_\lambda).
\end{aligned}
\end{equation}
It remains to show that \eqref{eq:incomplete-isomorphism} can be computed efficiently. This automatically implies that also \eqref{eq:full_isomorphism} and their counterparts in \eqref{eq:isomorphism_with_reference} and \eqref{eq:full_isomorphism_with_reference} can be efficiently computed, because they are related to the original map $\psi$ by matrix multiplications by matrices of size polynomial in $n$ and tensor products with matrices on a finite-dimensional system $R$ of fixed dimension $d_R \in \mathbbm{N}$. This is a fairly standard result (see e.g.~\cite{Gijswijt2009, Litjens2016}) and is revised in the following Theorem.

\

Before stating the result, recall that the \textit{dual space} of an Hilbert space $\mathcal{H}^*$ is the vector space of all linear transformations $\varphi: \mathcal{H} \to \mathbb{C}$. The coordinate ring of $\mathcal{H}$, denoted as $O(\mathcal{H})$, is the algebra consisting of all complex-linear combinations of monomials, i.e. products of elements from $\mathcal{H}^*$, also called \textit{polynomials}. A polynomial $p \in O(\mathcal{H})$ is called \textit{homogeneous} if it is a complex-linear combination of a product of $n$ non-constant elements of $\mathcal{H}^*$ (for a fixed non-negative integer $n$), or phrased differently, if all non-zero terms have the same degree $n$. We denote by $O_n(\mathcal{H})$ the set of all homogeneous polynomials of degree $n$.

\begin{theorem}
\label{thm:Efficient_construction_Isomorphism}
    Let $S$ be a Hilbert space of dimension $d_S$, and let $n \in \mathbb{N}$. Then, for every $t \in [m_S]$, $\psi(C_t^S)$, which is the image of the orbit-basis matrix under the block-diagonalization $\psi$, can be computed in time $\mathrm{poly}(n)$.
\end{theorem}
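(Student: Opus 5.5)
We need to compute, for each $\lambda \in \mathrm{Par}(d_S,n)$ and each pair of semistandard tableaux $\tau,\gamma \in \mathcal{T}_{\lambda,d_S}$, the entry $\langle u_\tau | C_t^S | u_\gamma\rangle$. Since the number of blocks is at most $(n+1)^{d_S}$ and each block has at most $(n+1)^{d_S(d_S-1)/2}$ rows, as recorded in \eqref{eq:bound_SSYTs}, it suffices to show that a single such entry can be evaluated in $\mathrm{poly}(n)$ time. The obstacle is clear from the outset. The vectors $\ket{u_\tau}$ are sums over $|R_\lambda|\cdot|C_\lambda|$ terms, which is exponential in $n$, so they cannot be expanded. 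The plan is therefore to translate the inner product into an evaluation of polynomials, following the approach of \cite{Gijswijt2009, Litjens2016}. This is why the coordinate ring $O(\mathcal{H})$ and the space $O_n(\mathcal{H})$ were introduced just before the statement.

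\textbf{Step 1: reduce to the constant tableau.} Using \cite[Lemma 1]{Gijswijt2009}, we have $\ket{u_\tau} = C_{E}\ket{u_\lambda}$ for any count matrix $E$ with $\lambda_E=\lambda$ and $\tau_E=\tau$. Such an $E$ is read off directly from $\tau$: set $E_{a,b}$ equal to the number of entries $a$ in column $b$ of $\tau$. This gives
\[
\langle u_\tau|C_t|u_\gamma\rangle = \langle u_\lambda| C_{E_\tau}^{T} C_t C_{E_\gamma} |u_\lambda\rangle ,
\]
where the orbit matrices are real, so the adjoint is simply the transpose, which by Lemma~\ref{lem:transpose_orbit_basis} is $C_{E_\tau^T}$. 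The product of three orbit matrices lies again in $\esn(S^{\otimes n})$. Expanding it directly in the orbit basis would cost too much, however, so I would instead aim for a formula of the form $\langle u_\lambda| C_F |u_\lambda\rangle$ obtained through polynomials.

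\textbf{Step 2: identify with polynomials.} Identify $(S^{\otimes n})^{\mathfrak{S}_n}$-type objects with $O_n$ of a matrix space. First identify $\mathcal{H}^{\otimes n}$ with multilinear forms. The vector $\ket{u_\lambda}$ pairs with product vectors to give a product of determinants: one $k\times k$ minor for each column of height $k$ of $\lambda$. Then $\langle u_\tau | C_t | u_\gamma\rangle$ equals a coefficient extraction. Concretely, consider the polynomial in the entries of a $d\times d$ matrix variable $x$ defined by
\[
p_{\tau,\gamma}(x) = \sum_{\sigma\in R_\lambda,\, c,c'\in C_\lambda} \mathrm{sgn}(c)\,\mathrm{sgn}(c') \prod_{k} x_{\tau(\sigma c(k)),\,\gamma(c'(k))}.
\]
By definition of $C_t$ as the indicator of count matrix $E_t$, $\langle u_\tau|C_t|u_\gamma\rangle$ is the coefficient of the monomial $\prod_{a,b} x_{ab}^{(E_t)_{ab}}$ in a polynomial of this type. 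After the row-equivalence sum is taken into account, that polynomial factorizes as a symmetrized product of column determinants of submatrices of $x$. Its degree is $n$ in $d^2$ variables, so it has at most $m_S=O(n^{d^2-1})$ monomials. Computing it column by column is polynomial: multiply the $\lambda_1$ determinant factors one at a time while keeping the intermediate polynomial expanded. Each intermediate polynomial has degree at most $n$ in $d^2$ variables and hence $\mathrm{poly}(n)$ terms. Each determinant has at most $d!$ terms, a constant.

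\textbf{Step 3: handle the row sum.} The row symmetrization over $\tau'\sim\tau$ is the step I expect to be the main obstacle. It does not factor over columns directly. I would first try to absorb it using the invariance of $C_t$ together with the fact that $C_E$ already maps $\ket{u_\lambda}$ to $\ket{u_{\tau_E}}$, where the row-equivalence sum is generated by the orbit structure. This would reduce the computation to evaluating $\langle u_\lambda|C_F|u_\lambda\rangle$ for count matrices $F$. For that quantity the polynomial is the honest product $\prod_{\text{columns}}\det(x_{[h],[h]})$, with no row sum. If reducing the triple product $C_{E_\tau^T}C_tC_{E_\gamma}$ to orbit coefficients proves too costly, the fallback is to encode the row sums as multinomial weights. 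Row-equivalence classes are determined by counts of symbols per row, which gives a polynomial-size dynamic program across the rows of $\lambda$. The total cost is then a product of polynomially many polynomial-size operations, as claimed.
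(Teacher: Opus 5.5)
Your Step 1 and the plan to read the entries off as coefficients of a degree-$n$ generating polynomial match the paper's proof. However, the proposal never supplies the ingredient that makes the computation polynomial: an efficient treatment of the row-equivalence sums $\sum_{\tau'\sim\tau}\sum_{\gamma'\sim\gamma}$ hidden in $\langle u_\tau|C_t|u_\gamma\rangle$.

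Step 2 is not correct as written. Your $p_{\tau,\gamma}$ symmetrizes rows only on the $\tau$ side, and it sums over $R_\lambda$ rather than over distinct row-equivalent tableaux. More importantly, ``multiply the $\lambda_1$ determinant factors one at a time'' is valid only for one fixed pair $(\tau',\gamma')$. After the two column-stabilizer sums collapse, one has $f_{\tau,\gamma}=|C_\lambda|\sum_{\tau'\sim\tau}\sum_{\gamma'\sim\gamma}\prod_j\det\bigl((x_{\tau'(i,j),\gamma'(i',j)})_{i,i'=1}^{\lambda^*_j}\bigr)$. This is an exponentially long sum of such products, not a single product.

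Step 3 correctly flags this as the obstacle but leaves it open. Route (a) needs the orbit-basis expansion of $C_{E_\tau^T}C_tC_{E_\gamma}$, which you dismiss as too costly without an argument either way. Fallback (b), a dynamic program ``across the rows'', fails as described, because each column determinant couples all rows in that column. There is also a small slip: $(E_\tau)_{a,b}$ counts entries $a$ in \emph{row} $b$ of $\tau$, since $\tau_\lambda\equiv b$ on row $b$.

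The paper closes this step in one of two ways (Appendix~\ref{app:two_solutions_representation_theory}).
\begin{itemize}
\item \textbf{Method 1 (count functions).} Pairs $(\tau',\gamma')$ are grouped by the count function $\kappa(v,w)$, which records how many columns of each height carry the column-vector pair $(v,w)$. The product of determinants depends only on $\kappa$. The number of pairs realizing $\kappa$ is the multinomial $\prod_t(\lambda_t-\lambda_{t+1})!/\prod_{v,w}\kappa(v,w)!$, and there are only $\mathrm{poly}(n)$ many $\kappa$. This is the correct form of your fallback: the recursion must run over columns, with the remaining row contents of $\tau$ and $\gamma$ as state.
\item \textbf{Method 2 (transition operators).} This is the efficient realization of your route (a). One writes $\ket{u_\tau}$, up to factorials, as products of transition operators $T_{a\to b}$ applied to $\ket{u_\lambda}$. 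Left and right multiplication by $T_{a\to b}$ act on the generating polynomial as the first-order differential operators $d^*_{a\to b}$ and $d_{b\to a}$. These are applied $O(n)$ times to $P_\lambda=\prod_k Q_k^{\lambda_k-\lambda_{k+1}}$.
\end{itemize}

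Alternatively, you could finish route (a) directly by proving that the structure constants $C_EC_{E'}=\sum_F c^F_{E,E'}C_F$ are $\mathrm{poly}(n)$-computable. Here $c^F_{E,E'}$ is a sum of products of multinomials over $d\times d\times d$ count arrays with the three prescribed two-index marginals, and there are only polynomially many such arrays. After that, $\langle u_\lambda|C_F|u_\lambda\rangle=[x_F]P_\lambda$ finishes the computation. Some such argument has to be given, and your proposal does not give one.
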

\begin{proof}
We need to show that for all $\lambda \in \mathrm{Par}(d_S,n)$, the blocks $U_\lambda^T C^S_t U_\lambda$ can be computed in $\mathrm{poly}(n)$ time. This boils down to efficiently computing $\braket{u_{\tau} }{C^{S}_t u_\gamma}$, for every $\tau, \gamma \in \mathcal{T}_{\lambda, d_S}$, as:\begin{equation}
\label{eq:complete_matrix_after_isomorphism}
    U_\lambda^T C_t^{S} U_\lambda = \begin{pmatrix}
        \braket{u_1}{C_t u_1} & \braket{u_1}{C_t u_2} & \cdots & \cdots & \braket{u_1}{C_t u_{m_\lambda}} \\
                \braket{u_2} {C_t u_1} & \braket{u_2} {C_t u_2} & \cdots & \cdots & \braket{u_2}{C_t u_{m_\lambda}}\\
                \vdots & \ddots & \cdots & \cdots & \vdots \\
                  \braket{u_{m_\lambda}} {C_t u_1} &  \braket{u_{m_\lambda}} {C_t u_2} & \cdots & \cdots &  \braket{u_{m_\lambda}} { C_t u_{m_\lambda}}\\
    \end{pmatrix}.
\end{equation}
All $C_t, \ket{u_\tau}, \ket{u_\gamma}$ have exponential size in $n$, so we need to compute these inner products without constructing them explicitly.
The key idea~\cite{Gijswijt2009, Litjens2016} is to encode all matrix elements of~\eqref{eq:complete_matrix_after_isomorphism} simultaneously as coefficients of a homogeneous polynomial. For any $A \in \mathrm{End}(S^{\otimes n})$ and $L \in \mathcal{L}(S) \simeq \mathbb{C}^{d_S \times d_S}$, define the polynomial:
\begin{equation}
\label{eq:encoding_polynomial}
    f_A(L) \coloneqq \sum_E \mathrm{Tr}(A^\dagger C_E) \cdot x_E(L),
\end{equation}
where for a given count matrix $E$, we defined the associated monomial $x_E$ as the functional acting on an operator $X$ as follows: 
\begin{equation}
\label{eq:monomial_count_matrix}
\begin{aligned}
x_E: \mathbb{C}^{{d_S} \times d_S} &\to \mathbb{C}\\
    L &\mapsto \prod_{a,b=1}^{d_{S}} L_{(a,b)}^{E_{(a,b)}},
\end{aligned}
\end{equation}
and the sum in \eqref{eq:encoding_polynomial} is over all count matrices $E$ satisfying~\eqref{eq:count_matrix}.
Note that $f_A \in O_n(S)$, i.e. it is a homogeneous polynomial of degree $n$ in the $d_S^2$ entries of $L$, with at most $m_S$ monomials, each labeled by a count matrix. The inner products we need are recovered as coefficients:
\begin{equation}
    \mathrm{Tr}(A^\dagger C_E) = [x_E]\, f_A,
\end{equation}
where $[x_E]\, f_A$ denotes the coefficient of monomial $x_E$ in the polynomial $f_A$.

To obtain the elements of ~\eqref{eq:complete_matrix_after_isomorphism}, for all $\tau, \gamma \in \mathcal{T}_{\lambda, d_S}$, we set $A = |u_\tau\rangle\langle u_\gamma|$ and write $f_{\tau,\gamma} \coloneqq f_{|u_\tau\rangle\langle u_\gamma|}$, so that: \begin{equation}
\label{eq:encoding_polynomial_semistandard-tab}
\begin{aligned}
    f_{\tau, \gamma}: \mathcal{L}(S)& \to \mathbb{C}\\
       L &\mapsto  f_{\tau, \gamma}(L) = \sum_{E} \langle u_\tau | C_E  u_\gamma \rangle \cdot x_E(L),
\end{aligned}
    \end{equation}
that is:
\begin{equation}
    \langle u_\tau | C_E  u_\gamma \rangle = [x_E]\, f_{\tau,\gamma}.
\end{equation}
The polynomial $f_{\tau,\gamma}$ admits a closed-form expression that can be evaluated in $\mathrm{poly}(n)$ time for fixed $d_S$.
In Appendix \ref{app:two_solutions_representation_theory} we present two methods available from the literature to efficiently compute $f_{\tau,\gamma}$. Both  produce the same polynomial but via different approaches:
\begin{itemize}
    \item \textit{Method 1 \cite{Litjens2016, Fawzi2022}} derives a direct formula for $f_{\tau,\gamma}$ valid for any $\tau, \gamma$ using \textit{count functions}.
    \item \textit{Method 2 \cite{Gijswijt2009}} gives a simple product formula for the special case $\tau = \gamma = \tau_\lambda$ (constant tableau), then extends to general tableaux using transition operators.
\end{itemize}
For all partitions $\lambda \in \mathrm{Par}(d,n)$, either method is used a number of times equal to $m_\lambda \times m_\lambda$ (polynomial in $n$), so the computation of \eqref{eq:complete_matrix_after_isomorphism} is efficient.\end{proof}

\begin{corollary}
\label{cor:Efficient-PSD-Constraints}
    Let $S$ be a Hilbert space of dimension $d_S$, and let $n \in \mathbb{N}$. Then, for all variables $X = \sum_{r = 1}^{m_S} x_r C^S_r \in \mathrm{End}^{\mathfrak{S}_n}(S^n)$, the PSD constraint $X \geq 0$ can be imposed with a $\mathrm{poly}(n)$ overhead as a set of $|\mathrm{Par}(d_S,n)| = O(n^{d_S})$ PSD constraints on matrices of size $m_\lambda = O(n^{d^2_S})$ involving the coefficients $x_r$, where $|\mathrm{Par}(d_S,n)|$ and $m_\lambda$ are defined in \eqref{eq:bound_SSYTs}.
\end{corollary}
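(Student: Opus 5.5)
The plan is to combine linearity of the block-diagonalization $\psi$ from \eqref{eq:incomplete-isomorphism} with the positivity property from \cite[Proposition 2.4.4]{Polak2020} and the efficient computability of Theorem~\ref{thm:Efficient_construction_Isomorphism}. Since $\psi$ is linear, for $X = \sum_{r=1}^{m_S} x_r C^S_r$ we have
\begin{equation*}
[\psi(X)]_\lambda = \sum_{r=1}^{m_S} x_r\, U_\lambda^T C_r^S U_\lambda = \sum_{r=1}^{m_S} x_r\, [\psi(C_r^S)]_\lambda ,
\end{equation*}
so each block is an affine (indeed linear) matrix-valued function of the coefficients $x_r$. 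By the positivity statement, $X \geq 0$ holds if and only if $[\psi(X)]_\lambda \geq 0$ for every $\lambda \in \mathrm{Par}(d_S,n)$. The constraint $X \geq 0$ is therefore equivalent to a family of linear matrix inequalities in the variables $x_r$, one for each partition $\lambda$.

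Next I will quantify the resulting sizes and costs. The number of constraints is $|\mathrm{Par}(d_S,n)| \leq (n+1)^{d_S} = O(n^{d_S})$ by \eqref{eq:bound_SSYTs}. Each block has size $m_\lambda = |\mathcal{T}_{\lambda,d_S}| \leq (n+1)^{d_S(d_S-1)/2}$, which is certainly $O(n^{d_S^2})$ as stated. To set up the constraints, one precomputes the matrices $[\psi(C_r^S)]_\lambda$ for all $r \in [m_S]$ and all $\lambda$. By Theorem~\ref{thm:Efficient_construction_Isomorphism}, each $\psi(C_r^S)$ costs $\mathrm{poly}(n)$, and there are $m_S \leq (n+1)^{d_S^2}$ orbits by \eqref{eq:dim_space_perm_inv}. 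Assembling $\sum_r x_r [\psi(C_r^S)]_\lambda$ then involves $m_S \cdot \sum_\lambda m_\lambda^2 = m_S^2$ scalar entries in total, which is again polynomial. Hence the overhead is $\mathrm{poly}(n)$ for fixed $d_S$.

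One subtlety concerns Hermiticity. The blocks use $U_\lambda^T$ and the vectors $u_\tau$ are real, so $[\psi(X)]_\lambda$ is Hermitian whenever $X$ is. It is cleanest to note that an SDP variable is required to be Hermitian anyway, which by Lemma~\ref{lem:transpose_orbit_basis} amounts to the linear condition $x_{r^T} = \overline{x_r}$. Under this condition each block is Hermitian and the PSD constraint on it is well posed.

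The main obstacle is only the cost of computing the blocks, and this is already handled by Theorem~\ref{thm:Efficient_construction_Isomorphism}. The remaining steps are bookkeeping, so I do not expect a genuine difficulty beyond citing that result and the positivity property correctly.
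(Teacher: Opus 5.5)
Your proposal is correct and follows essentially the same route as the paper: linearity of $\psi$ to write each block as $\sum_r x_r\, U_\lambda^T C_r^S U_\lambda$, the positivity equivalence from \cite[Proposition 2.4.4]{Polak2020}, the counting bounds from \eqref{eq:bound_SSYTs}, and Theorem~\ref{thm:Efficient_construction_Isomorphism} for the $\mathrm{poly}(n)$ cost of each $\psi(C_r^S)$. Your additional remarks are also correct, and the paper does not spell them out: Hermiticity corresponds to the linear condition $x_{r^T}=\overline{x_r}$ and is preserved because the $u_\tau$ are real, and the total number of entries assembled is $m_S^2$ because $\sum_\lambda m_\lambda^2 = m_S$.
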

\begin{proof}
    Using the positive map $\psi_{S^n}$ of \eqref{eq:incomplete-isomorphism}, the operator $X = \sum_{r = 1}^{m_S} x_r C_r^S$ is mapped into block-diagonal form, and $X \geq 0$ is equivalent to imposing $[\psi(X)]_\lambda \geq 0$ for all $\lambda \in \operatorname{Par}(d_S, n)$, which gives $|\mathrm{Par}(d_S,n)| = O((n+1)^{d_S})$ independent constraints, each of polynomial size $m^S_\lambda$, where $m^S_\lambda = O\left((n+1)^{d_S^2}\right)$ by \eqref{eq:bound_SSYTs}. By linearity, using \eqref{eq:incomplete-isomorphism}, the action of $\psi_{S^n}$ can be written as
\begin{align}
\label{eq:incomplete-isomorphism-action-orbit}
\psi(X) = \sum_{r = 1}^{m_S} x_r\bigoplus_\lambda[\psi(C_r^S)]_\lambda, \quad \quad  [\psi(C_r^S)]_\lambda = U_\lambda^T C_r^S U_\lambda = \left(\langle u_\tau | C_r^S | u_\gamma \rangle\right)_{\tau, \gamma \in \mathcal{T}_{\lambda,d}}.
\end{align}
By Theorem \ref{thm:Efficient_construction_Isomorphism}, the action of $\psi$ on a generic orbit matrix $C_r$ can be computed in $\mathrm{poly}(n)$ time. Doing this for every $r \in [m_S]$ implements the change-of-basis map from orbits to irreducible representations (\textit{block basis}) and concludes the proof.
\end{proof}

\subsection{Efficient Operations in the Block Basis}
\label{sec:efficient-operations-in-block-basis}
Corollary \ref{cor:Efficient-PSD-Constraints} allows efficiently phrasing SDPs involving permutation invariant variables, by taking the orbit basis coefficients $\{x_t\}_{t \in m_S}$ as free variables and then imposing the PSD constraints as in Corollary \ref{cor:Efficient-PSD-Constraints}. While having coefficients in the orbit basis is very convenient when computing traces, transposes, HS norms and channel concatenations (see Section \ref{sec:orbit_basis_operations}), in practice it is often convenient to also phrase the SDPs directly in the block basis, so to avoid dealing with the complicated orbit-block relations at the solver level. For this purpose, we can use the full-isomorphism \eqref{eq:full_isomorphism} (or \eqref{eq:full_isomorphism_with_reference}) and directly use the $m_\lambda \times m_\lambda$ blocks $[\widetilde{\psi}_{S^n}(X_{S^n})]_{\lambda}$ as optimizing variables rather than the orbit basis coefficients $\{x_t\}_{t \in m_S}$. 

Recall that full Schur--Weyl decomposition of $S^{\otimes n}$ reads:\begin{equation}
\label{eq:Shur-Weyl-decomposition}
    S^{\otimes n} \cong \bigoplus_{\lambda \in \mathrm{Par}(d_S,n)} V_\lambda \otimes W_\lambda,
\end{equation} where $\dim V_\lambda = m_\lambda$ and $\dim W_\lambda = f_\lambda$. Permutations act non-trivially only on the irreps $V_\lambda$. Then, if $X_{S^n},Y_{S^n} \in \mathrm{End}^{\mathfrak{S}_n}(S^{\otimes n})$, one has\begin{align}
\label{eq:trace_block}
    \Tr[X_{S^n}] &= \sum_{\lambda \in \mathrm{Par}(d_S,n)}f_\lambda \Tr_{V_\lambda}[X_{S^n}] = f_\lambda \Tr[\widetilde{\psi}_{S^n}(X_{S^n})]_{\lambda}\\
     \mathrm{Tr}[X_{S^n}^\dag
   Y_{S^n}]
    &= \sum_{\lambda \in \mathrm{Par}(d_S,n)}
    f_\lambda\, \mathrm{Tr}_{V_\lambda}\!\left[\widetilde{\psi}_{S^n}(X_{S^n}^\dag
   Y_{S^n})\right]_{\lambda} = \sum_{\lambda \in \mathrm{Par}(d_S,n)}
    f_\lambda\,\Tr[
    [\widetilde{\psi}(X_{S^n}^\dag)]_\lambda \cdot
[ \widetilde{\psi}(Y_{S^n})]_\lambda],
\label{eq:inner_product_HS_block}
\end{align}
where we used that \eqref{eq:full_isomorphism} is a $*$-isomorphism, and $f_\lambda = |\mathrm{SYT}(\lambda)|$ is the dimension of the Specht module from the Schur--Weyl decomposition~\eqref{eq:Shur-Weyl-decomposition}.
Formula \eqref{eq:inner_product_HS_block} is useful for expressing objective functions in SDPs (recalling the the primal problem in \eqref{eq:SDP_definition}), and can be extended to easily implement partial trace constraints. For instance, one can conveniently express the constraints for $X_{RS^n} \in \mathcal{C}(R:S^n)$, the Choi set of valid Choi matrices of CPTP maps in \eqref{eq:choi_set}, or $X_{RS^n} \in \mathcal{C}^*(R:S^n)$, the Choi set of valid Choi matrices of CPU maps in \eqref{eq:choi_adj_set}.
Specifically\begin{align}
\label{eq:choi_constraint_block_CPU}
\mathrm{Tr}_R[X_{RS^n}] &= \mathbbm{1}_{S^n}   \iff \Tr_R  \left[[\widetilde{\psi}_{RS^n}(X_{RS^n})]_\lambda \right] =  \mathbbm{1}_{m_\lambda}
    \qquad \forall\, \lambda \in \mathrm{Par}(d_S, n),\\
\label{eq:choi_constraint_block_CPTP}
    \mathrm{Tr}_{S^n}[X_{RS^n}] &= \mathbbm{1}_{R}  \iff \sum_{\lambda \in \mathrm{Par}(d_S,n)} f_\lambda\,
    \mathrm{Tr}_{V_\lambda}\!\left[[
   \widetilde{\psi}_{RS^n}(X_{RS^n})]_\lambda\right]
    = \mathbbm{1}_R.
\end{align}
Note that \eqref{eq:choi_constraint_block_CPU} decouples block by block, while \eqref{eq:choi_constraint_block_CPTP} couples all blocks into a single $d_R \times d_R$ equation, where the contributions are weighted by the Specht dimensions $f_\lambda$. Hence these conditions can also easily be expressed when considering the $[
   \widetilde{\psi}_{RS^n}(X_{RS^n})]_\lambda$ as variables instead of the coefficients $x_{k,l,t}$ in $X_{RS^n} = \sum_{k,l,t} \ket{k}\bra{l}_R \otimes C_t^{S}$.

Note for instance that using the orbit basis coefficients and the map \eqref{eq:incomplete-isomorphism}, the constraint in \eqref{eq:choi_constraint_block_CPU} would become instead:\begin{equation}
\label{eq:choi_constraint_orbit_CPU}
\mathrm{Tr}_R[X_{RS^n}] = \mathbbm{1}_{S^n}   \iff \sum_{k=1}^{d_R} \sum_{t=1}^{m_S} x_{k,k,t} U_\lambda^T C_t^{B} U_\lambda= [|\psi_{B^n}(\mathbbm{1}_{S^n})|]_\lambda \quad  \forall \lambda \in \text{Par}(d_S,n),
\end{equation}
where the left-hand side can be computed efficiently using the construction of Theorem \ref{thm:Efficient_construction_Isomorphism}, while the right-hand side is given by \begin{equation}
\label{eq:gram_matrix_as_action_on_identity}
    [|\psi_{B^n}(\mathbbm{1}_{B^n})|]_\lambda  = U_\lambda^T \left( \sum_{t: E_t \text{ diagonal}} C^B_t \right) U_\lambda = U_\lambda^T U_\lambda =  G_\lambda,
\end{equation}
where we used \eqref{eq:identity_orbit}, and the Gram matrix $G_\lambda$ can be computed efficiently as $(G_\lambda)_{\tau,\gamma} = \langle u_\tau | u_\gamma \rangle = f_{\tau,\gamma}(\mathbbm{1}_{d_S})$, where $f_{\tau,\gamma}$ was defined in \eqref{eq:encoding_polynomial}. We also report an explicit expression of $G_\lambda$ in Appendix \ref{sec:gram_matrix}. This example shows how working in the block basis in SDPs is computationally advantageous, because it allows to impose the PSD constraint 'for free' on the complex block matrices, and makes constraints decouple into independent blocks, like in the case of \eqref{eq:choi_constraint_block_CPU}, instead of implementing the complicated orbit-blocks relations as in \eqref{eq:choi_constraint_orbit_CPU}.
\

In many cases, one also needs to implement concatenations (Proposition~\ref{prop:serial_concatenation}) and (partial) transposes (Proposition~\ref{lem:transpose_orbit_basis}), which are easily formulated using the orbit basis. In these cases, one needs an interconversion of the corresponding constraints between orbit and block basis, which can be implemented with a fixed overhead by precomputing the change-of-basis matrices. For example, in Section \ref{sec:PPT_REE}, we will need to implement the partial transpose $T_{B^n}$ in the block basis, which does not respect the Schur--Weyl block decomposition and in general \emph{mixes blocks} corresponding to different partitions $\lambda$ (while in the orbit basis, it corresponds to a simple permutation of orbit coefficients, see \eqref{eq:coefficient_partial_transpose_relation}). For this reason, in practice one defines the linear map\begin{equation}
\label{eq:partial_transpose_block}
\begin{aligned}
    \widetilde{T_B}: \bigoplus_{\lambda \in \mathrm{Par}(d_{A}d_B, n)} \mathbb{C}^{m_\lambda \times m_\lambda} &\to \bigoplus_{\lambda \in \mathrm{Par}(d_{A}d_B, n)} \mathbb{C}^{m_\lambda \times m_\lambda}\\
    \oplus_{\lambda} [\widetilde{\psi}(X)]_{\lambda} &\mapsto \left(\widetilde{\psi} \circ T_B \circ \widetilde{\psi}^{-1}\right) (\oplus_{\lambda} [\widetilde{\psi}(X)]_{\lambda}).
\end{aligned}
\end{equation}
where $T_B: \esn((AB)^n) \to \esn((AB)^n) $ is the partial transpose, by considering the elements as matrices in $\mathcal{L}((\HS_A\otimes \HS_B)\n)$.
In practice, the map in \eqref{eq:partial_transpose_block} can be precomputed once (in $\mathrm{poly}(n)$ time) and stored as a sparse matrix.

\subsection{General Matrix Algebras -- Exploiting Additional Quantum-Classical Structure}
\label{sec:cq-algebras}
In previous sections we have shown how to efficiently analyze and optimize over the space $\esn(\HS\n) \equiv \esn((\mathbb{C}^{d_{\HS} \times d_{\HS}})\n)$. While $\mathbb{C}^{d_{\HS} \times d_{\HS}}$ is a matrix $*$-algebra, not every $*$-algebra is necessarily of this form. In finite dimensions though, every $*$-algebra is composed of direct sums of blocks of this form, i.e.\ every $*$-algebra $\A$ can be written as (this can be seen for example as a consequence of the Wedderburn–Artin theorem)
\begin{equation}
\A = \bigoplus_{i=1}^\ell \mathbb{C}^{d_i \times d_i}.    
\end{equation}
This section will deal with studying $\esn(\A\n)$ for such general $*$-algebras, in particular we will show that all of our previous results can be extended to this setting.

\begin{remark}
Physically, a block-diagonal algebra (like $\A$ above) imposes a classical-quantum structure. Specifically, a state $\rho = \bigoplus_i \rho_i \in \A$ (that also satisfies $\rho \geq 0$ and $\Tr(\rho) = 1$) can be thought of as a classical probability distribution $p_i = \Tr(\rho_i)$ together with a quantum state $\rho_i/p_i \in \mathcal{D}(\mathbb{C}^{d_i \times d_i})$ for each outcome:
\begin{equation}
\label{eq:CQ_state}
    \rho = \sum_{i = 1}^\ell p_i \ket{i}\!\bra{i}\otimes \rho_i.
\end{equation}
\end{remark}

Since the dimension of $\A$ is smaller than the dimension of the non-block-diagonal matrices $\mathbb{C}^{d \times d}$, where $d = \sum_i d_i$, this will be the case even more so for $\esn(\A\n)$ compared to $\esn((\mathbb{C}^{d \times d})\n)$ (where the single-copy dimension determines the exponent with which the dimension grows with $n$), and hence exploiting such a classical-quantum structure --- if it is present --- significantly brings down computational complexity.

To understand $\esn(\A\n)$, we follow the approach in \cite[Section 4]{Gijswijt2009}. Note first that
\begin{equation}\label{eq:direct_sum_multiply_out}
\A\n = \left(\bigoplus_{i=1}^\ell \mathbb{C}^{d_i \times d_i}\right)\n = \bigoplus_{i_1=1}^\ell \cdots \bigoplus_{i_n=1}^\ell \bigotimes_{k=1}^n \mathbb{C}^{d_{i_k} \times d_{i_k}} = \bigoplus_{\underline{i} \in [\ell]^n} \bigotimes_{k=1}^n \mathbb{C}^{d_{i_k} \times d_{i_k}}.
\end{equation}
Now, for any element of $\esn(\A\n)$, also the right-hand side of \eqref{eq:direct_sum_multiply_out} has to be invariant under permutations of the $n$ systems (i.e., reorderings of the tensor product $\bigotimes_{k=1}^n$). Let us define the multi-index 
\begin{equation}
\label{eq:def_underline_mu}
    \underline{\mu} \coloneqq (\mu_1, \ldots, \mu_\ell) \in \mathbb{N}^\ell, \qquad |\underline{\mu}| \coloneqq \sum_{j=1}^\ell \mu_j,
\end{equation}
where $\mu_j$ counts how many times the symbol $j \in [\ell]$ appears in the string $\underline{i} \in [\ell]^n$. If a block $j$ appears $\mu_j$ times in $\underline{i}$, then the state on these $\mu_j$ copies of the block has to be an element of $\operatorname{End}^{\mathfrak{S}_{\mu_j}}((\mathbb{C}^{d_{j} \times d_j})^{\otimes \mu_j})$. Moreover, applying permutations to a string $\underline{i}$ that mix systems coming from different (original) blocks $j$ gives a result on a different string $\underline{i}'$. For the total direct sum to stay invariant, this means that for two strings $\underline{i}, \underline{i}' \in [\ell]^n$ that contain the same number of each symbol $j \in [\ell]$ (i.e.\ the same multi-index $\underline{\mu}$), their corresponding blocks must be equal up to a permutation of systems.

Removing this \enquote{duplication modulo a permutation} immediately gives the following $*$-isomorphism:
\begin{equation}\label{eq:general_algebra_end_sn_decomposition}
    \esn(\A\n) = \esn\!\left(\!\Bigl(\bigoplus_{i=1}^\ell \mathbb{C}^{d_i \times d_i}\Bigr)^{\otimes n}\!\right) \cong \bigoplus_{\substack{\underline{\mu} \in \mathbb{N}^\ell \\ |\underline{\mu}| = n }} \bigotimes_{j = 1}^\ell \operatorname{End}^{\mathfrak{S}_{\mu_j}}\!\bigl((\mathbb{C}^{d_{j} \times d_j})^{\otimes \mu_j}\bigr).
\end{equation}
Note that this block-diagonal decomposition can in general not be implemented by an isometry; however, based on the explanation above, it is immediate that one can implement by an isometry a mapping where every block on the right-hand side appears as a direct sum with $\binom{n}{\underline{\mu}}$ (i.e.\ the multinomial coefficient of $\underline{\mu}$) identical copies.

Let us now construct the orbit basis of $\esn(\A\n)$. For this, we can consider $\esn(\A\n)$ in \eqref{eq:general_algebra_end_sn_decomposition} as a subset of $\esn((\mathbb{C}^{d \times d})\n)$ where $d = \sum_i d_i$, and then take the orbit basis of the latter as explained above. Now, for a basis element $C_E$ of $\esn((\mathbb{C}^{d \times d})\n)$ corresponding to count matrix $E$, since each entry in the count matrix counts how often the corresponding element appears, we have $C_E \in \esn(\A\n)$ if and only if the count matrix is itself of the block-diagonal form, i.e.\ $E \in \A \cap \mathbb{N}^{d \times d}$. Thus, the orbit basis of $\esn(\A\n)$ contains all orbits corresponding to count matrices with the given block-decomposition. Additionally, on the right-hand side of \eqref{eq:general_algebra_end_sn_decomposition}, we have a natural basis given by tensor products of basis elements $C_{E_1} \otimes \cdots \otimes C_{E_\ell}$, where $C_{E_j} \in \operatorname{End}^{\mathfrak{S}_{\mu_j}}((\mathbb{C}^{d_{j} \times d_j})^{\otimes \mu_j})$ is an orbit basis element corresponding to the count matrix $E_j$ as previously.

\begin{lemma}\label{lem:general_algebra_end_sn_trivial_isomorphism}
The two spaces $\esn\bigl((\bigoplus_{i=1}^\ell \mathbb{C}^{d_i \times d_i})\n\bigr)$ and $\bigoplus_{\underline{\mu} \in \mathbb{N}^\ell:\, |\underline{\mu}| = n} \bigotimes_{j = 1}^\ell \operatorname{End}^{\mathfrak{S}_{\mu_j}}((\mathbb{C}^{d_{j} \times d_j})^{\otimes \mu_j})$ are $*$-isomorphic, and there is a one-to-one mapping between the basis elements of the two bases described above.
\end{lemma}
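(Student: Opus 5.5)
Our plan is to make the informal argument preceding the statement explicit, in three steps: build the map, check it is a $*$-isomorphism, then match the bases.

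\textbf{Step 1: the map.} Using \eqref{eq:direct_sum_multiply_out}, we regard $\A\n$ as $\bigoplus_{\underline{i}\in[\ell]^n}\bigotimes_{k}\mathbb{C}^{d_{i_k}\times d_{i_k}}$. We write $X=\bigoplus_{\underline{i}}X_{\underline{i}}$. For each $\underline{\mu}$ with $|\underline{\mu}|=n$ we fix the canonical sorted string $\underline{i}_{\underline{\mu}}=(1^{\mu_1},2^{\mu_2},\ldots,\ell^{\mu_\ell})$. On this string the tensor factor is literally $\bigotimes_j(\mathbb{C}^{d_j\times d_j})^{\otimes\mu_j}$. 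The candidate map is
\begin{equation*}
\Theta(X)\coloneqq\bigoplus_{\underline{\mu}}X_{\underline{i}_{\underline{\mu}}}.
\end{equation*}

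\textbf{Step 2: $\Theta$ is a $*$-isomorphism.} Since the $\mathfrak{S}_n$ action maps block $\underline{i}$ onto block $\pi(\underline{i})$, invariance of $X$ gives two facts.
\begin{itemize}
\item Each $X_{\underline{i}}$ is determined by the representative block of its type: $X_{\pi(\underline{i})}=P(\pi)X_{\underline{i}}P(\pi)^\dagger$, suitably interpreted as a reordering of tensor factors.
\item On the representative block, invariance under the stabilizer $\prod_j\mathfrak{S}_{\mu_j}$ of $\underline{i}_{\underline{\mu}}$ forces $X_{\underline{i}_{\underline{\mu}}}\in\bigotimes_j\operatorname{End}^{\mathfrak{S}_{\mu_j}}$.
\end{itemize}
Injectivity follows from the first fact. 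For surjectivity, given a family $(Y_{\underline{\mu}})_{\underline{\mu}}$ with $Y_{\underline{\mu}}$ stabilizer-invariant, we define $X_{\pi(\underline{i}_{\underline{\mu}})}\coloneqq P(\pi)Y_{\underline{\mu}}P(\pi)^\dagger$. This is well defined precisely because $Y_{\underline{\mu}}$ is stabilizer-invariant, and the resulting $X$ is $\mathfrak{S}_n$-invariant by construction. Products and adjoints are computed blockwise in a direct sum, so $\Theta$ preserves them. Hence $\Theta$ is a $*$-isomorphism, though not unital-isometric, since the multiplicities $\binom{n}{\underline{\mu}}$ are dropped.

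One point needs care here. The identification $\bigotimes_j\operatorname{End}^{\mathfrak{S}_{\mu_j}}(\cdot)=\operatorname{End}^{\prod_j\mathfrak{S}_{\mu_j}}(\bigotimes_j\cdot)$ holds because invariant subspaces of tensor products under product groups factor. To justify it, we average with the product twirl, which equals the tensor product of the individual twirls, and then compare dimensions.

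\textbf{Step 3: matching the bases.} A count matrix $E\in\A\cap\mathbb{N}^{d\times d}$ is block diagonal, so it splits uniquely as $E=E_1\oplus\cdots\oplus E_\ell$, with $E_j\in\mathbb{N}^{d_j\times d_j}$ summing to $\mu_j$. This identifies $\underline{\mu}$ and the tuple $(E_1,\ldots,E_\ell)$, and conversely. We then check that $\Theta(C_E)=C_{E_1}\otimes\cdots\otimes C_{E_\ell}$ placed in the $\underline{\mu}$ summand. To see this, we restrict the pairs $(\underline{a},\underline{b})$ in the orbit of $E$ to those whose block labels equal the sorted string $\underline{i}_{\underline{\mu}}$. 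Such a pair lies in the orbit exactly when, on each segment $j$, the segment pair has count matrix $E_j$. This means the $0$--$1$ indicator factorizes into the product of the segment indicators. Since both sides are bases of equal cardinality, this gives the claimed bijection.

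I expect the main obstacle to be bookkeeping rather than concept: making precise the identification of $P(\pi)$ acting between different $\underline{i}$-blocks, and the well-definedness in Step 2. I would handle this by fixing, for each $\underline{i}$, a specific sorting permutation and then showing that independence of the choice follows from stabilizer invariance.
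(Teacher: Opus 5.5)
Your proof is correct, and it arrives at the same map as the paper by a different route.

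\textbf{The paper's route.} The paper defines $\phi$ directly on the orbit basis, $C_E\mapsto C_{E_1}\otimes\cdots\otimes C_{E_\ell}$, and extends it linearly. It checks $\phi(X^\dagger)=\phi(X)^\dagger$ using $C_E^\dagger=C_{E^T}$ and $E^T=\bigoplus_i E_i^T$. For multiplicativity it only observes that $C_E C_{E'}=0$ unless $E$ and $E'$ belong to the same $\underline{\mu}$-summand. It does not argue that $\phi(C_EC_{E'})=\phi(C_E)\phi(C_{E'})$ holds within a summand, which would require actually computing products of orbit matrices.

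\textbf{Your route.} You define $\Theta$ intrinsically, as the compression of $X$ to the sorted-label blocks $\underline{i}_{\underline{\mu}}$ of the block-diagonal algebra $\A\n$. This makes the $*$-homomorphism property automatic, since projecting a direct sum of algebras onto its blocks preserves products and adjoints. Bijectivity then follows from the orbit--stabilizer argument in Step 2. Step 3 shows $\Theta(C_E)=C_{E_1}\otimes\cdots\otimes C_{E_\ell}$ in the $\underline{\mu}$-summand, so $\Theta=\phi$.

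\textbf{Comparison.} Your argument therefore also supplies the within-summand multiplicativity that the paper's sketch leaves unaddressed. The price is the routine bookkeeping of permutations between label blocks, which you handle correctly via stabilizer invariance. The paper's route is shorter and purely combinatorial, but less complete.

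\textbf{A small correction.} One side remark of yours is inaccurate: $\Theta$ is unital, since $\Theta(\mathbbm{1})=\mathbbm{1}$. What the dropped multiplicities $\binom{n}{\underline{\mu}}$ destroy is preservation of trace and Hilbert--Schmidt norm, not unitality.
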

\begin{proof}
Let $E = \bigoplus_{i=1}^\ell E_i$, with $E_i \in \mathbb{N}^{d_i \times d_i}$, be a count matrix corresponding to a basis element $C_E \in \esn(\A\n)$. Define the multi-index $\underline{\mu}$ via
\begin{equation}
    \mu_i \coloneqq \sum_{a,b=1}^{d_i} (E_i)_{(a,b)}, \qquad i \in [\ell].
\end{equation}
This satisfies $|\underline{\mu}| = \sum_{a,b=1}^{d} E_{(a,b)} = n$, so $\underline{\mu}$ is a valid multi-index in the direct-sum decomposition of the right-hand side. The map
\begin{equation}
\label{eq:basis_element_correspondence}
    \phi: C_E \longmapsto C_{E_1} \otimes C_{E_2} \otimes \cdots \otimes C_{E_\ell} \in \bigotimes_{j=1}^\ell \mathrm{End}^{\mathfrak{S}_{\mu_j}}\bigl((\mathbb{C}^{d_j\times d_j})^{\otimes \mu_j}\bigr)
\end{equation}
sends the basis element $C_E$ to the corresponding tensor product of orbit basis elements in the $\underline{\mu}$-th summand. The inverse is obtained by gluing the per-block count matrices into the block-diagonal $E = \bigoplus_i E_i$, so \eqref{eq:basis_element_correspondence} is a bijection, which we extend by linearity over $\mathbb{C}$. Moreover, $\phi(C_E^\dagger) = \phi(C_{E^T})$ and 
\begin{equation}
    (C_{E_1} \otimes \cdots \otimes C_{E_\ell})^\dagger = C_{E_1^T} \otimes \cdots \otimes C_{E_\ell^T}.
\end{equation}
Since $E^T = \bigoplus_i E_i^T$ for block-diagonal $E$, extending by linearity, $\phi(X^\dagger) = \phi(X)^\dagger$ for all $X \in \esn\bigl((\bigoplus_{i=1}^\ell \mathbb{C}^{d_i \times d_i})\n\bigr)$. Finally, $\phi(XY) = \phi(X)\phi(Y)$ follows from the block-diagonal structure of $\A$: writing $E = \bigoplus_i E_i$ and $E' = \bigoplus_i E'_i$, the product $C_E \cdot C_{E'}$ vanishes unless $\mu_i(E) = \mu_i(E')$ for all $i$ (i.e.\ both elements live in the same $\underline{\mu}$-summand of the right-hand side). This proves that $\phi$ is a $*$-isomorphism.
\end{proof}

By applying the previous block-diagonalization (Theorem \ref{thm:Efficient_construction_Isomorphism}) to each block on the right-hand side of \eqref{eq:general_algebra_end_sn_decomposition}, we obtain the following block-diagonalization of $\esn(\A\n)$. Throughout, given a multi-index $\underline{\mu}$ as in \eqref{eq:def_underline_mu}, we will write
\begin{equation}
\label{eq:def_underline_lambda}
    \underline{\lambda} \coloneqq (\lambda_1, \ldots, \lambda_\ell), \qquad \lambda_j \in \mathrm{Par}(d_j, \mu_j),
\end{equation}
for a tuple of partitions, one for each block of $\A$. We also define the per-block Specht and SSYT dimensions
\begin{equation}
\label{eq:dimensions_underline_lambda}
    f_{\underline{\lambda}} \coloneqq \prod_{j=1}^\ell f_{\lambda_j}, \qquad m_{\underline{\lambda}} \coloneqq \prod_{j=1}^\ell m_{\lambda_j},
\end{equation}
as the products of the corresponding single-block quantities.

\newcommand{\pt}{\widetilde{\psi}}
\newcommand{\Par}{\operatorname{Par}}
\begin{theorem}[{Block-diagonalization for $\esn(\A\n)$, see also \cite[Section 4]{Gijswijt2009}}]
\label{thm:block_diag_general_algebra}
There exists a $*$-isomorphism
\begin{equation}
\label{eq:full_isomorphism_general}
    \pt: \esn(\A\n) \to \bigoplus_{\substack{\underline{\mu} \in \mathbb{N}^\ell \\ |\underline{\mu}| = n}} \bigoplus_{\underline{\lambda}:\, \lambda_j \in \Par(d_j, \mu_j)} \mathbb{C}^{m_{\underline{\lambda}} \times m_{\underline{\lambda}}},
\end{equation}
where $m_{\underline{\lambda}}$ is defined in \eqref{eq:dimensions_underline_lambda}. Moreover, for every orbit basis element $C_E \in \esn(\A\n)$ --- which corresponds to a count matrix $E = \bigoplus_{i=1}^\ell E_i$ as above --- one can compute $\pt(C_E)$ in time polynomial in $n$.
\end{theorem}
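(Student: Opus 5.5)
The plan is to compose the $*$-isomorphism $\phi$ of Lemma~\ref{lem:general_algebra_end_sn_trivial_isomorphism} with a blockwise application of the single-algebra isomorphism $\widetilde{\psi}$ from \eqref{eq:full_isomorphism}. Lemma~\ref{lem:general_algebra_end_sn_trivial_isomorphism} identifies $\esn(\A\n)$ with $\bigoplus_{|\underline{\mu}|=n}\bigotimes_{j}\operatorname{End}^{\mathfrak{S}_{\mu_j}}((\mathbb{C}^{d_j\times d_j})^{\otimes\mu_j})$. On each factor $j$ we use the Schur--Weyl $*$-isomorphism $\widetilde{\psi}^{(j)}_{\mu_j}$ onto $\bigoplus_{\lambda_j\in\Par(d_j,\mu_j)}\mathbb{C}^{m_{\lambda_j}\times m_{\lambda_j}}$. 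A tensor product of $*$-isomorphisms is again a $*$-isomorphism on the tensor product algebra. Moreover, a tensor product of direct sums of full matrix algebras distributes as $\bigoplus_{\underline{\lambda}}\bigotimes_j\mathbb{C}^{m_{\lambda_j}\times m_{\lambda_j}}\cong\bigoplus_{\underline{\lambda}}\mathbb{C}^{m_{\underline{\lambda}}\times m_{\underline{\lambda}}}$, with $m_{\underline{\lambda}}$ as in \eqref{eq:dimensions_underline_lambda}; this identification uses a fixed Kronecker ordering and is itself a $*$-isomorphism. We then define
\[
\pt \coloneqq \Bigl(\bigoplus_{\underline{\mu}}\bigotimes_{j=1}^\ell \widetilde{\psi}^{(j)}_{\mu_j}\Bigr)\circ\phi .
\]
It is a composition of $*$-isomorphisms, so it is a $*$-isomorphism with the claimed codomain. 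Direct sums of $*$-isomorphisms over the disjoint $\underline{\mu}$-summands are obviously $*$-isomorphisms.

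For efficiency, take an orbit basis element $C_E$ with $E=\bigoplus_i E_i$. Read off $\underline{\mu}$ from the entry sums of the $E_i$; then $\phi(C_E)=C_{E_1}\otimes\cdots\otimes C_{E_\ell}$ lies in the single $\underline{\mu}$-summand. Consequently $\pt(C_E)$ vanishes on all other summands, and on summand $\underline{\mu}$, block $\underline{\lambda}$ it equals the Kronecker product $\bigotimes_j[\widetilde{\psi}^{(j)}(C_{E_j})]_{\lambda_j}$. Each factor is computable in $\mathrm{poly}(\mu_j)\le\mathrm{poly}(n)$ time. This follows from Theorem~\ref{thm:Efficient_construction_Isomorphism} (for $\psi$) together with the Gram/Cholesky correction \eqref{eq:full_isomorphism}, since $G_\lambda$ has polynomial size and its entries are $f_{\tau,\gamma}(\mathbbm{1})$. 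The number of tuples $\underline{\lambda}$ is at most $\prod_j(\mu_j+1)^{d_j}\le(n+1)^{d}$, and each Kronecker product has size $m_{\underline{\lambda}}\le(n+1)^{\sum_j d_j(d_j-1)/2}$. The whole image of $C_E$ is therefore assembled in polynomial time for fixed $\ell$ and $d_j$.

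The main point needing care is that $\phi$ really is multiplicative when the right-hand side carries the $*$-algebra structure intended in \eqref{eq:general_algebra_end_sn_decomposition}. In $\esn(\A\n)$ a product $C_EC_{E'}$ sums over all intermediate multi-indices. The tensor product of per-block orbit products counts only configurations with a fixed assignment of positions to blocks. By permutation invariance the two differ at most by a uniform reordering of systems, which is why the paper remarks that an isometric implementation carries $\binom{n}{\underline{\mu}}$ copies. Here I would rely on Lemma~\ref{lem:general_algebra_end_sn_trivial_isomorphism} as stated. If a normalization discrepancy appears, I would absorb it into the summand-wise identification: restrict to the canonical block of positions where the first $\mu_1$ sites are in block $1$, and so on. This restriction is an injective $*$-homomorphism onto its image because any invariant operator is determined by that block. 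After this, no further obstacles are expected.
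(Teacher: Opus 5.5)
Your proposal is correct and takes the same route as the paper: $\widetilde{\psi}$ is the $*$-isomorphism $\phi$ of Lemma~\ref{lem:general_algebra_end_sn_trivial_isomorphism} followed by the per-block Schur--Weyl isomorphisms of Theorem~\ref{thm:Efficient_construction_Isomorphism} (with the Gram/Cholesky correction), and the paper's one-line proof says no more than this. Your worry about a normalization discrepancy is unnecessary: $\mathcal{A}^{\otimes n}$ is block-diagonal with respect to the block-pattern subspaces, so an intermediate index in $C_E C_{E'}$ must share the outer indices' pattern, and the restriction of $C_E$ to the canonical pattern subspace is exactly $C_{E_1}\otimes\cdots\otimes C_{E_\ell}$; hence $\phi$ is literally your fallback restriction map and is multiplicative with no rescaling.
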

\begin{proof}
This follows directly by combining Lemma \ref{lem:general_algebra_end_sn_trivial_isomorphism} with the efficient block-diagonalization of $\esn((\mathbb{C}^{d_j \times d_j})\n)$ established in Theorem \ref{thm:Efficient_construction_Isomorphism}.
\end{proof}
The block-diagonalization of Theorem \ref{thm:block_diag_general_algebra} allows phrasing SDPs over $\esn(\A\n)$ directly in the block basis, by taking the blocks $\widetilde{[X]}_{\underline{\mu}, \underline{\lambda}} \coloneqq [\pt(X)]_{\underline{\mu}, \underline{\lambda}} \in \mathbb{C}^{m_{\underline{\lambda}} \times m_{\underline{\lambda}}}$ as the optimization variables, in direct generalization of Section \ref{sec:efficient-operations-in-block-basis}. The trace and Hilbert--Schmidt inner product extend straightforwardly: for $X, Y \in \esn(\A\n)$,
\begin{align}
\label{eq:trace_block_general}
    \Tr[X] &= \sum_{\underline{\mu}: |\underline{\mu}| = n}\sum_{\underline{\lambda}:\,\lambda_j\in \Par(d_j,\mu_j)} f_{\underline{\lambda}}\, \Tr\!\bigl[\widetilde{[X]}_{\underline{\mu}, \underline{\lambda}}\bigr], \\
\label{eq:HS_inner_block_general}
    \Tr[X^\dagger Y] &= \sum_{\underline{\mu}: |\underline{\mu}| = n}\sum_{\underline{\lambda}:\,\lambda_j\in \Par(d_j,\mu_j)} f_{\underline{\lambda}}\, \Tr\!\bigl[\widetilde{[X]}_{\underline{\mu}, \underline{\lambda}}^\dagger\, \widetilde{[Y]}_{\underline{\mu}, \underline{\lambda}}\bigr],
\end{align}
where $f_{\underline{\lambda}} = \prod_j f_{\lambda_j}$ as in \eqref{eq:dimensions_underline_lambda}. These are direct consequences of the fact that $\pt$ is a $*$-isomorphism and the trace decomposes blockwise with multiplicity $f_{\underline{\lambda}}$.

In the bipartite setting, where $X_{R\,\A^n} \in \mathrm{End}^{\mathfrak{S}_n}(R \otimes \A^{\otimes n})$ for some auxiliary system $R$ of dimension $d_R$, the partial trace over $\A^n$ also admits a block-diagonal expression. Writing $\widetilde{[X]}_{\underline{\mu}, \underline{\lambda}} = \sum_{k,l=1}^{d_R} \ket{k}\!\bra{l}_R \otimes \widetilde{[X]}^{(k,l)}_{\underline{\mu}, \underline{\lambda}}$, one has
\begin{equation}
\label{eq:partial_trace_general_block}
    \Tr_{\A^n}[X_{R\,\A^n}] = \mathbbm{1}_R \quad\Longleftrightarrow\quad \sum_{\underline{\mu}: |\underline{\mu}| = n}\sum_{\underline{\lambda}:\,\lambda_j \in \Par(d_j,\mu_j)} f_{\underline{\lambda}}\, \Tr_{V_{\underline{\lambda}}}\!\bigl[\widetilde{[X]}_{\underline{\mu}, \underline{\lambda}}\bigr] = \mathbbm{1}_R,
\end{equation}
where $\Tr_{V_{\underline{\lambda}}}$ denotes the partial trace over the irrep space $V_{\underline{\lambda}} = \bigotimes_j V_{\lambda_j}$ within each block (acting trivially on the $R$ register). This is the natural generalization of the CPTP/trace-preserving constraint \eqref{eq:choi_constraint_block_CPTP} to a system of general algebra type.

\begin{remark}
In the case where $\ell = 2$ and $d_1 = d_2 = d_S$, one has $\A = \mathbb{C}^{d_1 \times d_1} \oplus \mathbb{C}^{d_2 \times d_2}$, and the multi-index reduces to a pair $\underline{\mu} = (k, n-k)$ with $k \in \{0, 1, \ldots, n\}$, while the partition tuple $\underline{\lambda} = (\lambda_k, \lambda_{n-k})$ ranges over $\Par(d_S, k) \times \Par(d_S, n-k)$. The decomposition \eqref{eq:full_isomorphism_general} becomes
\begin{equation}
\label{eq:binary_block_decomposition}
    \esn(\A\n) \cong \bigoplus_{k=0}^n \bigoplus_{\substack{\lambda_k \in \Par(d_S, k) \\ \lambda_{n-k} \in \Par(d_S, n-k)}} \mathbb{C}^{m_{\lambda_k} m_{\lambda_{n-k}} \times m_{\lambda_k} m_{\lambda_{n-k}}},
\end{equation}
with multiplicity $f_{\underline{\lambda}} = f_{\lambda_k}\, f_{\lambda_{n-k}}$ for each block. This decomposition allows to significantly reduce the involved dimension when working in $\mathrm{End}^{\mathfrak{S}_n}(\A^{\otimes n})$ compared to the naive embedding in $\mathrm{End}^{\mathfrak{S}_n}((\mathbb{C}^{2d_S})^{\otimes n})$. Indeed, while the latter has orbit-basis dimension
\begin{equation}
    \dim \mathrm{End}^{\mathfrak{S}_n}\bigl((\mathbb{C}^{2d_S})^{\otimes n}\bigr) = \binom{n + 4d_S^2 - 1}{n},
\end{equation}
the former has dimension
\begin{equation}
\label{eq:dimension_ell_2}
    \dim \mathrm{End}^{\mathfrak{S}_n}(\A^{\otimes n}) = \sum_{k=0}^n \binom{k + d_S^2 - 1}{k}\binom{n-k + d_S^2 - 1}{n-k} = \binom{n + 2d_S^2 - 1}{n},
\end{equation}
where in the last identity we used the Vandermonde convolution formula~\cite{Vandermonde1772}. The relative dimensional saving thus scales as $O(n^{-2d_S^2})$. As a concrete example, for $d_S = 2$ (a single qubit controlled by a binary classical flag, so that $\A \subset \mathbb{C}^{4 \times 4}$), Table~\ref{tab:CQ_dimension_comparison} reports the comparison of the two dimensions for several values of $n$.
\begin{table}[ht]
\centering
\renewcommand{\arraystretch}{1.25}
\begin{tabular}{r r r r}
\toprule
$n$ & $\dim \mathrm{End}^{\mathfrak{S}_n}((\mathbb{C}^{4})^{\otimes n}) = \binom{n+15}{n}$ & $\dim \mathrm{End}^{\mathfrak{S}_n}(\A^{\otimes n}) = \binom{n+7}{n}$ & ratio (\%) \\
\midrule
 2 &         136 &      36 & $26.5\%$ \\
 3 &         816 &     120 & $14.7\%$ \\
 4 &       3{,}876 &     330 & $8.5\%$ \\
 5 &      15{,}504 &     792 & $5.1\%$ \\
 6 &      54{,}264 &   1{,}716 & $3.2\%$ \\
 7 &     170{,}544 &   3{,}432 & $2.0\%$ \\
 8 &     490{,}314 &   6{,}435 & $1.3\%$ \\
\bottomrule
\end{tabular}
\caption{Comparison of orbit-basis dimensions for $d_S = 2$ (qubit-flagged algebra $\A = \mathbb{C}^{2\times 2}\oplus \mathbb{C}^{2\times 2}$, embedded in $\mathbb{C}^{4\times 4}$). The ratio decreases as $O(n^{-8})$, reflecting the asymptotic improvement from $O(n^{15})$ to $O(n^{7})$.}
\label{tab:CQ_dimension_comparison}
\end{table}
\end{remark}

Since the orbit basis we chose for $\esn(\A\n)$ is a subset of the orbit basis for $\esn((\mathbb{C}^{d \times d})\n)$, all the properties of such basis matrices $C_E$ we showed earlier (e.g.\ Lemma \ref{lemma:trace_HS_orbit}, Lemma \ref{lem:transpose_orbit_basis}) directly apply to the general setting.
We can also use this to lift Proposition \ref{prop:serial_concatenation} to the general setting. For this we need to consider a bipartite setup as follows:
\begin{align}
    \A_A &= \bigoplus_{i = 1}^{\ell_A}\mathbb{C}^{d_i^{(A)} \times d_i^{(A)}} \subset \mathcal{L}\bigl(\HS_A = \mathbb{C}^{\sum_i d_i^{(A)}}\bigr),\\
    \A_B &= \bigoplus_{j = 1}^{\ell_B}\mathbb{C}^{d_j^{(B)} \times d_j^{(B)}} \subset \mathcal{L}\bigl(\HS_B = \mathbb{C}^{\sum_j d_j^{(B)}}\bigr),\\
    \A_{AB} &= \A_A \otimes \A_B = \bigoplus_{i = 1}^{\ell_A} \bigoplus_{j = 1}^{\ell_B} \mathbb{C}^{d_{i}^{(A)}d_{j}^{(B)} \times d_{i}^{(A)}d_{j}^{(B)}} \subset \mathcal{L}(\HS_A \otimes \HS_B).
\end{align}
We will also use the notation $\HS_{A^{(i)}} \coloneqq \mathbb{C}^{d_i^{(A)}}$ and similarly for $\HS_{B^{(j)}}$. As described above, the orbit basis matrices of these spaces are associated to block-diagonal count matrices. Thus, for any orbit basis matrix $C^{AB}_{s} \in \esn(\A_{AB}\n)$ associated with count matrix $E_s$, we have the decomposition:\begin{equation}
    E_s = \bigoplus_{(i,j) \in [\ell_A] \times [\ell_B]} E_{s_{ij}}.
\end{equation} Let $\mu_{ij}$ be the sum of all entries of $E_{s_{ij}}$ (hence each $E_{s_{ij}}$ corresponds to an orbit basis matrix $C^{A^{(i)}B^{(j)}}_{s_{ij}} \in \operatorname{End}^{\mathfrak{S}_{\mu_{ij}}}((\HS_{A^{(i)}} \otimes \HS_{B^{(j)}})^{\otimes \mu_{ij}})$), and define $\mu^{A}_i \coloneqq \sum_{j=1}^{\ell_B}\mu_{ij}$ and $\mu^{B}_j \coloneqq \sum_{i=1}^{\ell_A}\mu_{ij}$.

Similarly to Proposition \ref{prop:serial_concatenation} we can then again consider the \enquote{reduced count matrices}:
\begin{alignat}{3}
    (E_{r_{ij}(s_{ij})})_{a_A, b_A} &\coloneqq \sum_{a_B, b_B = 1}^{d_{j}^{(B)}} (E_{s_{ij}})_{(a_A, a_B), (b_A, b_B)} \qquad &&\text{ corresponding to } C^{A^{(i)}}_{r_{ij}} \in \operatorname{End}^{\mathfrak{S}_{\mu_{ij}}}(\HS_{A^{(i)}}^{\otimes \mu_{ij}}) \\
    (E_{r_{i}(s)})_{a_A, b_A} &\coloneqq \sum_{j = 1}^{\ell_B} (E_{r_{ij}(s_{ij})})_{a_A, b_A} \qquad &&\text{ corresponding to } C^{A^{(i)}}_{r_i} \in \operatorname{End}^{\mathfrak{S}_{\mu^{A}_i}}(\HS_{A^{(i)}}^{\otimes \mu^{A}_{i}}) \\
    (E_{t_{ij}(s_{ij})})_{a_B, b_B} &\coloneqq \sum_{a_A, b_A = 1}^{d_{i}^{(A)}} (E_{s_{ij}})_{(a_A, a_B), (b_A, b_B)} \qquad &&\text{ corresponding to } C^{B^{(j)}}_{t_{ij}} \in \operatorname{End}^{\mathfrak{S}_{\mu_{ij}}}(\HS_{B^{(j)}}^{\otimes \mu_{ij}}) \\
    (E_{t_{j}(s)})_{a_B, b_B} &\coloneqq \sum_{i = 1}^{\ell_A} (E_{t_{ij}(s_{ij})})_{a_B, b_B} \qquad &&\text{ corresponding to } C^{B^{(j)}}_{t_j} \in \operatorname{End}^{\mathfrak{S}_{\mu^{B}_j}}(\HS_{B^{(j)}}^{\otimes \mu^{B}_{j}})
\end{alignat}
Then $E_r = \bigoplus_{i = 1}^{\ell_A} E_{r_i}$ corresponds to the orbit basis matrix $C^{A}_r \in \esn(\A_A\n)$, and $E_t = \bigoplus_{j = 1}^{\ell_B} E_{t_j}$ corresponds to the orbit basis matrix $C^{B}_t \in \esn(\A_B\n)$. We also write $r = r(s)$ and $t = t(s)$ for the corresponding marginal orbits, as in \eqref{eq:A_marginal} and \eqref{eq:B_marginal}.

\begin{proposition}\label{prop:general_serial_contenation}
Let $\A_A, \A_B, \A_{AB}$ be as in the text above this proposition. By considering the natural inclusion $\esn(\A_{AB}\n) \subset \esn((\HS_A \otimes \HS_B)\n) \subset \mathcal{L}((\HS_A \otimes \HS_B)\n)$ (and similarly for $\esn(\A_{A}\n)$ and $\esn(\A_{B}\n)$) we can define $\mathrm{Tr}_{B^n}\!\bigl[C_s^{AB} \cdot (\mathbbm{1}_{A^n} \otimes (C_{t}^B)^T)\bigr]$. Then, this yields an element in $\esn(\A_{A}\n)$, and in fact the partial-trace map
\begin{equation}
\begin{aligned}
  \mathcal{T}_B\colon \esn(\A_{AB}\n) \times \esn(\A_B\n) &\to \esn(\A_{A}\n)\\
   (C_s^{AB}, C_t^B) &\mapsto \mathrm{Tr}_{B^n}\!\bigl[C_s^{AB} \cdot (\mathbbm{1}_{A^n} \otimes (C_{t}^B)^T)\bigr]
\end{aligned}
\end{equation} 
can again be efficiently computed as
\begin{equation}
\label{eq:partial_trace_identity_general}
 \mathcal{T}_B(C_s^{AB}, C_t^B) = \kappa_s^A\, \delta_{t, t(s)}\, C_{r(s)}^A,
\end{equation}
where 
\begin{equation}
    \kappa_s^A = \prod_{a_A, b_A = 1}^{\sum_i d_i^{(A)}} \binom{(E_{r(s)})_{(a_A,b_A)}}{\{(E_s)_{(a_A,a_B),(b_A,b_B)}\}_{a_B, b_B \in [\sum_j d_j^{(B)}]}}.
\end{equation}
Additionally, $\kappa_s^A$ satisfies the following relation: let $\mathcal{T}_B^{(ij)}$ be the partial trace map from Proposition \ref{prop:serial_concatenation} for each individual block, such that $\mathcal{T}_B^{(ij)}(C^{A^{(i)}B^{(j)}}_{s_{ij}}, C^{B^{(j)}}_{t_{ij}}) = \kappa_{s_{ij}}^{A^{(i)}}\, \delta_{t_{ij}(s_{ij}),t_{ij}}\, C^{A^{(i)}}_{r_{ij}(s_{ij})}$. Then
\begin{equation}
\label{eq:kappa_general_decomposition}
    \kappa_s^{A} = \prod_{i = 1}^{\ell_A} \frac{\mu^A_i!}{\prod_{j = 1}^{\ell_B} \mu_{ij}!}\, \bigl(\|C_{r_i}^{A^{(i)}}\|_{\mathrm{HS}}\bigr)^{-1} \prod_{j=1}^{\ell_B} \|C_{r_{ij}}^{A^{(i)}}\|_{\mathrm{HS}}\, \kappa_{s_{ij}}^{A^{(i)}}.
\end{equation}
\end{proposition}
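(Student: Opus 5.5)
The plan is to reduce everything to Proposition~\ref{prop:serial_concatenation} via the inclusion $\esn(\A_{AB}\n)\subset\esn((\HS_A\otimes\HS_B)\n)$, and then verify the factorisation \eqref{eq:kappa_general_decomposition} by a direct combinatorial count. Since the orbit basis of $\esn(\A_{AB}\n)$ is the subset of orbit matrices $C_E$ of $\esn((\HS_A\otimes \HS_B)\n)$ with block-diagonal count matrix $E$, and similarly for $\A_A$ and $\A_B$, Proposition~\ref{prop:serial_concatenation} applies verbatim. It gives $\mathcal{T}_B(C_s^{AB},C_t^B)=\kappa_s^A\delta_{t,t(s)}C^A_{r(s)}$, where $\kappa_s^A$ is the multinomial product over all $(a_A,b_A)\in[\sum_i d_i^{(A)}]^2$. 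It then remains to check that the output lies in $\esn(\A_A\n)$. Because $E_s$ is supported only on entries $(a_A,a_B),(b_A,b_B)$ with $a_A,b_A$ in the same $A$-block $i$ and $a_B,b_B$ in the same $B$-block $j$, its marginal $E_{r(s)}$ from \eqref{eq:A_marginal} is block-diagonal, equal to $\bigoplus_i E_{r_i}$. Likewise $E_{t(s)}=\bigoplus_j E_{t_j}$. This identifies $r(s)$ and $t(s)$ with the orbits defined before the proposition and proves the first claim.

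For \eqref{eq:kappa_general_decomposition}, I would first note that in $\kappa_s^A$ only pairs $(a_A,b_A)$ inside a common block $i$ have nonzero $E_{r(s)}$ entries; all other factors equal $\binom{0}{0,\dots}=1$. Fix $i$ and such a pair. The multinomial $\binom{(E_{r_i})_{a_Ab_A}}{\{(E_s)_{(a_A,a_B),(b_A,b_B)}\}}$ ranges over $(a_B,b_B)$ in all blocks $j$. It factorises, by grouping the $(a_B,b_B)$ according to $j$, as
\[
\binom{(E_{r_i})_{a_Ab_A}}{\{(E_{r_{ij}})_{a_Ab_A}\}_j}\prod_j\binom{(E_{r_{ij}})_{a_Ab_A}}{\{(E_{s_{ij}})_{(a_A,a_B),(b_A,b_B)}\}_{a_B,b_B}}.
\]
Taking the product over $(a_A,b_A)$, the second factor gives $\prod_j\kappa^{A^{(i)}}_{s_{ij}}$, which matches \eqref{eq:kappa_A} for each individual block. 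The first factor gives $\prod_{a_A,b_A}(E_{r_i})_{a_Ab_A}!/\prod_j (E_{r_{ij}})_{a_Ab_A}!$.

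Finally, I rewrite this first factor using Lemma~\ref{lemma:trace_HS_orbit}, namely $\|C_r\|_{\mathrm{HS}}^2=|O_r|=\mu!/\prod E_{ab}!$. Substituting gives
\[
\frac{\prod_{a_A,b_A}(E_{r_i})_{a_Ab_A}!}{\prod_j\prod_{a_A,b_A}(E_{r_{ij}})_{a_Ab_A}!}=\frac{\mu_i^A!}{|O_{r_i}|}\prod_j\frac{|O_{r_{ij}}|}{\mu_{ij}!},
\]
which has the structure of \eqref{eq:kappa_general_decomposition}. The main obstacle is the normalisation in this last step. My computation produces squared HS norms (orbit sizes $|O_r|$), while the statement writes $\|\cdot\|_{\mathrm{HS}}$. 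I would therefore carefully recheck whether the intended convention is $\|C\|_{\mathrm{HS}}=|O|$, i.e.\ the sum of entries, and adjust the exponent to match the counting identity, which I regard as the actual content of the claim.
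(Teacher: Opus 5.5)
Your proposal is correct and is essentially the paper's own proof: embed into $\esn((\HS_A\otimes\HS_B)\n)$ to invoke Proposition~\ref{prop:serial_concatenation}, split each multinomial in $\kappa_s^A$ according to the $B$-blocks $j$ to extract the $\kappa^{A^{(i)}}_{s_{ij}}$, and convert the leftover factorial ratio into orbit sizes via Lemma~\ref{lemma:trace_HS_orbit}; your explicit check that $E_{r(s)}=\bigoplus_i E_{r_i}$ is block-diagonal fills in a step the paper leaves implicit. Your concern about the exponent is justified, and there is no alternative norm convention to look for: since $\|C_r\|_{\mathrm{HS}}^2=|O_r|$, both your computation and the paper's own derivation give \eqref{eq:kappa_general_decomposition} with $\|C_{r_i}^{A^{(i)}}\|_{\mathrm{HS}}^{-2}$ and $\|C_{r_{ij}}^{A^{(i)}}\|_{\mathrm{HS}}^{2}$, so the unsquared norms in the statement are a typo. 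For instance, take one $A$-block of dimension $2$, two one-dimensional $B$-blocks, $n=2$, and $s$ pairing $(1,1)$ with $B$-block $1$ and $(2,2)$ with $B$-block $2$: then $\kappa_s^A=1$, while the printed formula gives $\sqrt{2}$.
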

\begin{proof}
The first statement follows directly from Proposition \ref{prop:serial_concatenation} by considering everything as elements of $\mathcal{L}(\HS_A\n)$, $\mathcal{L}(\HS_B\n)$ and $\mathcal{L}(\HS_{AB}\n)$. For the last statement, note that
\begin{align}
    \kappa_s^A &= \prod_{a_A, b_A = 1}^{\sum_i d_i^{(A)}} \frac{(E_{r(s)})_{a_A, b_A}!}{\prod_{a_B, b_B = 1}^{\sum_j d_j^{(B)}} (E_{s})_{(a_A, a_B),(b_A, b_B)}!} \notag\\
    &= \prod_{i = 1}^{\ell_A} \prod_{a_A, b_A = 1}^{d_i^{(A)}} \frac{(E_{r_i(s)})_{a_A, b_A}!}{\prod_{j = 1}^{\ell_B}\prod_{a_B, b_B = 1}^{d_j^{(B)}} (E_{s_{ij}})_{(a_A, a_B),(b_A, b_B)}!} \\
    &= \prod_{i = 1}^{\ell_A} \prod_{a_A, b_A = 1}^{d_i^{(A)}} \frac{(E_{r_i(s)})_{a_A, b_A}!}{\prod_{j = 1}^{\ell_B} (E_{r_{ij}(s_{ij})})_{a_A, b_A}!} \prod_{j = 1}^{\ell_B} \binom{(E_{r_{ij}(s_{ij})})_{a_A, b_A}}{\{(E_{s_{ij}})_{(a_A, a_B),(b_A, b_B)}\}_{a_B, b_B \in [d_j^{(B)}]}} \\
    &= \prod_{i = 1}^{\ell_A} \frac{\mu^A_i!}{\prod_{j = 1}^{\ell_B} \mu_{ij}!}\, \frac{\prod_{a_A, b_A = 1}^{d_i^{(A)}}(E_{r_{i}(s)})_{a_A, b_A}!}{\mu_i^A!} \prod_{j = 1}^{\ell_B} \frac{\mu_{ij}!}{\prod_{a_A, b_A = 1}^{d_i^{(A)}}(E_{r_{ij}(s_{ij})})_{a_A, b_A}!}\, \kappa_{s_{ij}}^{A^{(i)}}
\end{align}
which gives the desired expression after using \eqref{eq:inner_product_HS}.
\end{proof}
\section{Symmetric Seesaw Iteration Method}
\label{sec:seesaw}
\subsection{Introduction: Entanglement Transmission and Channel Fidelity}
A central problem in quantum information theory is to determine the minimum probability of error incurred when transmitting quantum information over $n$ parallel copies of a quantum channel $\mathcal{N}_{A\to B}$. This problem can be cast as an optimization task, in which one maximizes the overlap with the desired result on the receiving end over all admissible encoding and decoding strategies.

Specifically, the goal of ($n$-shot) \textit{entanglement transmission}~\cite{Schumacher1996, Buscemi2010} is to transmit part of a maximally entangled state of Schmidt rank $d$ through $n$ parallel uses of $\mathcal{N}$, denoted $\mathcal{N}^{\otimes n}$, with the help of encoding and decoding channels $(\mathcal{E}, \mathcal{D})$  (see Figure \ref{fig:quantumcommunicationquantumchannel}). The figure of merit is the \textit{channel fidelity} $F_c(\mathcal{N}^{\otimes n}, d)$, defined below.
\begin{definition}
    Given $n,d \in \mathbb{N}$ and a quantum channel $\mathcal{N}\in \CPTP(A \to B)$, with $A,B$ finite dimensional Hilbert spaces, with $d \leq \min\{d_A, d_B\}$ the \textit{channel fidelity} is defined as the joint optimization problem:\begin{equation}
\label{eq:channel_fidelity}
\begin{aligned}
    F_c(\mathcal{N}^{\otimes n}, d) 
    &\coloneqq \max_{\substack{\mathcal{E} \in \CPTP(A'\to A^n) \\ \mathcal{D} \in \CPTP(B^n\to B')}} 
    F\!\Bigl(\Phi_{RB'}^d, \bigl(\mathrm{id}_R \otimes (\mathcal{D}_{B^n \to B'} \circ \mathcal{N}_{A^n \to B^n} \circ \mathcal{E}_{A' \to A^n})\bigr)(\Phi^d_{RA'})\Bigr).
\end{aligned}
\end{equation}
where $F(\rho, \sigma) \coloneqq \left\|\sqrt{\rho}\sqrt{\sigma}\right\|^2_1$ is the (squared) fidelity as also introduced above.
\end{definition}

Note that a high channel fidelity is a strong requirement, because reliable entanglement transmission implies reliable transmission, on average, of all non-entangled input states \cite{Barnum1998}. The channel fidelity can also be interpreted in terms of the Choi representation:\begin{equation}
\label{eq:reformulation_channel_fidelity}
     F_c(\mathcal{N}^{\otimes n}, d) = \max_{\mathcal{E}_n, \mathcal{D}_n} 
    F\!\Bigl(\Phi^d_{RB'}, \Phi^{\mathcal{D}_n \circ \mathcal{N}^{\otimes n} \circ \mathcal{E}_n}_{RB'}\Bigr).
\end{equation}
The objective function in (\ref{eq:reformulation_channel_fidelity}) is also known in the literature as the entanglement fidelity \cite{Schumacher1996,barnum2000} of the overall channel, defined as:\begin{equation}
\label{eq:entanglement_fidelity}
    F_e(\mathcal{M}) = F\!\Bigl(\Phi^d, \Phi^{\mathcal{M}}\Bigr) = \bra{\Phi^d}\Phi^{\mathcal{M}}\ket{\Phi^d},
\end{equation}
and quantifies the ability of the channel to preserve entanglement as the overlap of its Choi state with a MES (note that $F_e(\mathcal{M})= 1 \iff \Phi^{\mathcal{M}} = \Phi^d \iff \mathcal{M} = \mathrm{id}_d$, the ideal channel).

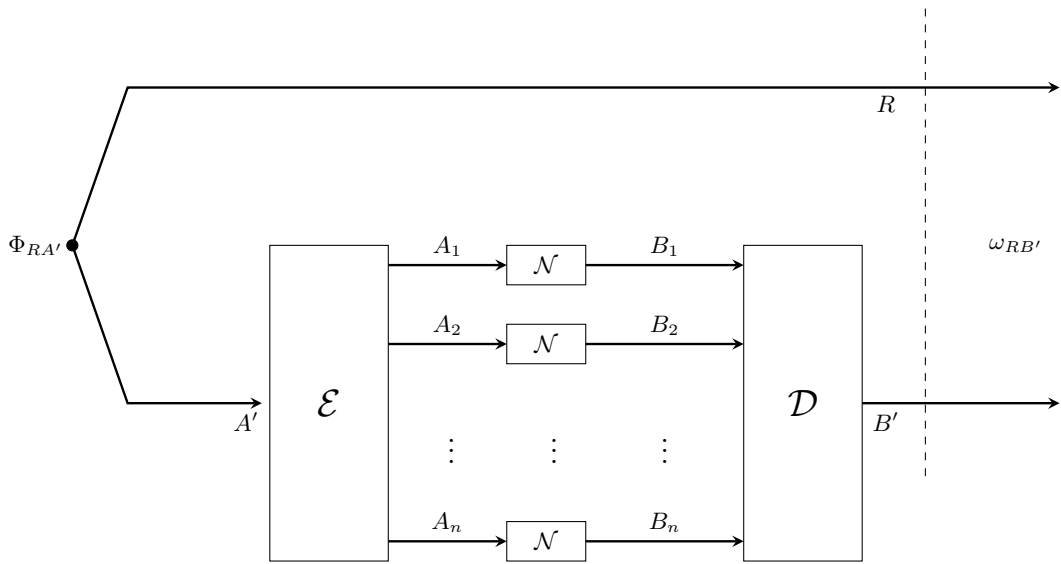
\begin{figure}[!ht]
\centering
\begin{tikzpicture}[>=stealth, scale=1.05]

\tikzset{
    thickarrow/.style={->, line width=1.5pt},
    qarrow/.style={->, line width=0.9pt},
    wire/.style={line width=0.9pt}
}

\draw (1,1) rectangle (2.5,5);
\node at (1.75,3) {\Large $\mathcal{E}$};

\draw (7,1) rectangle (8.5,5);
\node at (7.75,3) {\Large $\mathcal{D}$};

\foreach \y in {4.75,3.75,1.25} {
    \draw (4,\y-0.25) rectangle (5,\y+0.25);
    \node at (4.5,\y) {$\mathcal{N}$};
}
\node at (3.3,2.5) {\large \vdots};
\node at (4.6,2.5) {\large\vdots};
\node at (6,2.5) {\large\vdots};

\foreach \y/\lbl in {4.75/$A_1$,3.75/$A_2$,1.25/$A_n$} {
    \draw[qarrow] (2.5,\y) -- (4,\y) node[midway, above] {\lbl};
}

\foreach \y/\lbl in {4.75/$B_1$,3.75/$B_2$,1.25/$B_n$} {
    \draw[qarrow] (5,\y) -- (7,\y) node[midway, above] {\lbl};
}

\filldraw[black] (-1.5,5) circle (2pt);
\node[left] at (-1.5,5) {$\Phi_{RA'}$};

\draw[qarrow] (-1.5,5) -- (-0.8,7) -- (11,7);
\draw[qarrow] (-1.5,5) -- (-0.8,3) -- (0.9,3);
\node[below] at (8.8,7) {$R$};
\node[below] at (0.7,3) {$A'$};

\draw[qarrow] (8.5,3) -- (11,3);
\node[below] at (8.8,3) {$B'$};

\node[right] at (10,5) {$\omega_{RB'}$};
\draw[dashed] (9.3,8) -- (9.3,2);
\end{tikzpicture}
\caption{Entanglement transmission protocol over $n$ parallel uses of $\mathcal{N}$. The systems $R, A'$ and $B'$ are isomorphic of dimension $d$. The encoder $\mathcal{E}_{A' \to A^n}$ encodes the part $A'$ of the maximally entangled state $\Phi^d_{RA'}$ into the channel input systems. Later, the decoder $\mathcal{D}_{B^n \to B'}$ recovers the state from the channel output systems. The performance of the code is measured using the fidelity \eqref{eq:channel_fidelity}.}
\label{fig:quantumcommunicationquantumchannel}
\end{figure}

In this work, we focus on lower bounds on $F_c(\mathcal{N}^{\otimes n}, d)$ for finite $n\in \mathbb{N}$. Two main approaches have appeared in the literature. \begin{itemize}
    \item  Finite-blocklength \textit{error bounds} allow to find upper bounds on the infidelity of the final state with the MES using some \textit{decoupling theorem}, which features an exponential dependence of the error with $n$, where the optimal exponent can be expressed using one-shot entropies (e.g., smooth quantum conditional entropies ~\cite{dupuis2010decoupling, dupuis2014one} or quantum Renyi entropies \cite{dupuis2023privacy,Colomer_2024,Cheng2024, berta2026tight}). In practice, these achievability bounds are not tight for small $n$, as they are designed to recover asymptotic results in the limit $n \to \infty$. 
    \item Numerical lower bounds ~\cite{Reimpell2005, Reimpell2008, Fletcher2007, Taghavi2010, Johnson2017} directly tackle the optimization problem in \eqref{eq:channel_fidelity} by finding explicit encoder/decoder pairs. These \textit{seesaw} (or alternating convex search) methods transform the joint optimization into a pair of SDPs that are solved alternately, increasing the fidelity at every iteration. Since they are tailored to the specific channel, they are often good and converge in practice to a value that is very close to the actual channel fidelity. Moreover, differently from entropic error bounds, they provide an explicit encoding/decoding protocol that could be implemented in practice. 
\end{itemize}

In this section, we focus on the latter. Specifically, we address their intrinsic limitation that the corresponding SDPs grow exponentially with $n$, making them practically useless for computing lower bounds to $F_c(\mathcal{N}^{\otimes n}, d)$ beyond a few channel uses (e.g., 8/9 uses for qubit channels). In this work, we show how to extend these methods to reach values of $n$ in the tens of channel uses, by providing an algorithm based on permutation-invariant (PI) codes. Recent demonstrations of neutral-atom systems controlling over 50 qubits~\cite{Bernien2017, Manetsch2025} and superconducting and trapped-ion platforms exceeding 20 qubits~\cite{Kam2024, Moses2023} show that high-dimensional quantum systems are practically relevant now and that methods that don't have exponential computational complexity (even if they give only sub-optimal bounds) are thus also highly relevant and applicable for practical quantum communication scenarios.

\subsubsection{Channel Fidelity as Bilinear Optimization}
\label{subsec:channel_fid_as_bilinear}
The channel fidelity \eqref{eq:channel_fidelity} is particularly amenable to numerical optimization because the underlying objective function, i.e. the entanglement fidelity \eqref{eq:entanglement_fidelity}, is \textit{linear} in the input channel (differently e.g. from the diamond norm \cite{Kitaev1997, Watrous2009}). This, together with the fact that CPTP constraints on $\mathcal{E}$ and $\mathcal{D}$ can be phrased as convex constraints using \eqref{eq:choi_cptp}, makes it possible to express \eqref{eq:channel_fidelity} as a \textit{bilinear optimization problem} \cite{Berta2022}:
 \begin{equation}
\label{eq:channel_fidelity_as_bilinear_optimization}
\boxed{
\begin{aligned}
F_c(\mathcal{N}^{\otimes n},d) = \quad 
& \max_{\Phi^{\mathcal{E}_n}_{A'A^n},\Phi^{\mathcal{D}_n}_{B^nB'}} 
d_A^n d \ \Tr\!\Big[
(\Phi^d_{A'B'} \otimes \Phi^{\mathcal{N}^{\otimes n}}_{A^nB^n})\cdot 
\big((\Phi^{\mathcal{E}_n}_{A'A^n})^T \otimes \Phi^{\mathcal{D}_n^*}_{B'B^n}\big)
\Big] \\
\text{s.t.} \quad 
& \Phi^{\mathcal{E}_n}_{A'A^n} \geq 0, 
\quad \Tr_{A^n}[\Phi^{\mathcal{E}_n}_{A'A^n}] = \pi_d, \\
& \Phi^{\mathcal{D}_n^*}_{B'B^n} \geq 0, 
\quad \Tr_{B'}[\Phi^{\mathcal{D}_n^*}_{B'B^n}] = \frac{\mathbbm{1}_{d_B^n}}{d},
\end{aligned}
}
\end{equation}

where we used the conventions of Figure \eqref{fig:quantumcommunicationquantumchannel}, namely $R \simeq A' \simeq B'$ are $d-$dimensional Hilbert spaces and $d_A$ and $d_B$ denote the input and output dimensions of $\mathcal{N}_{A \to B}$.
Bilinear optimization problems as \eqref{eq:channel_fidelity_as_bilinear_optimization} have been proven \cite{Barman2017} to be NP-Hard to approximate up to a sufficiently small constant factor. Indeed, \eqref{eq:channel_fidelity_as_bilinear_optimization} can be phrased as \textit{constrained separability problem}, i.e. a constrained maximization over the set of separable states \eqref{eq:separable_state}, a known NP-hard problem in quantum information theory \cite{Gharibian2010}. Converging (although not efficient) approximations by an SDP outer hierarchy based on $k-$extendibility and quantum De Finetti theorems \cite{Caves2002, Christandl2007} have also been shown in \cite{Doherty2002, Berta2022}, and symmetry was used in \cite{Chee2023} to make the SDP scaling polynomial with $k$.

In the recent work~\cite{Kossmann2025} the i.i.d.\ symmetry of $\mathcal{N}^{\otimes n}$ was exploited to obtain efficiently compute upper bounds on the channel fidelity starting from the expression in~\eqref{eq:channel_fidelity_as_bilinear_optimization}. Specifically, in \cite[Proposition~5.1]{Kossmann2025}, using a similar proof technique as in Lemma \ref{lem:covariant_singlet_fraction}, the authors showed that the product state $\rho_{A' A^n B' B^n} \coloneqq (\Phi^{\mathcal{E}}_{A' A^n})^T \otimes \Phi^{\mathcal{D}^*}_{B' B^n}$ can be restricted to be permutation-invariant:
    \begin{equation}
        \rho_{A' A^n B' B^n} \in \mathrm{End}^{\mathfrak{S}_n}(A' \otimes
        A^n \otimes B' \otimes B^n).
    \end{equation}
 Note that this \textit{joint} symmetry does not by itself guarantee that optimal encoder and decoder can be chosen to be \textit{separately} symmetric. 

Seesaw methods provide an an iterative algorithm to find lower bounds on bilinear optimization problems like \eqref{eq:channel_fidelity_as_bilinear_optimization}, by iteratively optimizing over one variable while keeping the other fixed, transforming it into an SDP in the remaining variable. These methods give no guarantee about the optimality of the solution nor estimates of the error (differently from the aforementioned upper bounds based on de Finetti theorems \cite{Berta2022, Chee2023, Kossmann2025}) but they are often good in practice and typically converge to a value close to the actual optimum.

\subsection{SDPs in Quantum Channel Coding}
The intuition behind the seesaw method can be explained directly manipulating equation \eqref{eq:channel_fidelity} and using the Hilbert Schmidt adjoint:
\begin{align}
        F_c(\mathcal{N}^{\otimes n}, d) &= \displaystyle \max_{\mathcal{E}_n,\mathcal{D}_n} \Tr[\Phi^d_{RB'} \Phi_{RB'}^{\mathcal{D}_n\circ\mathcal{N}^{\otimes n}\circ\mathcal{E}_n}] \\
        \label{eq:seesaw_first}
        &= \displaystyle \max_{\mathcal{E}_n,\mathcal{D}_n} \Tr[\Phi_{RB^n}^{\mathcal{D}^*_n}\Phi_{RB^n}^{\mathcal{N}^{\otimes n}\circ\mathcal{E}_n}] \\
         \label{eq:seesaw_second}
        &= \displaystyle \max_{\mathcal{E}_n,\mathcal{D}_n} \Tr[\Phi_{RA^n}^{(\mathcal{D}_n\circ \mathcal{N}^{\otimes n})^*} \Phi_{RA^n}^{\mathcal{E}_n}].
    \end{align}
From equations \eqref{eq:seesaw_first} and \eqref{eq:seesaw_second}, we notice that for a fixed encoder $\mathcal{E}_n$ (respectively decoder $\mathcal{D}_n$) the optimization can be cast as an SDP in the remaining variable. These SDPs are essentially the SDP for the maximal singlet fraction \eqref{eq:maximal_singlet_fraction}.

\subsubsection{Maximal Fidelity of Recovery}
\label{subsec:maximal_fidelity_recovery}
Let us consider Bob's (i.e. the decoder's) perspective in Figure \ref{fig:quantumcommunicationquantumchannel}. Suppose that in \eqref{eq:seesaw_first} we are provided with an encoder $\mathcal{E}_n$ (e.g., a random seed  or an ansatz coming from a previous iteration): then, calling $\mathcal{M}_{A' \to B^n}  = \mathcal{N}^{\otimes n}_{A^n \to B^n}\circ\mathcal{E}_{A' \to A^n}$, its Choi state $\Phi^{\mathcal{M}}_{RB^n}$ is known, and the optimization problem becomes an SDP involving only $\Phi^{\mathcal{M}}_{RB^n}$:

\begin{definition}
    Given a quantum channel $\mathcal{M}_{R'\to B}$ with input Hilbert space of dimension $d_{R} = d$, we define the \textit{maximal fidelity of recovery} as \cite{Fletcher2007}:  
    \begin{equation}
        \label{eq:max_fidelity_recovery}
    F_D(\mathcal{M}) \coloneqq \displaystyle \max_{\mathcal{D} \in  \CPTP(B \to R)} \Tr[\Phi_{RB}^{\mathcal{D}^*}\Phi_{RB}^{\mathcal{M}}].
\end{equation}
One immediately sees that this is equal to the maximal singlet fraction of the Choi state: \begin{equation}
\label{eq:recovery_as_singlet_fraction}
     F_D(\mathcal{M}) = F^{(B)}_{\Phi}(\Phi_{RB}^{\mathcal{M}}),
\end{equation}
so this can be formulated as the SDP:
\begin{equation}
\label{eq:SDP_Maximal_Recovery_Coefficient}
\boxed{
\begin{aligned}
\textbf{Primal:}\quad & \\
\max_{\Phi^{\mathcal{D}^*}_{RB}} \quad & \Tr\!\big(\Phi_{RB}^{\mathcal{M}}\, \Phi^{\mathcal{D}^*}_{RB}\big) \\
\textrm{s.t.} \quad & \Tr_R[\Phi^{\mathcal{D}^*}_{RB}] = \frac{\mathbbm{1}_{B}}{d} \\
& \Phi^{\mathcal{D}^*}_{RB}\geq 0
\end{aligned}
\qquad\qquad
\begin{aligned}
\textbf{Dual:}\quad & \\
\min_{Y_{B}} \quad & \frac{1}{d}\,\Tr(Y_{B}) \\
\textrm{s.t.} \quad & \Phi_{RB}^{\mathcal{M}} \leq \mathbbm{1}_R \otimes Y_{B} \\
&
\end{aligned}
}
\end{equation}
\end{definition}

This quantity, as the name suggests, corresponds to the optimal entanglement fidelity that can be achieved via a local operation at Bob's side for the given encoder. It finds operational interpretation in the quantum recovery task (see Figure \ref{fig:quantumrecoverytask_mes}), considered first in \cite{Barnum2002}, which is the fully quantum analogue of the guessing task that gives operational interpretation to the conditional min-entropy \cite{Koenig2009}.

\begin{figure}[h]
\centering
\begin{tikzpicture}[
    every node/.style={font=\Large},
    block/.style={draw, line width=0.7pt, minimum width=2.5cm, minimum height=1.2cm, align=center},
    arrow/.style={line width=0.7pt, -{Stealth[scale=1.0]}}
]
\filldraw[black] (-1.5,1.2) circle (1.3pt);
\node[left] at (-1.5,1.2) {$\Phi_{RR'}$};
\draw[arrow] (-1.5,1.2) -- (-0.8,2.2) -- (10,2.2);
\node[above] at (9,2.2) {$R$};
\draw[arrow] (-1.5,1.2) -- (-0.8,0.2) -- (1.2,0.2)  node[midway, above] {$R'$};
\node[block, right=0.7cm of {(0.5,0.2)}, anchor=west] (channel) {$\mathcal{M}_{R' \to B}$};
\draw[arrow] (channel.east) -- ++(1.5,0) node[midway, above] {$B$};
\node[block, right=4.2cm of {(1,0.2)}, anchor=west] (decoder) {$\mathcal{D}_{B \to R'}$};
\draw[arrow] (decoder.east) -- ++(2.2,0) node[midway, above] {$R'$};
\node[right] at (9.5,1.2) {$\omega_{RR'} \overset{?}{\approx} \Phi_{RR'}$};
\draw[dashed, gray, line width=0.5pt] (9,2.5) -- (9,0);
\end{tikzpicture}
\caption{Quantum recovery task: given a fixed channel $\mathcal{M}_{R' \to B}$, the goal is to find the optimal decoder $\mathcal{D}_{B \to R'}$ maximizing the fidelity of the output state with the maximally entangled state.}
\label{fig:quantumrecoverytask_mes}
\end{figure}
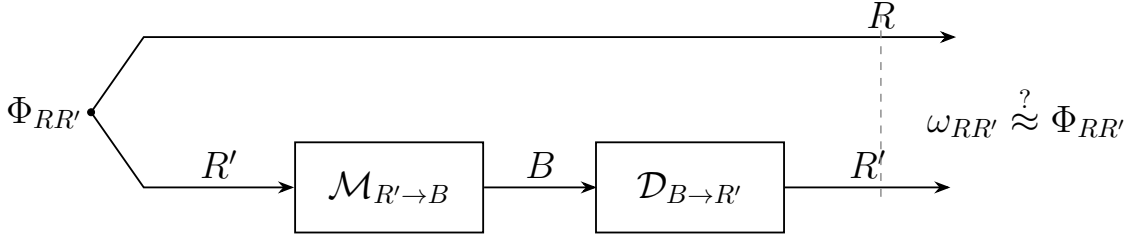

By \eqref{eq:recovery_as_singlet_fraction}, the maximal fidelity of recovery of a channel is a 'dynamical' version of the maximal singlet fraction; if $F_{\Phi}(\rho_{AB})$ is a measure of how much perfect entanglement (i.e., a MES \eqref{eq:MES}) can be retrieved from a state $\rho_{AB}$ by acting locally on it, $F_D(\mathcal{M}_{A \to B})$ can be interpreted as a measure of how well noiseless communication can be recovered from $\mathcal{M}_{A \to B}$ by acting on its output system. This correspondence becomes particularly transparent when comparing Figures~\ref{fig:maximal_singlet_fraction_fixed} and~\ref{fig:quantumrecoverytask_mes}. In particular, using \eqref{eq:maximal_singlet_fraction_as_generalized_roboustness}, we immediately obtain:
\begin{equation}
\label{eq:recovery_as_generalized_roboustness}
    F_{\text{D}}(\mathcal{M}) = \min_{Y_B \geq 0}\left\{\frac{1}{d}\Tr[Y_B]: \mathcal{R}^{Y_B} \geq \mathcal{M}\right\},
\end{equation}
where the sign $\geq$ means that $\mathcal{R}^{Y_B} - \mathcal{M}$ is a CP map, and $\mathcal{R}_{R'\to B}^{\sigma_B}(X) = \Tr [X] \sigma_B$ for some $\sigma_B \in \mathcal{D}(B)$. Thus, $F_D$ can be interpreted as a generalized robustness to replacement channels. The correspondence in \eqref{eq:recovery_as_singlet_fraction} allows to prove several properties of the maximal fidelity of recovery. Moreover, using a channel variant of Equation \eqref{eq:conditional_min_entropy_from_singlet_fraction}, the maximal fidelity of recovery can be related with a notion of conditional min entropy of a channel \cite{Gour2019, GourWilde2021}. These properties and connections are explored in detail in Appendix \ref{sec:entropy_channel}.

Note in particular that by Lemma \ref{lem:covariant_singlet_fraction}, if the output system is $B^n$ and $\mathcal{M}_{R' \to B^n}$ is symmetric in the sense of \eqref{eq:condition_symmetry_encoder}, then the optimal $\mathcal{D}_{B^n \to R'}$ can be restricted to be symmetric in the sense of \eqref{eq:condition_symmetry_decoder}, i.e.\begin{equation}
\label{eq:symmetric_restriction_F_D}
    \Phi^{\mathcal{M}}_{RB^n}\in \esn(R \otimes B\n) \Longrightarrow  \Phi^{\mathcal{D}^*}_{RB^n}\in \esn(R \otimes B\n).
\end{equation}

\subsubsection{Maximal Fidelity of Preparation}
\label{subsec:maximal_fidelity_preparation}
Let's consider Alice's (i.e.\ the encoder's) perspective in Figure \ref{fig:quantumcommunicationquantumchannel}. Suppose that in~\eqref{eq:seesaw_second} we are provided with a decoder $\mathcal{D}_n$ (e.g., a random seed or a decoder coming from a previous iteration): then, calling $\mathcal{M}'_{A^n \to B'} = \mathcal{D}_{B^n \to B'} \circ \mathcal{N}^{\otimes n}_{A^n \to B^n}$, the Choi state of its adjoint $\Phi^{\mathcal{M}'^*}_{B'A^n} = \Phi^{(\mathcal{D}_n \circ \mathcal{N}^{\otimes n})^*}_{B'A^n}$ is known, and the optimization problem becomes an SDP with input $\Phi^{\mathcal{M}'^*}_{B'A^n}$:
\begin{definition}
\label{def:maximal_fidelity_preparation}
Given a quantum channel $\mathcal{M}'_{A \to R}$ with output Hilbert space of dimension $d_{R} = d$, we call the \textit{maximal fidelity of preparation} as
\begin{align}
\label{eq:max_fidelity_preparation}
    F_E(\mathcal{M}') &\coloneqq \displaystyle \max_{\mathcal{E}\in \CPTP(R \to A)} \Tr[\Phi_{RA}^{\mathcal{M}'^*} \Phi_{RA}^{\mathcal{E}}].
\end{align}
This can again be formulated as an SDP:
\begin{equation}
\label{eq:SDP_Maximal_Preparation_Coefficient}
\boxed{
\begin{aligned}
\textbf{Primal:}\quad & \\
\max_{\Phi^{\mathcal{E}}_{RA}} \quad & \Tr\!\big(\Phi_{RA}^{\mathcal{M}'^*}\, \Phi^{\mathcal{E}}_{RA}\big) \\
\textrm{s.t.} \quad & \Tr_{A}[\Phi^{\mathcal{E}}_{RA}] = \frac{\mathbbm{1}_R}{d} \\
& \Phi^{\mathcal{E}}_{RA} \geq 0
\end{aligned}
\qquad\qquad
\begin{aligned}
\textbf{Dual:}\quad & \\
\min_{Y_R} \quad & \frac{1}{d}\,\Tr(Y_R) \\
\textrm{s.t.} \quad & \Phi_{RA}^{\mathcal{M}'^*} \leq Y_R \otimes \mathbbm{1}_{A} \\
&
\end{aligned}
}
\end{equation}
\end{definition}

This quantity, as the name suggests, corresponds to the optimal entanglement fidelity achievable via a local operation at Alice's side for the given decoder. It finds operational interpretation in the quantum preparation task (Figure~\ref{fig:quantumpreparationtask_mes}): given a fixed channel $\mathcal{M}'_{A \to R}$, find the optimal encoder $\mathcal{E}_{R \to A}$ maximizing the fidelity of the output state with the maximally entangled state.

\begin{remark}
\label{rem:isometric_encoder}
It is intuitive that the optimal encoding map should not add noise, i.e., it should be an isometry (in fact, standard quantum error-correcting codes, such as the nine-qubit code~\cite{Shor1995} or the five-qubit code~\cite{Laflamme1996} have isometric encoders). The objective function $f(\mathcal{E}) \coloneqq F_e(\mathcal{M}' \circ \mathcal{E}) = \Tr(\Phi_{RA}^{\mathcal{M}'^*} \Phi_{RA}^{\mathcal{E}})$ is linear in $\mathcal{E}$ and since $\CPTP(R' \to A)$ is convex and compact, the maximum is attained at an extreme point. In line with our intuition, isometries are indeed extreme points of the set of CPTP maps, but there exist extremal CPTP maps that are not isometric~\cite[Theorem 2.31]{Watrous2018}. Thus, the optimization set in $F_{\mathrm{E}}$ cannot in general be reduced to isometries.
\end{remark}

The maximal fidelity of preparation is essentially equivalent to its maximal fidelity of recovery, in the sense specified below.

\begin{proposition}
    Let $\mathcal{W}_{A \to B}$ be a quantum channel of input dimension $d_A$ and output dimension $d_B$. Then \begin{equation}
    \label{eq:link_F_E_F_D}
         F_E(\mathcal{W}) =\frac{d^2_A}{d^2_B} F_D(\mathcal{W}).
    \end{equation}
\end{proposition}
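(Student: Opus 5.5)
The plan is to reduce both quantities to entanglement fidelities \eqref{eq:entanglement_fidelity} of compositions of $\mathcal{W}$ with a single map $\mathcal{X} \in \CPTP(B \to A)$, and then compare the two compositions via a cyclicity argument on Kraus operators.

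\emph{Matching the optimization sets.} In $F_D(\mathcal{W})$, per \eqref{eq:max_fidelity_recovery} with $R \simeq A$ of dimension $d_A$, the optimization runs over decoders $\mathcal{D} \in \CPTP(B \to A)$, and the objective equals $F_e(\mathcal{D} \circ \mathcal{W})$. This is the entanglement fidelity of a channel on $A$, measured against $\Phi^{d_A}$; the identification follows from \eqref{eq:seesaw_first}. In $F_E(\mathcal{W})$, per Definition~\ref{def:maximal_fidelity_preparation} with $R \simeq B$ of dimension $d_B$, the optimization runs over encoders $\mathcal{E} \in \CPTP(B \to A)$, and the objective equals $F_e(\mathcal{W} \circ \mathcal{E})$. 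This is an entanglement fidelity on $B$, measured against $\Phi^{d_B}$, by \eqref{eq:seesaw_second}. Both maxima therefore range over the same convex set $\CPTP(B \to A)$. It thus suffices to prove the pointwise identity
\begin{equation*}
d_B^2\, F_e(\mathcal{W} \circ \mathcal{X}) \;=\; d_A^2\, F_e(\mathcal{X} \circ \mathcal{W}) \qquad \forall\, \mathcal{X} \in \CPTP(B \to A),
\end{equation*}
and then take the maximum over $\mathcal{X}$ on both sides.

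\emph{The pointwise identity.} First recall the standard formula $F_e(\mathcal{M}) = \frac{1}{d^2}\sum_i |\Tr K_i|^2$ for a channel $\mathcal{M}$ on a $d$-dimensional system with Kraus operators $\{K_i\}$. It follows directly from $\bra{\Phi^d}(\mathrm{id}\otimes K_i)\ket{\Phi^d} = \Tr(K_i)/d$ and is independent of the Kraus decomposition. Next, take Kraus operators $\{W_j\}$ for $\mathcal{W}$ and $\{X_k\}$ for $\mathcal{X}$. Then $\mathcal{W} \circ \mathcal{X}$ has Kraus operators $\{W_j X_k\}$, acting on $B$, and $\mathcal{X} \circ \mathcal{W}$ has Kraus operators $\{X_k W_j\}$, acting on $A$. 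Cyclicity of the trace for rectangular matrices gives $\Tr(W_j X_k) = \Tr(X_k W_j)$. Hence
\begin{equation*}
d_B^2\, F_e(\mathcal{W}\circ\mathcal{X}) = \sum_{j,k} |\Tr(W_jX_k)|^2 = \sum_{j,k}|\Tr(X_kW_j)|^2 = d_A^2\, F_e(\mathcal{X}\circ\mathcal{W}).
\end{equation*}
Maximizing over $\mathcal{X}$ yields $F_E(\mathcal{W}) = \frac{d_A^2}{d_B^2} F_D(\mathcal{W})$, which is \eqref{eq:link_F_E_F_D}.

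\emph{Main obstacle.} The main difficulty is bookkeeping of normalizations, not mathematics. One must confirm that the trace objectives in \eqref{eq:max_fidelity_recovery} and \eqref{eq:max_fidelity_preparation}, written with (possibly unnormalized) Choi ``states'' of adjoints, really coincide with $F_e(\mathcal{D}\circ\mathcal{W})$ and $F_e(\mathcal{W}\circ\mathcal{E})$ with the dimension-$d_A$ and dimension-$d_B$ maximally entangled states respectively. I would check this directly using $\Tr[\Phi^{\mathcal{D}^*}\Phi^{\mathcal{W}}] = \Tr[\Phi^d\,\Phi^{\mathcal{D}\circ\mathcal{W}}]$, as in the derivation of \eqref{eq:seesaw_first}--\eqref{eq:seesaw_second}, tracking the factors $d_A$, $d_B$ that arise from $\Gamma = d\,\Phi$.

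As a cross-check, one could prove the identity purely in the Choi picture. The step would be to write both objectives as $\frac{1}{d_A d_B}\Tr[(\Gamma^{\mathcal{W}}_{AB})^{T}\Gamma^{\mathcal{X}}_{BA}]$ up to the respective prefactors $1/d_A$ and $1/d_B$, using \eqref{eq:concatenation_choi}. The Kraus argument above is the route I would actually write out.
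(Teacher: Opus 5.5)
Your proof is correct, and its skeleton matches the paper's. Both $F_D(\mathcal{W})$ and $F_E(\mathcal{W})$ are maximizations over the same set of maps $B\to A$ of one and the same unnormalized quantity, bilinear in $(\mathcal{W},\mathcal{X})$, divided by $d_A^2$ and $d_B^2$ respectively. Where you differ is in how you verify the pointwise identity.

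The paper stays in the Choi picture. It writes $F_D(\mathcal{W})=\frac{1}{d_A^2}\max\Tr[\Gamma^{\mathcal{D}^*}_{AB}\Gamma^{\mathcal{W}}_{AB}]$ and $F_E(\mathcal{W})=\frac{1}{d_B^2}\max\Tr[\Gamma^{\mathcal{W}^*}_{BA}\Gamma^{\mathcal{E}}_{BA}]$. It then uses \eqref{eq:adjoint_choi}, the identity $\Tr[X^TY^T]=\Tr[YX]$, and the duality \eqref{eq:unitality-trace-preservation-are-dual} to see that the second program is literally the first one, over the same feasible set $\mathcal{C}^*(A;B)$. You instead use the Kraus formula $F_e=\frac{1}{d^2}\sum_i|\Tr(K_i)|^2$ together with cyclicity, $\Tr(W_jX_k)=\Tr(X_kW_j)$.

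Your route is basis-free and more transparent. It shows that $\sum_{j,k}|\Tr(W_jX_k)|^2$ is invariant under reversing the order of composition, so the factor $d_A^2/d_B^2$ comes solely from normalizing the two maximally entangled states. It also sidesteps the transpose and subsystem-reordering conventions that the paper leaves implicit. The paper's route has the advantage of being stated directly for the programs \eqref{eq:SDP_Maximal_Recovery_Coefficient} and \eqref{eq:SDP_Maximal_Preparation_Coefficient}. These are the programs that are later restricted and implemented, and this is the form in which the identity is invoked, e.g.\ in Lemma~\ref{lemma:seesaw_symmetric}.

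There is one small slip in your optional Choi cross-check: the normalizations do not combine as written. With $T=\Tr[(\Gamma^{\mathcal{W}}_{AB})^{T}\Gamma^{\mathcal{X}}_{BA}]$ (after reordering subsystems), one has $F_e(\mathcal{X}\circ\mathcal{W})=T/d_A^2$ and $F_e(\mathcal{W}\circ\mathcal{X})=T/d_B^2$. A common factor $\frac{1}{d_Ad_B}$ further multiplied by $\frac{1}{d_A}$ resp.\ $\frac{1}{d_B}$ would give the ratio $d_A/d_B$ instead of $d_A^2/d_B^2$. With the correct prefactors, that cross-check is exactly the paper's proof, and your main Kraus argument is unaffected.
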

\begin{proof}
On the one hand, since $\mathcal{D} \in \CPTP(A \to B)$ implies $\mathcal{D}^* \in \operatorname{CPU}(A \to B)$:\begin{equation}
\label{eq:proof_F_D}
    F_D(\mathcal{W}) =\frac{1}{d_A^2}\max_{\Gamma^{\mathcal{D}^*}_{AB}\in \mathcal{C}^*(A;B)}\Tr[ \Gamma^{\mathcal{D}^*}_{AB}\Gamma^{\mathcal{W}}_{AB}].
\end{equation}
On the other:
\begin{align}
    F_E(\mathcal{W}) &= \frac{1}{d_B^2} \max_{\Gamma^{\mathcal{E}}_{BA}\in \mathcal{C}(B;A)}\Tr[ \Gamma^{\mathcal{W}^*}_{BA}\Gamma^{\mathcal{E}}_{BA}]\\
    &=\frac{1}{d_B^2}\max_{\Gamma^{\mathcal{E}^*}_{AB}\in \mathcal{C}^*(A;B)}\Tr[ \Gamma^{\mathcal{W}}_{AB}\Gamma^{\mathcal{E}^*}_{AB}]\\
    &=\frac{1}{d_B^2}\max_{\Gamma^{\mathcal{E}^*}_{AB}\in \mathcal{C}^*(A;B)}\Tr[ \Gamma^{\mathcal{E}^*}_{AB}\Gamma^{\mathcal{W}}_{AB}]
    \label{eq:equality_objective}
\end{align}
where the second equality follows by \eqref{eq:unitality-trace-preservation-are-dual} and \eqref{eq:adjoint_choi}. Comparing  \eqref{eq:equality_objective} and \eqref{eq:proof_F_D} and relabeling $\mathcal{E}\leftrightarrow \mathcal{D}$, we obtain the claim.
\end{proof} By \eqref{eq:link_F_E_F_D}, all properties of Lemma \ref{lem:max_recovery_properties} transfer to the maximal fidelity of preparation. Note that if the channel $\mathcal{W}$ is square (i.e. $d_A = d_B$), then
\begin{equation}
\label{eq:condition_coincidence_M}
    F_{\mathrm{D}}(\mathcal{W}) = F_{\mathrm{E}}(\mathcal{W}),
\end{equation}
which operationally means that the maximum fidelity achievable by controlling only the channel input (preparation task, Figure \ref{fig:quantumpreparationtask_mes}) equals that achievable by controlling only the channel output (recovery task, Figure \ref{fig:quantumrecoverytask_mes}). 
\begin{figure}[htb]
\centering
\begin{tikzpicture}[
    every node/.style={font=\Large},
    block/.style={draw, line width=0.7pt, minimum width=2.5cm, minimum height=1.2cm, align=center},
    arrow/.style={line width=0.7pt, -{Stealth[scale=1.0]}}
]
\filldraw[black] (-1.5,1.2) circle (1.3pt);
\node[left] at (-1.5,1.2) {$\Phi_{RR'}$};
\draw[arrow] (-1.5,1.2) -- (-0.8,2.2) -- (10,2.2);
\node[above] at (9,2.2) {$R$};
\draw[arrow] (-1.5,1.2) -- (-0.8,0.2) -- (1.2,0.2)  node[midway, above] {$R'$};
\node[block, right=0.7cm of {(0.5,0.2)}, anchor=west] (encoder) {$\mathcal{E}_{R' \to A}$};
\draw[arrow] (encoder.east) -- ++(1.5,0) node[midway, above] {$A$};
\node[block, right=4.2cm of {(1,0.2)}, anchor=west] (channel) {$\mathcal{M}'_{A \to R'}$};
\draw[arrow] (channel.east) -- ++(2.2,0) node[midway, above] {$R'$};
\node[right] at (9.5,1.2) {$\omega_{RR'} \overset{?}{\approx} \Phi_{RR'}$};
\draw[dashed, gray, line width=0.5pt] (9,2.5) -- (9,0);
\end{tikzpicture}
\caption{Quantum preparation task: given a fixed channel $\mathcal{M}'_{A \to R'}$, the goal is to find the optimal encoder $\mathcal{E}_{R' \to A}$ maximizing the fidelity of the output state with the maximally entangled state.}
\label{fig:quantumpreparationtask_mes}
\end{figure}
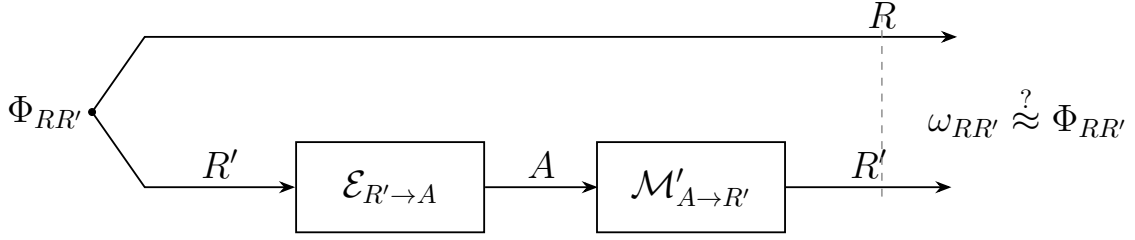

However, in the seesaw method, the channels $\mathcal{M}$ and $\mathcal{M}'$ for which we compute $F_D$ and $F_E$ do not have in general equal input and output dimensions. In the recovery task \eqref{eq:seesaw_first}, we consider $F_D(\mathcal{M}_{A' \to B^n})$ with $\mathcal{M}_{A' \to B^n} = \mathcal{N}_{A^n \to B^n}^{\otimes n}\circ \mathcal{E}_{A' \to A^n}$; in the preparation task \eqref{eq:seesaw_second}, we consider $F_E(\mathcal{M}'_{A^n \to B'})$ with $\mathcal{M}'_{A^n \to B'} =  \mathcal{D}_n \circ\mathcal{N}_{A^n \to B^n}^{\otimes n}$. Beyond the dimensional mismatch, the two SDPs have a genuinely different structure once symmetry is exploited: the permutation-invariant system $S^n$ --- $B^n$ for $F_D$ in the recovery task, $A^n$ for $F_E$ in the preparation task --- sits on opposite sides of the partial trace normalization constraint in \eqref{eq:SDP_Maximal_Recovery_Coefficient} and \eqref{eq:SDP_Maximal_Preparation_Coefficient}. As we saw in Section~\ref{sec:efficient-operations-in-block-basis}, this difference matters at the implementation level, because the block-diagonal reformulation treats $\Tr_{S^n}$ and $\Tr_R$ asymmetrically (see Section \ref{sec:efficient-operations-in-block-basis}). This is why it is natural to treat $F_D$ and $F_E$ as separate quantities.
\subsubsection{Seesaw Algorithm}
We can now formally introduce the seesaw algorithm~\cite{Reimpell2005,Fletcher2007,Taghavi2010,Johnson2017}. 

\begin{definition}
\label{def:seesaw}
Given a quantum channel $\mathcal{N}_{A \to B}$, a number of channel uses $n \in \mathbb{N}$, an input dimension $d \leq \min\{d_A^n, d_B^n\}$, and a convergence threshold $\delta > 0$, in the setting of Figure \ref{fig:quantumcommunicationquantumchannel}, the seesaw iteration method produces a lower bound $\tilde{F} \leq F_c(\mathcal{N}^{\otimes n}, d)$ together with an encoding--decoding pair $(\tilde{\mathcal{E}}, \tilde{\mathcal{D}})$ that achieves it. Let $A'$ and $B'$ both be systems of the given input dimension $d$. The algorithm then proceeds as follows.

\begin{enumerate}
    \item \textbf{Initialization.} Choose a random encoder $\mathcal{E}_0 \in \CPTP(A' \to A^n)$ and a random decoder $\mathcal{D}_0 \in \CPTP(B^n \to B')$ as initial seeds. Set $i = 0$.

    \item \textbf{Iteration.} Repeat:
    \begin{enumerate}
        \item \textit{Optimize decoder.} Solve the SDP~(\ref{eq:SDP_Maximal_Recovery_Coefficient}) for the channel $\mathcal{M}_i = \mathcal{N}^{\otimes n} \circ \mathcal{E}_{i}$, obtaining the optimal decoder $\mathcal{D}_{i+1} \in \CPTP(B^n \to B)$ and the value\begin{equation}
    \label{eq:max_recovery_seesaw}
\max_{\mathcal{D}_{i+1}} F(\Phi^d_{RB'}, \Phi^{\mathcal{D}_{i+1} \circ \mathcal{M}_i}_{RB'}) \eqqcolon F_{\mathrm{D}}(\mathcal{M}_i)
    \end{equation}
        \item \textit{Optimize encoder.} Solve the SDP~(\ref{eq:SDP_Maximal_Preparation_Coefficient}) for the channel $\mathcal{M}_i' =\mathcal{D}_{i+1} \circ \mathcal{N}^{\otimes n}$, obtaining the optimal encoder $\mathcal{E}_{i+1}\in \CPTP(A' \to A^n)$ and the value \begin{equation}
    \label{eq:max_preparation_seesaw}
 \max_{\mathcal{E}_{i+1}} F(\Phi^d_{RB'}, \Phi^{\mathcal{M}'_i \circ \mathcal{E}_{i+1}}_{RB'}) \eqqcolon F_{\mathrm{E}}(\mathcal{M}'_i)
    \end{equation}
        \item \textit{Convergence check.} If
        \begin{equation}
        \label{eq:seesaw_convergence_check}
            F_{\mathrm{E}}(\mathcal{M}_i') - F_{\mathrm{D}}(\mathcal{M}_i) < \delta,
        \end{equation}
        terminate. Otherwise, set $i \leftarrow i+1$ and repeat.
    \end{enumerate}
    \item \textbf{Output.} Return $\tilde{F} \coloneqq F_e(\mathcal{D}_{i+1} \circ \mathcal{N}^{\otimes n} \circ \mathcal{E}_{i+1})$ and the maps $(\tilde{\mathcal{E}}, \tilde{\mathcal{D}}) = (\mathcal{E}_{i+1}, \mathcal{D}_{i+1})$ (as Choi matrices).
\end{enumerate}
\end{definition}

The seesaw method was originally introduced in the context of quantum error correction to evaluate the performance of quantum codes, and the reason why it works is given by the following simple lemma. 

\begin{lemma}
\label{lemma:fidelity_increase}
The seesaw iteration method ensures a monotonic increase in the entanglement fidelity at every half-step. In other words, for every $i \in \mathbb{N}$:
\begin{equation}
F_{e}(\mathcal{D}_{i+1} \circ \mathcal{N}^{\otimes n}\circ \mathcal{E}_{i+1}) \geq F_{e}(\mathcal{D}_{i+1} \circ \mathcal{N}^{\otimes n}\circ \mathcal{E}_{i}) \geq F_{e}(\mathcal{D}_{i} \circ \mathcal{N}^{\otimes n}\circ \mathcal{E}_{i})
\end{equation}
\end{lemma}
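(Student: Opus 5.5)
The argument is the standard alternating-maximization one. Each half-step of Definition~\ref{def:seesaw} maximizes the same bilinear objective over one variable while the other is held fixed. The previous value of that variable is always feasible for the new maximization, so the objective cannot decrease. I will prove the two inequalities separately, using the reformulations \eqref{eq:seesaw_first} and \eqref{eq:seesaw_second} of the entanglement fidelity.

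\emph{Second inequality (decoder step).} Fix $\mathcal{E}_i$ and set $\mathcal{M}_i = \mathcal{N}^{\otimes n}\circ\mathcal{E}_i$. By \eqref{eq:seesaw_first}, for every decoder $\mathcal{D}\in\CPTP(B^n\to B')$,
\begin{equation*}
F_e(\mathcal{D}\circ\mathcal{M}_i) = \Tr\bigl[\Phi^{\mathcal{D}^*}_{RB^n}\Phi^{\mathcal{M}_i}_{RB^n}\bigr].
\end{equation*}
This is exactly the objective of the primal SDP \eqref{eq:SDP_Maximal_Recovery_Coefficient}. The previous decoder $\mathcal{D}_i$ is CPTP, so by \eqref{eq:unitality-trace-preservation-are-dual} and \eqref{eq:choi_cpu} its adjoint Choi state satisfies $\Phi^{\mathcal{D}_i^*}\ge 0$ and $\Tr_R\Phi^{\mathcal{D}_i^*}=\mathbbm{1}_{B^n}/d$. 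Hence $\Phi^{\mathcal{D}_i^*}$ is feasible for the SDP. The SDP returns $\mathcal{D}_{i+1}$ attaining the maximum $F_D(\mathcal{M}_i)$, which gives $F_e(\mathcal{D}_{i+1}\circ\mathcal{M}_i) = F_D(\mathcal{M}_i)\ge F_e(\mathcal{D}_i\circ\mathcal{M}_i)$.

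\emph{First inequality (encoder step).} Fix $\mathcal{D}_{i+1}$ and set $\mathcal{M}'_i=\mathcal{D}_{i+1}\circ\mathcal{N}^{\otimes n}$. By \eqref{eq:seesaw_second}, $F_e(\mathcal{M}'_i\circ\mathcal{E}) = \Tr[\Phi^{\mathcal{M}'^*_i}_{RA^n}\Phi^{\mathcal{E}}_{RA^n}]$, which is the objective of \eqref{eq:SDP_Maximal_Preparation_Coefficient}. The previous encoder $\mathcal{E}_i$ is CPTP, so its Choi state satisfies the constraints of that SDP by \eqref{eq:choi_cptp}. Maximizing therefore gives $F_e(\mathcal{M}'_i\circ\mathcal{E}_{i+1}) = F_E(\mathcal{M}'_i)\ge F_e(\mathcal{M}'_i\circ\mathcal{E}_i)$. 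Chaining the two steps yields the claim. Both maxima are attained, since the feasible sets are compact and the objectives are linear.

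\emph{Main point to check.} The only step needing care is the identity between $F_e$ and the trace pairings in \eqref{eq:seesaw_first}--\eqref{eq:seesaw_second}, with the correct normalizations. I would verify it through the Hilbert--Schmidt adjoint. Write $F_e(\mathcal{D}\circ\mathcal{M}) = \Tr[\Phi^d_{RB'}(\mathrm{id}_R\otimes\mathcal{D})(\Phi^{\mathcal{M}}_{RB^n})]$ and move $\mathcal{D}$ onto $\Phi^d$ as $\mathcal{D}^*$. Then use \eqref{eq:adjoint_choi} to identify $(\mathrm{id}\otimes\mathcal{D}^*)(\Phi^d)$ with the (transposed) Choi state of $\mathcal{D}^*$; the transposition is absorbed by the basis convention of the MES. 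With this identity in place, the rest is pure feasibility and needs no symmetry argument. The same proof therefore also covers the symmetric variant, as long as the iterates stay in the permutation-invariant feasible sets.
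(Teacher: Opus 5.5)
Your proof is correct and is the same argument the paper gives: each equality holds because the half-step SDP returns the optimizer, and each inequality holds because the previous iterate ($\mathcal{D}_i$, resp.\ $\mathcal{E}_i$) is feasible for that SDP. One small simplification to your normalization check: after moving $\mathcal{D}$ across the trace, $(\mathrm{id}_R\otimes\mathcal{D}^*)(\Phi^{d}_{RB'})$ is by definition the Choi state $\Phi^{\mathcal{D}^*}_{RB^n}$ (which is exactly the SDP variable), so no transposition and no appeal to \eqref{eq:adjoint_choi} is needed there.
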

\begin{proof}
At a generic step $i$ of the iteration, one has:
\begin{align}
   F_{e}(\mathcal{D}_{i+1} \circ \mathcal{N}^{\otimes n}\circ \mathcal{E}_{i+1}) &= F_{\mathrm{E}}(\mathcal{D}_{i+1} \circ \mathcal{N}^{\otimes n}) \geq F_{e}(\mathcal{D}_{i+1} \circ \mathcal{N}^{\otimes n} \circ \mathcal{E}_{i} ) \\
   &= F_{\mathrm{D}}(\mathcal{N}^{\otimes n} \circ \mathcal{E}_{i}) \geq F_{e}(\mathcal{D}_{i} \circ \mathcal{N}^{\otimes n} \circ \mathcal{E}_{i})
\end{align}
where each equality holds because the preceding SDP optimizes over the corresponding map, and each inequality holds because the previous iterate is a feasible (but generally suboptimal) point in that optimization. 
\end{proof}
By Lemma \ref{lemma:fidelity_increase}, the seesaw method guarantees convergence to a \textit{local} optimum $\tilde{F}$, whose value depends on the initial condition given by the random seed $(\mathcal{E}_0, \mathcal{D}_0)$. To obtain better results, ideally close to the global optimum, one should repeat the method with multiple random initializations and take the maximum over the resulting fidelities. However, it should be noted that local minima may cause the algorithm to converge to a flat region and there is no certificate for optimality. The performance of the method is strongly dependent on the choice of the initial seed: some pairs $(\mathcal{E}_0, \mathcal{D}_0)$ lead to significantly faster convergence than others. For instance, choosing $\mathcal{E}_0$ to be an isometric (noiseless) encoder is often advantageous, as it provides a warm start for the subsequent iterations and accelerates convergence.
\

The entire optimization procedure can be formulated only using the Choi matrices of the involved channels, exploiting the concatenation \eqref{eq:concatenation_choi} and adjoint \eqref{eq:adjoint_choi} rules for Choi matrices, and essentially consists in solving repeatedly the SDPs of $F_E$ and $F_D$ for varying inputs $\mathcal{M}_i$ and $\mathcal{M}_i '$, updated at every step. Besides standard SDP solvers, such as MOSEK \cite{Mosek2024} and SCS \cite{Odonoghue2016}), the authors of \cite{Reimpell2005} also introduced an alternative method called the channel power iteration method, inspired by the classical power iteration, and is reviewed in the next subsection. 

\subsubsection{Channel Power Iteration Method}
\label{sec:powermethod}
Each half-step \eqref{eq:SDP_Maximal_Recovery_Coefficient} and \eqref{eq:SDP_Maximal_Preparation_Coefficient} of the seesaw iteration can be solved via the \textit{channel power iteration method}~\cite[Section 3.2.1]{Reimpell2008}. The method is inspired by the classical power method~\cite{vonMises1929} for computing the dominant eigenvector of a positive semidefinite matrix $A$, which iteratively applies $v \mapsto Av/\|Av\|$. 
The channel analogue replaces the linear map $v \mapsto Av$ with a sandwiching operation on the Choi matrix of the encoder (resp. decoder adjoint), followed by a normalization that enforces the unitality (resp. trace-preservation) constraint in \eqref{eq:SDP_Maximal_Recovery_Coefficient} (resp. in \eqref{eq:SDP_Maximal_Preparation_Coefficient}). The next definition formalizes the algorithm, focusing on $F_D$ for concreteness (completely analogous results hold for $F_E$, by \eqref{eq:link_F_E_F_D}).
\begin{definition}
\label{def:power_method}
Let $\mathcal{M}_{R' \to B}$ be a quantum channel, $\mathcal{D}_0 \in \CPTP(B \to R')$ be any decoder map (see figure \ref{fig:quantumrecoverytask_mes}) and $\delta_p > 0$ be a convergence threshold. The channel power iteration produces an estimate of $F_{\mathrm{D}}(\mathcal{M})$ and a decoder $\tilde{\mathcal{D}}_{i+1}$ that achieves it via the following iterative algorithm. Set $j = 0$ and repeat:
\begin{enumerate}
    \item \textbf{Sandwich.} Compute the unnormalized Choi matrix:
    \begin{equation}
    \label{eq:power_step}
       X_{RB} \coloneqq \Gamma^{\mathcal{M}}_{RB}  \Gamma^{\mathcal{D}^*_j}_{RB}\Gamma^{\mathcal{M}}_{RB}
    \end{equation}
    \item \textbf{Normalize.} Compute $X_{B} \equiv \Tr_{R}[X_{RB}]$ and obtain the Choi matrix of the normalized decoder $\mathcal{D}^*_{j+1}$ by applying $X_{B}^{-1/2}$ (pseudoinverse), i.e.:
    \begin{equation}
    \label{eq:normalization_choi}
        \Gamma^{\mathcal{D}^*_{j+1}}_{RB} \coloneqq (\mathbbm{1}_R \otimes X_{B}^{-1/2}) X_{RB} (\mathbbm{1}_R \otimes X_{B}^{-1/2})
    \end{equation} 
    This enforces the unitality constraint $\Tr_{R}[\Gamma^{\mathcal{D}^*_{j+1}}_{RB}] = \mathbbm{1}_{B}$.
    \item \textbf{Convergence check.} If $F_e(\mathcal{D}_{j+1} \circ \mathcal{M}) - F_e(\mathcal{D}_j \circ \mathcal{M}) < \delta_p$, terminate. Otherwise, set $j \leftarrow j+1$ and repeat.
\end{enumerate}
Return $\mathcal{D} \coloneqq \mathcal{D}_{j+1}$ and $\tilde{F} \coloneqq F_e(\mathcal{D} \circ \mathcal{M})$.
\end{definition}
The same algorithm can be applied to estimate $F_E$, with the only difference in the normalization step, where the TP constraint is enforced by taking the partial trace with respect to system $B$. It also trivially extends to the $n$-shot setting of Figure \ref{fig:quantumcommunicationquantumchannel}, substituting $B\to B^n$. Under suitable regularity conditions (which are often satisfied in practice), the method satisfies monotonicity at every step:\begin{equation}
    F_e(\mathcal{D}_{j+1} \circ \mathcal{M}) \geq F_e(\mathcal{D}_j \circ \mathcal{M}) \quad \forall j \in \mathbb{N},
\end{equation} and any globally optimal decoder is a fixed point of the iteration; we refer to Reimpell's dissertation \cite[Chapter 3.2]{Reimpell2008} for details of the convergence analysis. 

This method provides a valid alternative at each half-step in the seesaw method \ref{def:seesaw}: in step 2(a) one sets $\mathcal{M}_i = \mathcal{N}^{\otimes n} \circ \mathcal{E}_i$ and runs the iteration seeded with $\mathcal{D}_i$; in step 2(b) one works with $\mathcal{M}_i'^* = (\mathcal{D}_{i+1} \circ \mathcal{N}^{\otimes n})^*$ seeded with $\mathcal{E}_i$. The only difference with standard SDP solvers is that we now require the decoder coming from the previous iteration as initial seed to start the iteration. Moreover, it compares favorably with standard SDP solvers in terms of computational complexity. As shown in \cite[Sections 3.2.1 and 3.2.2]{Reimpell2008}, computing $F_D$ (primal SDP) requires $\mathcal{O}((d_B d)^6)$ floating-point operations, whereas the power iteration reduces this cost to $\mathcal{O}((d_B d)^3)$, since each step is dominated by a single matrix--matrix multiplication. Moreover, SDP solvers typically rely on interior-point or first-order methods, which are inherently CPU-bound. In contrast, the power iteration consists entirely of matrix operations (see \eqref{eq:power_step} and \eqref{eq:normalization_choi}) which can be efficiently implemented on GPUs using Python libraries such as \texttt{CuPy}, or can benefit from additional structure (e.g., sparsity or block-diagonal structure), leading to further speedups. In practice, in all examples (both in \cite{Parentin2026} and in Section~\ref{sec:numerical_examples}), the channel power iteration method proved to be significantly faster than SDP solvers.



\

\subsection{Symmetric Seesaw Method}
\label{sec:symmetric_seesaw}

Even without a convergence guarantee, the seesaw iteration algorithm proved to be very successful in the context of approximate error correction, reproducing or even outperforming known bounds for several channels (see \cite[Section 3.3]{Reimpell2008}). However, it suffers the fundamental bottleneck that the Choi matrices involved in the optimization grow exponentially with $n$. Since increasing the number of uses can help improving the fidelity, i.e. for all $d \in \mathbb{N}$ and for all $\mathcal{N} \in \CPTP(A \to B)$\begin{equation}
\label{eq:monotonicity-channel-fidelity}
    F_c(\mathcal{N}^{\otimes n}, d) \leq F_c(\mathcal{N}^{\otimes m}, d) \quad  \forall n \leq  m,
\end{equation}
it is very important in practice to run the seesaw method for a large number of channel uses, i.e. beyond the practical limiting number of $n =8/9$ (for transmission over qubit channels). This becomes crucial if if the goal is to surpass a threshold fidelity value, as in the case of \cite{Parentin2026}, but can be equally important whenever we want to find or benchmark good error-correcting codes of larger blocklength, as it provides with an optimized coding strategy that works for any quantum channel. 
In this section, we address this problem, introducing a new algorithm that we call the \textit{symmetric seesaw method}, which exploits the permutation symmetry of i.i.d.\ channels to reduce the complexity from exponential to polynomial in~$n$.

\subsubsection{Permutation-Invariant Codes}
\label{sec:perm-inv-codes}

\begin{definition}
    Let $n,d \in \mathbbm{N}$ and consider $n$-shot entanglement transmission over $\mathcal{N}^{\otimes n}$ (Figure ~\ref{fig:quantumcommunicationquantumchannel}). We say that the encoder $\mathcal{E}_{A' \to A^n}$ and the decoder $\mathcal{D}_{B^n \to B'}$ are \textit{symmetric} if\begin{equation}
\label{eq:symmetric_enc-dec}
    \Gamma^{\mathcal{E}_n}_{RA^n} \in \mathrm{End}^{\mathfrak{S}_n}(R \otimes A^{\otimes n}), \qquad \Gamma^{\mathcal{D}_n^*}_{RB^n} \in \mathrm{End}^{\mathfrak{S}_n}(R \otimes B^{\otimes n})
\end{equation}
where (as previously) $\mathfrak{S_n}$ acts trivially on the $R$ system. The pair $(\mathcal{E}_{A' \to A^n}, \mathcal{D}_{B^n \to B'})$ constitutes a \textit{permutation-invariant} (PI) code.
\end{definition}
PI codes encode and decode quantum information using permutation-invariant states with respect to permutations of the $n$ subsystems. These codes were first studied by Ruskai~\cite{Ruskai2000} and Pollatsek and Ruskai~\cite{PollatsekRuskai2004}, and later extended to systematic families in ~\cite{Ouyang2014, Aydin2024}. They have natural physical motivation: systems of indistinguishable bosons are necessarily permutation-invariant, and PI codes are inherently immune to permutation errors. Beyond quantum error correction, PI codes have found applications in quantum storage~\cite{Ouyang2021}, metrology~\cite{Ouyang2022}, and quantum communication~\cite{Bhalerao2025}.

The key idea underlying PI codes is that i.i.d.\ channels $\mathcal{N}^{\otimes n}$ preserve any permutation symmetry present at the input\footnote{Note the following results apply to any channel whose Choi matrix satisfies $\Gamma^{\mathcal{N}_n}_{A^n B^n} \in \mathrm{End}^{\mathfrak{S}_n}(A^n \otimes B^n)$. For example, this includes all channels whose Choi states have \textit{de Finetti} form:
\begin{equation}
    \Gamma^{\mathcal{N}_n}_{A^n B^n} = \sum_{x \in \mathcal{X}} p_X(x)
    \left(\Gamma^{\mathcal{N}^x}_{AB}\right)^{\otimes n}
\end{equation}
where $p_X$ is a probability distribution over some alphabet $\mathcal{X}$. Such channels model scenarios with correlated (non-i.i.d.) noise.}. This enables to use the representation theoretic tools of Section \ref{sec:perm-inv} to develop efficient algorithms that reduce the complexity of the underlying SDPs from exponential to polynomial in~$n$. 

While PI codes need not be optimal (see the discussion in subsection \ref{subsec:channel_fid_as_bilinear}), this restriction is a computationally motivated ansatz that has proven very effective in practice: Bhalerao and Leditzky~\cite{Bhalerao2025} recently showed that restricting to permutation invariant states in the optimization of the coherent information for multiple channel uses significantly improves the best known quantum capacity thresholds for several channel families (e.g., the 2-Pauli and BB84 channels). In this work, we show that PI codes are similarly effective for the finite-blocklength approximate error correction problem, yielding improved lower bounds on the channel fidelity for several important quantum channels (Section~\ref{sec:numerical_examples}).

Having motivated our choice, we impose encoder and decoder symmetry as a computational ansatz, and see how this restriction reduces the optimization space in \eqref{eq:SDP_Maximal_Recovery_Coefficient} and \eqref{eq:SDP_Maximal_Preparation_Coefficient} from exponential to polynomial dimension in $n$ using representation theory. The key structural result is that the seesaw iteration \textit{preserves} the symmetric subspace:
\begin{lemma}
\label{lemma:seesaw_symmetric}
\label{prop:seesaw_symmetric}
Let $n \in \mathbb{N}$ be a number of channel uses, and $d \in \mathbb{N}$ be an input dimension. Given a permutation-covariant channel $\mathcal{N}_n$, i.e. such that $\Gamma^{\mathcal{N}_n}_{A^nB^n} \in \mathrm{End}^{\mathfrak{S}_n}(A^{\otimes n} \otimes B^{\otimes n})$, if the encoder/decoder seed $(\mathcal{E}_0,\mathcal{D}_0)$ is symmetric in the sense of~\eqref{eq:symmetric_enc-dec}, then the entire seesaw iteration for $F_c(\mathcal{N}_n, d)$ can be restricted to the symmetric subspace. In particular, the SDPs~\eqref{eq:SDP_Maximal_Recovery_Coefficient} and~\eqref{eq:SDP_Maximal_Preparation_Coefficient} can be restricted to the space $\mathrm{End}^{\mathfrak{S}_n}(R \otimes S^{\otimes n})$, where $S = A$ for $F_E$ and $S = B$ for $F_D$.
\end{lemma}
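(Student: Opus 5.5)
The plan is an induction on the iteration index $i$, with the invariant that both $\Gamma^{\mathcal{E}_i}_{RA^n}$ and $\Gamma^{\mathcal{D}_i^*}_{RB^n}$ lie in the symmetric subspaces~\eqref{eq:symmetric_enc-dec}. The base case $i=0$ is the hypothesis on the seed. For the inductive step there are two half-steps, and each is handled the same way. First I show that the fixed channel entering the SDP has a permutation-invariant Choi matrix. Then I invoke Lemma~\ref{lem:covariant_singlet_fraction}, in the form~\eqref{eq:symmetric_restriction_F_D} or its $A\leftrightarrow B$-swapped analogue, to conclude that an optimizer can be chosen symmetric.

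\textbf{Decoder half-step.} Suppose $\mathcal{E}_i$ is symmetric on its output. Combined with the permutation covariance of $\mathcal{N}_n$, the bullet-point composition rule from Section~\ref{subsec:sdp} shows that $\mathcal{M}_i=\mathcal{N}_n\circ\mathcal{E}_i$ is symmetric on its output. Concretely, by~\eqref{eq:concatenation_choi}, $\Gamma^{\mathcal{M}_i}_{RB^n}=\Tr_{A^n}[(\Gamma^{\mathcal{E}_i}_{RA^n})^{T_{A^n}}\Gamma^{\mathcal{N}_n}_{A^nB^n}]$. Conjugating by $\mathbbm{1}_R\otimes P_{B^n}(\pi)$ and inserting $P_{A^n}(\pi)P_{A^n}(\pi)^\dagger$ inside the trace shows invariance, since $P(\pi)$ is real and therefore commutes with the partial transpose up to relabelling. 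Hence $\Phi^{\mathcal{M}_i}_{RB^n}\in\esn(R\otimes B\n)$. By~\eqref{eq:recovery_as_singlet_fraction} and Lemma~\ref{lem:covariant_singlet_fraction} with $G=\mathfrak{S}_n$ acting on $B^n$, the SDP~\eqref{eq:SDP_Maximal_Recovery_Coefficient} attains its optimum on $\esn(R\otimes B\n)$. Restricting the variable to this subspace therefore loses nothing, and the resulting $\mathcal{D}_{i+1}$ is symmetric.

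\textbf{Encoder half-step.} Suppose $\mathcal{D}_{i+1}$ is symmetric on its input. The same composition argument, now with~\eqref{eq:adjoint_choi}, shows that $\Phi^{(\mathcal{D}_{i+1}\circ\mathcal{N}_n)^*}_{RA^n}\in\esn(R\otimes A\n)$. The objective in~\eqref{eq:SDP_Maximal_Preparation_Coefficient} is the singlet-fraction type SDP with the optimization on the $A^n$ side. The twirling argument of Lemma~\ref{lem:covariant_singlet_fraction} applies verbatim, as remarked after~\eqref{eq:maximal_singlet_fraction_A}. Twirling $\Phi^{\mathcal{E}}_{RA^n}$ by $\mathbbm{1}_R\otimes P_{A^n}(\pi)$ preserves positivity. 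It also preserves the constraint $\Tr_{A^n}[\cdot]=\mathbbm{1}_R/d$, because $\Tr_{A^n}$ is invariant under conjugation by unitaries on $A^n$. The objective is unchanged because the fixed operator is invariant. So a symmetric optimal $\mathcal{E}_{i+1}$ exists.

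\textbf{Main obstacle.} The delicate point is bookkeeping rather than any genuine analytic difficulty. One must check that the transposes and adjoints in~\eqref{eq:concatenation_choi} and~\eqref{eq:adjoint_choi} do not break invariance, which holds because permutation matrices are real, so $P(\pi)^T=P(\pi)^\dagger$. One must also note that the restriction is without loss only in the sense of optimal value. If the solver returns a non-symmetric optimizer, we replace it by its twirl, which is optimal and symmetric. Hence every iterate can be taken in the symmetric subspace, and by induction the whole iteration, including its monotonicity from Lemma~\ref{lemma:fidelity_increase}, takes place in $\esn(R\otimes S\n)$.
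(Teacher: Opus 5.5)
Your proof is correct and follows the paper's argument: induction on the iteration, permutation invariance of the effective channel's Choi matrix (from the covariance of $\mathcal{N}_n$ and the reality of permutation matrices), and then the twirling argument of Lemma~\ref{lem:covariant_singlet_fraction} for the decoder step. The only deviation is in the encoder half-step: the paper invokes the relation $F_E\propto F_D$ of \eqref{eq:link_F_E_F_D} to reuse \eqref{eq:symmetric_restriction_F_D}, whereas you twirl $\Phi^{\mathcal{E}}_{RA^n}$ directly (the $A\leftrightarrow B$-swapped version of the lemma), which is slightly more transparent because it shows explicitly that the twirl preserves the constraint $\Tr_{A^n}[\cdot]=\mathbbm{1}_R/d$ even though the symmetric system is the one being traced out.
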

\begin{proof}
We proceed by induction on the iteration number $i\in \mathbb{N}$. The base case $i = 0$ holds by assumption. Given a generic iteration $i>0$, assume that $\mathcal{E}_i$ is symmetric in the sense of~\eqref{eq:symmetric_enc-dec}. By permutation covariance of $\mathcal{N}_n$, using~\eqref{eq:concatenation_choi}, the Choi matrix of the concatenated channel $\mathcal{M}_i = \mathcal{N}_n \circ \mathcal{E}_i$ satisfies:\begin{equation}
    \Gamma^{\mathcal{M}_i}_{RB^n} \in \mathrm{End}^{\mathfrak{S}_n}(R \otimes
B^{\otimes n}).
\end{equation}
Then, by applying \eqref{eq:symmetric_restriction_F_D}, the optimal solution of~\eqref{eq:SDP_Maximal_Recovery_Coefficient} can be restricted to $\mathrm{End}^{\mathfrak{S}_n}(R \otimes
B^{\otimes n})$, which by~\eqref{eq:adjoint_choi} implies that $\mathcal{D}_{i+1}$ is symmetric. The same reasoning can be applied to the encoder step: $\mathcal{D}_{i+1}$ symmetric and $\mathcal{N}_n$ permutation-covariant imply
$\Gamma^{\mathcal{M}_n'^*}_{RA^n} = \Gamma^{(\mathcal{D}_{i+1} \circ \mathcal{N}_n)^*}_{RA^n} \in
\mathrm{End}^{\mathfrak{S}_n}(R \otimes A^{\otimes n})$. By \eqref{eq:link_F_E_F_D}, we can again apply \eqref{eq:symmetric_restriction_F_D}, to conclude that $\mathcal{E}_{i+1}$ can be chosen symmetric. We have shown that at every iteration $i$, if the encoding/decoding pair $(\mathcal{E}_i, \mathcal{D}_i)$ is symmetric in the sense of~\eqref{eq:symmetric_enc-dec}, then also the optimal $(\mathcal{E}_{i+1}, \mathcal{D}_{i+1})$ can be restricted to lie in the symmetric subspace. 
\end{proof}

Motivated by the lemma, we call the seesaw method for which the original seeds are restricted to lie in the permutation-invariant subspace~\eqref{eq:symmetric_enc-dec} as the \textit{symmetric seesaw method}.

\begin{definition}
\label{def:symmetric-seesaw}
Given a quantum channel $\mathcal{N}_{A \to B}$, a number of channel uses $n \in \mathbb{N}$, an input dimension $d \leq \min\{d_A^n, d_B^n\}$, and a convergence threshold $\delta > 0$, the \textit{symmetric seesaw method} proceeds exactly as in Definition~\ref{def:seesaw}, with the additional requirement that the initial seeds $(\mathcal{E}_0, \mathcal{D}_0)$ satisfy~\eqref{eq:symmetric_enc-dec}. By Lemma~\ref{prop:seesaw_symmetric}, the entire iteration then remains in the permutation-invariant subspace, and the two (primal) SDPs \eqref{eq:SDP_Maximal_Recovery_Coefficient} and \eqref{eq:SDP_Maximal_Preparation_Coefficient} become equivalent to:
\begin{equation}
\begin{aligned}
\label{eq:Symmetric_SDP_Maximal_Recovery_Coefficient}
F_D^S (\mathcal{M}_{A' \to B^n}) \coloneqq \max_{\Phi^{\mathcal{D}_n^*}_{RB^n}} \quad & \Tr\!\big(\Phi_{RB^n}^{\mathcal{M}_n}\, \Phi^{\mathcal{D}_n^*}_{RB^n}\big) \\
\textrm{s.t.} \quad & \Tr_R[\Phi^{\mathcal{D}_n^*}_{RB^n}] = \frac{\mathbbm{1}_{B^n}}{d} \quad \Phi^{\mathcal{D}_n^*}_{RB^n}\geq 0 \quad \Phi^{\mathcal{D}_n^*}_{RB^n} = \mathcal{S}_{B^n}(\Phi^{\mathcal{D}_n^*}_{RB^n})
\end{aligned}
\end{equation}
for $\mathcal{M}_{A' \to B^n} = \mathcal{N}^{\otimes n}_{A^n \to B^n} \circ \mathcal{E}_{A' \to A^n}$, and:
\begin{equation}
\begin{aligned}
\label{eq:Symmetric_SDP_Maximal_Preparation_Coefficient}
F_E^S (\mathcal{M}'_{A^n \to B'}) \coloneqq \max_{\Phi^{\mathcal{E}_n}_{RA^n}} \quad & \Tr\!\big(\Phi_{RA^n}^{\mathcal{M}'^*_n}\, \Phi^{\mathcal{E}_n}_{RA^n}\big) \\
\textrm{s.t.} \quad & \Tr_{A^n}[\Phi^{\mathcal{E}_n}_{RA^n}] = \frac{\mathbbm{1}_R}{d}, \quad \Phi^{\mathcal{E}_n}_{RA^n} \geq 0, \quad \Phi^{\mathcal{E}_n}_{RA^n} = \mathcal{S}_{A^n}(\Phi^{\mathcal{E}_n}_{RA^n} )
\end{aligned}
\end{equation}
for $\mathcal{M}'_{A^n \to B'} = \mathcal{D}_{A^n \to B'} \circ \mathcal{N}^{\otimes n}_{A^n \to B^n} $. Here, $\mathcal{S}_{S^n} (X_{S^n}) \coloneqq \frac{1}{n!} \sum_{\pi \in \mathfrak{S}_n} P_{S^n}(\pi) X_{S^n}P^\dag_{S^n}(\pi)$ denotes the symmetrizer channel. This procedure produces a lower bound $\tilde{F}^S \leq F_c(\mathcal{N}^{\otimes n}, d)$ together with a symmetric encoding--decoding pair $(\tilde{\mathcal{E}}, \tilde{\mathcal{D}})$ that achieves it.
\end{definition}
\begin{remark}
\label{rem:general_group_invariance}
Lemma \ref{prop:seesaw_symmetric} extends to any finite group $G$: if $\mathcal{N}_n$ is $G$-covariant and the seeds are $G$-invariant, then the entire iteration can be restricted to the $G$-invariant subspace, giving rise to a \textit{$G$-invariant seesaw method}\footnote{For example, if $\mathcal{N}_n = \mathcal{N}_1^{\otimes k}\otimes \mathcal{N}_2^{\otimes n-k}$ for some $k \in [n]$, then one can restrict to encoders and decoders that are invariant under the action of the subgroup $G = \mathfrak{S}_k \times \mathfrak{S}_{n-k} \leq \mathfrak{S}_n$. A similar case arises for the channels involved in superactivation \cite[Section 5.2]{Parentin2026}.} with estimate $\tilde{F}^G$. The two restrictions can also be combined when the channel enjoys both symmetries, yielding an estimate $\tilde{F}^{GS}$. 
Since each restriction shrinks the feasible set of encoders and decoders, the optimal values over these restricted sets satisfy:
\begin{equation}
    F_c^{GS}(\mathcal{N}^{\otimes n}, d) \leq F_c^{S}(\mathcal{N}^{\otimes n}, d) \leq F_c(\mathcal{N}^{\otimes n}, d)
\end{equation}
where $F_c^S$ and $F_c^{GS}$ denote the channel fidelity optimized over $\mathfrak{S}_n$- and $(G \times \mathfrak{S}_n)$-invariant codes, respectively, and the inequalities can in general be strict. Up to stochastic deviations, the seesaw estimates $\tilde{F}^{GS} \leq \tilde{F}^S \leq \tilde{F}$ are expected to inherit the same ordering, since each is a lower bound on the corresponding restricted optimum. None of these restrictions is guaranteed to preserve the global optimum $F_c$ (see the discussion in Section~\ref{sec:perm-inv-codes}), and the seesaw method itself is inherently dependent on the random seed. The advantage of these restrictions is computational: \textit{by accepting possible sub-optimality, we can push the number of channel uses $n$ much further} than the unrestricted seesaw. The numerical results (Section~\ref{sec:numerical_examples} and~\cite{Parentin2026}) show that this trade-off is highly favorable in practice.
\end{remark}
The symmetric seesaw method defined in~\ref{def:symmetric-seesaw} provides a lower bound $\tilde{F}^S$ to the optimal fidelity~\eqref{eq:channel_fidelity} over $n$ channel uses of $\mathcal{N}$, where the maximization is restricted to PI codes. Differently from the channel fidelity, which is monotonic in the number of channel uses (cf.~\eqref{eq:monotonicity-channel-fidelity}), the symmetry constraint restricts the structure of admissible codes, and forbids strategies like one that disregards some channel uses (which may be favorable e.g. in the presence of strong noise). Therefore, the sequence of estimated fidelities $\{\tilde{F}^S_n\}_{n  \leq N}$, where $N\in \mathbb{N}$ is the maximum number of channel uses, need not be monotonic. To mitigate this issue in numerical simulations, one may instead define the reported performance as
\begin{equation}
    \tilde{F}^S \coloneqq \max_{n \in [N]} F_e\!\big(\tilde{\mathcal{D}}_n \circ \mathcal{N}^{\otimes n}, \tilde{\mathcal{E}}_n\big),
\end{equation}
where \(N\) is the maximum number of channel uses considered. Since the seesaw method is stochastic, for each \(n \in [N]\) the procedure should be repeated multiple times, and \(\tilde{F}_n\) taken as the maximum fidelity obtained.

\begin{remark}
\label{rem:extendibility_link}
The symmetry conditions~\eqref{eq:symmetric_enc-dec} are equivalent to requiring that the one-copy reduced maps
\[
\mathcal{E}^{(1)} = \mathrm{Tr}_{A^{n-1}} \circ \mathcal{E}_n,
\qquad
\mathcal{D}^{*(1)} = \mathrm{Tr}_{B^{n-1}} \circ \mathcal{D}^*_n
\]
are $n$-extendible \cite{Pankowski2013,Kaur2019}. It is important to distinguish this from the case in which the channel $\mathcal{M}$ is $k$-extendible. If $\mathcal{M}$ is $k$-extendible, then (see~\eqref{eq:Fidelity_recovery_k_extendible}) its maximal fidelity of recovery is upper bounded by\begin{equation}
\label{eq:bound_k_extendible}
    F_D(\mathcal{M}) \leq \frac{k + d - 1}{dk},
\end{equation} reflecting the fact that $k$-extendible channels are of limited use for quantum communication. By contrast, in~\eqref{eq:Symmetric_SDP_Maximal_Recovery_Coefficient} we maximize the fidelity of recovery over $n$-extendible \emph{recovery maps}, i.e., over the same feasible set, but with objective function\begin{equation}
    \Tr\!\big(\Phi^{\mathcal{D}_n^*}_{RB^n}\,\Phi^{\mathcal{M}_n}_{RB^n}\big),
\end{equation}
which is very different from the single-copy quantity\begin{equation}
    \Tr\!\big(\Phi^{\mathcal{D}^{*(1)}}_{RB}\,\Phi^{\mathcal{M}^{(1)}}_{RB}\big),
\end{equation}
which would instead satisfy the upper bound in~\eqref{eq:bound_k_extendible} for $n = k$. In the limit $n \to \infty$, de Finetti theorems for quantum channels~\cite{Berta2019} imply that the one-copy reduced maps $\mathcal{E}^{(1)}$ and $\mathcal{D}^{*(1)}$ become asymptotically entanglement-breaking, consistently with the bound~\eqref{eq:bound_k_extendible} approaching $1/d$ (see Lemma \ref{lem:max_recovery_properties}). However, this asymptotic behavior does \emph{not} constrain the $n$-copy objective, which is the relevant quantity in the symmetric seesaw. For this reason, the quantities studied here are not directly related to $n$-unextendibility measures for quantum channels, such as those considered in~\cite{Kaur2021,Singh2025}, even though they involve optimization over the same feasible sets.
\end{remark}

\subsubsection{Efficient Solution of the Symmetric Seesaw Method} Using the tools of Section \ref{sec:perm-inv}, we now show that the symmetric seesaw method has an efficient (poly($n$)) solution. In the following, we'll always work with the (unnormalized) Choi matrices of the involved channels, better suited for expressing the intermediate concatenation steps \eqref{eq:concatenation_choi}, and we scale the SDPs \eqref{eq:Symmetric_SDP_Maximal_Recovery_Coefficient} and \eqref{eq:Symmetric_SDP_Maximal_Preparation_Coefficient} accordingly. 
The Choi matrices of the channel, encoder, and decoder involved in the symmetric seesaw method are parametrized in the orbit basis as:
\begin{align}
 \label{eq:choi_channel}
    \Gamma^{\mathcal{N}^{\otimes n}}_{A^n B^n} &= \sum_{s= 1}^{m_{AB}}  c_{s}^{\mathcal{N}} C^{AB}_{s}, \\
    \label{eq:choi_encoder}
        \Gamma^{\mathcal{E}_{n}}_{R A^n}& =  \sum_{l,k= 1}^{d_{R}} \sum_{r= 1}^{m_{A}}  c_{l,k,r}^{\mathcal{E}} \ket{l}\bra{k} \otimes C^{A}_{r},\\
            \label{eq:choi_decoder}
        \Gamma^{\mathcal{D}_{n}}_{B^nR} &= \sum_{t= 1}^{m_{B}} \sum_{k,l= 1}^{d_{R}}   c_{t,k,l}^{\mathcal{D}} C^{B}_{t}\otimes \ket{k}\bra{l},
\end{align}
where:
\begin{equation}
\label{eq:dimensions_perm-inv}
    m_A = \binom{n + d_A^2 - 1}{n} ,\quad m_B =\binom{n + d_B^2 - 1}{n}, \quad m_{AB} =\binom{n + d_A^2 d_B^2 - 1}{n}.
\end{equation}
During the iteration, we'll also need to build the 'intermediate' channels $\mathcal{M}_n = \mathcal{N}^{\otimes n} \circ \mathcal{E}_n$, s.t. $\Gamma^{\mathcal{M}_n}_{R B^n} \in \mathrm{End}^{\mathfrak{S}_n}(R \otimes B^{\otimes n})$ and  $\mathcal{M}'_n =\mathcal{D}_n  \circ  \mathcal{N}^{\otimes n} $ s.t. $ \Gamma^{\mathcal{M}'_{n}}_{A^nR}\in \mathrm{End}^{\mathfrak{S}_n}(A^{\otimes n} \otimes R)$, respectively. To fix the notation, we write their parametrization in the orbit basis as:
\begin{align}
 \label{eq:choi_M_Bob_pov}
    \Gamma^{\mathcal{M}_n}_{R B^n} &= \sum_{l,k= 1}^{d_{R}} \sum_{t= 1}^{m_{B}}  c_{l,k,t}^{\mathcal{M}} \ket{l}\bra{k} \otimes C^{B}_{t} ,\\
    \label{eq:choi_M_Alice_pov}
        \Gamma^{\mathcal{M}'_{n}}_{A^nR}& = \sum_{r= 1}^{m_{A}} \sum_{k,l= 1}^{d_{R}}   c_{r,k,l}^{\mathcal{M}'} C^{A}_{r}\otimes \ket{k}\bra{l}.
\end{align}
We can now state the following theorem, which proves that the symmetric seesaw method can be solved efficiently, i.e. using a number of variables and constrains of polynomial size in $n$. The proof will lay out all the required steps to efficiently implement the algorithm using the tools of Section \ref{sec:perm-inv}, and in Section \ref{sec:Python_package} we briefly comment on the Python implementation we provide of this method.
\begin{theorem}
\label{thm:seesaw_polytime}
    Let $d \in \mathbb{N}$ and let $\mathcal{N}_{A \to B}$ be a quantum channel with input dimension $d_A$ and output dimension $d_B$. Given $n \in \mathbb{N}$, consider the $n$-shot entanglement transmission setting of Figure \ref{fig:quantumcommunicationquantumchannel}. Then, for a finite number of iterations $i\in \mathbbm{N}$ the symmetric seesaw method gives a $\mathrm{poly}(n)$-computable lower bound on the channel fidelity. 
\end{theorem}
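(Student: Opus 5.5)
The plan is to show that every operation performed in one iteration of the symmetric seesaw (Definition~\ref{def:symmetric-seesaw}) can be carried out with $\mathrm{poly}(n)$ variables, constraints and arithmetic, working throughout with the orbit-basis parametrizations \eqref{eq:choi_channel}--\eqref{eq:choi_M_Alice_pov}. Correctness as a lower bound is already settled. Lemma~\ref{prop:seesaw_symmetric} guarantees that the iterates remain symmetric. Each output $(\tilde{\mathcal{E}},\tilde{\mathcal{D}})$ is a feasible code, so $\tilde F^S = F_e(\tilde{\mathcal D}\circ\mathcal N^{\otimes n}\circ\tilde{\mathcal E})\le F_c(\mathcal N^{\otimes n},d)$. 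Hence only complexity needs to be argued, and for a finite number of iterations it suffices to bound the cost of a single half-step plus the final evaluation.

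\textbf{Step 1 (inputs).} The coefficients $c_s^{\mathcal N}$ of $(\Gamma^{\mathcal N})^{\otimes n}$ are computed via \eqref{eq:coeff_tensor_product} from one representative per orbit. This gives $m_{AB}=O(n^{d_A^2d_B^2-1})$ numbers by \eqref{eq:dimensions_perm-inv}. For every $s$ we precompute the marginals $r(s),t(s)$ and the constants $\kappa_s^A,\kappa_s^B$ of \eqref{eq:kappa_A} and \eqref{eq:kappa_B}, each in $O(1)$ operations for fixed dimensions. We also precompute the block images $\psi(C_r^A)$ and $\psi(C_t^B)$ and the Gram matrices $G_\lambda$, all in $\mathrm{poly}(n)$ time by Theorem~\ref{thm:Efficient_construction_Isomorphism}.

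\textbf{Step 2 (decoder half-step).} Given the symmetric encoder coefficients $c^{\mathcal E}_{l,k,r}$, the coefficients of $\Gamma^{\mathcal M_n}_{RB^n}$ follow from \eqref{eq:concatenation_coefficients_BobPOV}, i.e.\ Proposition~\ref{prop:serial_concatenation} applied on Alice's side, in $O(d_R^2 m_{AB})$ operations. The SDP \eqref{eq:Symmetric_SDP_Maximal_Recovery_Coefficient} is then written in the variables $c^{\mathcal D^*}_{k,l,t}$. The objective is a Hilbert--Schmidt inner product, evaluated via Lemma~\ref{lemma:trace_HS_orbit}. The constraint $\Tr_R=\mathbbm 1/d$ becomes linear equations on the coefficients, using \eqref{eq:identity_orbit}, or blockwise \eqref{eq:choi_constraint_block_CPU}. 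Positivity is imposed through Corollary~\ref{cor:Efficient-PSD-Constraints} with the reference system $R$, as in \eqref{eq:isomorphism_with_reference}, giving $O(n^{d_B})$ blocks of size $d_R\,\mathrm{poly}(n)$. The result is an SDP of polynomial size, solvable to any fixed accuracy in polynomial time by interior-point methods.

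\textbf{Step 3 (encoder half-step).} Symmetrically, $\Gamma^{\mathcal M'_n}$ is obtained via \eqref{eq:concatenation_coefficients_AlicePOV}. Its adjoint Choi matrix is obtained by the transpose rule \eqref{eq:adjoint_choi}, which in the orbit basis is the coefficient reshuffle of Lemma~\ref{lem:transpose_orbit_basis}. The SDP \eqref{eq:Symmetric_SDP_Maximal_Preparation_Coefficient} is posed in the same way, with the trace-preservation constraint in the coupled form \eqref{eq:choi_constraint_block_CPTP}, i.e.\ a $d_R\times d_R$ equation. The final fidelity is one more concatenation followed by an inner product with $\Gamma^d$.

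\textbf{Main obstacle.} The delicate point is keeping everything polynomial without ever materialising $d^n$-sized objects. This requires all transposes, partial traces and concatenations to be expressed purely through count matrices, and the PSD constraints to be linked consistently to the orbit coefficients. That linkage is exactly the change-of-basis map of Theorem~\ref{thm:Efficient_construction_Isomorphism}, whose entries come from the polynomial coefficients $f_{\tau,\gamma}$.

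A secondary subtlety arises if the channel power iteration is used instead of a generic SDP solver. Then the products in \eqref{eq:power_step} and the inverse square root in \eqref{eq:normalization_choi} must be taken inside the algebra. For this I would switch to the $*$-isomorphism $\widetilde\psi$ of \eqref{eq:full_isomorphism_with_reference}, under which products and functional calculus act blockwise on $\mathrm{poly}(n)$-sized matrices. The partial trace $\Tr_R$ also stays blockwise there, so each power step costs $\mathrm{poly}(n)$. Multiplying the per-iteration cost by the finite number of iterations then gives the claim.
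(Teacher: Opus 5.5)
Your proposal is correct and follows essentially the same route as the paper's constructive proof. Both arguments parametrize channel, encoder and decoder in the orbit basis, obtain the concatenations from Proposition~\ref{prop:serial_concatenation}, take adjoints by coefficient reshuffles, and impose the PSD constraints through the block map of Theorem~\ref{thm:Efficient_construction_Isomorphism}, so each half-step is a polynomial-size SDP. The paper additionally makes explicit three points you leave implicit: how symmetric seeds are generated in polynomial time (sampled blockwise, then mapped back with $\widetilde{\psi}^{-1}$), the transpose reshuffle that converts the $\mathcal{D}^*$ coefficients returned by the SDP into $\mathcal{D}$ coefficients before the encoder step, and strict feasibility of the SDPs, which is needed for the interior-point complexity claim.
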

\begin{proof}
The proof is constructive and shows how to efficiently implement an entire iteration cycle of the symmetric seesaw (see Definition \ref{def:seesaw}) phrasing all steps in the orbit basis (see Section \ref{sec:orbit_basis_operations}), where the orbit basis coefficients of encoder \eqref{eq:choi_encoder} and decoder \eqref{eq:choi_decoder} play the role of optimizing variables in the SDPs.

\emph{Initialization $(i = 0)$}. 
The coefficients $\{c_{s}^{\mathcal{N}}\}_{s\in [m_{AB}]}$ of the i.i.d.\ channel $\mathcal{N}^{\otimes n}$ in \eqref{eq:choi_channel} can be directly obtained by applying~\eqref{eq:tensor_product_orbit} to the channel's (single-copy) Choi matrix $\Gamma^{\mathcal{N}}_{AB}$. The coefficients of the symmetric seeds $\{c_{k,l,r}^{\mathcal{E}_0} \}_{r\in [m_{A}], k,l \in [d_R]}$ and $\{c_{t,k,l}^{\mathcal{D}_0} \}_{t\in [m_{B}], k,l \in [d_R]}$ are generated by sampling random CPTP maps directly in the block basis. Concretely, using the full isomorphism~\eqref{eq:full_isomorphism_with_reference},  for each $\lambda \in \mathrm{Par}(d_A, n)$ we sample the block $[\widetilde{\psi}_{RA^n}(\Gamma^{\mathcal{E}_0}_{RA^n})]_\lambda \in \mathbb{C}^{(d_R m_\lambda) \times (d_R m_\lambda)}$ separately from an ensemble of Choi matrices of CPTP maps (see e.g. \cite{Bruzda2009}), weighted across blocks by a probability vector drawn from the uniform Dirichlet distribution on $|\mathrm{Par}(d_A, n)|$ entries. The trace-preserving constraint~\eqref{eq:choi_constraint_block_CPTP} is then enforced by rescaling each block by $f_\lambda^{-1}$, with $f_\lambda$ the Specht dimension. The orbit-basis coefficients are then recovered by applying the inverse isomorphism $\widetilde{\psi}_{RA^n}^{-1}$. This gives orbit basis coefficients of a valid Choi matrix in $\esn(R \otimes A^n) \subset \mathcal{L}(R \otimes A^n)$. The decoder seed $\mathcal{D}^*_0$ is built analogously in the block basis of $\mathrm{End}^{\mathfrak{S}_n}(R \otimes B^{\otimes n})$ (enforcing the unitality constraint \eqref{eq:choi_constraint_block_CPU}) and converted to Choi-matrix coefficients via the adjoint relation of Lemma~\ref{lem:transpose_orbit_basis}.

\emph{Iteration $(i > 0)$}. 
We now show how to efficiently implement an entire round of the symmetric seesaw iteration. By Lemma \ref{lemma:seesaw_symmetric}, the entire iteration can be carried out in the permutation-invariant subspace. Given the coefficients $\{c_{s}^{\mathcal{N}}\}_{s\in [m_{AB}]}$ and  $\{c_{k,l,r}^{\mathcal{E}_i} \}_{r\in [m_{A}], k,l \in [d_R]}$, we can efficiently compute the coefficients $\{c_{k,l,t}^{\mathcal{M}_i} \}_{t\in [m_{B}], k,l \in [d_R]}\}$ in \eqref{eq:choi_M_Bob_pov} using Proposition \ref{prop:serial_concatenation} (specifically, Equation \eqref{eq:concatenation_coefficients_BobPOV}). Then, in order to efficiently solve the SDP \eqref{eq:Symmetric_SDP_Maximal_Recovery_Coefficient}, we can use item 1 of Lemma \ref{lemma:trace_HS_orbit} to express the objective function and the map $\psi_{RB^n}$ introduced in \eqref{eq:isomorphism_with_reference} to rewrite the PSD constraint and the partial trace constraint. This takes the $F_D^S(\mathcal{M}_i)$ to the form:
\begin{equation}
\label{eq:Symmetric_SDP_Maximal_Recovery_Coefficient_newbasis_isomorphism}
\begin{aligned}
    F^S_{\mathrm{D}}(\mathcal{M}_n) = \max_{c_{k,l,t}^{\mathcal{D}^*} \in \mathbb{C}} \quad & \frac{1}{d^2} \sum_{k,l=1}^{d_R} \sum_{t=1}^{m_B} \overline{c_{k,l,t}^{\mathcal{D}^*}} c_{k,l,t}^{\mathcal{M}} \|C_t^B\|_{\mathrm{HS}}^2 \\
    \text{subject to:} \quad 
    & \sum_{k,l=1}^{d_R} \sum_{t=1}^{m_B} c_{k,l,t}^{\mathcal{D}^*} \ket{k}\bra{l}_R \otimes U_\lambda^T C_t^{B} U_\lambda \geq 0  \ \ \ \ &\forall \lambda \in \text{Par}(d_B,n),\\
    & \sum_{k=1}^{d_R} \sum_{t=1}^{m_B} c_{k,k,t}^{\mathcal{D}^*} U_\lambda^T C_t^{B} U_\lambda= G_\lambda \ &\forall \lambda \in \text{Par}(d_B,n),
\end{aligned}
\end{equation}
where we used $[\psi_{B^n}(\mathbbm{1}_{B^n})]_\lambda = G_\lambda$ (see~\eqref{eq:gram_matrix_as_action_on_identity}) and the $U_λ$ are the ones implementing $\psi_{RB^n}$ from \eqref{eq:isomorphism_with_reference}. Since this $\psi_{RB^n}$ is a bijective linear map that preserves positive semidefiniteness, the two SDPs~\eqref{eq:Symmetric_SDP_Maximal_Recovery_Coefficient} and~\eqref{eq:Symmetric_SDP_Maximal_Recovery_Coefficient_newbasis_isomorphism} have the same feasible set and the same optimal value. The products $U_\lambda^T C_t^{B} U_\lambda $ can be implemented efficiently by Theorem \ref{thm:Efficient_construction_Isomorphism}. The SDP \eqref{eq:Symmetric_SDP_Maximal_Recovery_Coefficient_newbasis_isomorphism} has a number of variables equal to $O(d_R^2(n+1)^{d_B^2})$, a number of constraints equal to $O((n+1)^{d_B})$ of size equal to $O(d_R(n+1)^{d_B(d_B-1)/2})$, and is strictly feasible, and so it can be solved in time $\mathrm{poly}(n)$ (see e.g. \cite[Section 1.9]{deKlerk2002}, and references therein). This gives $F_D(\mathcal{M}_i)$ and the coefficients of the adjoint of the symmetric decoder map $\mathcal{D}_{i+1}^*$. To obtain the coefficients \eqref{eq:choi_decoder}, we can simply use Lemma \ref{eq:adjoint_choi} to get\begin{equation}
c_{t,k,l}^{\mathcal{D}_{i+1}} = c_{l,k,t^T}^{\mathcal{D}_{i+1}^*} \quad \forall t \in [m_B], l,k \in [d_R],
    \end{equation}
where $t^T$ is defined as in \eqref{eq:count-orbit-transpose-relation}. Then, we can efficiently compute the coefficients $\{c_{r,k,l}^{\mathcal{M}'_i} \}_{r\in [m_{A}], k,l \in [d_R]}$ in \eqref{eq:choi_M_Alice_pov} using Proposition \ref{prop:serial_concatenation} (specifically, Equation \eqref{eq:concatenation_coefficients_AlicePOV}). Before solving the SDP \eqref{eq:Symmetric_SDP_Maximal_Preparation_Coefficient}, we need to take the adjoint $\mathcal{M'}_i^*$, and we use again Lemma \ref{eq:adjoint_choi} to get the corresponding coefficients \begin{equation}
        c_{l,k,r}^{\mathcal{M}'^*} = c_{r^T,k,l}^{\mathcal{M}'}  \quad \forall r \in [m_A], l,k \in [d_R].
    \end{equation}
Using the positive map $\psi_{RA^n}$ introduced in \eqref{eq:isomorphism_with_reference}, we rewrite the SDP in \eqref{eq:Symmetric_SDP_Maximal_Preparation_Coefficient} as
\begin{equation}
\label{eq:Symmetric_SDP_Maximal_Preparation_Coefficient_newbasis_isomorphism}
\begin{aligned}
    F^S_{\mathrm{E}}(\mathcal{M}'_n) = \max_{c_{k,l,r}^{\mathcal{E}} \in \mathbb{C}} \quad & \frac{1}{d^2} \sum_{k,l=1}^{d_R} \sum_{r=1}^{m_A} \overline{c_{k,l,r}^{\mathcal{M}'^*}} c_{k,l,r}^{\mathcal{E}} \|C_r^A\|_{\mathrm{HS}}^2 \\
    \text{subject to:} \quad 
    &  \sum_{k,l=1}^{d_R} \sum_{r=1}^{m_A} c_{k,l,r}^{\mathcal{E}} \ket{k}\bra{l}_R \otimes U_\lambda^T C_{r}^{A}U_\lambda \geq 0 \ \ \ \ &\forall \lambda \in \text{Par}(d_A,n), \\
    &  \sum_{r=1}^{m_A} c_{k,l,r}^{\mathcal{E}} \Tr[C_r^A] \ket{k}\bra{l} =\mathbbm{1}_A.
\end{aligned}
\end{equation}
Note that to efficiently express the partial trace constraint in \eqref{eq:Symmetric_SDP_Maximal_Preparation_Coefficient_newbasis_isomorphism} we use item 2 of Lemma \ref{lemma:trace_HS_orbit}. Similarly to \eqref{eq:Symmetric_SDP_Maximal_Recovery_Coefficient_newbasis_isomorphism}, this SDP has variables and constraints in number and in size polynomial in $n$, so it admits a $\mathrm{poly}(n)$ solution. This gives $F_E(\mathcal{M}'_i)$ and the coefficients of the symmetric encoder map $\mathcal{E}_{i+1}$.
This completes one cycle of the symmetric seesaw iteration. If \begin{equation}
    F_{\mathrm{E}}(\mathcal{M}_i') - F_{\mathrm{D}}(\mathcal{M}_i) < \delta,
\end{equation}
the coefficients of the encoder and decoder's Choi matrix in the orbit basis are returned together with $\tilde{F}^S \equiv F_{\mathrm{E}}(\mathcal{M}_i')$, otherwise, the iteration continues. Since each round consists of a sequence of $\mathrm{poly}(n)$ operations, for all finite $i \in \mathbb{N}$ the symmetric seesaw method gives a $\mathrm{poly}(n)$-computable lower bound on the channel fidelity.
\end{proof}

In practice, since the symmetric seesaw method solves $F_D$ and $F_E$ repeatedly for varying input channels, one can precompute the change-of-basis map from orbits to blocks (determined by the \eqref{eq:complete_matrix_after_isomorphism}) and the Gram matrices $G_\lambda$, and then solve efficiently \eqref{eq:Symmetric_SDP_Maximal_Recovery_Coefficient_newbasis_isomorphism} and \eqref{eq:Symmetric_SDP_Maximal_Preparation_Coefficient_newbasis_isomorphism} with a fixed overhead (polynomial in $n$), that consists of a matrix multiplication.

\subsubsection{Block Basis and Symmetric Power Method}
\label{sec:symmetric_power_method}
While Theorem (\ref{thm:seesaw_polytime}) serves as theoretical validation of the symmetric seesaw method, it is generally much more convenient to solve the optimization problems \eqref{eq:Symmetric_SDP_Maximal_Recovery_Coefficient} and \eqref{eq:Symmetric_SDP_Maximal_Preparation_Coefficient} directly in the block basis, for the reasons explained in Section \ref{sec:efficient-operations-in-block-basis}. In this application, this becomes mandatory if we want to use the symmetric version of the power iteration method (Definition \ref{def:power_method}). 

The reduced SDPs~\eqref{eq:Symmetric_SDP_Maximal_Recovery_Coefficient_newbasis_isomorphism} and~\eqref{eq:Symmetric_SDP_Maximal_Preparation_Coefficient_newbasis_isomorphism} use the orbit-basis coefficients $\{c_{k,l,t}^{\mathcal{D}^*}\}$ and $\{c_{k,l,r}^{\mathcal{E}}\}$ as variables, with the PSD constraint expressed via the blocks $[C_t^B]_\lambda = U_\lambda^T C_t^B U_\lambda$ and $[C_r^A]_\lambda = U_\lambda^T C_r^A U_\lambda$. To phrase the SDPs entirely in the block basis we need the full $*$-isomorphism~\eqref{eq:full_isomorphism_with_reference} rather than the PSD-preserving map \eqref{eq:isomorphism_with_reference}.

Given an operator $X_{RS^n} \in \mathrm{End}^{\mathcal{S}_n}(R\otimes S^{\otimes n})$, with
orbit coefficients $\{x_{k,l,r}\}$ (as the ones involved in the symmetric seesaw method), we will denote the $\lambda$-block, for $\lambda \in \mathrm{Par}(d_S, n)$, as\begin{equation}
\widetilde{[X]}_\lambda \equiv [\widetilde{\psi}_{RS^n}(X_{RS^n})]_\lambda= (\mathbbm{1}_R \otimes R_\lambda^T)\,
 [\psi_{RS^n}(X)]_\lambda (\mathbbm{1}_R \otimes R_\lambda).
\end{equation} 
Every $\lambda-$block $\widetilde{[X]}_\lambda$ has dimension $(d_Rm_\lambda , d_Rm_\lambda)$, and can be expressed as:\begin{equation}
    \widetilde{[X]}_\lambda = \sum_{k,l=1}^{d_R} \sum_{r=1}^{m_S}x_{k,l,r} \ket{k}\!\bra{l}\otimes  \widetilde{[C_r^S]}_\lambda \eqqcolon \sum_{k,l=1}^{d_R} \ket{k}\!\bra{l}\otimes \widetilde{[X]}_\lambda^{(k,l)},
\end{equation}
where we denoted $\widetilde{[X]}_\lambda^{(k,l)}\equiv \sum_{r=1}^{m_S} x_{k,l,r}\, \widetilde{[C_r^S]}_\lambda$.
We can then use all results of Section \ref{sec:efficient-operations-in-block-basis} to rewrite the SDPs \eqref{eq:Symmetric_SDP_Maximal_Recovery_Coefficient_newbasis_isomorphism} and~\eqref{eq:Symmetric_SDP_Maximal_Preparation_Coefficient_newbasis_isomorphism} in the block basis. Specifically, we use \eqref{eq:inner_product_HS_block} to rewrite the objective functions, \eqref{eq:choi_constraint_block_CPU} to express the unitality constraint for $F_D$, and \eqref{eq:choi_constraint_block_CPTP} to express the trace-preservation constraint for $F_E$. The resulting SDPs are:
\begin{equation}
\label{eq:block_SDP_FD}
\begin{aligned}
     F^S_{\mathrm{D}}(\mathcal{M}_n) = \max_{\{
    \widetilde{[\Gamma^{\mathcal{D}_n^*}]}_\lambda\}: \   \widetilde{[\Gamma^{\mathcal{D}_n^*}]}_\lambda \in \mathbb{C}^{d_R m_\lambda \times d_R m_\lambda } } \quad
    & \frac{1}{d^2} \sum_\lambda f_\lambda\,
    \mathrm{Tr}\!\left[
    \widetilde{[\Gamma^{\mathcal{D}_n^*}]}_\lambda \cdot
    \widetilde{[\Gamma^{\mathcal{M}_n}]}_\lambda\right] \\
    \text{s.t.} \quad
    & \widetilde{[\Gamma^{\mathcal{D}_n^*}]}_\lambda \geq 0
    \quad \forall\, \lambda \in \mathrm{Par}(d_B, n) \\
    & \sum_{k=1}^{d_R}
    \widetilde{[\Gamma^{\mathcal{D}_n^*}]}_\lambda^{(k,k)}
    = \mathbbm{1}_{m_\lambda}
    \quad \forall\, \lambda \in \mathrm{Par}(d_B, n),
\end{aligned}
\end{equation}
\begin{equation}
\label{eq:block_SDP_FE}
\begin{aligned}
 F^S_{\mathrm{E}}(\mathcal{M}'_n) = \max_{\{
    \widetilde{[\Gamma^{\mathcal{E}_n}]}_\lambda\}: \   \widetilde{[\Gamma^{\mathcal{E}_n}]}_\lambda \in \mathbb{C}^{d_R m_\lambda \times d_R m_\lambda } }\quad
    & \frac{1}{d^2} \sum_\lambda f_\lambda\,
    \mathrm{Tr}\!\left[
    \widetilde{[\Gamma^{\mathcal{E}_n}]}_\lambda \cdot
    \widetilde{[\Gamma^{\mathcal{M}'^*_n}]}_\lambda\right] \\
    \text{s.t.} \quad
    & \widetilde{[\Gamma^{\mathcal{E}_n}]}_\lambda \geq 0
    \quad \forall\, \lambda \in \mathrm{Par}(d_A, n) \\
    & \sum_\lambda f_\lambda\,
    \mathrm{Tr}_{V_\lambda}\!\left(
    \widetilde{[\Gamma^{\mathcal{E}_n}]}_\lambda\right) =\mathbbm{1}_R.
\end{aligned}
\end{equation}

The variables in these SDPs are the block matrices $\widetilde{[\Gamma^{\mathcal{D}^*}]}_\lambda$ (resp.\ $\widetilde{[\Gamma^{\mathcal{E}}]}_\lambda$), each of size $(d_R m_\lambda, d_R m_\lambda)$, rather than the complex orbit-basis coefficients $\{c_{k,l,t}^{\mathcal{D}^*}\}$ (resp.\ $\{c_{k,l,r}^{\mathcal{E}}\}$). The total number of variables is the same, but all constraints and objective functions decouple into independent blocks, giving a significant computational advantage over the orbit-basis SDPs~\eqref{eq:Symmetric_SDP_Maximal_Recovery_Coefficient_newbasis_isomorphism} and~\eqref{eq:Symmetric_SDP_Maximal_Preparation_Coefficient_newbasis_isomorphism}. Therefore, in applications the block-basis implementation of the SDPs in \eqref{eq:block_SDP_FD} and \eqref{eq:block_SDP_FE} is preferable. Since the intermediate steps (i.e., concatenation and adjoint) are performed in the orbit basis, we need to move back and forth from the block to the orbit-basis via~\eqref{eq:full_isomorphism_with_reference} and its inverse. These interconversions can be implemented with a fixed overhead by precomputing the change-of-basis matrices prior to the iteration.

Since $\widetilde{\psi}$ is a $*$-algebra isomorphism, the channel power iteration of Definition~\ref{def:power_method} translates directly to block-wise operations, thus providing an alternative to SDP solvers for computing \eqref{eq:block_SDP_FD} and \eqref{eq:block_SDP_FE}. We describe this for the decoder case~\eqref{eq:block_SDP_FD}; the encoder case~\eqref{eq:block_SDP_FE} is analogous, up to the appropriate change in the normalization constraint.

\begin{definition}
\label{def:symmetric_power_method_D}
Let $\mathcal{M}_{A' \to B^n}$ be a symmetric channel with $\Gamma^{\mathcal{M}}_{RB^n} \in \mathrm{End}^{\mathfrak{S}_n}(R \otimes B^{\otimes n})$, $\mathcal{D}_0 \in \CPTP(B^n \to B')$ a symmetric initial decoder, and $\delta_p > 0$ a convergence threshold. The symmetric channel power iteration produces an estimate of $F^S_{\mathrm{D}}(\mathcal{M})$ and a symmetric decoder $\tilde{\mathcal{D}}_{j+1}$ that achieves it via the following iterative algorithm. Set $j = 0$ and repeat:
\begin{enumerate}
    \item \textbf{Sandwich.} For each $\lambda \in \mathrm{Par}(d_B, n)$, compute the unnormalized $\lambda$-block:
    \begin{equation}
    \label{eq:power_step_block}
       \widetilde{[X^{(j)}]}_\lambda \coloneqq \widetilde{[\Gamma^{\mathcal{M}}]}_\lambda \cdot \widetilde{[\Gamma^{\mathcal{D}^*_j}]}_\lambda \cdot \widetilde{[\Gamma^{\mathcal{M}}]}_\lambda.
    \end{equation}
    \item \textbf{Normalize.} For each $\lambda$, compute $\widetilde{[X_R^{(j)}]}_\lambda \equiv \sum_{k=1}^{d_R} \widetilde{[X^{(j)}]}_\lambda^{(k,k)} \in \mathbb{C}^{m_\lambda \times m_\lambda}$ and obtain the $\lambda$-block of the normalized decoder $\mathcal{D}^*_{j+1}$ by applying $\bigl(\widetilde{[X_R^{(j)}]}_\lambda\bigr)^{-1/2}$ (pseudoinverse), i.e.:
    \begin{equation}
    \label{eq:normalization_choi_block}
        \widetilde{[\Gamma^{\mathcal{D}^*_{j+1}}]}_\lambda \coloneqq \bigl(\mathbbm{1}_{d_R} \otimes \widetilde{[X_R^{(j)}]}_\lambda^{-1/2}\bigr)\, \widetilde{[X^{(j)}]}_\lambda \, \bigl(\mathbbm{1}_{d_R} \otimes \widetilde{[X_R^{(j)}]}_\lambda^{-1/2}\bigr).
    \end{equation}
    This enforces the unitality constraint~\eqref{eq:choi_constraint_block_CPU} blockwise: $\sum_{k=1}^{d_R}\widetilde{[\Gamma^{\mathcal{D}^*_{j+1}}]}_\lambda^{(k,k)} = \mathbbm{1}_{m_\lambda}$ for each $\lambda \in \mathrm{Par}(d_B, n)$.
    \item \textbf{Convergence check.} Compute the entanglement fidelity in block basis via~\eqref{eq:inner_product_HS_block}:
    \begin{equation}
        F_e^{(j+1)} \coloneqq F_e(\mathcal{D}_{j+1} \circ \mathcal{M}) = \frac{1}{d^2} \sum_{\lambda \in \mathrm{Par}(d_B, n)} f_\lambda\, \Tr\!\left[\widetilde{[\Gamma^{\mathcal{M}}]}_\lambda \cdot \widetilde{[\Gamma^{\mathcal{D}^*_{j+1}}]}_\lambda\right].
    \end{equation}
    If $F_e^{(j+1)} - F_e^{(j)} < \delta_p$, terminate. Otherwise, set $j \leftarrow j+1$ and repeat.
\end{enumerate}
Return $\tilde{\mathcal{D}} \coloneqq \mathcal{D}_{j+1}$ and $\tilde{F}^S \coloneqq F_e^{(j+1)}$.
\end{definition}

The same algorithm can be applied to estimate $F^S_E$, with the only difference in the normalization step. The trace-preservation constraint~\eqref{eq:choi_constraint_block_CPTP} couples all blocks through the Specht dimensions $f_\lambda$, so the local normalization~\eqref{eq:normalization_choi_block} must be replaced by sandwiching each block with $\bigl(T^{(j)}\bigr)^{-1/2}$ on the $R$ system, where:
\begin{equation}
\label{eq:encoder_global_trace}
    T^{(j)} \coloneqq \sum_{\lambda \in \mathrm{Par}(d_A,n)} f_\lambda\, \mathrm{Tr}_{V_\lambda}\!\bigl[\widetilde{[X^{(j)}]}_\lambda\bigr] \in \mathbb{C}^{d_R \times d_R}
\end{equation}
is a global $d_R \times d_R$ matrix obtained by aggregating partial traces across all blocks. The normalization step for the encoder thus reads:
\begin{equation}
   \widetilde{[\Gamma^{\mathcal{E}_{j+1}}]}_\lambda \coloneqq \bigl((T^{(j)})^{-1/2} \otimes \mathbbm{1}_{m_\lambda}\bigr)\, \widetilde{[X^{(j)}]}_\lambda \, \bigl((T^{(j)})^{-1/2} \otimes \mathbbm{1}_{m_\lambda}\bigr),
\end{equation}
which enforces $\sum_{\lambda} f_\lambda\, \mathrm{Tr}_{V_\lambda}\!\bigl[\widetilde{[\Gamma^{\mathcal{E}_{j+1}}]}_\lambda\bigr] = \mathbbm{1}_R$, i.e.~\eqref{eq:choi_constraint_block_CPTP}.

Each block has size at most $(d_R m_\lambda) \times (d_R m_\lambda)$, with $m_\lambda = O((n+1)^{d_S(d_S-1)/2})$ by~\eqref{eq:bound_SSYTs}, so the matrix multiplications in~\eqref{eq:power_step_block} and~\eqref{eq:normalization_choi_block} all involve polynomial-size dense matrices. 
\begin{remark}
\label{rem:parallelization}
The block decomposition makes the symmetric power iteration particularly well suited for GPU acceleration: since the blocks are independent (i.e., the computation is embarrassingly parallel for $F_{\mathrm{D}}$, and nearly so for $F_{\mathrm{E}}$ up to the global aggregation~\eqref{eq:encoder_global_trace}), all loops over $\lambda$ can be parallelized (e.g., using the \texttt{CuPy} package \cite{CuPy2023}). In practice, this makes the power iteration significantly faster than any SDP solver for large $n$.
\end{remark}
For these reasons, the symmetric seesaw method combined with power method described above is the best solution in practice for our examples; in Section \ref{sec:numerical_examples} we apply it to the qubit depolarizing and amplitude-damping channels for up to $n = 20$ channel uses, giving the tightest lower bounds on their channel fidelity in a wide parameter range.  Moreover, in \cite{Parentin2026}, a variant of the method was used to establish for the first time non-asymptotic superactivation of quantum capacity. In the next section we this variant in full generality, extending the symmetric seesaw to the case of flagged channels.

\subsection{Seesaw Method for Flagged Channels}
\label{sec:modified_seesaw_flagged}

We now extend the symmetric seesaw method of Section~\ref{sec:symmetric_seesaw} to channels of \emph{flagged} form. Given $\ell$ flag values, and a corresponding output algebra
\begin{equation}
    \A_B = \bigoplus_{j = 1}^{\ell} \mathcal{L}(\HS_{B^{(j)}}),
\end{equation}
a flagged channel $\mathcal{N}_{A\to \A_B}$ acts as
\begin{equation}
\label{eq:flagged_channel_seesaw}
    {\mathcal{N}}_{A\to \A_B}(\rho) = \sum_{i=1}^\ell p_i\, \ket{i}\!\bra{i} \otimes \mathcal{N}^i_{A\to B^{(i)}}(\rho),
\end{equation}
with $\{\mathcal{N}^i\}_{i\in [\ell]} \subset \CPTP(A\to B^{(i)})$ and probability distribution $\{p_i\}_{i\in[\ell]}$. Hence, the Choi matrix $\Gamma^{\mathcal{N}}_{A B}$ of $\mathcal{N}$ is an element of $\mathcal{L}(\HS_A) \otimes \A_B$, and the Choi matrix $\Gamma^{\mathcal{N}\n}_{A^n B^n}$ of the $n$-fold tensor product channel $\mathcal{N}^{\otimes n}$ is an element of $
\esn((\mathcal{L}(\HS_A) \otimes \A_B)\n)$, falling exactly within the framework of Section~\ref{sec:cq-algebras}.

For a matrix $\Gamma^{\mathcal{M}_n}_{R B^n} \in \esn(\mathcal{L}(\HS_R) \otimes (\A_B)\n)$ it is easy to see that in the optimization problem for 
\begin{equation}
    F_D(\mathcal{M}_n) = \displaystyle {1 \over d^2} \max_{\Gamma_{RB}^{\mathcal{D}^*_n} \in  \mathcal{C^*}(R: B^n)} \Tr[\Gamma_{RB^n}^{\mathcal{D}^*_n}\Gamma_{RB^n}^{\mathcal{M}_n}]
\end{equation}
one can restrict $\Gamma^{\mathcal{D}^*_n}_{RB^n}$ to be an element of $\esn(\mathcal{L}(\HS_R) \otimes (\A_B)\n)$ (the restriction to $\mathcal{L}(\HS_R) \otimes (\A_B)\n$ follows since elements outside of the block-diagonal structure of $\A_B\n$ just don't contribute in the trace, and the restriction to $\esn(\mathcal{L}(\HS_R) \otimes (\A_B)\n)$ follows by Lemma \ref{lem:max_recovery_properties}).
As seen in Table \ref{tab:CQ_dimension_comparison}, taking into account this direct sum structure of the decoder significantly reduces the dimension of the spaces to optimize over.

\subsubsection{Efficient Seesaw Iteration}
\label{sec:modified_seesaw_efficient}

The seesaw iteration then proceeds very much analogously to the previous section, just applying the results of Section~\ref{sec:cq-algebras} for the algebra $\A_B$: 

\smallskip
\noindent\emph{Initialization.}
The initialization proceeds in the exact same way as in Theorem \ref{thm:seesaw_polytime}.

\smallskip
\noindent\emph{Iteration.}
We apply the block-diagonalization of $\A_B$ from Theorem \ref{thm:block_diag_general_algebra}, where each block is indexed by a partition tuple $\underline{\lambda} = (\lambda_1,\ldots,\lambda_\ell)$ with $\lambda_j\in \Par(d_B,\mu_j)$, and we have $f_{\underline{\lambda}} = \prod_j f_{\lambda_j}$ and $m_{\underline{\lambda}} = \prod_j m_{\lambda_j}$, together with the trace relations of~\eqref{eq:HS_inner_block_general}--\eqref{eq:partial_trace_general_block}. This allows us to then write the decoder SDP as
\begin{equation}
\label{eq:block_SDP_FD_flagged_generalized}
\begin{aligned}
F^S_{\mathrm{D}}(\mathcal{M}_n) = \max_{ \{ \widetilde{[\Gamma^{\mathcal{D}_n^*}]}_{\underline{\mu}, \underline{\lambda}} \} } \quad
& \frac{1}{d^2} \sum_{\underline{\mu}: |\underline{\mu}|=n} \sum_{\underline{\lambda} \in \mathrm{Par}(d_B, \underline{\mu})} f_{\underline{\lambda}} \,
\mathrm{Tr}\!\left[
\widetilde{[\Gamma^{\mathcal{D}_n^*}]}_{\underline{\mu}, \underline{\lambda}} \cdot
\widetilde{[\Gamma^{\mathcal{M}_n}]}_{\underline{\mu}, \underline{\lambda}}\right] \\
\text{s.t.} \quad
& \widetilde{[\Gamma^{\mathcal{D}_n^*}]}_{\underline{\mu}, \underline{\lambda}} \geq 0
\quad \forall\, \underline{\mu}, \underline{\lambda} \\
& \sum_{k=1}^{d_R}
\widetilde{[\Gamma^{\mathcal{D}_n^*}]}_{\underline{\mu}, \underline{\lambda}}^{(k,k)}
= \mathbbm{1}_{m_{\underline{\lambda}}}
\quad \forall\, \underline{\mu}, \underline{\lambda}
\end{aligned}
\end{equation}
Given the channel $\mathcal{N}\n$ and the solution of the decoder SDP $\mathcal{D}_n^*$, to obtain the effective channel $\mathcal{M}_n'$ from the encoder's perspective we apply \ref{prop:general_serial_contenation}. Since the input side of the channel is not blocked, the encoder optimization proceeds exactly as in the previous section.

The standard symmetric seesaw method of Section~\ref{sec:symmetric_seesaw} is the special case $\ell = 1$ (no flag), while the superactivation analysis of \cite{Parentin2026} corresponds to $\ell = 2$ with a qubit-flagged channel ($d_S = 2$, a single qubit controlled by a binary flag). Finally, notice that the symmetric power iteration method (Definition \ref{def:symmetric_power_method_D}) extends straightforwardly to solve SDPs of the form in \eqref{eq:block_SDP_FD_flagged_generalized}, since after the block-diagonalization the structure is identical.

\subsection{Numerical results}
\label{sec:numerical_examples}

In this section we use the numerical toolbox introduced above to study lower bounds on the channel fidelity of multiple copies of two important noise models for quantum communication and quantum computation. 

\subsubsection{Qubit Amplitude Damping Channel}
\label{sec:AFC}
The \textit{amplitude damping channel} (ADC) is defined by:\begin{equation}
    \mathcal{A}_{\gamma}(\rho) \coloneqq A_1 \rho A_1^\dag + A_2 \rho A_2^\dag
\end{equation}
where:\begin{align}
    A_1 \coloneqq \begin{pmatrix}
        1 & 0\\
        0 & \sqrt{1-\gamma}
    \end{pmatrix} \qquad   A_2 \coloneqq \begin{pmatrix}
        0 & \sqrt{\gamma}\\
        0 & 0
    \end{pmatrix}.
\end{align}
where $\gamma$ denotes the transition probability from the state $\ket{1}$ to $\ket{0}$  (decay process). 
Its Choi matrix is:\begin{equation}
    \Gamma^{\mathcal{A}_\gamma} \coloneqq  \begin{pmatrix}
        1 & 0 & 0 & \sqrt{1-\gamma}\\
        0 & 0 & 0 & 0\\
        0 & 0 & \gamma & 0\\
        \sqrt{1-\gamma} & 0 & 0 & 1-\gamma\\
    \end{pmatrix},
\end{equation}
which is not Bell diagonal, and in fact the ADC is a non-Pauli channel. 

This channel is known to be degradable for $\gamma \in \left[0, \frac{1}{2}\right]$ and antidegradable for $\gamma \in \left[\frac{1}{2}, 1\right]$, so its quantum capacity can be computed explicitly as \cite{Giovannetti2005}:\begin{equation}
\label{eq:asymptotic_ADC}
 Q(\mathcal{A}_\gamma) =\begin{cases}
    \displaystyle\max_{\substack{p \in [0,1]}} [h((1-\gamma) p) - h(\gamma p)] & \gamma \in \left[0, \frac{1}{2}\right], \\[10pt]
    0 & \gamma \in \left[\frac{1}{2}, 1\right].
\end{cases}    
\end{equation} The ADC is an important noise model in superconducting circuits and a good model for spontaneous emission and energy dissipation, so approximate error correction strategies for finite $n$ are of interest in current platforms. Its (uncoded) entanglement fidelity can be explicitly computed from the Kraus representation as:\begin{equation}
\label{eq:ADC_uncoded}
    F_e(\mathcal{A}_{\gamma}) = \left( \frac{1+ \sqrt{1-\gamma}}{2}\right)^2.
\end{equation}
In \cite{Leung1997}, the authors introduced an explicit four-qubit quantum error-correcting code designed for amplitude damping errors, allowing to get the entanglement fidelity\begin{equation}
    \label{eq:ADC_Leung}
    F_e(\mathcal{D}^{(4)}_\gamma \circ \mathcal{A}_{\gamma} \circ \mathcal{E}^{(4)}) = \frac{1}{2} + \frac{\sqrt{1+ (\gamma -1)^4}}{2\sqrt{2}} + \gamma  -  \frac{\sqrt{1+ (\gamma -1)^4}}{2\sqrt{2}} \gamma - \frac{15}{4}\gamma^2 + \frac{7}{2}\gamma^3 - \gamma^4,
\end{equation}
where $\mathcal{E}^{(4)}$ is an isometric encoder and $\mathcal{D}^{(4)}_\gamma$ is a decoder that first performs a syndrome measurement and then an approximate error correction procedure \cite{Leung1997}. This formula shows that, similarly to stabilizer codes like the Shor's code \cite{Shor1995} or the five-qubit code \cite{Laflamme1996, Bennett1996}, this code is better than no coding only when the error parameter is small, namely for damping probabilities below $\gamma \approx 0.25$. In \cite[Section 3.3.3]{Reimpell2008} the authors showed that using the standard seesaw method to lower bound $F_c(\mathcal{A}^{\otimes 4}, 2)$, the fidelity of the resulting code $(\tilde{\mathcal{E}^{(4)}}, \tilde{\mathcal{D}^{(4)}})$ has a strict improvement over the four qubit code of \cite{Leung1997} for all $\gamma \in (0,1)$. The authors of \cite{Mao2025} also provided an explicit analytical expression for a code that beats Leung et al's code for $n = 4$ uses. For $n =5$ uses, the seesaw-optimized codes reach similar values of fidelity, and in general, all codes found by the seesaw iteration outperform the known error correction codes, for $0 < \gamma < 1$ \cite{Reimpell2008}.

Explicit PI codes for AD errors have been constructed by Ouyang and Chao~\cite{OuyangChao2020}, who designed codes satisfying the Knill--Laflamme conditions for a fixed number of amplitude damping errors, and by Aydin et al.~\cite{Aydin2024}, who further reduced the required code length.

In Figure~\ref{fig:ADC_results}, the codes arising from the symmetric seesaw  for $n \in [20]$ are compared with other codes from the literature. Specifically, results improve over the known lower bound on $F(\mathcal{A}_\gamma^{\otimes n}, 2)$ for all $\gamma \in [0, 0.4]$ and $\gamma \gtrsim 0.8$, achieved for $n = 5$ in \cite[Figure 3.13]{Reimpell2008}.

\begin{figure}[h]
    \centering
    \includegraphics[width=\linewidth]{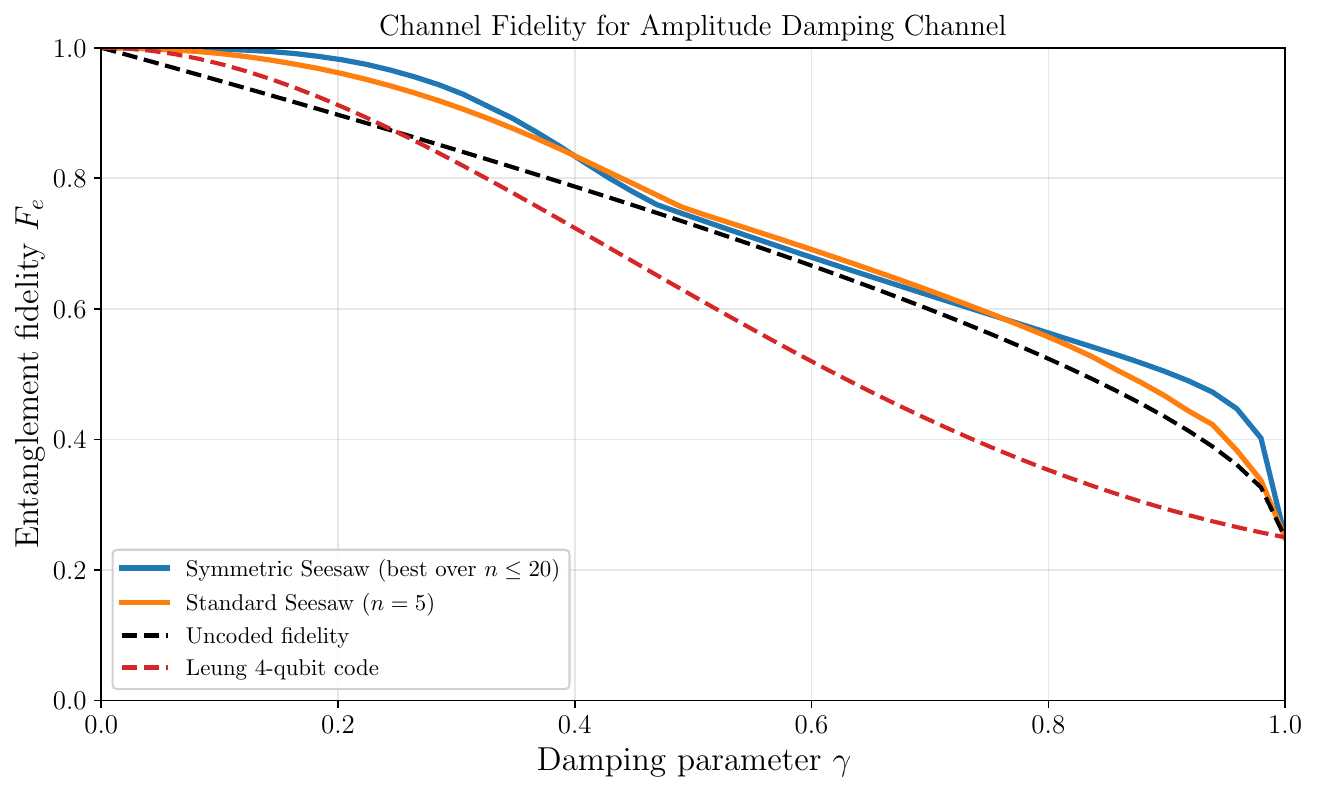}
    \caption{Comparison of error correction schemes for the amplitude damping channel. The uncoded curve displays the function in \eqref{eq:ADC_uncoded}, the four qubit code curve the function in \eqref{eq:ADC_Leung}. The orange curve of the complete optimization is the result of the seesaw iteration ($n = 5$), while the blue curve shows the result of the symmetric seesaw iteration (maximizing over $n \in \{1, ..., 20\}$).}
    \label{fig:ADC_results}
\end{figure}

Of particular interest is the behavior for low $\gamma$ since by the asymptotic result \eqref{eq:asymptotic_ADC} for $\gamma < 0.5$ we must get $F(\mathcal{A}_\gamma^{\otimes n}, 2) \xrightarrow{n \to \infty} 1$, i.e. reliable asymptotic qubit entanglement transmission. Hence, this result sheds new light on how to build codes for reliable communication over tens of channel copies of $\mathcal{A}_\gamma$, achieving an error probability as low as $\approx 1\%$ for $\gamma < 0.2$ and significantly improving over what was known before. These results offer a benchmark for future error correcting codes over the ADC in the finite blocklength setting.

\subsubsection{Qubit Depolarizing Channel}
\label{sec:depolarizing_channel}
The \textit{qubit depolarizing channel} is an important Pauli channel that with probability $p$ totally depolarizes the input system and with probability $1-p$ leaves the input system untouched:\begin{equation}
\label{eq:depolarizing_channel}
    \mathcal{D}_p(\rho) = (1-p)\rho + p \Tr[\rho] \pi
\end{equation}
This is a valid quantum channel for $p \in \left[0, \frac{4}{3}\right]$, and its Choi state is an \textit{isotropic state} \cite{Horodecki1999Reduction}, i.e.:\begin{align}
\label{eq:depolarizing_channel_choi}
    \Phi^{\mathcal{D}_p}_{AB} &= \rho_{AB}^{(1-p)} = (1-p) \Phi^2_{AB} +p\frac{\mathbbm{1}_{AB} - \Phi^2_{AB}}{3}\\
    &= \begin{pmatrix}
        \frac{1}{2} - \frac{p}{3} & 0 & 0 & \frac{1}{2} -\frac{2p}{3}\\
        0 & \frac{p}{3} & 0 & 0\\
        0 & 0 & \frac{p}{3} & 0\\
       \frac{1}{2}-\frac{2p}{3} & 0 & 0 &  \frac{1}{2} - \frac{p}{3}
    \end{pmatrix},
\end{align}
i.e., it is invariant under unitaries of the form $U \otimes \overline{U}$. This makes the depolarizing channel particularly suited for symmetry reductions; despite this, its quantum capacity remains unknown in closed form due to non-additivity of its coherent information. One finds~\cite {SmithSmolin2008, Sutter2017, Leditzky_2018}:\begin{equation}
    I_c(\mathcal{D}_p) = 1 + \left(1 - \frac{3p}{4}\right)\log  \left(1 - \frac{3p}{4}\right) + \frac{3p}{4} \log \frac{p}{4}
\end{equation}




and indeed, it is known that $ 5 I_c(\mathcal{D}^{p}) < I_c(\mathcal{D}^{{p}^{\otimes 5}})$ for some values of $p > 0.2524$, where a 5-qubit repetition code allows to get a strictly positive quantum data rate even if $I_c(\mathcal{D}^{(p)})$ is zero. For $p > \frac{1}{3}$, we know that $Q(\mathcal{D}^{p})=0$ because the channel is anti-degradable, but the exact threshold $q$ for which $Q(\mathcal{D}^{p})> 0$ is not known \cite{LeungWatrous2017}. A study of the $n$-shot entanglement transmission over $\mathcal{D}^{p}$ could shed new light to this, providing explicit codes over finitely many uses of the channel. 

The (uncoded) entanglement fidelity of the qubit depolarizing channel can be explicitly computed as:\begin{equation}
\label{eq:depolarizing_uncoded}
    F_e(\mathcal{D}_{p}) = 1 - \frac{3}{4}p.
\end{equation}
The standard benchmark for codes over the depolarizing channel is the the five bit code \cite{Laflamme1996, Bennett1996}, denoted by $(\mathcal{E}^{(5)}, \mathcal{D}^{(5)})$ and designed to correct all Kraus operators of $\mathcal{D}^{{p}^{\otimes 5}}$ that have at most one tensor factor different from the identity. Clearly, only few Kraus operators of $\mathcal{D}^{{p}^{\otimes 5}}$ are of this form, so perfect correction is not possible, but all terms of order $p$ can be corrected, and it can be shown that the fidelity using the five bit code is \cite[Equation 3.8]{Reimpell2008}:\begin{equation}
     \label{eq:depolarizing_five_qubit}
         F_e(\mathcal{D}^{(5)} \circ \mathcal{D}^{p, \otimes 5}\circ \mathcal{E}^{(5)}) = 1- \frac{45}{8} p^2 + \frac{75}{8}p^3- \frac{45}{8}p^4 + \frac{9}{8}p^5.
\end{equation}
For exact quantum error correction of Pauli errors, permutation-invariant codes are less efficient than stabilizer codes: the smallest PI code correcting a single arbitrary error uses $7$ qubits~\cite{Aydin2024} (or $9$ qubits for Ouyang's gnu codes~\cite{Ouyang2014}), compared to $5$ for the optimal stabilizer code~\cite{Laflamme1996, Bennett1996}.
Moreover, Bhalerao and Leditzky~\cite{Bhalerao2025} observed that the coherent information when optimized over permutation invariant states does not increase beyond what is achievable with standard (orthogonal) repetition codes for $\mathcal{D}^p$, unlike for other Pauli channels. Despite these limitations, the symmetric seesaw method allows to reach $n = 20$ channel uses, thereby accessing symmetric coding strategies that outperform known bounds for depolarizing parameter $p < 0.1$, as shown in Figure \ref{fig:depolarizing_results}.
\begin{figure}[h]
    \centering
    \includegraphics[width=0.8\linewidth]{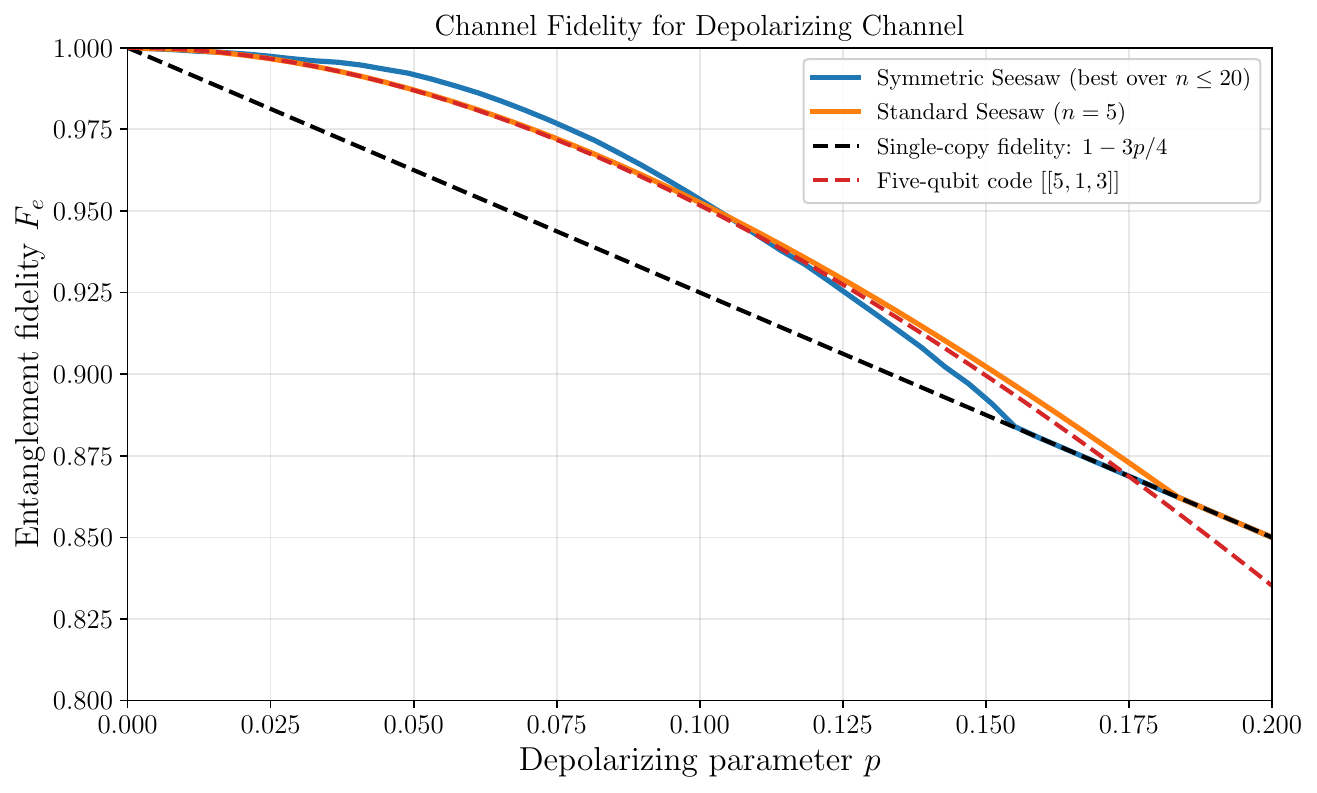}
    \caption{Comparison of error correction schemes for the depolarizing channel for $p \in [0, 0.20]$ . The orange curve is the result of the seesaw iteration ($n = 5$), while the blue curve shows the result of the symmetric seesaw iteration (maximizing over $n \in \{1, ..., 20\}$). These are compared with the no-coding curve of \eqref{eq:depolarizing_uncoded} and the five qubit code curve in \eqref{eq:depolarizing_five_qubit}.}
    \label{fig:depolarizing_results}
\end{figure}
We highlight that the symmetric seesaw simulation for the depolarizing channel, differently from the ADC, displayed a 'step-like' behavior: for all values of the depolarizing parameter $p \in [0,0.2]$, the estimated fidelity remains constant for several values of $n$ and then experiences sudden bumps, the first at $n = 7$ (as expected from \cite{Aydin2024}).

\section{Application: Quantum Relative Entropy Programs}
\label{sec:other_application_to_REE_programs}
\label{sec:PPT_REE} 
In this section we also apply the numerical toolbox introduced in Section \ref{sec:perm-inv} to study Quantum Relative Entropy programs.

For all $\rho \in \mathcal{D}(\mathcal{H})$ and $\sigma \in \mathcal{P}(\mathcal{H})$, the \textit{Umegaki relative entropy} is defined as \cite{Umegaki1954}\begin{equation}
    D(\rho \Vert \sigma) \coloneqq \begin{cases}
        \Tr[\rho(\log \rho - \log \sigma)]\qquad &\mathrm{if} \quad \mathrm{supp}(\rho) \subseteq \mathrm{supp}(\sigma),\\
        + \infty &\mathrm{otherwise},
    \end{cases}
\end{equation}(where $\log\equiv \log_2$) and crucially satisfies the \textit{data-processing inequality} \cite{Lindblad1975}\begin{equation}
    D(\mathcal{N}(\rho )||\mathcal{N}(\sigma)) \leq D(\rho||\sigma),
\end{equation}
for all quantum channels $\mathcal{N}$. A consequence of this is its \textit{joint convexity}: for all finite sets of states $\{\rho_i\}_{i=1}^n$, positive semidefinite operators $\{\sigma_i\}_{i=1}^n$ and probability distributions $\{\lambda_i\}_{i=1}^n$, we have
    \begin{equation}
    \label{eq:joint_convexity}
        D\left(\sum_{i=1}^n \lambda_i \rho_i \,\Big\|\, \sum_{i=1}^n \lambda_i \sigma_i\right) 
\leq \sum_{i=1}^n \lambda_i D(\rho_i\|\sigma_i).
    \end{equation} Finally, it satisfies a \textit{direct-sum property}: for all finite sets of states $\{\rho_i\}_{i=1}^n$, positive semidefinite operators $\{\sigma_i\}_{i=1}^n$ and probability distributions $\{\lambda_i\}_{i=1}^n$, we have
    \begin{equation}
    \label{eq:direct_sum_RE}
        D\left(\bigoplus_{i=1}^n \lambda_i \rho_i \,\Big\|\, \bigoplus_{i=1}^n \lambda_i \sigma_i\right) = \sum_{i=1}^n \lambda_i D(\rho_i\|\sigma_i).
    \end{equation} 
    
\subsection{Rains Relative Entropy}
The distillable entanglement of a bipartite state $\rho_{AB}$, denoted by $E_D(\rho_{AB})$ is the highest rate at which one can obtain maximally entangled states from $\rho_{AB}$ by local operations and classical communication (LOCC) \cite{Rains1999, Bennett1996MixedStateEntanglement}. Since it is extremely hard to calculate $E_D$ for general quantum states, in practice one relies on computable upper and lower bounds. The best known upper bound for the distillable entanglement of a bipartite state is its \textit{Rains relative entropy}, defined as \cite{Rains2001SemidefiniteDistillableEntanglement}\begin{equation}
\label{eq:Rains_Relative_entropy}
   R(\rho_{AB}) \coloneqq \inf_{\sigma_{AB} \in\text{PPT}' (A:B)} D(\rho_{AB} || \sigma_{AB}),
\end{equation}
where we defined the following set of positive semidefinite operators \cite{Audenaert2002}:\begin{equation}
\label{eq:rains_set}
    \mathrm{PPT}'(A;B) = \{\sigma_{AB} \in \mathcal{P}(AB): ||\sigma_{AB}^{T_B}||_1 \leq 1\}.
\end{equation}
This set is a relaxation of the set of PPT states (which satisfy $||\sigma_{AB}^{T_B}||_1 = 1$), and can be formulated as the optimization problem:\begin{equation}
\label{eq:rains_relative_entropy_program}
\begin{aligned}
      R(\rho_{AB}) &= \inf_{\sigma_{AB}, K_{AB}, L_{AB}}D(\rho_{AB}|| \sigma_{AB})\\
        & \ \ \  \text{subject to} \ \ \  (K_{AB}-L_{AB})^{T_{B}}  = \sigma_{AB}\\ 
          & \ \ \ \ \ \ \ \ \ \ \ \ \ \ \  \ \ \ \ \ \sigma_{AB}, K_{AB}, L_{AB} \geq 0\\ 
          & \ \ \ \ \ \ \ \ \ \ \  \ \ \ \ \ \ \ \ \ \Tr[K_{AB} + L_{AB}] \leq 1.
\end{aligned}
    \end{equation}
which is efficient because it consists of SDP constraints and a convex objective function (by \eqref{eq:joint_convexity}).    
The Rains relative entropy is in general \textit{subadditive}, i.e. $R(\rho_{AB} \otimes \omega_{AB}) \leq R(\rho_{AB}) + R(\omega_{AB})$, and the authors of \cite{Wang2017_Rains} exhibited a two-qubit state such that the inequality is strict, i.e.:
    \begin{equation}
       R(\rho_{AB}^{\otimes 2}) < 2R(\rho_{AB}).
    \end{equation}
Thus, if one defines the regularized Rains relative entropy \cite{Hayashi2006QuantumInformation} as\begin{equation}
\label{eq:regularized_rains}
    R^{\infty}(\rho_{AB}) \coloneqq \inf_{n \in \mathbb{N}} \frac{R(\rho_{AB}^{\otimes n})}{n},
\end{equation}
the latter is in general a better (i.e., tighter) upper bound for the distillable entanglement, because\begin{equation}
    E_D(\rho_{AB}) \leq  R^{\infty}(\rho_{AB}) \leq R(\rho_{AB}),
\end{equation}
and the second inequality can be strict \cite{Wang2017_Rains}. 

\subsection{Efficient Approximation of the Regularized Rains Relative Entropy}
\label{sec:efficient_Rains}

While quantum relative entropy programs of the form~\eqref{eq:rains_relative_entropy_program} can be efficiently solved by interior-point methods (see~\cite{HeSaundersonFawzi2024SCBarriers, HeSaundersonFawzi2024StructureQRE, Fang_2019}), e.g.\ via the toolkit \texttt{QICS} (Quantum Information Conic Solver)~\cite{HeSaundersonFawzi2024QICS}, the exponential growth of $d_A^n d_B^n$ with $n$ makes the direct computation of $R(\rho_{AB}^{\otimes n})$ intractable beyond a few copies. In this section, we exploit the permutation symmetry of the i.i.d.\ input $\rho_{AB}^{\otimes n}$ together with the representation-theoretic toolbox of Section~\ref{sec:perm-inv} to design a $\mathrm{poly}(n)$-time algorithm to compute $R(\rho_{AB}^{\otimes n})$ for all $n \in \mathbb{N}$. This in turn yields significantly improved approximations to the regularized Rains bound $R^\infty(\rho_{AB})$ defined in~\eqref{eq:regularized_rains}.

The key structural observation is that the optimization in~\eqref{eq:rains_relative_entropy_program} can be restricted to the permutation-invariant subspace at no loss of optimality. The next lemma formalizes this; the argument is standard and follows from twirling combined with the data-processing inequality (this is as a special case of \cite[Lemma 18]{Fang2025}).

\begin{lemma}
\label{lemma:rains_symmetric}
Let $n \in \mathbb{N}$ and let $\rho_{AB}$ be a bipartite state. The Rains relative entropy of $\rho_{AB}^{\otimes n}$ admits an optimal solution $(\sigma_{A^n B^n}^*, K_{A^n B^n}^*, L_{A^n B^n}^*)$ such that
\begin{equation}
\label{eq:rains_symmetric_constraint}
    \sigma_{A^nB^n}^*, K_{A^nB^n}^*, L_{A^nB^n}^* \in \mathrm{End}^{\mathfrak{S}_n}\!\bigl((AB)^{\otimes n}\bigr).
\end{equation}
Equivalently, the optimization in~\eqref{eq:rains_relative_entropy_program} for the input $\rho_{AB}^{\otimes n}$ can be restricted to permutation-invariant feasible triples without affecting the optimal value.
\end{lemma}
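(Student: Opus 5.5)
Take any feasible triple $(\sigma, K, L)$ for the program~\eqref{eq:rains_relative_entropy_program} with input $\rho_{AB}^{\otimes n}$, and replace each entry by its symmetrization under the joint action $\pi \mapsto P_{A^n}(\pi)\otimes P_{B^n}(\pi)$ of $\mathfrak{S}_n$ on $(AB)^{\otimes n}$. The new triple is
\begin{equation*}
\bar\sigma \coloneqq \mathcal{S}_n(\sigma), \qquad \bar K \coloneqq \mathcal{S}_n(K), \qquad \bar L \coloneqq \mathcal{S}_n(L).
\end{equation*}
These operators lie in $\mathrm{End}^{\mathfrak{S}_n}((AB)^{\otimes n})$ by construction. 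It remains to check that $(\bar\sigma,\bar K,\bar L)$ is still feasible and that its objective value is no larger. Applying this to a minimizing triple then gives the claim. If only the infimum is given, the same argument applied to a minimizing sequence shows that restricting to invariant triples does not change the value. Attainment follows from compactness of the feasible set together with lower semicontinuity of $D$.

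\emph{Feasibility.} Positivity is preserved because $\mathcal{S}_n$ is a convex combination of unitary conjugations. The trace constraint is preserved because $\mathcal{S}_n$ is trace preserving: $\Tr[\bar K + \bar L] = \Tr[K+L] \le 1$. The one point needing care is the linear constraint $(K-L)^{T_B}=\sigma$, which requires $\mathcal{S}_n$ to commute with the partial transpose $T_{B^n}$. This holds because the permutation operators are real tensor products of permutation matrices: $P_{A^n}(\pi)\otimes P_{B^n}(\pi)$ has partial transpose on $B^n$ of the form $P_{A^n}(\pi)\otimes \overline{P_{B^n}(\pi)} = P_{A^n}(\pi)\otimes P_{B^n}(\pi)$. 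One can verify directly that
\begin{equation*}
\bigl(P(\pi) X P(\pi)^\dagger\bigr)^{T_{B^n}} = P(\pi)\, X^{T_{B^n}}\, P(\pi)^\dagger ,
\end{equation*}
since a permutation of the tensor factors relabels the $B$-indices in the same way on both sides. Equivalently, in the orbit basis the partial transpose just reshuffles count matrices via~\eqref{eq:count-orbit-partial-transpose-relation}, which is compatible with the permutation action. Averaging over $\pi$ then gives $(\bar K-\bar L)^{T_{B^n}} = \bar\sigma$.

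\emph{Objective.} Since $\rho_{AB}^{\otimes n}$ is invariant under the joint permutations, $\mathcal{S}_n(\rho_{AB}^{\otimes n})=\rho_{AB}^{\otimes n}$. The data-processing inequality for the channel $\mathcal{S}_n$ then gives
\begin{equation*}
D(\rho^{\otimes n}\Vert\bar\sigma) \le D(\rho^{\otimes n}\Vert \sigma).
\end{equation*}
Alternatively, one can combine joint convexity~\eqref{eq:joint_convexity} with the unitary invariance $D(U\rho U^\dagger\Vert U\sigma U^\dagger)=D(\rho\Vert\sigma)$. If $\sigma$ is not normalized, I will use the stated convention that $\sigma$ is only PSD, for which data processing and joint convexity remain valid. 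Hence the symmetrized triple is feasible and at least as good.

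The main obstacle is the commutation of the twirl with $T_{B^n}$. This is what makes the symmetrization compatible with the PPT$'$ structure, and it is exactly where the reality of the permutation matrices is used. The relative-entropy step is standard once $\sigma$ is allowed to be unnormalized.
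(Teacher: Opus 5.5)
Your proposal is correct and follows the paper's proof essentially step for step. You symmetrize the feasible triple under the diagonal $\mathfrak{S}_n$ action, check positivity, the trace bound and the constraint $(K-L)^{T_{B^n}}=\sigma$ using the reality of the permutation matrices, and control the objective by data processing for the symmetrizer together with the invariance of $\rho_{AB}^{\otimes n}$. Your added compactness and lower-semicontinuity argument also shows that an optimal triple exists, so symmetrizing it gives the stated invariant optimizer; this is a small gain in rigor over the paper, which only argues that the infimum can be restricted to invariant triples.
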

\begin{proof}
Let $(\sigma_{A^nB^n}, K_{A^nB^n}, L_{A^nB^n})$ be any feasible triple for the program~\eqref{eq:rains_relative_entropy_program} applied to $\rho_{AB}^{\otimes n}$, and let $\mathcal{S}_{A^nB^n}$ denote the symmetrizer channel under the diagonal action of $\mathfrak{S}_n$ on $(AB)^{\otimes n}$:
\begin{equation}
    \mathcal{S}_{A^nB^n}(X) \coloneqq \frac{1}{n!}\sum_{\pi \in \mathfrak{S}_n} P_{A^nB^n}(\pi)\, X\, P_{A^nB^n}(\pi)^\dagger,
\end{equation}
where $P_{A^nB^n}(\pi) = P_{A^n}(\pi) \otimes P_{B^n}(\pi)$. Define the twirled triple $(\widetilde{\sigma}, \widetilde{K}, \widetilde{L}) \coloneqq (\mathcal{S}_{A^nB^n}(\sigma), \mathcal{S}_{A^nB^n}(K), \mathcal{S}_{A^nB^n}(L))$. We verify that this triple is feasible. Positive semidefiniteness is preserved by $\mathcal{S}_{A^nB^n}$, since it is a convex combination of unitary conjugations. The trace constraint is also preserved by linearity:
\begin{equation}
    \Tr[\widetilde{K} + \widetilde{L}] = \Tr[\mathcal{S}_{A^nB^n}(K + L)] = \Tr[K + L] \leq 1.
\end{equation}
Since the permutation matrices $P_{A^nB^n}(\pi) = P_{A^n}(\pi) \otimes P_{B^n}(\pi)$ are real and orthogonal, the partial transpose $T_{B^n}$ commutes with the conjugation by $P_{A^nB^n}(\pi)$ for every $\pi \in \mathfrak{S}_n$, hence with the symmetrizer $\mathcal{S}_{A^nB^n}$. Therefore \begin{equation}
    \widetilde{\sigma}^{T_{B^n}} = \mathcal{S}_{A^nB^n}(\sigma^{T_{B^n}}) = \mathcal{S}_{A^nB^n}(K - L) = \widetilde{K} - \widetilde{L}.
\end{equation}
Finally, by the data-processing inequality applied to the channel $\mathcal{S}_{A^nB^n}$:
\begin{equation}
    D\bigl(\rho_{AB}^{\otimes n} \,\big\|\, \widetilde{\sigma}\bigr) = D\bigl(\mathcal{S}_{A^nB^n}(\rho_{AB}^{\otimes n}) \,\big\|\, \mathcal{S}_{A^nB^n}(\sigma)\bigr) \leq D\bigl(\rho_{AB}^{\otimes n} \,\big\|\, \sigma\bigr),
\end{equation}
where we used that $\rho_{AB}^{\otimes n}$ is $\mathfrak{S}_n$-invariant.


We have shown that for any feasible triple, the symmetrized triple is also feasible and achieves an objective no larger than the original. Therefore, the infimum in~\eqref{eq:rains_relative_entropy_program} can be restricted to permutation-invariant triples without loss of generality, proving~\eqref{eq:rains_symmetric_constraint}.
\end{proof}

The following Theorem turns Lemma~\ref{lemma:rains_symmetric} into an efficient algorithm. In the following, we will denote $X_n \in \{\sigma_n, K_n, L_n\}$ as \begin{equation}
\label{eq:orbit_basis_rains}
    X_{n} = \sum_{s = 1}^{m_{AB}} c_s^X\, C_s^{AB} = \bigoplus_{\lambda \in \mathrm{Par}(d_A d_B,\, n)}   \widetilde{[X_n]}_\lambda,
\end{equation}
where $\widetilde{[X_n]}_\lambda \equiv [\widetilde{\psi}_{A^n B^n}(X_n)]_\lambda \in \mathbb{C}^{m_\lambda^{AB} \times m_\lambda^{AB}}$ and $\widetilde{\psi}_{A^n B^n}$ defined in \eqref{eq:full_isomorphism}.

\begin{theorem}
\label{thm:rains_polytime}
Let $\rho_{AB}$ be a bipartite state with $d_A, d_B \in \mathbb{N}$, and let $n \in \mathbb{N}$. Then the Rains relative entropy $R(\rho_{AB}^{\otimes n})$ can be computed via a quantum relative entropy program with $\mathrm{poly}(n)$ variables and constraints of the form\begin{equation}
\label{eq:rains_block_SDP}
\begin{aligned}
    R(\rho^{\otimes n}) = \min_{\{\widetilde{[\sigma]}_\lambda, \widetilde{[K]}_\lambda, \widetilde{[L]}_\lambda\}_\lambda} \quad & \sum_{\lambda \in \mathrm{Par}(d_Ad_B, n)} f_\lambda \, D\!\left(\widetilde{[\rho^{\otimes n}]}_\lambda \,\big\|\, \widetilde{[\sigma_n]}_\lambda\right) \\
    \text{s.t.} \quad & \widetilde{[\sigma_n]}_\lambda \geq 0, \;\; \widetilde{[K_n]}_\lambda \geq 0, \;\; \widetilde{[L_n]}_\lambda \geq 0 &\forall\, \lambda\in \mathrm{Par}(d_Ad_B, n), \\
    & \left[\widetilde{T_{B}}\!\left(\oplus_{\lambda'}\widetilde{[\sigma_n]}_{\lambda'}\right)\right]_\lambda = \widetilde{[K]}_\lambda - \widetilde{[L]}_\lambda &\forall\, \lambda \in \mathrm{Par}(d_Ad_B, n), \\
    & \sum_{\lambda \in \mathrm{Par}(d_Ad_B, n)} f_\lambda\, \Tr\!\left[\widetilde{[K]}_\lambda + \widetilde{[L]}_\lambda\right] \leq 1,
\end{aligned}
\end{equation}
where $\widetilde{T_{B}}$ was defined in \eqref{eq:partial_transpose_block}. This program has:
\begin{equation}
\label{eq:rains_polynomial_constraints}
\begin{aligned}
\#\;\text{variables} &= O\bigl((n+1)^{(d_A d_B)^2}\bigr), \\
\#\;\text{constraints} &= O\bigl((n+1)^{d_A d_B}\bigr), \\
\text{size of constraints} &= O\bigl((n+1)^{d_A d_B (d_A d_B - 1)/2}\bigr).
\end{aligned}
\end{equation}
\end{theorem}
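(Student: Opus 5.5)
Starting from Lemma~\ref{lemma:rains_symmetric}, I would restrict the program \eqref{eq:rains_relative_entropy_program} for the input $\rho_{AB}^{\otimes n}$ to triples $(\sigma_n,K_n,L_n)$ in $\mathrm{End}^{\mathfrak{S}_n}((AB)^{\otimes n})$ without changing the optimal value. The input $\rho_{AB}^{\otimes n}$ is also in this algebra. Every operator then has the orbit expansion \eqref{eq:orbit_basis_rains}, and I would push everything through the full $*$-isomorphism $\widetilde{\psi}_{A^nB^n}$ of \eqref{eq:full_isomorphism}, applied with $S = AB$. The coefficients of $\rho^{\otimes n}$ come from \eqref{eq:coeff_tensor_product}. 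Their blocks $\widetilde{[\rho^{\otimes n}]}_\lambda$ are computable in $\mathrm{poly}(n)$ time by Theorem~\ref{thm:Efficient_construction_Isomorphism}, together with the Gram/Cholesky correction $R_\lambda$.

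Next I would translate the constraints one by one. PSD constraints decouple blockwise, since $\widetilde{\psi}$ is positivity-preserving in both directions. The equality $\sigma^{T_{B^n}} = K - L$ lives in the algebra, because the partial transpose maps $\mathrm{End}^{\mathfrak{S}_n}$ to itself by \eqref{eq:coefficient_partial_transpose_relation}. Applying the bijection $\widetilde{\psi}$ to both sides gives exactly the constraint with $\widetilde{T_B}$ from \eqref{eq:partial_transpose_block}; this map is precomputable as a permutation of orbit coefficients conjugated by the change of basis. The trace constraint becomes the weighted sum via \eqref{eq:trace_block}.

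The objective is the key step. Write $\widetilde{\psi}^{-1}$ concretely through the Schur--Weyl decomposition \eqref{eq:Shur-Weyl-decomposition}: there is a unitary $W$ with
\[
W X W^\dagger = \bigoplus_\lambda \widetilde{[X]}_\lambda \otimes \mathbbm{1}_{f_\lambda}
\]
for all $X \in \mathrm{End}^{\mathfrak{S}_n}$. This holds because $\widetilde{\psi}$ is the unique-up-to-unitary $*$-isomorphism onto the commutant blocks, and the multiplicities $f_\lambda$ are forced by the trace formula \eqref{eq:trace_block}. By unitary invariance of $D$, we then have
\[
D(\rho^{\otimes n}\Vert\sigma) = D\Bigl(\bigoplus_\lambda \widetilde{[\rho^{\otimes n}]}_\lambda\otimes\mathbbm{1}_{f_\lambda}\,\Big\Vert\,\bigoplus_\lambda\widetilde{[\sigma]}_\lambda\otimes\mathbbm{1}_{f_\lambda}\Bigr).
\]
The direct-sum property \eqref{eq:direct_sum_RE} is stated for normalized weights, but the same identity $\Tr[\rho(\log\rho-\log\sigma)]$ splits additively for unnormalized blocks, and tensoring with $\mathbbm{1}_{f_\lambda}$ multiplies each summand by $f_\lambda$. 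This yields $\sum_\lambda f_\lambda D(\widetilde{[\rho^{\otimes n}]}_\lambda\Vert\widetilde{[\sigma]}_\lambda)$, with $D$ here meaning the unnormalized trace expression. The support condition also splits blockwise. Each term is jointly convex, so the result is still a quantum relative entropy program.

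Finally I would count sizes. The number of variables equals $3\dim\mathrm{End}^{\mathfrak{S}_n}((AB)^{\otimes n}) \le 3(n+1)^{(d_Ad_B)^2}$ by \eqref{eq:dim_space_perm_inv}. The number of block constraints is $|\mathrm{Par}(d_Ad_B,n)| \le (n+1)^{d_Ad_B}$, and the block sizes satisfy $m_\lambda \le (n+1)^{d_Ad_B(d_Ad_B-1)/2}$ by \eqref{eq:bound_SSYTs}. I expect the main obstacle to be the objective step: rigorously justifying that the $*$-isomorphism is implemented by a unitary with multiplicity exactly $f_\lambda$, so that matrix logarithms commute with $\widetilde{\psi}$. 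I would argue this via the functional calculus: a $*$-isomorphism preserves spectra within blocks, so $\widetilde{\psi}(\log\sigma) = \log\widetilde{\psi}(\sigma)$ on the support. Combining this with \eqref{eq:inner_product_HS_block} applied to $\Tr[\rho\log\rho]$ and $\Tr[\rho\log\sigma]$ gives the weighted sum directly, which avoids needing an explicit $W$.
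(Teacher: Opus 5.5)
Your proposal is correct and follows essentially the same route as the paper's proof: restrict to permutation-invariant triples via Lemma~\ref{lemma:rains_symmetric}; translate the PSD and trace constraints blockwise through $\widetilde{\psi}$; enforce the partial-transpose constraint with the precomputed $\widetilde{T_B}$; split the objective by the direct-sum property; and bound the sizes with \eqref{eq:dim_space_perm_inv} and \eqref{eq:bound_SSYTs}. Your extra care in the objective step fills in details the paper only asserts. These are that the blocks are unnormalized, so $D$ must be read as the trace expression; that the multiplicity is exactly $f_\lambda$; and that $\widetilde{\psi}$ commutes with the logarithm by functional calculus.
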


\begin{proof}
By Lemma~\ref{lemma:rains_symmetric}, the optimization can be restricted to permutation-invariant triples $(\sigma_{n}, K_{n}, L_{n})$. Setting $S = AB$ in the formalism of Section~\ref{sec:perm-inv}, the coefficients $\{c_s^{\rho^{\otimes n}}\}_{s \in [m_{AB}]}$ of the input state can be computed by applying~\eqref{eq:tensor_product_orbit} to $\rho_{AB}$. Using the $*$-isomorphism~\eqref{eq:full_isomorphism}, the input state's blocks $\widetilde{[\rho^{\otimes n}]}_\lambda$ are obtained from the orbit-basis coefficients via Theorem~\ref{thm:Efficient_construction_Isomorphism} in time $\mathrm{poly}(n)$.

The PSD constraint and the trace inequality constraints of \eqref{eq:rains_relative_entropy_program} decompose block-by-block. To enforce the partial-transpose equality constraint $\sigma_n^{T_{B^n}} = K_n - L_n$ in the block basis, we need go through the orbit basis (recall the discussion in Section \ref{sec:efficient-operations-in-block-basis}), using the linear map $\widetilde{T}_{\mathrm{B}}$ as in \eqref{eq:partial_transpose_block}. Using the blocks $\widetilde{[\sigma_n]}_\lambda$, $\widetilde{[K_n]}_\lambda$ and $\widetilde{[L_n]}_\lambda$ as optimizing variables, the objective function in \eqref{eq:rains_relative_entropy_program} splits, by the direct-sum property \eqref{eq:direct_sum_RE}, as \begin{equation}
    D(\rho^{\otimes n} \| \sigma_n) = \sum_\lambda f_\lambda\, D\!\left(\widetilde{[\rho^{\otimes n}]}_\lambda \,\big\|\, \widetilde{[\sigma_n]}_\lambda\right).
\end{equation}

The number of partitions is $|\mathrm{Par}(d_Ad_B, n)| \leq (n+1)^{d_Ad_B}$, and each block has size $m_\lambda^{AB} \leq (n+1)^{d_Ad_B(d_Ad_B-1)/2}$. The total number of variables across the three triples $(\widetilde{[\sigma]}_\lambda, \widetilde{[K]}_\lambda, \widetilde{[L]}_\lambda)$ is therefore $O\bigl((n+1)^{(d_Ad_B)^2}\bigr)$, and the constraint size is dominated by $m_\lambda^{AB}$, yielding the polynomial bounds in~\eqref{eq:rains_polynomial_constraints}.
\end{proof}

\subsection{Numerical Results}
Building upon the results of \cite{MiranowiczIshizaka2008ClosedREE}, Wang and Duan explicitly found a two-qubit state for which the Rains relative entropy is non-additive. The construction defines a closest separable state (CSS) $\sigma$ and a corresponding state $\rho$:
\begin{equation}
\label{eq:rho_wang_duan}
    \sigma(r) \coloneqq \begin{pmatrix}
        \frac{1}{4} & 0 & 0 & 0\\
        0 & r & \frac{1}{4\sqrt{2}} & 0\\
        0 & \frac{1}{4\sqrt{2}} & \frac{5}{8} - r & 0\\
        0 & 0 & 0 & \frac{1}{8}
    \end{pmatrix}, \qquad \rho (r) \coloneqq \begin{pmatrix}
        \frac{1}{8} & 0 & 0 & 0\\
        0 & x(r) & z(r) & 0\\
        0 & z(r) & \frac{7 - 8x(r)}{8} & 0\\
        0 & 0 & 0 & 0
    \end{pmatrix}
\end{equation}
where $r \in [0.3125, 0.5468]$, and:
\begin{align}
    x(r) &\coloneqq  r + \frac{32r^2 - 10r + 1}{256r^2 - 160r + 33} + \frac{(16r - 5) y^{-1}(r)}{32\ln(5/8 - y(r)) - 32\ln(5/8 + y(r))},\\
    y(r) &\coloneqq  \sqrt{4r^2 - \frac{5r}{8} + \frac{33}{64}},\\
    z(r) &\coloneqq  \frac{32r^2 - (6 + 32x(r))r + 10x(r) + 1}{4\sqrt{2}}.
\end{align}

The range $r \in [0.3125, 0.5468]$ ensures positivity of the states. Using the numerical algorithm from Theorem \ref{thm:rains_polytime}, we can increase the number of copies of $\rho$ and see even stronger violations of additivity, as depicted in Figure \ref{fig:rains_wang_duan} (compare with \cite[Fig. 1]{Wang2017_Rains}).

\begin{figure}[ht]
    \centering
    \includegraphics[width=0.8\linewidth]{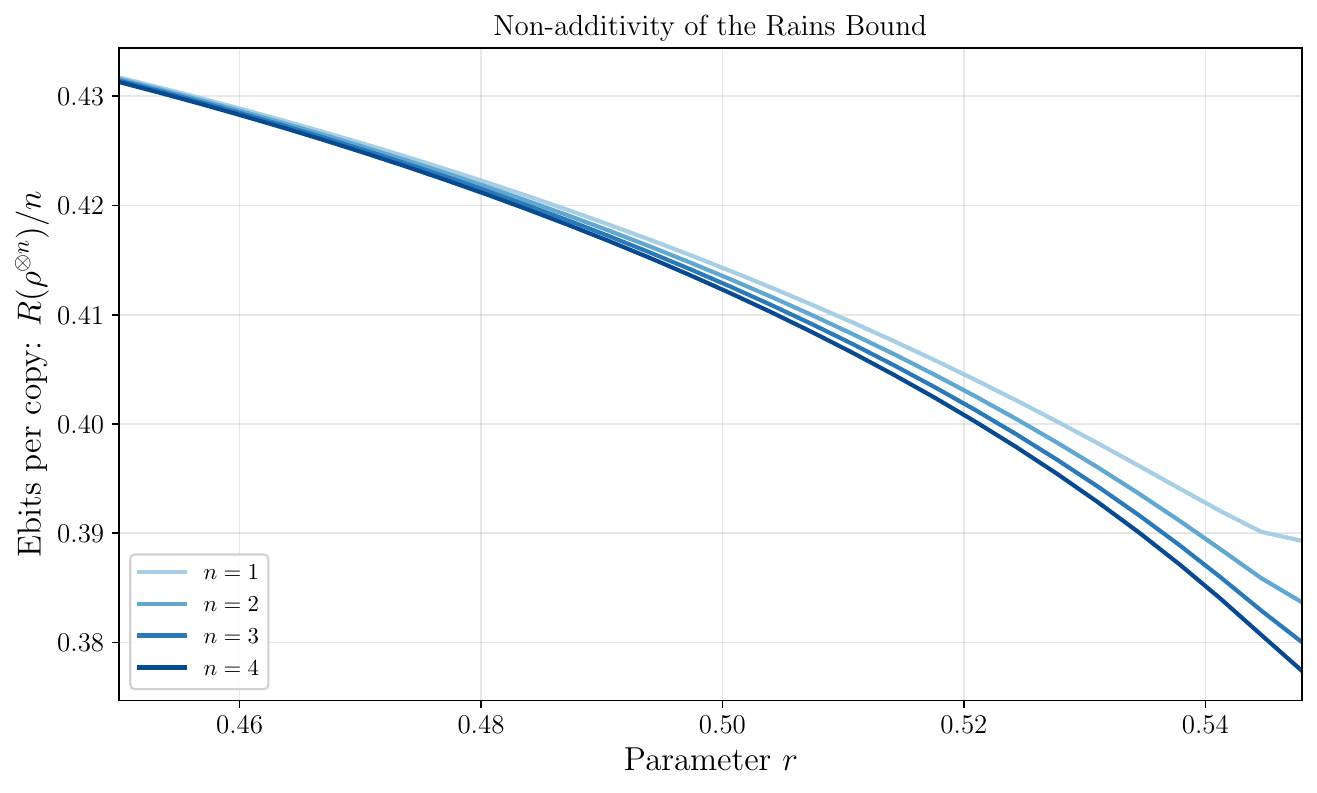}
    \caption{This plot shows the Rains Relative Entropy of $\rho^{\otimes n}(r)$ divided by $n$ for $r \in [0.45, 0.548]$, for $n = 1,2,3,4$.}
    \label{fig:rains_wang_duan}
\end{figure}

Using the results of \cite{Fang2024, Fang2025}, it is possible to combine our explicit algorithm in \eqref{eq:rains_block_SDP} with their lower bound based on the \textit{measured} relative entropy to efficiently approximate $R^{\infty}(\rho(r))$, since the set in \eqref{eq:rains_set} satisfies the structural assumptions of their analysis (see \cite[Section 4.3]{Fang2025}).

\section{Python Implementation} \label{sec:Python_package}
\newcommand{\heading}[1]{\medskip\par\textbf{{#1}:}}

We provide a Python package called \texttt{permqit} (\url{https://github.com/bbbergh/permqit}) which implements the algorithms and ideas described in this paper. Detailed usage instructions and documentation of individual classes can be found in the GitHub repository, what follows is a brief summary of what we is contained in this package:
\heading{Bases and Basis Operations}
We provide Python classes which represent the set of orbit basis matrices $C_s^{\mathcal{H}}$ as described in Section \ref{sec:orbit_basis_operations}, and allow for many operations such as: obtaining coefficients of tensor product matrices in this basis, taking the transpose, trace or HS norm of coefficients in this basis and much more. 

We also implement these operations for orbit bases in the general matrix algebra setting as described in Section \ref{sec:cq-algebras}, via the representation of Lemma \ref{lem:general_algebra_end_sn_trivial_isomorphism}. 

Additionally, the framework allows for coefficient vectors on only a subset of these bases and will try to limit operations (such as the channel link products described below) only to this subset wherever possible.
This is particularly useful for the coefficients of a tensor product channel $\mathcal{N}^{\otimes n}$ when the Choi matrix of $\mathcal{N}$ is sparse. 

\heading{Block Diagonalizations of $\esn(\HS\n)$}
We implement both of the methods explained in Appendix \ref{app:two_solutions_representation_theory} to obtain the block-diagonalizing maps $\psi$ and $\widetilde{\psi}$ described in Section \ref{sec:endsn_block_diagonalization}. In particular, this allows constructing block-diagonal representations (of size $\operatorname{poly}(n)$) of elements in $\esn(\HS\n)$  without ever constructing any (exponentially large) matrices on $\HS\n$.

\heading{Channel Link Product and Partial Traces}
We implement the partial-trace maps of Proposition \ref{prop:serial_concatenation} and Proposition \ref{prop:general_serial_contenation}, additionally also in a setting where one only needs to consider a subspace of $\esn(\A_A\n \otimes \A_B\n)$. 

\heading{Symmetric Seesaw Method}
We implement our symmetric seesaw method for quantum channel fidelity described in Section \ref{sec:seesaw} for a generic permutation-covariant channel $\mathcal{N}_n$ with Choi matrix $\Gamma^{\mathcal{N}_n} \in \esn((\A_A \otimes \A_B)\n)$, including optimizations if the output space is classical-quantum, or if the Choi matrix is sparse. We provide routines where the underlying optimization problems are either solved as an SDP, or solved through the power iteration method described in Section \ref{sec:powermethod}, the latter will also run on a GPU if available. 

\heading{Generic SDP optimization problems}
We provide helper methods to generate permutation covariant SDP variables for the \href{https://picos-api.gitlab.io/picos/}{\texttt{picos}} interface to solve permutation in-/covariant problems.

\heading{Numerical Backends}
Many sections of the code are written as to be compatible with multiple numerical backends, in particular if they implement the \href{https://data-apis.org/array-api/latest/}{Python Array API}. Concretely, we provide tested code paths using \href{https://numpy.org/}{\texttt{numpy}} and \href{https://cupy.dev/}{\texttt{cupy}} arrays, and, for sections that use sparse matrices, using \href{https://docs.scipy.org/doc/scipy/reference/sparse.html}{\texttt{scipy.sparse}}, \href{https://sparse.pydata.org/en/stable/}{\texttt{sparse} (pydata)} and \href{https://docs.cupy.dev/en/stable/reference/scipy_sparse.html}{\texttt{cupyx.scipy.sparse}} matrices. Many parts of the code use GPUs if available.

\section*{Acknowledgements}
B.B. is supported by the Engineering and Physical Sciences Research Council [Grant Ref: EP/Y028732/1].

\bibliography{bibliography}

\appendix

\section{Efficient Computation of the Polynomial}
\label{app:two_solutions_representation_theory}
In this appendix, we present two methods from the literature for efficiently computing the polynomial $f_{\tau,\gamma}$ in defined in~\eqref{eq:encoding_polynomial_semistandard-tab}. Both produce the same polynomial but via different approaches. We use throughout the notation established in Section~\ref{sec:perm-inv}: partitions $\lambda \vdash_{d_S} n$, semistandard Young tableaux $\tau, \gamma \in \mathcal{T}_{\lambda,d_S}$, row stabilizer $R_\lambda$, column stabilizer $C_\lambda$, row-equivalence $\tau' \sim \tau$, Young-symmetrized vectors $|u_\tau\rangle$, count matrices $E$, orbit matrices $C_E$, and monomials $x_E(X)$ as defined in~\eqref{eq:monomial_count_matrix}.

\subsection{Method 1: Count Functions}
\label{app:method1}

This method, due to~\cite{Litjens2016} (see also~\cite{Fawzi2022}), derives a direct formula for $f_{\tau,\gamma}$ valid for arbitrary $\tau, \gamma \in \mathcal{T}_{\lambda,d_S}$ using the structure of the Young basis:
\begin{proposition}
\label{prop:f_explicit}
The polynomial $f_{\tau,\gamma}$ can be written as:
\begin{equation}
\label{eq:f_explicit}
    f_{\tau, \gamma}(X) = \sum_{\tau' \sim \tau}\;
    \sum_{\gamma' \sim \gamma}\;
    \sum_{c,c' \in C_{\lambda}}
    \mathrm{sgn}(c)\,\mathrm{sgn}(c')
    \prod_{k=1}^{n} x_{\tau'(c(k)),\, \gamma'(c'(k))}
\end{equation}
where $\tau' \sim \tau$ and $\gamma' \sim \gamma$ denote summation over row-equivalent tableaux, $C_\lambda$ is the column stabilizer,and the product runs over all $n$ box positions of the Young diagram. Here, $x_{\tau'(c(k)),\, \gamma'(c'(k))} \coloneqq x_{E'}$ where $E'$ is the count matrix for the multi-indices corresponding to the pair of tableaux $(\tau'(c(k)),\, \gamma'(c'(k)))$ via \eqref{eq:multi-indices-tableaux}.
\end{proposition}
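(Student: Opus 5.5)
The plan is to start from the definition \eqref{eq:encoding_polynomial_semistandard-tab}, $f_{\tau,\gamma}(L) = \sum_E \langle u_\tau | C_E u_\gamma\rangle\, x_E(L)$, and to rewrite it as a single matrix element of $L^{\otimes n}$. The key identity is that, by \eqref{eq:tensor_product_orbit} and \eqref{eq:coeff_tensor_product}, the coefficient of $L^{\otimes n}$ on the orbit matrix $C_E$ is exactly the monomial $x_E(L)$, since $\prod_k L_{(i_r)_k,(j_r)_k}$ depends only on the count matrix of the pair. Hence $\sum_E x_E(L)\, C_E = L^{\otimes n}$, and by linearity
\begin{equation*}
f_{\tau,\gamma}(L) = \langle u_\tau |\, L^{\otimes n}\, | u_\gamma\rangle ,
\end{equation*}
where the bra is the transpose (no conjugation), consistent with $U_\lambda^T X U_\lambda$ in \eqref{eq:incomplete-isomorphism}. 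Since the Young vectors have real (integer) coefficients, this convention does not matter.

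Next I expand both Young-symmetrized vectors using their definition:
\begin{equation*}
|u_\tau\rangle = \sum_{\tau'\sim\tau}\sum_{c\in C_\lambda}\mathrm{sgn}(c)\bigotimes_{k=1}^n |\tau'(c(k))\rangle ,
\end{equation*}
and similarly for $|u_\gamma\rangle$ with $\gamma'\sim\gamma$ and $c'\in C_\lambda$. Substituting and using bilinearity gives a quadruple sum over $\tau',\gamma',c,c'$ with sign $\mathrm{sgn}(c)\mathrm{sgn}(c')$. Each summand is a matrix element of $L^{\otimes n}$ between two product basis vectors, which factorizes as $\prod_{k=1}^n \langle \tau'(c(k))| L |\gamma'(c'(k))\rangle = \prod_k L_{\tau'(c(k)),\gamma'(c'(k))}$.

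Finally, I identify this product with the monomial $x_{E'}$, where $E'$ is the count matrix of the multi-index pair $\bigl((\tau'(c(k)))_k,(\gamma'(c'(k)))_k\bigr)$ under the correspondence \eqref{eq:multi-indices-tableaux}. This holds because $\prod_k L_{i_k,j_k} = \prod_{a,b} L_{a,b}^{E'_{a,b}}$ by the definition \eqref{eq:count_matrix}. This yields \eqref{eq:f_explicit}.

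The only real obstacle is bookkeeping rather than mathematics. One point is to check that the index convention $\tau'(c(k))$ matches the tableau action $(\pi\tau)(k)=\tau(\pi^{-1}(k))$. The formula is symmetric under $c\mapsto c^{-1}$, since $\mathrm{sgn}$ is invariant and $C_\lambda$ is a group, so either convention gives the same sum. The other point is to confirm that the row-equivalence sum runs over distinct tableaux, as in the definition of $|u_\tau\rangle$, and not over all of $R_\lambda$; I will simply inherit whichever convention the definition uses.

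For efficiency, I would remark that although the sum has exponentially many terms, collecting terms by $E'$ gives at most $m_S$ distinct monomials. This is the content of the $\mathrm{poly}(n)$ claim in Theorem \ref{thm:Efficient_construction_Isomorphism}, and the further counting argument of \cite{Litjens2016} is needed there, not for the identity itself.
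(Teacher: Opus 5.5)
Your proof is correct and is essentially the paper's argument. The paper expands $|u_\tau\rangle$ and $|u_\gamma\rangle$ inside $\langle u_\tau|C_E|u_\gamma\rangle$, uses that $\langle \underline{i}|C_E|\underline{j}\rangle$ is the indicator of $E^{(\underline{i},\underline{j})}=E$, and only then sums over $E$; you do that sum first via $\sum_E x_E(L)\,C_E = L^{\otimes n}$ (i.e.\ \eqref{eq:tensor_product_orbit}), getting $f_{\tau,\gamma}(L)=\langle u_\tau|L^{\otimes n}|u_\gamma\rangle$ before expanding, which is the same computation in a different order, and your bookkeeping remarks (real coefficients, $c\mapsto c^{-1}$, distinct row-equivalent tableaux) are all fine.
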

\begin{proof}
Expanding the Young-symmetrized vectors in $\langle u_\tau | C_E  u_\gamma \rangle$:
\begin{equation*}
    \langle u_\tau | C_E | u_\gamma \rangle =
    \sum_{\substack{\tau' \sim \tau,\; \gamma' \sim \gamma \\
    c, c' \in C_\lambda}}
    \mathrm{sgn}(c)\,\mathrm{sgn}(c')\;
    \left\langle \bigotimes_{k=1}^n \tau'(c(k))
    \;\middle|\; C_E \;\middle|\;
    \bigotimes_{k=1}^n \gamma'(c'(k)) \right\rangle
\end{equation*}
Since $\langle \underline{i} | C_E |  \underline{j} \rangle = 1$ if the pair $(\underline{i}, \underline{j})$ has count matrix $E$, and $0$ otherwise, each nonzero term contributes to the monomial $x_{E'}$ where $E'$ is the count matrix of the pair $(\tau'(c(\cdot)), \gamma'(c'(\cdot)))$. Summing over all $E$ with coefficient $x_E$ as in~\eqref{eq:encoding_polynomial}  gives~\eqref{eq:f_explicit}.
\end{proof}

The naive evaluation of~\eqref{eq:f_explicit} involves exponentially many terms (over all row-equivalent tableaux). The key to efficiency is grouping terms by their \textit{column structure}.

First, we eliminate one sum over $C_\lambda$ via the substitution $c' \mapsto c^{-1}c'$, using the fact that $C_\lambda$ is a group and $\mathrm{sgn}$ is a homomorphism:
\begin{equation}
\label{eq:f_determinantal}
    f_{\tau, \gamma}(X) = |C_\lambda| \sum_{\tau' \sim \tau}\;
    \sum_{\gamma' \sim \gamma}
    \prod_{j=1}^{\lambda_1}
    \det\!\left(
    (x_{\tau'(i,j),\, \gamma'(i',j)})_{i,i'=1}^{\lambda^*_j}
    \right)
\end{equation}
where $\lambda^*_j$ is the height of the $j$-th column of the Young diagram of $\lambda$, and $\tau'(i,j)$ denotes the entry in row $i$, column $j$ of the tableau $\tau'$.

The remaining sums over row-equivalent tableaux $\tau' \sim \tau$ and $\gamma' \sim \gamma$ can be grouped by recording, for each column, which entry vectors appear:

\begin{definition}
\label{def:count_function}
For tableaux $\tau' \sim \tau$ and $\gamma' \sim \gamma$, the \textit{count function} $\kappa: \bigcup_{t=1}^h ([d_S]^t \times [d_S]^t) \to \mathbb{N}_0$ assigns to each pair $(v,w) \in [d_S]^t \times [d_S]^t$ the number of columns $j$ of height $\lambda^*_j = t$ such that the column-entry vectors of $\tau'$ and $\gamma'$ in column $j$ equal $v$ and $w$, respectively:
\begin{equation}
    \kappa(v,w) = \bigl|\bigl\{j : \lambda^*_j = t,\;
    (\tau'(1,j), \ldots, \tau'(t,j)) = v,\;
    (\gamma'(1,j), \ldots, \gamma'(t,j)) = w\bigr\}\bigr|
\end{equation}
\end{definition}

The count function $\kappa$ must satisfy consistency conditions with the row contents of $\tau$ and $\gamma$. We denote by $\mathcal{K}(\tau, \gamma)$ the set of valid count functions for a given pair of SSYTs.

\begin{theorem}[Proposition 3, \cite{Litjens2016}]
\label{thm:kappa_formula}
The polynomial $f_{\tau,\gamma}$ can be written as:
\begin{equation}
\label{eq:f_kappa}
    f_{\tau, \gamma}(X) = |C_\lambda|\sum_{\kappa \in \mathcal{K}(\tau,\gamma)}\prod_{t=1}^{h} (\lambda_t - \lambda_{t+1})!\;\prod_{v,w \in [d_S]^t}\frac{\left[\det\!\left((x_{v_i, w_{i'}})_{i,i'=1}^{t}\right)\right]^{\kappa(v,w)}}{\kappa(v,w)!}
\end{equation}
where $h$ is the height of $\lambda$ (i.e., the number of nonzero parts) and $\lambda_{h+1} = 0$ by convention.
\end{theorem}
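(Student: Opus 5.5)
The plan is to start from the determinantal form \eqref{eq:f_determinantal}, where $f_{\tau,\gamma}$ is $|C_\lambda|$ times a double sum over row-equivalent fillings $\tau'\sim\tau$, $\gamma'\sim\gamma$ of a product over columns $j$ of $t\times t$ determinants, with $t=\lambda^*_j$. The first observation is that the summand depends on $(\tau',\gamma')$ only through the multiset of column pairs $\{((\tau'(1,j),\dots,\tau'(t,j)),(\gamma'(1,j),\dots,\gamma'(t,j)))\}_j$, grouped by column height. This multiset is exactly the count function $\kappa$ of Definition~\ref{def:count_function}. The summand is then $\prod_{t}\prod_{v,w\in[d_S]^t}\det\bigl((x_{v_i,w_{i'}})\bigr)^{\kappa(v,w)}$. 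So I will rewrite the double sum as a sum over $\kappa$, weighted by the number $N(\kappa)$ of pairs $(\tau',\gamma')$ realizing $\kappa$. The set $\mathcal{K}(\tau,\gamma)$ is by definition those $\kappa$ with $N(\kappa)>0$, namely those whose induced row contents agree with those of $\tau$ and $\gamma$.

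The core step is computing $N(\kappa)$. Columns of the same height $t$ form a contiguous block of $\lambda_t-\lambda_{t+1}$ columns (columns $j$ with $\lambda_{t+1}<j\le\lambda_t$). A pair $(\tau',\gamma')$ with count function $\kappa$ is specified by assigning, inside each height block, a column-pair type $(v,w)$ to each column, with multiplicities $\kappa(v,w)$. Conversely, any such assignment gives fillings $\tau',\gamma'$ that are row-equivalent to $\tau,\gamma$ precisely when the consistency (row-content) conditions hold, and these conditions depend only on $\kappa$. Since positions of columns are distinguishable and distinct assignments give distinct pairs of tableaux, the number of realizations in block $t$ is the multinomial $(\lambda_t-\lambda_{t+1})!/\prod_{v,w}\kappa(v,w)!$. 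Taking the product over $t$ gives $N(\kappa)$.

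Substituting this count into the regrouped sum gives exactly \eqref{eq:f_kappa}. The step I expect to be the main obstacle is the bijection claim: that row-equivalence of $(\tau',\gamma')$ to $(\tau,\gamma)$ is equivalent to a condition on $\kappa$ alone (summing $v_i$-occurrences over all columns with height $\ge i$ reproduces row $i$'s content), and that no overcounting occurs. I would handle it by fixing row contents and checking that row-equivalence means "same multiset of entries in each row", a property preserved under permuting whole columns of equal height. Efficiency then follows because $\kappa$ ranges over a set of size polynomial in $n$ for fixed $d_S$.
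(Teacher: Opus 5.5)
Your proposal is correct and follows essentially the same route as the paper: you regroup the double sum in \eqref{eq:f_determinantal} by the count function $\kappa$, observe that the summand depends only on $\kappa$, and count the realizations in each height-$t$ block of $\lambda_t-\lambda_{t+1}$ columns with the multinomial $(\lambda_t-\lambda_{t+1})!/\prod_{v,w}\kappa(v,w)!$. Your explicit check that row-equivalence to $(\tau,\gamma)$ depends on $\kappa$ alone, and that distinct column assignments give distinct tableau pairs (so nothing is overcounted), fills in a step the paper only asserts.
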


\begin{proof}
Group the sum over pairs $(\tau', \gamma')$ in~\eqref{eq:f_determinantal} by their count function $\kappa$. For columns of height $t$, there are $\lambda_t - \lambda_{t+1}$ columns to be assigned entry-vector pairs; the number of assignments yielding a given $\kappa$ is the multinomial coefficient:
\begin{equation*}
    \prod_{t=1}^h \frac{(\lambda_t - \lambda_{t+1})!}{
    \prod_{v,w \in [d_S]^t} \kappa(v,w)!}
\end{equation*}
Each assignment with the same $\kappa$ contributes the same product of determinants (since the determinant depends only on the column-entry vectors, not on which column they occupy). Combining gives~\eqref{eq:f_kappa}.
\end{proof}

As a consequence of this theorem, for fixed $d_S$, the polynomial $f_{\tau,\gamma}(X)$ can be computed in $\mathrm{poly}(n)$ time. Indeed, the determinants have size at most $d_S \times d_S$, the number of valid count functions $|\mathcal{K}(\tau,\gamma)|$ is bounded independently of $n$, and the multinomial products involve $\mathrm{poly}(n)$ arithmetic. Hence each matrix element $\langle u_\tau | C_E  u_\gamma \rangle = [x_E]\, f_{\tau,\gamma}$ can be extracted in $\mathrm{poly}(n)$ time.

\subsection{Method 2: Gijswijt Formula}
\label{app:method2}

This method, due to~\cite{Gijswijt2009}, gives a simple product formula for the special case $\tau = \gamma = \tau_\lambda$ (the constant tableau), and then extends to general tableaux via transition and differential operators.

We first define the following set of polynomials $\{Q_k\}_{k=0}^{d_S}$, with $Q_k \in O_k(S)$ defined by:
\begin{equation}
    Q_k(X) \coloneqq k!\, \det\!\begin{pmatrix}
        x_{1,1} & \cdots & x_{1,k}\\
        \vdots & \ddots & \vdots \\
        x_{k,1} & \cdots & x_{k,k}
    \end{pmatrix}
\end{equation}
with $Q_0 \coloneqq 1$.

We can express conveniently the polynomial $f_{\tau_\lambda, \tau_\lambda}$ using the above polynomial thanks to the following Theorem.

\begin{theorem}[Proposition 3,~\cite{Gijswijt2009}]
\label{thm:gijswijt}
For the constant tableau $\tau_\lambda$:
\begin{equation}
\label{eq:P_lambda}
    f_{\tau_\lambda, \tau_\lambda}(X) = P_\lambda(X)\coloneqq \prod_{k=1}^{d_S} Q_k(X)^{\lambda_k - \lambda_{k+1}}
\end{equation}
where $\lambda_{d_S+1} = 0$ by convention.
\end{theorem}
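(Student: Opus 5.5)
I will start from the determinantal form \eqref{eq:f_determinantal}, specialised to $\tau=\gamma=\tau_\lambda$. The constant tableau has every entry of row $i$ equal to $i$. Permuting entries within a row therefore does not change the tableau, so the only row-equivalent tableau is $\tau_\lambda$ itself, and both sums over $\tau'\sim\tau$ and $\gamma'\sim\gamma$ collapse. The one subtlety is the convention: the sum over $\tau'\sim\tau$ in the definition of $|u_\tau\rangle$ runs over the distinct row-equivalent fillings, not over $R_\lambda$. I will fix this first by rechecking the definition of $|u_\tau\rangle$ and the derivation of \eqref{eq:f_determinantal}, since a stray factor of $|R_\lambda|$ would spoil the normalisation.

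With the collapse in place, every column of height $t$ has entry vector $v=w=(1,2,\dots,t)$ in both tableaux. The corresponding determinant is therefore the leading principal $t\times t$ minor $\det\bigl((x_{i,i'})_{i,i'=1}^t\bigr)$. The number of columns of height exactly $t$ is $\lambda_t-\lambda_{t+1}$, so
\begin{equation*}
f_{\tau_\lambda,\tau_\lambda}(X)=|C_\lambda|\prod_{t=1}^{h}\det\bigl((x_{i,i'})_{i,i'=1}^t\bigr)^{\lambda_t-\lambda_{t+1}} .
\end{equation*}
Equivalently, this is the single term of \eqref{eq:f_kappa} with $\kappa(v,w)=\lambda_t-\lambda_{t+1}$ for $v=w=(1,\dots,t)$. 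The factorials $(\lambda_t-\lambda_{t+1})!/\kappa(v,w)!$ then cancel, which gives an independent check.

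It remains to match the prefactor. Since $C_\lambda\cong\prod_j\mathfrak S_{\lambda^*_j}$, we have $|C_\lambda|=\prod_j \lambda^*_j!=\prod_{t}(t!)^{\lambda_t-\lambda_{t+1}}$. Distributing one factor $t!$ into each of the $\lambda_t-\lambda_{t+1}$ height-$t$ determinants turns each of them into $Q_t(X)$. This yields $\prod_{k=1}^{d_S}Q_k^{\lambda_k-\lambda_{k+1}}=P_\lambda(X)$. Rows with $k>h$ contribute exponent $0$, so extending the product to $d_S$ is harmless.

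The main obstacle is the derivation of \eqref{eq:f_determinantal} itself, on which everything rests. In it, the substitution $c'\mapsto c^{-1}c'$ and the reordering of $\prod_k$ over box positions produce exactly one factor $|C_\lambda|$ and, per column, a Leibniz expansion of the determinant with the correct sign. I would verify this directly: for fixed $c$, reindexing $k\mapsto c^{-1}(k)$ makes $c$ act trivially on $\tau'$. Since $c^{-1}c'$ preserves columns, the sum over it factorises column by column into $\det$ via $\mathrm{sgn}$. As a sanity check I would compare with $\lambda=(1^n)$ (one column, $Q_n$ with $n\le d_S$) and with $\lambda=(n)$, where $f=\prod_k x_{1,1}=Q_1^n$.
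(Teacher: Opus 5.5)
Your proof is correct and is essentially the paper's argument. The paper works directly from the definition of $\ket{u_\lambda}$ rather than specializing \eqref{eq:f_determinantal}: it substitutes $\pi = c'^{-1}c$ to extract $|C_\lambda|$ and factorizes the sum over $C_\lambda\cong\prod_j\mathfrak{S}_{\lambda^*_j}$ column by column via the Leibniz formula, which is exactly the derivation of \eqref{eq:f_determinantal} restricted to $\tau=\gamma=\tau_\lambda$, and then absorbs $|C_\lambda|=\prod_j\lambda^*_j!$ into one factor $t!$ per height-$t$ column to form $Q_t$, just as you do.
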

\begin{proof}
This follows by a similar argument as Proposition \eqref{prop:f_explicit}. Recalling the definition of $\ket{u_{\lambda}}$ in \eqref{eq:young_vector_trivial}, we can express the polynomial $f_{\tau_\lambda, \tau_\lambda}$ as:\begin{align}
    f_{\tau_\lambda, \tau_\lambda} &= \sum_E \braket{u_{\lambda}}{C_Eu_{\lambda}} \cdot x_E\\
    &= \sum_E \left(\sum_{c, c' \in C_\lambda }\mathrm{sgn}(c) \mathrm{sgn}(c')    \left\langle \bigotimes_{k=1}^n \tau_\lambda (c(k))
    \;\middle|\; C_E \;\middle|\;
    \bigotimes_{k=1}^n \tau_\lambda(c'(k)) \right\rangle \right) \cdot x_E\\
       &= \sum_E |C_\lambda| \left(\sum_{\pi \in C_\lambda }\mathrm{sgn}(\pi) \left\langle \bigotimes_{k=1}^n \tau_\lambda (\pi(k))
    \;\middle|\; C_E \;\middle|\;
    \bigotimes_{k=1}^n \tau_\lambda(k) \right\rangle \right) \cdot x_E
\end{align}
where we used the substitution $\pi \coloneqq c'^{-1} c$.  Since $\langle \underline{i} | C_E |  \underline{j} \rangle = 1$ if the pair $(\underline{i}, \underline{j})$ has count matrix $E$, and $0$ otherwise, each nonzero term contributes to the monomial $x_{E'}$ where $E'$ is the count matrix of the pair $(\tau_\lambda(\pi(\cdot)), \tau_\lambda(\cdot))$. Thus, we get:\begin{equation}
    f_{\tau_\lambda, \tau_\lambda} = |C_\lambda| \sum_{\pi \in C_\lambda} \mathrm{sgn}(\pi) \prod_{k=1}^n x_{\tau_\lambda(\pi(k)), \tau_\lambda(k)}
\end{equation}
which is a special case of \eqref{eq:f_explicit} where $\tau = \gamma = \tau_\lambda$, as expected. 

We now exploit the structure of the tableau $\tau_\lambda$, namely that $\tau_{\lambda} (i,j) = i$ for all rows $i \in [h]$ and columns $j \in [\lambda_1]$. Every permutation $\pi \in C_\lambda$ can be thought of as a tuple  $(\pi_1, \cdots, \pi_{\lambda_1})$, where $\pi_j \in \mathfrak{G}_{\lambda_j^*}$ permutes column $j$. Hence, $\pi(\tau_\lambda)(i,j) = \pi_j^{-1} (i)$, where $\pi_j^{-1} (i)$ denotes the action of the permutation $\pi_j^{-1}$ on symbol $i \in [\lambda_j^*]$, in line with \eqref{eq:symmetric_group_action}. Therefore:\begin{equation}
    \prod_{k=1}^n x_{\tau_\lambda(\pi(k)), \tau_\lambda(k)} = \prod_{j=1}^{\lambda_1} \prod_{i = 1}^{\lambda_j^*} x_{\pi_j(i),i}
\end{equation}
Using this, we obtain:\begin{align}
    f_{\tau_\lambda, \tau_\lambda} &=  |C_\lambda| \prod_{j = 1}^{\lambda_1} \sum_{\pi_j \in \mathfrak{G}_{\lambda_j^*}} \mathrm{sgn}(\pi_j) \prod_{i = 1}^{\lambda_j^*} x_{\pi_j(i),i}\\
    &=  |C_\lambda| \prod_{j = 1}^{\lambda_1}  \frac{1}{\lambda_j^*!} Q_{\lambda_{j}^*}\\
    &=  \prod_{k=1}^{d_S} Q_k(X)^{\lambda_k - \lambda_{k+1}}
\end{align}
\end{proof}

For any SSYT $\tau \in \mathcal{T}_{\lambda,d}$, the orbit matrices relate the Young-symmetrized vectors via:\begin{equation}
\label{eq:link_SSYT_constant_tab}
    |u_\tau\rangle = C_{E_\tau} |u_\lambda\rangle,
\end{equation}
where $E_\tau \coloneqq E^{(\tau, \tau_\lambda)}$ is the count matrix with\begin{equation}
(E_\tau)_{(a,b)} = |\{k : \tau(k) = a,\; \tau_\lambda(k) = b\}|.  
\end{equation}
This reduces general matrix elements to the constant-tableau case, but requires an efficient method for multiplying orbit matrices, i.e. computing terms of the form:\begin{equation}
    \bra{u_\lambda} C^T_{E_\gamma} C_E  C_{E_\tau} \ket{u_\lambda}
\end{equation}
for all $\gamma, \tau \in \mathcal{T}_{\lambda, d}$ and count matrix $E$ satisfying \eqref{eq:count_matrix}.

The approach of~\cite{Gijswijt2009} allows to efficiently perform these multiplications defining transition and differential operators.

For $a, b \in [d]_S$, define $T_{a \to b} \in \mathrm{End}^{\mathfrak{G}_n}(\mathcal{H}^{\otimes n})$ by:
\begin{equation}
    (T_{a \to b})_{(\underline{i}, \underline{j})} = \begin{cases}
        1 & \text{if } \exists!\, k : i_k = a,\; j_k = b,
            \text{ and } i_l = j_l\ \forall\, l \neq k \\
        0 & \text{otherwise}
    \end{cases}
\end{equation}
The action of \textit{transition operators} on orbit matrices can be computed efficiently as follows.

\begin{proposition}[Proposition 4, \cite{Gijswijt2009}]
\label{prop:transition_action}
For every orbit matrix $C_E$ and $a \neq b$:
\begin{align}
\label{eq:first_transition}
    C_E \cdot T_{a \to b} &= \sum_{c:\, E_{(c,a)} > 0} (E_{(c,b)} + 1) \cdot C_{E - E^{c,a} + E^{c,b}} \\
    \label{eq:second_transition}
    T_{a \to b} \cdot C_E &= \sum_{c:\, E_{(b,c)} > 0} (E_{(a,c)} + 1) \cdot C_{E - E^{b,c} + E^{a,c}}
\end{align}
where $E^{a,b}$ denotes the elementary count matrix with a single $1$ at position $(a,b)$, and the sums range only over values of $c$ that yield valid count matrices.
\end{proposition}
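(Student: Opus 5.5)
We prove the two identities by a direct entrywise computation, exploiting that all matrices involved are $0$--$1$ valued and indexed by count matrices; by symmetry (taking transposes, using Lemma~\ref{lem:transpose_orbit_basis} and $T_{a\to b}^T = T_{b\to a}$) it suffices to establish \eqref{eq:first_transition} and then derive \eqref{eq:second_transition}. Note first that $T_{a\to b}$ is itself a sum of orbit matrices: its nonzero entries are pairs $(\underline{i},\underline{j})$ whose count matrix is $\mathrm{diag}(m) - E^{a,a}\cdot 0 + E^{a,b}$ for some diagonal part, so $T_{a\to b} = \sum_{D} C_{D + E^{a,b}}$ with $D$ ranging over diagonal count matrices of total $n-1$. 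Consequently $C_E T_{a\to b}$ lies in $\esn(\HS\n)$ and can be expanded in the orbit basis, and we only need its entry at one representative pair per orbit.

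Fix $(\underline{i},\underline{k})$ and compute $(C_E T_{a\to b})_{(\underline{i},\underline{k})} = \sum_{\underline{j}} (C_E)_{(\underline{i},\underline{j})} (T_{a\to b})_{(\underline{j},\underline{k})}$. A nonzero term requires $\underline{j}$ to agree with $\underline{k}$ except at a single position $p$ where $j_p = a$, $k_p = b$. Given such a $\underline{j}$, let $c \coloneqq i_p$; then $E^{(\underline{i},\underline{j})} = E^{(\underline{i},\underline{k})} - E^{c,b} + E^{c,a}$. So the condition $E^{(\underline{i},\underline{j})} = E$ is equivalent to $E^{(\underline{i},\underline{k})} = E - E^{c,a} + E^{c,b}$ with $E_{(c,a)}>0$ (needed so that position $p$ of type $(c,a)$ exists). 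Hence the product is supported on the orbits $C_{E - E^{c,a}+E^{c,b}}$, and the coefficient equals the number of admissible positions $p$ for a fixed representative $(\underline{i},\underline{k})$ of that orbit: positions with $(i_p,k_p)=(c,b)$, i.e.\ $(E - E^{c,a}+E^{c,b})_{(c,b)} = E_{(c,b)}+1$ choices (using $a\neq b$ so the $(c,a)$ subtraction does not affect the $(c,b)$ entry). Changing $k_p$ from $b$ to $a$ at any such $p$ produces a valid $\underline{j}$ in orbit $E$, and distinct $p$ give distinct $\underline{j}$, which gives \eqref{eq:first_transition}.

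One subtlety is that different values of $c$ must yield distinct target count matrices, so that no coefficients need to be merged; this holds since $E - E^{c,a}+E^{c,b}$ determines $c$ for $a\ne b$. For \eqref{eq:second_transition} we either repeat the argument with the free index on the left (now $\underline{j}$ differs from $\underline{i}$ at one position with $i_p=a$, $j_p=b$, and $c\coloneqq k_p$), or transpose \eqref{eq:first_transition} applied to $E^T$ with $a,b$ swapped. The main obstacle is purely bookkeeping: verifying independence of the count from the chosen representative and that the $+1$ correctly accounts for the modified entry. We expect no genuine difficulty beyond this.
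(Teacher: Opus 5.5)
Your proof is correct and follows the paper's argument. The first identity comes from an entrywise count of the intermediate multi-indices $\underline{j}$, obtained from $\underline{k}$ by turning a single $b$ into an $a$, which gives $(E-E^{c,a}+E^{c,b})_{(c,b)}=E_{(c,b)}+1$ choices. The second identity comes from transposing the first with $E\to E^T$ and $a\leftrightarrow b$, exactly as the paper does via $T_{a\to b}C_E=(C_{E^T}T_{b\to a})^T$. Your observation that distinct $c$ give distinct target orbits makes explicit a point the paper leaves implicit. The displayed formula for $T_{a\to b}=\sum_D C_{D+E^{a,b}}$ should simply read $D+E^{a,b}$, without the stray ``$-E^{a,a}\cdot 0$'' term.
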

\begin{proof}
    Consider a multi-index pair $(\underline{i}, \underline{j})$. Since the product involves incidence matrices, the generic element $(C_E \cdot T_{a \to b})_{(\underline{i}, \underline{j})}$ equals the number of $\underline{l} \in [d_S]^{\times n}$ such that $E^{(\underline{i}, \underline{l})} = E$ and $\underline{j}$ is obtained from $\underline{l}$ by changing an $a$ into a $b$ in some position $k \in [n]$. If $i_k = c,$ then $E^{(\underline{i}, \underline{j})} = E - E^{c,a} + E^{c,b}$; in that case, the number of possible $\underline{l}$ equals $(E^{(\underline{i}, \underline{j)}})_{(c,b)} = E_{(c,b)} + 1$, concluding the proof of the first line. The see the second line, observe that:\begin{equation}
        T_{a \to b} \cdot C_E = (C_{E^T} \cdot T_{b\to a})^T
    \end{equation}
    from \eqref{eq:count-orbit-transpose-relation}.
\end{proof}

For $a, b \in [d_S]$ with $a \neq b$, define:
\begin{align}
    d_{a \to b} &\coloneqq \sum_{c=1}^{d_S} x_{c,a}\frac{\partial}{\partial x_{c,b}} \\
    d^*_{a \to b} &\coloneqq \sum_{c=1}^{d_S} x_{a,c} \frac{\partial}{\partial x_{b,c}}
\end{align}

These \textit{differential operators} are the polynomial counterparts of the transition operators. More precisely, we have the following result. 

\begin{proposition}[Proposition 6, \cite{Gijswijt2009}]
\label{prop:diff_transition}
For any $A \in \mathrm{End}^{\mathfrak{G}_n}(\mathcal{H}^{\otimes n})$ and $a \neq b$:
\begin{align}
    f_{A \cdot T_{a \to b}} &= d_{b \to a}\, f_A \\
    f_{T_{a \to b} \cdot A} &= d^*_{a \to b}\, f_A
\end{align}
\end{proposition}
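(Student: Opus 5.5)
The plan is to give $f_A$ a coordinate-free form, so that both identities reduce to the product rule for tensor powers. By \eqref{eq:tensor_product_orbit} and \eqref{eq:coeff_tensor_product}, for every $L \in \mathbb{C}^{d_S\times d_S}$ we have $L^{\otimes n} = \sum_E x_E(L)\, C_E$, with the sum over all count matrices. Linearity of the trace in \eqref{eq:encoding_polynomial} then gives
\begin{equation*}
  f_A(L) = \mathrm{Tr}\bigl(A^\dagger L^{\otimes n}\bigr) \qquad \forall L .
\end{equation*}
Both sides of each claimed identity are polynomials in the $d_S^2$ entries of $L$. So it suffices to prove equality of the evaluations for every $L$, and the coefficient identities follow. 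Alternatively, one can run the same argument with $L$ treated as a matrix of formal variables.

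Next I identify the transition operator as a sum of local operators. Since $a\neq b$, the condition ``$i_k=a$, $j_k=b$ and $i_l=j_l$ for $l\neq k$'' automatically singles out a unique $k$. Hence
\begin{equation*}
  T_{a\to b} = \sum_{k=1}^n \mathbbm{1}^{\otimes (k-1)}\otimes \ket{a}\!\bra{b}\otimes \mathbbm{1}^{\otimes(n-k)} .
\end{equation*}
This operator is real, so $T_{a\to b}^\dagger = T_{b\to a}$, consistent with Lemma \ref{lem:transpose_orbit_basis}.

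For the first identity I write
\begin{equation*}
  f_{A T_{a\to b}}(L) = \mathrm{Tr}\bigl(T_{a\to b}^\dagger A^\dagger L^{\otimes n}\bigr) = \mathrm{Tr}\bigl(A^\dagger L^{\otimes n} T_{b\to a}\bigr).
\end{equation*}
Expanding,
\begin{equation*}
  L^{\otimes n}T_{b\to a} = \sum_k L^{\otimes(k-1)}\otimes L\ket{b}\!\bra{a}\otimes L^{\otimes (n-k)} = \frac{d}{dt}\Big|_{t=0}\bigl(L+tM\bigr)^{\otimes n}, \qquad M \coloneqq L\ket{b}\!\bra{a}.
\end{equation*}
Hence $f_{AT_{a\to b}}(L)$ is the directional derivative of $f_A$ at $L$ in direction $M$, namely $\sum_{c,e} M_{c,e}\,\partial f_A/\partial x_{c,e}$. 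Since $M_{c,e}=x_{c,b}\,\delta_{e,a}$, this equals
\begin{equation*}
  \sum_c x_{c,b}\,\frac{\partial f_A}{\partial x_{c,a}} = d_{b\to a} f_A .
\end{equation*}

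The second identity follows the same way. By cyclicity of the trace, $f_{T_{a\to b}A}(L) = \mathrm{Tr}\bigl(A^\dagger T_{b\to a}L^{\otimes n}\bigr)$, which is the derivative of $f_A$ in direction $M' = \ket{b}\!\bra{a}L$. Its entries are $M'_{e,c} = \delta_{e,b}\, x_{a,c}$, so the derivative is $\sum_c x_{a,c}\,\partial f_A/\partial x_{b,c} = d^*_{a\to b} f_A$. Alternatively, one can deduce it from the first identity via transposition, as in the proof of Proposition \ref{prop:transition_action}.

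The main obstacle I expect is bookkeeping of conventions rather than any deep step. Three things must be checked carefully: that $T^\dagger$ swaps $a$ and $b$; the index placement in $d_{a\to b}$ versus $d^*_{a\to b}$; and that the conjugation in $\mathrm{Tr}(A^\dagger C_E)$ is compatible with writing $f_A(L)=\mathrm{Tr}(A^\dagger L^{\otimes n})$. All three come down to verifying the two short index computations above.
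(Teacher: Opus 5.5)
Your proof is correct, and it takes a genuinely different route from the paper's. The paper works at the level of coefficients. It writes $f_{A T_{a\to b}}=\sum_E x_E\,\mathrm{Tr}[A^\dagger C_E T_{b\to a}]$ and expands $C_E T_{b\to a}$ in the orbit basis via Proposition~\ref{prop:transition_action}. It then reindexes with $E'=E-E^{c,b}+E^{c,a}$, so that the multiplicity $E_{(c,a)}+1$ becomes $E'_{(c,a)}$, and matches the result term by term with the action of $d_{b\to a}$ on the monomials $x_{E'}$. The second identity is said to follow analogously from the second line of Proposition~\ref{prop:transition_action}. You instead combine the generating identity $f_A(L)=\mathrm{Tr}(A^\dagger L^{\otimes n})$ with the single-site decomposition of $T_{a\to b}$. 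This turns both identities into the Leibniz rule for $\frac{d}{dt}(L+tM)^{\otimes n}$, along $M=L\ket{b}\!\bra{a}$ for the first and $M'=\ket{b}\!\bra{a}L$ for the second. The multiplicity that Proposition~\ref{prop:transition_action} has to count by hand is then simply the exponent brought down by $\partial/\partial x_{c,a}$. Your route is shorter, treats both identities symmetrically, and does not need Proposition~\ref{prop:transition_action} at all. It also explains a fact that the paper checks by a separate commutator computation in the proof of Theorem~\ref{thm:general_f}. The $d$'s and $d^*$'s are the derivations induced by the linear vector fields $L\mapsto LX$ and $L\mapsto YL$ (with $X,Y$ matrix units), and these commute because $(YL)X=Y(LX)$. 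Two $d^*$'s, by contrast, correspond to left multiplications by non-commuting matrix units, so they need not commute. The paper's route, in exchange, stays inside the count-matrix combinatorics used throughout. It also reuses Proposition~\ref{prop:transition_action}, which the paper introduces anyway as an explicit rule for multiplying orbit matrices by transition operators.
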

\begin{proof}
Using $T_{a \to b}^\dagger = T_{b \to a}$ and cyclicity of the trace:
\begin{equation}
    f_{A \cdot T_{a \to b}} = \sum_E x_E\, \mathrm{Tr}[A^\dagger C_E T_{b \to a}]
\end{equation}
Applying~\eqref{eq:first_transition} to $C_E T_{b \to a}$ (i.e.,~\eqref{eq:first_transition} with $a$ and $b$ exchanged):
\begin{equation}
    C_E T_{b \to a} = \sum_{c:\, E_{(c,b)} > 0} (E_{(c,a)} + 1)\, C_{E - E^{c,b} + E^{c,a}}
\end{equation}
Substituting $E' = E - E^{c,b} + E^{c,a}$ (so $E = E' + E^{c,b} - E^{c,a}$ and $E_{(c,a)} + 1 = E'_{(c,a)}$ since $a \neq b$):
\begin{equation}
    f_{A \cdot T_{a \to b}} = \sum_c \sum_{E'}E'_{(c,a)}\, x_{E' + E^{c,b} - E^{c,a}}\,\mathrm{Tr}[A^\dagger C_{E'}]
\end{equation}
On the other hand:
\begin{equation}
    d_{b \to a}\, f_A = \sum_c x_{c,b}
    \frac{\partial}{\partial x_{c,a}} \sum_{E'}
    \mathrm{Tr}[A^\dagger C_{E'}]\, x_{E'}
    = \sum_c \sum_{E'} E'_{(c,a)}\,
    x_{E' - E^{c,a} + E^{c,b}}\,
    \mathrm{Tr}[A^\dagger C_{E'}]
\end{equation}
Since $E' + E^{c,b} - E^{c,a} = E' - E^{c,a} + E^{c,b}$, the two expressions coincide. A completely analogous proof gives the second line, using \eqref{eq:second_transition} instead.
\end{proof}

Combining the product formula for $P_\lambda$ with the differential--transition correspondence, we obtain the general formula.

\begin{theorem}[Theorem 7, \cite{Gijswijt2009}]
\label{thm:general_f}
Let $\tau, \gamma \in \mathcal{T}_{\lambda, d}$ with associated count matrices $E_\tau = E^{(\tau, \tau_\lambda)}$ and $E_\gamma = E^{(\gamma, \tau_\lambda)}$. Then:
\begin{equation}
\label{eq:f_general_gijswijt}
    f_{\tau, \gamma}(X) =\prod_{a > b}\left[(E_\tau)_{(a,b)}!\, (E_\gamma)_{(a,b)}!\right]^{-1} \cdot \prod_{b=1}^{d_S-1} \prod_{a=b+1}^{d_S} (d_{a \to b})^{(E_\gamma)_{(a,b)}}\,(d^*_{a \to b})^{(E_\tau)_{(a,b)}}\circ P_\lambda(X)
\end{equation}
\end{theorem}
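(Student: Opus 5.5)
The starting point is the reduction \eqref{eq:link_SSYT_constant_tab}, $|u_\tau\rangle = C_{E_\tau}|u_\lambda\rangle$, which rewrites every matrix element as $\langle u_\tau|C_E u_\gamma\rangle = \langle u_\lambda|C_{E_\tau}^T C_E C_{E_\gamma}|u_\lambda\rangle$. Consequently $f_{\tau,\gamma}=f_{A}$ with $A = C_{E_\tau}^{T\,\dagger}\cdots$; more concretely $f_{\tau,\gamma}(X)=\sum_E \mathrm{Tr}\bigl[(C_{E_\tau}|u_\lambda\rangle\langle u_\lambda|C_{E_\gamma}^T)^\dagger C_E\bigr]x_E$. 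The plan is therefore to express the two orbit matrices $C_{E_\tau}$ and $C_{E_\gamma}$ as (normalised) words in transition operators $T_{a\to b}$. Once that is done, Proposition~\ref{prop:diff_transition} converts each transition operator multiplying $A=|u_\lambda\rangle\langle u_\lambda|$ on the left or on the right into a differential operator $d^*_{a\to b}$ or $d_{a\to b}$ acting on $f_{|u_\lambda\rangle\langle u_\lambda|}=P_\lambda$, which is given by Theorem~\ref{thm:gijswijt}.

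\textbf{Step 1 (factorisation).} The aim is to show
\[
T_{a_1\to b_1}^{k_1}\cdots T_{a_m\to b_m}^{k_m}\,|u_\lambda\rangle=\Bigl(\prod_{a>b}(E_\tau)_{(a,b)}!\Bigr)\,C_{E_\tau}|u_\lambda\rangle,
\]
where the exponents $k$ are the off-diagonal entries $(E_\tau)_{(a,b)}$ with $a>b$, taken in the order dictated by the double product in \eqref{eq:f_general_gijswijt}. For $\tau$ semistandard, $E_\tau$ is lower triangular: an entry $a$ in row $b$ forces $a\ge b$. Its diagonal is determined by $\lambda$ together with the off-diagonal part. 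The factorisation is proved by induction, applying \eqref{eq:first_transition} and \eqref{eq:second_transition} one transition at a time and starting from $C_{E_{\tau_\lambda}}=\mathbbm{1}$ on the relevant sector.

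At each application Proposition~\ref{prop:transition_action} yields a sum over $c$. The unwanted terms must be shown to vanish on $|u_\lambda\rangle$ via the criterion of \cite[Lemma 1]{Gijswijt2009}, namely $C_E|u_\lambda\rangle=0$ unless $\lambda_E=\lambda$. The surviving term has multiplicity equal to the updated entry $E_{(c,b)}+1$. Repeating this $k$ times produces the factorial $k!$ for each entry, which gives the product of factorials. The same argument applied to $\gamma$ with transposes (Lemma~\ref{lem:transpose_orbit_basis}) handles the right factor.

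\textbf{Step 2 (translation).} With $A=|u_\lambda\rangle\langle u_\lambda|$, the identity $f_{T\cdot A}=d^* f_A$ from Proposition~\ref{prop:diff_transition} is applied for the $\tau$ side, and $f_{A\cdot T}=d f_A$ for the $\gamma$ side. Here one must check that the adjoint and transpose conventions send $T_{a\to b}$ acting on the bra to $d_{a\to b}$ with the right index order; this uses $T_{a\to b}^\dagger=T_{b\to a}$. Dividing by the factorials from Step 1 then yields \eqref{eq:f_general_gijswijt}. The order of the differential operators matters only insofar as it mirrors the order of the transitions, so the ordering chosen in Step 1 has to be kept consistent.

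The main obstacle is Step 1: showing that the cross terms vanish and that the counts are exactly $(E_\tau)_{(a,b)}!$. The first thing to try is an induction that processes symbols from largest $a$ downward. At each intermediate stage the count matrix stays lower triangular with the column sums of $\lambda$, so that any term in which $c$ differs from the intended row either breaks the partition shape, and hence is annihilated by \cite[Lemma 1]{Gijswijt2009}, or is absent because $E_{(c,a)}=0$.
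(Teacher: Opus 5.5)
Your route is the paper's. You write $|u_\tau\rangle$ and $\langle u_\gamma|$ as transition-operator words acting on $|u_\lambda\rangle$, then use Proposition~\ref{prop:diff_transition} to turn left multiplication by $T_{a\to b}$ into $d^*_{a\to b}$ and right multiplication by $T_{b\to a}$ into $d_{a\to b}$, acting on $P_\lambda$. The paper quotes the factorisation from \cite[Proposition 5]{Gijswijt2009}, whereas you try to prove it, and the mechanism you propose for killing unwanted terms in Step~1 does not work.

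The relevant rule is \eqref{eq:second_transition}, $T_{a\to b}C_E=\sum_{c:\,E_{(b,c)}>0}(E_{(a,c)}+1)\,C_{E-E^{b,c}+E^{a,c}}$. Every term on the right has exactly the same column sums as $E$. So \cite[Lemma 1]{Gijswijt2009} never removes any term: an unwanted term contributes $C_{E'}|u_\lambda\rangle=|u_{\tau_{E'}}\rangle$, which is in general nonzero. The indices $E_{(c,a)}$ and $E_{(c,b)}+1$ you quote come from the right-multiplication rule \eqref{eq:first_transition}, which is not the one acting here.

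What actually makes the factorisation true is that, in a suitable order, unwanted terms never arise. The condition is that each $T_{a\to b}$ is applied before every $T_{b\to b'}$ with $b'<b$. Both the order of the double product (which acts with the largest $b$ first) and your largest-$a$-first order satisfy it. Under this condition, when $T_{a\to b}$ acts the symbol $b$ still occurs only in row $b$, i.e.\ $E_{(b,c)}=0$ for $c\neq b$. The sum then reduces to the single term $c=b$, with coefficient equal to the updated entry $E_{(a,b)}+1$. Iterating gives the operator identity
\[
\Bigl(\prod_{a>b}(E_\tau)_{(a,b)}!\Bigr)C_{E_\tau}=\Bigl(\prod T_{a\to b}^{(E_\tau)_{(a,b)}}\Bigr)C_{\mathrm{diag}(\lambda)},
\]
and Lemma~1 enters only through $C_{\mathrm{diag}(\lambda)}|u_\lambda\rangle=|u_\lambda\rangle$.

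If the ordering condition is violated, the identity can fail outright. For $\lambda=(1,1)$ and $\tau$ the column with entries $2,3$, one has $T_{2\to1}T_{3\to2}|u_\lambda\rangle=|u_\tau\rangle$, but $T_{3\to2}T_{2\to1}|u_\lambda\rangle=0$.

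Finally, Step~2 yields $\bigl(\prod d^*\bigr)\bigl(\prod d\bigr)P_\lambda$. To reach the interleaved form in \eqref{eq:f_general_gijswijt} you also need $[d_{a\to b},d^*_{a'\to b'}]=0$. The paper verifies this by a direct commutator computation. It also follows from $(T_LA)T_R=T_L(AT_R)$ combined with Proposition~\ref{prop:diff_transition}.
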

\begin{proof}
The proof follows Theorem 7 of~\cite{Gijswijt2009}. We proceed in three
steps.

First, by Proposition~5 of~\cite{Gijswijt2009}, for any SSYT
$\tau \in \mathcal{T}_{\lambda, d_S}$ with lower-triangular count matrix
$E_\tau = E^{(\tau, \tau_\lambda)}$:
\begin{equation}
\label{eq:transition_decomposition_vector}
    \left(\prod_{a > b} (E_\tau)_{(a,b)}!\right) |u_\tau\rangle = \left(\prod_{b=1}^{d_S-1} \prod_{a=b+1}^{d_S} T_{a \to b}^{(E_\tau)_{a,b}}\right) |u_\lambda\rangle
\end{equation}
This follows from the operator identity~$\left(\prod_{a>b} (E_\tau)_{a,b}!\right) C_{E_\tau} = \left(\prod T_{a \to b}^{(E_\tau)_{a,b}}\right) C_{\mathrm{diag}(\lambda)}$, combined with the fact that the diagonal orbit matrix $C_{\mathrm{diag}(\lambda)}$ acts as the identity on $|u_\lambda\rangle$, since all multi-indices in the expansion of $|u_\lambda\rangle$ have composition $\lambda = w(\tau_\lambda)$.

We then express $f_{\tau,\gamma}$ via left and right multiplications. From~\eqref{eq:link_SSYT_constant_tab} and
\eqref{eq:transition_decomposition_vector}:
\begin{align}
    |u_\tau\rangle &= \frac{1}{\prod_{a>b}(E_\tau)_{a,b}!}
    \left(\prod_{b,a>b} T_{a \to b}^{(E_\tau)_{a,b}}\right)
    |u_\lambda\rangle \\[4pt]
    \langle u_\gamma| &= \frac{1}{\prod_{a>b}(E_\gamma)_{a,b}!}
    \langle u_\lambda|
    \left(\prod_{b,a>b} T_{a \to b}^{(E_\gamma)_{a,b}}\right)^\dagger
    = \frac{1}{\prod_{a>b}(E_\gamma)_{a,b}!}
    \langle u_\lambda|
    \left(\prod_{b,a>b} T_{b \to a}^{(E_\gamma)_{a,b}}\right)
\end{align}
where the second line uses $T_{a \to b}^\dagger = T_{b \to a}$. Therefore:
\begin{equation}
    \prod_{a>b}\!\left[(E_\tau)_{a,b}!(E_\gamma)_{a,b}!\right]
    \cdot |u_\tau\rangle\langle u_\gamma|
    = \left(\prod_{b,a>b} T_{a \to b}^{(E_\tau)_{a,b}}\right)
    |u_\lambda\rangle\langle u_\lambda|
    \left(\prod_{b,a>b} T_{b \to a}^{(E_\gamma)_{a,b}}\right)
\end{equation}

Finally, applying the polynomial $f$ to both sides, and using Proposition~\ref{prop:diff_transition} iteratively:
\begin{itemize}
    \item Each left multiplication by $T_{a \to b}$ translates to
    $d^*_{a \to b}$ (second line of Proposition~\ref{prop:diff_transition}).
    \item Each right multiplication by $T_{b \to a}$ translates to $d_{a \to b}$ (first line of Proposition~\ref{prop:diff_transition}, with $a$ and $b$ exchanged, $f_{A \cdot T_{b \to a}} = d_{a \to b} f_A$).
\end{itemize}
Moreover, the operators $d_{a \to b}$ and $d^*_{a' \to b'}$ commute for all index pairs. Indeed:
\begin{equation}
    [d^*_{a_1 \to b_1},\, d_{a_2 \to b_2}]
    = \left[\sum_c x_{a_1,c}\frac{\partial}{\partial x_{b_1,c}},\;
    \sum_{c'} x_{c',a_2}\frac{\partial}{\partial x_{c',b_2}}\right]
    = x_{a_1,a_2}\frac{\partial^2}{\partial x_{b_1,b_2}}
    - x_{a_1,a_2}\frac{\partial^2}{\partial x_{b_1,b_2}} = 0
\end{equation}
where the two terms arise from $[\partial_{b_1,c}, x_{c',a_2}] = \delta_{b_1,c'}\delta_{c,a_2}$ and $[x_{a_1,c}, \partial_{c',b_2}] = -\delta_{a_1,c'}\delta_{c,b_2}$ respectively. This commutativity
allows us to freely interleave the $d$ and $d^*$ operators. Combining
everything and using $f_{|u_\lambda\rangle\langle u_\lambda|} = P_\lambda$
(Theorem~\ref{thm:gijswijt}):
\begin{equation}
    \prod_{a>b}\left[(E_\tau)_{a,b}!(E_\gamma)_{a,b}!\right]
    \cdot f_{\tau,\gamma}
    = \prod_{b=1}^{d_S-1} \prod_{a=b+1}^{d_S}
    (d_{a \to b})^{(E_\gamma)_{a,b}}\,
    (d^*_{a \to b})^{(E_\tau)_{a,b}} \circ P_\lambda
\end{equation}
Dividing both sides by $\prod_{a>b}[(E_\tau)_{a,b}!(E_\gamma)_{a,b}!]$
yields~\eqref{eq:f_general_gijswijt}.
\end{proof}

Since $[d_{a \to b}, d^*_{a' \to b'}] = 0$ for all index pairs, the ordering of the operators in~\eqref{eq:f_general_gijswijt} is immaterial: the $d$ and $d^*$ factors can be applied in any order. In particular, one may equivalently group all $d^*$ operators together (bra side) and all $d$ operators together (ket side):
\begin{equation}
    f_{\tau,\gamma} = \prod_{a>b}[(E_\tau)_{a,b}!(E_\gamma)_{a,b}!]^{-1}
    \cdot \left(\prod_{a>b} (d^*_{a \to b})^{(E_\tau)_{a,b}}\right)
    \left(\prod_{a>b} (d_{a \to b})^{(E_\gamma)_{a,b}}\right)
    P_\lambda
\end{equation}
However, operators of the same type (e.g., $d^*_{a_1 \to b_1}$ and $d^*_{a_2 \to b_2}$) do not commute in general, and their relative ordering within each group must be consistent with the derivation (descending in the pair indices $(a,b)$, i.e., operators with larger indices applied first).

\subsection{Efficient Computation of the Gram Matrix}

\label{sec:gram_matrix}

The Gram matrix $G_\lambda = (\langle u_\tau | u_\gamma \rangle)_{\tau, \gamma \in \mathcal{T}_{\lambda, d_S}}$ admits a block-diagonal structure arising from weight orthogonality, and can be computed efficiently using the polynomial.

The \textit{weight} of a tableau $\tau \in \mathcal{T}_{\lambda, d_S}$ is $w(\tau) \coloneqq (\mu_1, \ldots, \mu_{d_S})$, where $\mu_a$ counts the occurrences of symbol $a \in [d_S]$ in $\tau$.

\begin{lemma}
\label{lemma:weight_orthogonality}
Let $\lambda \in \mathrm{Par}(d_S, n)$ and $\tau, \gamma \in \mathcal{T}_{\lambda, d_S}$. Then:
\begin{equation}
\label{eq:weight_condition}
    w(\tau) \neq w(\gamma) \implies \langle u_\tau | u_\gamma \rangle = 0
\end{equation}
\end{lemma}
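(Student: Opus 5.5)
The plan is to show that $\ket{u_\tau}$ lies entirely in the span of computational basis vectors $\ket{\underline{i}}$ whose content (number of occurrences of each symbol $a\in[d_S]$) equals $w(\tau)$. Distinct contents then give orthogonal supports, so the inner product vanishes.

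First, I would expand the definition
\[
\ket{u_\tau} = \sum_{\tau'\sim\tau}\sum_{c\in C_\lambda}\mathrm{sgn}(c)\bigotimes_{k=1}^n \ket{\tau'(c(k))}.
\]
Each summand is a standard basis vector $\ket{\underline{i}}$ with $i_k=\tau'(c(k))$. Since $c$ is a bijection of the box positions, the multiset $\{i_1,\dots,i_n\}$ equals the multiset of entries of $\tau'$. Since $\tau'\sim\tau$ is obtained from $\tau$ by permuting entries within rows, this multiset equals the multiset of entries of $\tau$. Hence every basis vector appearing in $\ket{u_\tau}$, with whatever coefficient and after cancellations, has content $w(\tau)$. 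The same holds for $\ket{u_\gamma}$ with content $w(\gamma)$.

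Next, I would write $\langle u_\tau|u_\gamma\rangle=\sum_{\underline{i}}\overline{(u_\tau)_{\underline{i}}}(u_\gamma)_{\underline{i}}$. A term can be nonzero only if $\underline{i}$ has content both $w(\tau)$ and $w(\gamma)$. When $w(\tau)\neq w(\gamma)$ no such $\underline{i}$ exists, so every term vanishes.

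As an alternative phrasing, I could note that $\langle u_\tau|u_\gamma\rangle=f_{\tau,\gamma}(\mathbbm{1})$ and that only diagonal count matrices $E$ contribute. By Proposition~\ref{prop:f_explicit}, a nonzero $\langle u_\tau|C_E u_\gamma\rangle$ forces the row sums of $E$ to equal $w(\tau)$ and its column sums to equal $w(\gamma)$. For diagonal $E$ these coincide, which again requires $w(\tau)=w(\gamma)$. A second alternative is that the tori $\mathrm{diag}(z)^{\otimes n}$ commute with $\mathfrak{S}_n$ and act on $\ket{u_\tau}$ by the character $z^{w(\tau)}$.

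There is no real obstacle here. The only care needed is the observation that the column permutation $c$ and the row equivalence preserve the multiset of entries.
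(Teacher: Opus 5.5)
Your proof is correct and follows the same route as the paper: the paper asserts that $|u_\tau\rangle$ lies in the weight-$w(\tau)$ subspace of $S^{\otimes n}$, and that distinct weight subspaces are orthogonal because they are eigenspaces of the diagonal torus action (your third alternative). Your multiset-of-entries argument makes the first claim explicit, and the disjoint computational-basis supports give the orthogonality directly.
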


\begin{proof}
The Young-symmetrized vector $|u_\tau\rangle$ lies entirely in the weight-$w(\tau)$ subspace of $S^{\otimes n}$. Since distinct weight subspaces are mutually orthogonal (they are eigenspaces of the diagonal torus action with distinct eigenvalues), the claim follows.
\end{proof}

For fixed $\lambda$, the SSYTs partition by weight: $\mathcal{T}_{\lambda, d_S} = \bigcup_\mu \mathcal{T}_{\lambda, d_S}(\mu)$, where $|\mathcal{T}_{\lambda, d_S}(\mu)| = K_{\lambda, \mu}$ is the \textit{Kostka number}~\cite{Kostka1882}. Consequently, $G_\lambda$
decomposes as:
\begin{equation}
    G_\lambda = \bigoplus_{\mu} G_\lambda(\mu)
\end{equation}
where $G_\lambda(\mu)$ is a $K_{\lambda,\mu} \times K_{\lambda,\mu}$ block.

For $d_S = 2$ (qubit systems), each weight class contains exactly one SSYT, since the SSYT constraint forces all $2$s into the second row, leaving the tableau uniquely determined by the number of $1$s in the first row. Therefore $G_\lambda$ is always diagonal.

\begin{corollary}
\label{cor:gram_matrix}
For all $\lambda \in \mathrm{Par}(d_S, n)$, the Gram matrix $G_\lambda$ is block-diagonal with blocks indexed by weight and can be computed in $\mathrm{poly}(n)$ time for fixed $d_S$. Its entries admit the combinatorial formula:
\begin{equation}
\label{eq:gram_combinatorial}
    (G_\lambda)_{\tau,\gamma} = \langle u_\tau | u_\gamma \rangle = |C_\lambda| \sum_{r_1, r_2 \in R_\lambda}\; \sum_{\substack{c \in C_\lambda \\[2pt] c(r_2(\gamma)) = r_1(\tau)}} \mathrm{sgn}(c)
\end{equation}
\end{corollary}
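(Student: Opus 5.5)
The statement to prove is Corollary~\ref{cor:gram_matrix}. It has three parts: block-diagonality by weight, the combinatorial formula, and $\mathrm{poly}(n)$ computability.

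\textbf{Block-diagonality.} This follows directly from Lemma~\ref{lemma:weight_orthogonality}. Order the SSYTs in $\mathcal{T}_{\lambda,d_S}$ by weight. Entries between different weight classes then vanish, which gives $G_\lambda=\bigoplus_\mu G_\lambda(\mu)$.

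\textbf{The formula \eqref{eq:gram_combinatorial}.} I would expand both Young-symmetrized vectors using their definition,
\[
\langle u_\tau|u_\gamma\rangle=\sum_{\tau'\sim\tau}\sum_{\gamma'\sim\gamma}\sum_{c,c'\in C_\lambda}\mathrm{sgn}(c)\,\mathrm{sgn}(c')\,\Bigl\langle \textstyle\bigotimes_k \tau'(c(k))\Big|\textstyle\bigotimes_k\gamma'(c'(k))\Bigr\rangle .
\]
The computational basis is orthonormal, so each inner product is $1$ exactly when the two fillings $c\tau'$ and $c'\gamma'$ coincide as tableaux, and $0$ otherwise.

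As in the proof of Theorem~\ref{thm:gijswijt}, I substitute $c\mapsto c'^{-1}c$. Since $\mathrm{sgn}$ is a homomorphism, this removes one column sum and produces the factor $|C_\lambda|$.

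Next I rewrite the sums over row-equivalent tableaux. Each $\tau'\sim\tau$ has the form $r_1(\tau)$ with $r_1\in R_\lambda$, and each $\gamma'\sim\gamma$ has the form $r_2(\gamma)$ with $r_2\in R_\lambda$. The remaining condition then reads $c(r_2(\gamma))=r_1(\tau)$.

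One subtlety needs care here. The sum over distinct row-equivalent tableaux differs from a sum over $R_\lambda$ by stabilizer multiplicities. I would check that the formula is meant in the convention where $|u_\tau\rangle$ is summed over $r\in R_\lambda$, up to a normalization. Alternatively, I would state the formula with the correct multiplicity factor $\prod(\text{row-content multiplicities})!^{-1}$. Either choice leaves the block structure and the positivity of $G_\lambda$ unaffected.

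\textbf{Efficiency.} I would not evaluate \eqref{eq:gram_combinatorial} directly, because it has exponentially many terms. Instead I would use the identity $(G_\lambda)_{\tau,\gamma}=f_{\tau,\gamma}(\mathbbm{1}_{d_S})$. This holds because $\mathbbm{1}=\sum_{E\text{ diag}}C_E$ (see \eqref{eq:identity_orbit}), and $x_E(\mathbbm{1})=1$ precisely for diagonal $E$ and vanishes otherwise.

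The polynomial $f_{\tau,\gamma}$ can be evaluated at $\mathbbm{1}$ in $\mathrm{poly}(n)$ time by either method of the appendix. Via Theorem~\ref{thm:kappa_formula}, the number of valid count functions is bounded independently of $n$, and each determinant has size at most $d_S$. Via Theorem~\ref{thm:general_f}, one applies boundedly many differential operators to $P_\lambda$. Because $m_\lambda\le (n+1)^{d_S(d_S-1)/2}$, the number of entries is also polynomial in $n$.

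\textbf{Main obstacle.} The hardest step is making the $\mathrm{poly}(n)$ claim rigorous for Method~1. One must verify that $|\mathcal{K}(\tau,\gamma)|$ is polynomially bounded: the claim of an $n$-independent bound seems too strong, and the correct statement is a polynomial bound with degree depending on $d_S$. The fallback is Gijswijt's formula, which I would use if that verification fails.
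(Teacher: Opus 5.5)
Your route is essentially the paper's: block-diagonality from Lemma~\ref{lemma:weight_orthogonality}, the formula by expanding both Young vectors and absorbing one column sum into $|C_\lambda|$, and efficiency via $(G_\lambda)_{\tau,\gamma}=f_{\tau,\gamma}(\mathbbm{1}_{d_S})$; the paper obtains the formula by evaluating \eqref{eq:f_determinantal} at $X=\mathbbm{1}_{d_S}$, where each column determinant of Kronecker deltas becomes $\mathrm{sgn}(c)$. The caveats you raise are genuine, and the paper's proof passes over them: because $|u_\tau\rangle$ sums over \emph{distinct} tableaux $\tau'\sim\tau$, passing to $r_1,r_2\in R_\lambda$ overcounts by $|\mathrm{Stab}_{R_\lambda}(\tau)|\,|\mathrm{Stab}_{R_\lambda}(\gamma)|$ (for $\lambda=(2)$ and $\tau=\gamma$ the one-row tableau with entries $1,1$, the right-hand side of \eqref{eq:gram_combinatorial} equals $4$ while $\langle u_\tau|u_\tau\rangle=1$), so your corrected normalization is the right statement; likewise $|\mathcal{K}(\tau,\gamma)|$ grows polynomially rather than staying bounded in $n$, as does the number of operators $d_{a\to b},d^*_{a\to b}$ in Method~2 (which you call bounded, but it can be $\Theta(n)$), and this still yields the $\mathrm{poly}(n)$ claim.
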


\begin{proof}
The block-diagonal structure follows from Lemma~\ref{lemma:weight_orthogonality}. For the combinatorial formula, start from the determinantal expression~\eqref{eq:f_determinantal} evaluated at $X = \mathbbm{1}_{d_S}$:
\begin{equation}
    \langle u_\tau | u_\gamma \rangle = f_{\tau,\gamma}(\mathbbm{1}_{d_S}) = |C_\lambda| \sum_{\tau' \sim \tau}\; \sum_{\gamma' \sim \gamma} \prod_{j=1}^{\lambda_1} \det\!\left((\delta_{\tau'(i,j),\, \gamma'(i',j)})_{i,i'=1}^{\lambda^*_j} \right)
\end{equation}
The determinant $\det(\delta_{v_i, w_{i'}})_{i,i'=1}^t$ of a $t \times t$ matrix of Kronecker deltas vanishes unless $v$ and $w$ are permutations of the same $t$ distinct symbols, in which case it equals the sign of the unique permutation mapping $v$ to $w$. Therefore, the product over columns is nonzero only when, for every column $j$, the column-entry vectors of $\tau'$ and $\gamma'$ are related by a column permutation. This means precisely that there exists $c \in C_\lambda$ with $c(\gamma') = \tau'$, and the product evaluates to $\mathrm{sgn}(c)$. Writing $\tau' = r_1(\tau)$ and $\gamma' = r_2(\gamma)$ for $r_1, r_2 \in R_\lambda$ gives~\eqref{eq:gram_combinatorial}.

The weight-orthogonality is immediate from this formula: if $w(\tau) \neq w(\gamma)$, no element of $R_\lambda C_\lambda R_\lambda$ can map $\gamma$ to $\tau$ (since row permutations, column permutations, and their compositions all preserve weight), so the sum vanishes.

The computation is efficient: evaluating $\langle u_\tau | u_\gamma \rangle = f_{\tau,\gamma}(\mathbbm{1}_{d_S})$ takes $\mathrm{poly}(n)$ time by either Method~1 or Method~2 above.
\end{proof}
\section{Maximal Fidelity of Recovery}
\label{sec:entropy_channel}
\subsection{Properties of the Maximal Fidelity of Recovery}
The next lemma states and proves the main properties of the maximal fidelity of recovery defined in \eqref{eq:recovery_as_singlet_fraction}, based on the correspondence with the maximal singlet fraction.
\begin{lemma}
\label{lem:max_recovery_properties}
Let $\mathcal{M}_{R'\to B}$ be a quantum channel with $d_{R'} = d$. The maximal recovery coefficient $F_{\mathrm{D}}(\mathcal{M})$ satisfies the following properties:
\begin{enumerate}
    \item \textbf{Bounds}. \begin{equation}
    \label{eq:bounds_F_M}
        \frac{1}{d^2} \leq F_{\mathrm{D}}(\mathcal{M}) \leq 1
    \end{equation}Moreover, $F_{\mathrm{D}}(\mathcal{M}) = 1$ if and only if $\mathcal{M}$ is an isometric channel with $d_{B} \geq d$, while $F_{\mathrm{D}}(\mathcal{M}) = \frac{1}{d^2}$ if and only if $\mathcal{M} = \mathcal{R}^{\sigma_B}$ is a replacement channel.
        \item \textbf{Convexity}.
    \begin{equation}
        F_{\mathrm{D}}\left(\sum_{x \in \mathcal{X}} p_X(x) \mathcal{M}^x\right) \leq \sum_{x \in \mathcal{X}} p_X(x) F_{\mathrm{D}}(\mathcal{M}^x)
    \end{equation}
    \item \textbf{Entanglement breaking channels and PPT channels}. For every entanglement breaking or (more generally) positive-partial-transpose (PPT) channel \cite{Rains1999} $\mathcal{M}$:
    \begin{equation}
        F_{\mathrm{D}}(\mathcal{M}) \leq \frac{1}{d}
\end{equation}
\item $k-$\textbf{Extendibile Channels}.  For every $k-$extendible channel $\mathcal{M}$:\begin{equation}
\label{eq:Fidelity_recovery_k_extendible}
    F_D(\mathcal{M}) \leq \frac{k+d-1}{dk}
\end{equation}
    \item \textbf{Classical-quantum channels}. For a classical-quantum channel $\mathcal{M}^{cq}_{X\to B}$:
    \begin{equation}
        F_{\mathrm{D}}(\mathcal{M}^{cq}) = \frac{1}{d} P_{\mathrm{opt}}(\mathcal{M}^{cq})
    \end{equation}
    where $P_{\mathrm{opt}}(\mathcal{M}^{cq})$ is the optimal guessing probability \cite{Tomamichel2015} with quantum side information $\rho_B^x = \mathcal{M}^{cq}(\ket{x}\bra{x})$.    
    \item \textbf{Multiplicativity}.
    \begin{equation}
    \label{eq:multiplicativity_maximal_fidelity_recovery}
        F_{\mathrm{D}}(\mathcal{M}_1 \otimes \mathcal{M}_2) = F_{\mathrm{D}}(\mathcal{M}_1) \cdot F_{\mathrm{D}}(\mathcal{M}_2).
    \end{equation}
\item \textbf{Flagged channels}. Let $\mathcal{N}_{A \to ZB}$ be a flagged channel of the form
\begin{equation}
\label{eq:flagged_QCQ_channel}
    \mathcal{N}_{A \to ZB}(\rho_A) = \sum_{i=1}^\ell p_i \, \ket{i}\!\bra{i}_Z \otimes \mathcal{N}^i_{A \to B}(\rho_A),
\end{equation}
where $\{\mathcal{N}^i\}_{i \in [\ell]} \subset \mathrm{CPTP}(A \to B)$ and $\{p_i\}_{i \in [\ell]}$ is a probability distribution. Then
\begin{equation}
\label{eq:fidelity_recovery_flagged}
    F_{\mathrm{D}}(\mathcal{N}) = \sum_{i=1}^\ell p_i\, F_{\mathrm{D}}(\mathcal{N}^i),
\end{equation}
and the optimum is attained by a recovery map of the form $\mathcal{D}_{ZB \to R'} = \mathcal{D}^Z_{B \to R'} \circ \Pi_Z$, where $\Pi_Z$ is the measurement on $Z$ in the flag basis $\{\ket{i}\}_{i \in [\ell]}$ and $\{\mathcal{D}^i_{B \to R'}\}_{i \in [\ell]}$ is a family of conditional decoding channels.
\item \textbf{Group-Invariance}. Given a finite group $G$, if $\mathcal{M}$ is $G$-invariant with respect to the input, i.e. $\mathcal{M} (\rho) = U(g) \mathcal{M}(\rho) U(g)^\dag$ for all input states $\rho$ and $g \in G$, then the maximal fidelity of recovery \eqref{eq:SDP_Maximal_Recovery_Coefficient} is achieved by a $G-$invariant channel on its input, i.e.:\begin{equation}
\label{eq:G-invariant-decoder}
    \mathcal{D}(U(g)\, \sigma\, V(g)^\dagger)
    = \mathcal{D}(\sigma)
    \qquad \forall\, g \in G,\; \forall\, \sigma \in \mathcal{D}(\mathcal{H})
\end{equation}
In the special case $G = \mathfrak{S}_n$, if $\mathcal{M}$ is symmetric in the sense of \eqref{eq:condition_symmetry_encoder}, then $\mathcal{D}$ is symmetric in the sense of \eqref{eq:condition_symmetry_decoder}.
\end{enumerate}
\end{lemma}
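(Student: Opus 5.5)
The plan is to prove the eight items one at a time. Almost everything will follow from the identification $F_D(\mathcal{M}) = F^{(B)}_\Phi(\Phi^{\mathcal{M}}_{R B})$ in \eqref{eq:recovery_as_singlet_fraction}, the primal/dual pair \eqref{eq:SDP_Maximal_Recovery_Coefficient}, and the generalized-robustness form \eqref{eq:recovery_as_generalized_roboustness}.

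\textbf{Bounds.} For the lower bound I will feed the replacement decoder with output $\pi$ into the primal. Its adjoint Choi operator is $\pi_R\otimes \mathbbm{1}_B/d$, which gives value $\Tr[\Phi^{\mathcal{M}}]/d^2 = 1/d^2$. The upper bound $F_D \le 1$ holds because fidelity is at most one.

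\textbf{Equality cases.} If $F_D = 1$, the output equals $\Phi$, so $\mathcal{D}\circ\mathcal{M} = \mathrm{id}$. By the standard fact that channels with CPTP left inverses are isometric on the support (Knill--Laflamme with a single correctable error), $\mathcal{M}$ is an isometric channel; here I would cite rather than reprove. The case $F_D = 1/d^2$ needs care. I will use the dual: $Y_B = d\,\Phi^{\mathcal M}_B = \mathcal{M}(\pi)$ is not automatically feasible. So I will argue directly. If $\mathcal{M}$ is not replacement, then $\Phi^{\mathcal{M}}$ is not a product state, and a small perturbation of the replacement decoder toward a correlated decoder strictly increases the objective. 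This is the step I expect to be the main obstacle (the ``only if'' directions); I would first try the first-order perturbation argument around the replacement decoder.

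\textbf{Convexity.} The objective is linear in $\Phi^{\mathcal{M}}$ for fixed $\mathcal{D}$. Taking the maximum of a sum is at most the sum of the maxima, which gives convexity.

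\textbf{PPT, entanglement-breaking and $k$-extendible channels.} For PPT channels I will use the standard bound that the singlet fraction of a PPT state is at most $1/d$. This holds because $\Tr[\Phi\,\omega] = \Tr[\Phi^{T_B}\omega^{T_B}]$ and $\|\Phi^{T_B}\|_\infty = 1/d$, and the recovered state $(\mathrm{id}\otimes\mathcal{D})(\Phi^{\mathcal M})$ stays PPT. Entanglement-breaking channels are a special case. For $k$-extendible channels, the recovered state is $k$-extendible. Twirling with $U\otimes\bar U$ reduces it to an isotropic state, and $k$-extendible isotropic states have singlet fraction at most $(k+d-1)/(dk)$, which is a known result I would cite.

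\textbf{Classical-quantum channels.} The Choi state is $\frac1d\sum_x |x\rangle\langle x|\otimes\rho^x$. The objective becomes $\frac1d\sum_x \langle x|\mathcal{D}(\rho^x)|x\rangle \cdot \frac1d$. The dephased decoder followed by a POVM gives exactly $P_{\rm opt}/d$, after fixing normalizations carefully.

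\textbf{Multiplicativity.} The inequality ``$\ge$'' follows by taking tensor products of optimal decoders. For ``$\le$'', I will tensor the optimal dual variables: $Y_1\otimes Y_2$ is dual-feasible, because $\mathbbm{1}\otimes Y_i - \Phi^{\mathcal M_i}\ge 0$ together with positivity of all terms gives $\Phi^{\mathcal M_1}\otimes\Phi^{\mathcal M_2}\le \mathbbm{1}\otimes Y_1\otimes Y_2$. Strong duality then closes the argument.

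\textbf{Flagged channels.} The Choi operator is block diagonal in $Z$. Pinching the decoder's adjoint Choi operator on $Z$ preserves feasibility and the value, so it suffices to consider decoders that measure $Z$ first. The problem then splits into independent SDPs per flag, which yields $\sum_i p_i F_D(\mathcal{N}^i)$.

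\textbf{Group invariance.} This is exactly Lemma~\ref{lem:covariant_singlet_fraction} applied to $\Phi^{\mathcal M}$, using \eqref{eq:recovery_as_singlet_fraction}.
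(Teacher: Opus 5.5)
The gap is in the equality case $F_{\mathrm{D}}(\mathcal{M})=1$ of item~1. The ``standard fact'' you intend to cite, that a channel with a CPTP left inverse is isometric, is false, and so is the ``only if'' half of the claim. Take $B=R'\otimes E$ with $d_E\ge 2$ and $\mathcal{M}(\rho)=\rho\otimes\pi_E$. Then $\mathrm{Tr}_E\circ\mathcal{M}=\mathrm{id}$, so $F_{\mathrm{D}}(\mathcal{M})=1$. But $\mathcal{M}$ has Kraus rank $d_E$ and maps pure inputs to mixed outputs, so it is not an isometric channel.

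What the Knill--Laflamme conditions with code projector $\mathbbm{1}$ actually give is $K_i^\dagger K_j=c_{ij}\mathbbm{1}$. Diagonalizing $(c_{ij})$ produces Kraus operators $\sqrt{\lambda_k}\,V_k$, where the $V_k$ are isometries with mutually orthogonal ranges. Equivalently, $\mathcal{M}(\rho)=V(\rho\otimes\tau)V^\dagger$ for an isometry $V$ and a state $\tau$. The provable statement is therefore: $F_{\mathrm{D}}(\mathcal{M})=1$ iff $\mathcal{M}$ has a CPTP left inverse iff $\mathcal{M}$ has this form, and $\mathcal{M}$ is isometric exactly when $\tau$ is pure.

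Following the paper does not close this either. Its proof asserts that saturation forces $\Phi^{\mathcal{M}}_{RB}$ to be pure. Faithfulness of the fidelity only yields $(\mathrm{id}_R\otimes\mathcal{D})(\Phi^{\mathcal{M}}_{RB})=\Phi^d$, and the example above satisfies this with the mixed state $\Phi^{\mathcal{M}}_{RB}=\Phi^d_{RR'}\otimes\pi_E$.

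By contrast, the perturbation step you flagged as the main obstacle closes immediately. The replacement decoder's adjoint Choi operator $\pi_R\otimes\mathbbm{1}_B/d$ is positive definite. So $\pi_R\otimes\mathbbm{1}_B/d+\epsilon H$ is primal-feasible for every Hermitian $H$ with $\Tr_R[H]=0$ and $|\epsilon|$ small. The objective has slope $\Tr[\Phi^{\mathcal{M}}H]$ in $\epsilon$. This slope vanishes for all such $H$ iff $\Phi^{\mathcal{M}}_{RB}\in(\ker\mathrm{Tr}_R)^\perp=\{\mathbbm{1}_R\otimes X_B\}$, i.e.\ iff $\Phi^{\mathcal{M}}_{RB}=\pi_R\otimes\sigma_B$. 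This is a different but equally short route from the paper's, which reads the case off the robustness form \eqref{eq:maximal_singlet_fraction_as_generalized_roboustness} at $r=0$.

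Your other items are sound, and in places more careful than the paper:
\begin{itemize}
\item Your replacement point $\pi_R\otimes\mathbbm{1}_B/d$ and your ``fidelity $\le 1$'' bound are valid for all $d_B$. The paper's point $\pi_R\otimes\pi_B$ is feasible only when $d_B=d$, and its dual point $Y_B=\mathbbm{1}_B$ certifies only $F_{\mathrm{D}}\le d_B/d$.
\item Your PPT argument explicitly uses that the local decoder keeps the recovered state PPT. The paper's proof skips this, since it only bounds $\Tr[\Phi^d\Phi^{\mathcal{M}}]$.
\item Your twirling argument spells out the $k$-extendible case, which the paper takes from a citation.
\item Pinching on $Z$ is a clean alternative to the paper's convexity-plus-construction argument for flagged channels.
\end{itemize}
In the cq item, you still need the converse direction: every decoder induces the POVM $\{\mathcal{D}^*(\ket{x}\!\bra{x})\}_x$, so no decoder exceeds $P_{\mathrm{opt}}/d$.
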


\begin{proof}
These properties are all immediate consequences of the analogous ones of the maximal singlet fraction (see e.g. \cite[Theorem 4.1.2]{Parentin2025Thesis}). 
\begin{enumerate}
    \item The primal in~\eqref{eq:SDP_Maximal_Recovery_Coefficient} always admits $\Phi^{\mathcal{D}^*}_{RB} = \pi_R \otimes \pi_{B}$ as a feasible point (corresponding to the completely depolarizing channel), yielding $F_D(\mathcal{M}) \geq 1/d^2$. Similarly, the dual in~\eqref{eq:SDP_Maximal_Recovery_Coefficient} admits the feasible $Y_{B} = \mathbbm{1}_{B}$, yielding $F_D(\mathcal{M}) \leq 1$. This proves \eqref{eq:bounds_F_M}.  By the compactness of $\mathcal{C}^*(R;B)$ and the faithfulness property of fidelity (see, e.g., \cite[theorem 6.7]{Khatri2020Principles}), the upper bound is saturated iff $\Phi^\mathcal{M}_{RB}$ is pure and equal to the Choi state of a CPU map, i.e.~iff $\mathcal{M}$ is an isometric channel (with $d_{B} \geq d$). Saturation of the lower bound follows from \eqref{eq:maximal_singlet_fraction_as_generalized_roboustness}, as\begin{equation}
     \min_{r}\frac{1+r}{d^2} =  \frac{1}{d^2} \iff r = 0 \iff \Phi_{RB}^{\mathcal{M}} = \pi_{R} \otimes \sigma_{B} \iff \mathcal{M} = \mathcal{R}^{\sigma_B}, 
\end{equation}
where $ \mathcal{R}^{\sigma_B}$ denotes a replacement channel for some $\sigma_B \in \mathcal{D}(B)$.
   \item Let $\mathcal{M} = \sum_{x} p(x) \mathcal{M}^x$. By linearity, its Choi state is\begin{equation}
       \Phi^{\mathcal{M}}_{RB} = \sum_x p(x)\,\Phi^{\mathcal{M}^x}_{RB}.
   \end{equation}Convexity of $F_D$ then follows immediately because its defining SDP is a maximization, and the maximum of a convex combination is no greater than the convex combination of the maxima.
   \item If $\mathcal{M}$ is PPT (in particular entanglement-breaking), $\Phi^\mathcal{M}_{RB}$ has positive partial transpose. Then~\cite[Lemma~2]{Rains1999}:
\begin{equation}
\label{eq:PPT_singlet_fidelity}
     \Tr[\Phi_{RB}^d \Phi^\mathcal{M}_{RB}] = \frac{1}{d}\Tr[\mathbb{F} (\Phi^\mathcal{M}_{RB})^{T_B}] \leq \frac{1}{d} \left\|(\Phi^\mathcal{M}_{RB})^{T_B}\right\|_1 = \frac{1}{d},
\end{equation}
where we used the self-adjointness of $T_B$, the standard property $d \ket{\Phi^d}\!\bra{\Phi^d}^{T_B} = \mathbb{F}$, where $\mathbb{F}$ is the (unitary) flip operator, the variational characterization of the trace norm, and in the last step, the PPT property of $\Phi^\mathcal{M}_{RB}$.
\item A $k$-extendible $\mathcal{M}$ has a Choi state $\Phi^\mathcal{M}_{RB} \in \mathrm{Ext}_k(R:B)$. Then, using the bound in~\cite{Johnson2013} gives:
\begin{equation}
\label{eq:singlet_fraction_k_extendible}
   F_D(\mathcal{M})  = F_{\Phi^d}(\Phi^\mathcal{M}_{RB} ) \leq \frac{k + d - 1}{dk}.
\end{equation}
\item Let $\mathcal{M}^{cq}_{X \to B}$ be a classical-quantum (CQ) channel, whose Choi state is the CQ state  \begin{equation}
     \Phi_{RB}^{\mathcal{M}^{cq}} = \frac{1}{|\mathcal{X}|} \sum_{x\in \mathcal{X}} \ket{x}\bra{x}_R \otimes \rho_{B}^x,
\end{equation}
where $\rho_{B}^x \equiv \mathcal{M}^{cq}_{R' \to B}(\ket{x}\!\bra{x}_{R'})$. Expanding \eqref{eq:MES} and using linearity, the objective function in \eqref{eq:recovery_as_singlet_fraction} becomes\begin{align}
            \Tr[\Phi^d_{RR'} \cdot (\mathrm{id}_R \otimes (\mathcal{D}_{B \to R'}\circ \mathcal{M}^{cq}_{R' \to B}))(\Phi^d_{RR'})] &= \frac{1}{d}\sum_{i=0}^{d-1}\bra{ii}_{RB}\left(\sum_{x\in \mathcal{X}}p(x) \ket{x}\bra{x}_R \otimes  \mathcal{D}_{B\to R'}(\rho_B^x)\right) \sum_{j=0}^{d-1}\ket{jj}_{RB}\\
            &= \frac{1}{d} \sum_{x\in \mathcal{X}}  p(x) \bra{x}\mathcal{D}_{B\to R'}(\rho_B^x)\ket{x}\\
            &= \frac{1}{d} \sum_{x\in \mathcal{X}}  p(x) \Tr[\rho_B^x\mathcal{D}^*_{R' \to B}(\ket{x}\bra{x}_{R'})],
        \end{align}
where in the last line we used the definition of adjoint map. Hence\begin{align}
    F_D(\mathcal{M}^{cq}) &=\frac{1}{d} \max_{\mathcal{D}^*_{R'\to B} \in \mathrm{CPU}(R'\to B)}\sum_{x\in \mathcal{X}}  p_X(x) \Tr[\rho_B^x\mathcal{D}^*_{R' \to B}(\ket{x}\bra{x}_{R'})]\\
    &= \frac{1}{d}\max_{\Lambda_B^x \geq 0, \sum_{x\in \mathcal{X}}\Lambda_B^x = \mathbbm{1}}\sum_{x\in \mathcal{X}}p_{X}(x) \Tr[\rho_{B}^{x}\Lambda_B^x]\\
     &\eqqcolon \frac{1}{d} P_{\mathrm{opt}}(\mathcal{M}^{cq}),
\end{align} where we defined $\Lambda_x \coloneqq \mathcal{D}^*_{R' \to B}(\ket{x}\bra{x}_{R'})$ and we used the fact that any $d-$-outcome (POVM) measurement $\{\Lambda_B^x\}_{x \in \mathcal{X}}$ can be formed by applying a CPU map to an ideal projective measurement $\{\ket{x}\!\bra{x}\}_{x \in \mathcal{X}}$. In the last, we defined the guessing probability in a quantum state discrimination with side information $\rho_{B}^x \equiv \mathcal{M}^{cq}_{R' \to B}(\ket{x}\!\bra{x}_{R'})$.
\item  Let $\Phi^{\mathcal{D}_i^*}_{RB_i}$ be optimal primal solutions for $F_D(\mathcal{M}_i)$, $i = 1, 2$. Their tensor product $\Phi^{\mathcal{D}_1^*}_{RB_1} \otimes \Phi^{\mathcal{D}_2^*}_{RB_2}$ is positive semidefinite and satisfies the primal constraint $\Tr_{R_1 R_2}[\Phi^{\mathcal{D}_1^*}_{RB_1} \otimes \Phi^{\mathcal{D}_2^*}_{RB_2}] = \mathbbm{1}_{B_1 B_2}/d^2$, so it is feasible for $F_D(\mathcal{M}_1 \otimes \mathcal{M}_2)$; evaluated on the objective, it achieves\begin{equation}
    \Tr[(\Phi^{\mathcal{M}_1}_{R_1 B_1} \otimes \Phi^{\mathcal{M}_2}_{R_2 B_2})(\Phi^{\mathcal{D}_1^*}_{RB_1} \otimes \Phi^{\mathcal{D}_2^*}_{RB_2})] = F_D(\mathcal{M}_1)\, F_D(\mathcal{M}_2),
\end{equation} proving $F_D(\mathcal{M}_1 \otimes \mathcal{M}_2) \geq F_D(\mathcal{M}_1)\, F_D(\mathcal{M}_2)$. Conversely, let $Y_{B_i}$ be optimal dual solutions satisfying $\Phi^{\mathcal{M}_i}_{R_i B_i} \leq \mathbbm{1}_{R_i} \otimes Y_{B_i}$. Taking the tensor product yields\begin{equation}
    \Phi^{\mathcal{M}_1}_{R_1 B_1} \otimes \Phi^{\mathcal{M}_2}_{R_2 B_2} \leq \mathbbm{1}_{R_1 R_2} \otimes Y_{B_1} \otimes Y_{B_2},
\end{equation} so $Y_{B_1} \otimes Y_{B_2}$ is dual-feasible for $F_D(\mathcal{M}_1 \otimes \mathcal{M}_2)$, giving the reverse inequality.
\item The Choi state of $\mathcal{N}$ has classical-quantum form on the flag register $Z$:
\begin{equation}
    \Phi^{\mathcal{N}}_{RZB} = \sum_{i=1}^\ell p_i\, \ket{i}\!\bra{i}_Z \otimes \Phi^{\mathcal{N}^i}_{RB},
\end{equation}
so that, via the link \eqref{eq:recovery_as_singlet_fraction} between the maximal recovery coefficient and the maximal singlet fraction, $F_{\mathrm{D}}(\mathcal{N}) = F_\Phi(\Phi^{\mathcal{N}}_{RZB})$. We prove the equality \eqref{eq:fidelity_recovery_flagged} by establishing matching upper and lower bounds.

\emph{Upper bound.} By convexity of $F_\Phi$ in its argument (item 2 of this lemma applied to the singlet fraction), 
\begin{equation}
    F_{\mathrm{D}}(\mathcal{N}) = F_\Phi\!\left(\sum_{i=1}^\ell p_i\, \ket{i}\!\bra{i}_Z \otimes \Phi^{\mathcal{N}^i}_{RB}\right) \leq \sum_{i=1}^\ell p_i\, F_\Phi\!\left(\ket{i}\!\bra{i}_Z \otimes \Phi^{\mathcal{N}^i}_{RB}\right).
\end{equation}
For each fixed $i$, since the $Z$ register is in the pure product state $\ket{i}\!\bra{i}$, any joint recovery map $\mathcal{D}_{ZB \to R'}$ acts effectively as the conditional channel $\mathcal{D}^i_{B \to R'}(\cdot) \coloneqq \mathcal{D}_{ZB \to R'}(\ket{i}\!\bra{i}_Z \otimes \cdot)$ on system $B$ alone, hence $F_\Phi(\ket{i}\!\bra{i}_Z \otimes \Phi^{\mathcal{N}^i}_{RB}) = F_\Phi(\Phi^{\mathcal{N}^i}_{RB}) = F_{\mathrm{D}}(\mathcal{N}^i)$.

\emph{Lower bound.} Conversely, for any family of recovery channels $\{\mathcal{D}^i_{B \to R'}\}_{i \in [\ell]}$, the map $\mathcal{D}^Z_{B \to R'} \circ \Pi_Z$ --- where $\Pi_Z$ measures $Z$ in the flag basis and $\mathcal{D}^Z$ applies $\mathcal{D}^i$ conditioned on the outcome $i$ --- is a valid feasible recovery map, hence
\begin{equation}
    F_{\mathrm{D}}(\mathcal{N}) \geq \sup_{\{\mathcal{D}^i\}_i} \sum_{i=1}^\ell p_i\, F_\Phi\!\bigl((\mathrm{id}_R \otimes \mathcal{D}^i)(\Phi^{\mathcal{N}^i}_{RB})\bigr) = \sum_{i=1}^\ell p_i\, F_{\mathrm{D}}(\mathcal{N}^i),
\end{equation}
where in the last step the supremum exchanges with the sum since the conditional channels $\mathcal{D}^i$ are unconstrained from each other. Combining the bounds gives \eqref{eq:fidelity_recovery_flagged}, and the lower-bound construction shows that the optimum is attained by a recovery map of the form $\mathcal{D}^Z_{B \to R'} \circ \Pi_Z$.

\item Direct consequence of Lemma \ref{lem:covariant_singlet_fraction} applied to \eqref{eq:recovery_as_singlet_fraction}.

\end{enumerate}
\end{proof}
\begin{remark}
    The maximal fidelity of recovery of a channel defined in \eqref{eq:SDP_Maximal_Recovery_Coefficient} directly compares with its quantum Doeblin coefficient \cite{Doeblin1937, George2025}, which finds operational interpretation in the 'opposite' task of quantum recovery of Figure \ref{fig:quantumrecoverytask_mes}, called \textit{quantum exclusion task}. Specifically, up to a normalization constant, the quantum Doeblin coefficient coincides with its minimal fidelity of recovery, and it can be written as an SDP analogous to \eqref{eq:SDP_Maximal_Recovery_Coefficient}, where the maximization becomes a minimization.
\end{remark}
\subsection{Conditional Min Entropy of a Channel}
In \cite{GourWilde2021} (see also \cite{Gour2019}) the authors defined the notion of conditional min-entropy of a quantum channel, starting from the max-channel divergence, defined as the SDP\cite{Datta2009, Cooney2016, Leditzky2018}:
\begin{align}
     D_\text{max}(\mathcal{N}\|\mathcal{M}) &\coloneqq \sup_{\rho_{RA}} D_\text{max}(\mathcal{N}_{A \to B}(\rho_{RA})\|\mathcal{M}_{A \to B}(\rho_{RA}) )\\
    &= D_\text{max}(\mathcal{N}_{A \to B}(\Phi_{RA})\|\mathcal{M}_{A \to B}(\Phi_{RA}) )\\
    &= \inf_{\lambda \in \mathbb{R}} \{\lambda: \mathcal{N} \leq 2^\lambda \mathcal{M}\}
\end{align}
where the last two equalities follow from \cite[Lemma 12]{Wilde2019} and \cite[Remark 30]{Wilde2019} and imply that such quantity can be computed efficiently as an SDP. Then, the \textit{channel's conditional min-entropy} was defined as \cite{GourWilde2021}:
\begin{align}
\label{eq:channel_conditional_min_entropy}
   H^\downarrow_{\text{min}}(\mathcal{N}) \coloneqq  - \log_2 \inf_{\lambda \geq 0} \{\lambda : \Phi^{\mathcal{N}}_{RB} \leq \lambda  \Phi^{\mathcal{N}}_R \otimes \mathbbm{1}_B \} \\
   =  - \log_2 \inf_{\lambda \geq 0} \{\lambda : \Phi^{\mathcal{N}}_{RB} \leq \lambda  \pi_R \otimes \mathbbm{1}_{B} \}.
\end{align}
Note that this definition uses a different underlying notion of conditional min-entropy with respect to \eqref{eq:conditionalMinEntropy}; namely, instead of taking the minimum over all $\sigma_B$ in the second argument it considers the reduced state of the first argument. This quantity sometimes appears in the literature \cite{Tomamichel2015} as $H_\mathrm{min}^\downarrow(B|R)_{\Phi^{\mathcal{N}}}$, hence the notation above. It remains unknown whether the quantity in \eqref{eq:channel_conditional_min_entropy} coincides with the \textit{extended channel's min entropy} \cite{GourWilde2021}:
\begin{equation}
\label{eq:ext_channel_min}
    H^{\uparrow}_{\mathrm{min}}(\mathcal{N}) \coloneqq H_{\mathrm{min}}(B|R)_{\Phi^{\mathcal{N}}}  = -\log_2 (d_B F^{(R)}_\Phi(\Phi^{\mathcal{N}}_{RB})),
\end{equation}
which uses \eqref{eq:conditionalMinEntropy} as underlying conditional min-entropy.

An alternative definition, motivated by the operational interpretation in terms of
recovery \cite{Koenig2009} and by our analysis in Section \ref{subsec:maximal_fidelity_preparation}, is
\begin{equation}
\label{eq:conditional_min_entropy_recovery}
    \tilde{H}_{\mathrm{min}}(\mathcal{N}) \coloneqq H_{\mathrm{min}}(R|B)_{\Phi^{\mathcal{N}}}  = -\log_2 (d\, F^{(B)}_\Phi(\Phi^{\mathcal{N}}_{RB})),
\end{equation}
or equivalently, by \eqref{eq:recovery_as_singlet_fraction}:
\begin{equation}
     \tilde{H}_{\mathrm{min}}(\mathcal{N})  = - \log_2 (d\, F_D(\mathcal{N})).
\end{equation}

Notice that by \eqref{eq:multiplicativity_maximal_fidelity_recovery}, this quantity is additive:
\begin{equation}
     \tilde{H}_{\mathrm{min}}(\mathcal{M}_1\otimes \mathcal{M}_2) =  \tilde{H}_{\mathrm{min}}(\mathcal{M}_1) + \tilde{H}_{\mathrm{min}}(\mathcal{M}_2),
\end{equation}
and by \eqref{eq:bounds_F_M} it satisfies the bounds
\begin{equation}
    -\log_2 d \leq \tilde{H}_{\mathrm{min}}(\mathcal{N}) \leq \log_2 d,
\end{equation}
which are saturated respectively, by isometric channels and replacement channel.
\begin{remark}
The quantities in \eqref{eq:ext_channel_min} and \eqref{eq:conditional_min_entropy_recovery} differ only in what is the conditioning system of $\Phi^{\mathcal{N}}_{RB}$ on which the underlying conditional min-entropy is evaluated, i.e. the optimizing system in the maximal singlet fraction. To illustrate this difference, consider the replacement channel $\mathcal{R}^{|0\rangle\langle 0|}(\rho) = |0\rangle\langle 0|$ for all input states $\rho$. Then:
\begin{align}
\label{eq:replacement_channel_1}
    H^{\uparrow}_{\mathrm{min}}(\mathcal{R}^{|0\rangle\langle 0|})
    &= H_{\mathrm{min}}(B|R)_{\Phi^{\mathcal{R}}} = 0, \\
\label{eq:replacement_channel_2}
    \tilde{H}_{\mathrm{min}}(\mathcal{R}^{|0\rangle\langle 0|})
    &= H_{\mathrm{min}}(R|B)_{\Phi^{\mathcal{R}}} = \log_2 d.
\end{align}
This difference can be physically interpreted by saying that \eqref{eq:ext_channel_min} is a measure of uncertainty of the channel output (\eqref{eq:replacement_channel_1} vanishes since the output state is pure), while \eqref{eq:conditional_min_entropy_recovery} is a measure of the channel's ability to preserve correlations with a reference system (\eqref{eq:replacement_channel_2} is maximum because $\mathcal{R}^{|0\rangle\langle 0|}$ breaks any correlation).
\end{remark}

Since it is a measure of correlation, the quantity $\tilde{H}_{\mathrm{min}}$ admits a natural comparison with the
channel's coherent information, defined as:
\begin{equation}
\label{eq:coherentInfoChannel}
    I_c(\mathcal{N}) \coloneqq \max_{\psi_{RA}}
    I_c(R\rangle B)_\omega = -\min_{\psi_{RA}} H(R|B)_\omega
\end{equation}
where $\omega_{RB} = (\mathrm{id}_R \otimes \mathcal{N})(\psi_{RA})$ and
$H(R|B)_\omega = H(RB) - H(B)$ is the von Neumann conditional entropy. Specifically, if one defines the \textit{channel's min-coherent information} as
\begin{equation}
\label{eq:channel_min_coherent_info}
    I_{c, \mathrm{min}}(\mathcal{N}) \coloneqq - \tilde{H}_{\mathrm{min}}(\mathcal{N}),
\end{equation}
then using the standard inequality $H_{\mathrm{min}}(R|B) \leq H(R|B)$ (see, e.g., \cite{Tomamichel2015}), we have:
\begin{equation}
     I_{c, \mathrm{min}}(\mathcal{N}) 
        = -H_{\mathrm{min}}(R|B)_{\Phi^{\mathcal{N}}}
        \geq -H(R|B)_{\Phi^{\mathcal{N}}}
        \eqqcolon I_c(\pi, \mathcal{N}),
\end{equation}
while
\begin{equation}
    \label{eq:Ic_vs_HminR}
        I_c(\mathcal{N}) 
        \geq I_c(\pi, \mathcal{N}),
    \end{equation}
since \eqref{eq:coherentInfoChannel} involves an optimization over all input states $\psi_{RA}$. There is a large class of covariant channels \eqref{eq:G_covariant_channel} for which $\psi_{RA} = \Phi_{RA}$, i.e. the MES is the optimal input in \eqref{eq:coherentInfoChannel}. Examples of these include the depolarizing channel in \eqref{eq:depolarizing_channel} and the quantum erasure channel \cite{Grassl1997}, defined as
\begin{equation}
\label{eq:erasure_channel}
    \mathcal{A}^{q}(\rho) = (1-q)\,\rho + q\,\mathrm{Tr}(\rho)\,|e\rangle\!\langle e|
\end{equation}
where $|e\rangle \in \mathcal{H}_B$ is an erasure flag orthogonal to all input states. For these channels, the chain of inequalities tightens to:
\begin{equation}
\label{eq:chain_covariant}
   I_{c,\mathrm{min}}(\mathcal{N}) \geq I_c(\mathcal{N}),
\end{equation}
and by the LSD theorem \cite{Lloyd1997, Shor2002, Devetak2005} we have:
\begin{equation}
        Q(\mathcal{N}) = \lim_{n \to \infty} \frac{1}{n}I_c(\mathcal{N}^{\otimes n})  \leq \lim_{n \to \infty}  \frac{1}{n}I_{c,\text{min}}(\mathcal{N}^{\otimes n}) = I_{c,\text{min}}(\mathcal{N}), 
    \end{equation}
so \eqref{eq:channel_min_coherent_info} provides a single-letter upper bound on their quantum capacity. While tight for the extreme cases (noiseless isometric channel and useless replacement channel), in general this bound will be very loose. This is because $ \tilde{H}_{\mathrm{min}}(\mathcal{N})$ is a one-shot quantity measure of recoverability determined entirely by its Choi state. For instance, it is easy to see that
\begin{equation}
        \tilde{H}_{\mathrm{min}}(\mathcal{D}^{q}) = \tilde{H}_{\mathrm{min}}(\mathcal{A}^{q}), \quad \forall q \in [0,1]
    \end{equation}
    where $\mathcal{D}^{q}$ is the depolarizing channel in \eqref{eq:depolarizing_channel} and $\mathcal{A}^q$ is the erasure channel in \eqref{eq:erasure_channel}. This is because the maximal fidelity of recovery is $F_D = 1 - 3q/4$ in both cases, even though we know that the noise acts according to markedly different mechanisms and the asymptotic transmission abilities of the two channels are very different.

\end{document}